\newtheorem{theorem}{Theorem}[chapter]
\newtheorem{lemma}{Lemma}[chapter]
\newtheorem{proposition}{Proposition}[chapter]
\newtheorem{corollary}{Corollary}[chapter]
\newtheorem{algorithm}{Algorithm}[chapter]
\newtheorem{procedure}{Procedure}[chapter]
\newtheorem{definition}{Definition}[chapter]
\newtheorem{observation}{Observation}[chapter]
\newtheorem{conjecture}{Conjecture}[chapter]
\newcounter{ch}
\newcounter{n}
\newcounter{c}
\newcounter{d}
\begin{document}

\pagenumbering{roman}
\chapter*{\centering Burning geometric graphs\\}

\large{\centering \textsc{Arya Tanmay Gupta}, 201861003\\~\\}
\large{\centering Dissertation\\For partial fulfilment of degree\\~\\}
\large{\centering Master of Technology in Computer Science and Engineering\\~\\~\\}
\large{\centering July 2020\\~\\~\\}
\large{\centering Supervisor:\\}
\large{\centering Dr. Swapnil A. Lokhande, Assistant Professor\\~\\}
\large{\centering Department of Computer Science and Engineering\\~\\}
\begin{figure}[h]
    \centering
    \includegraphics[width=100pt]{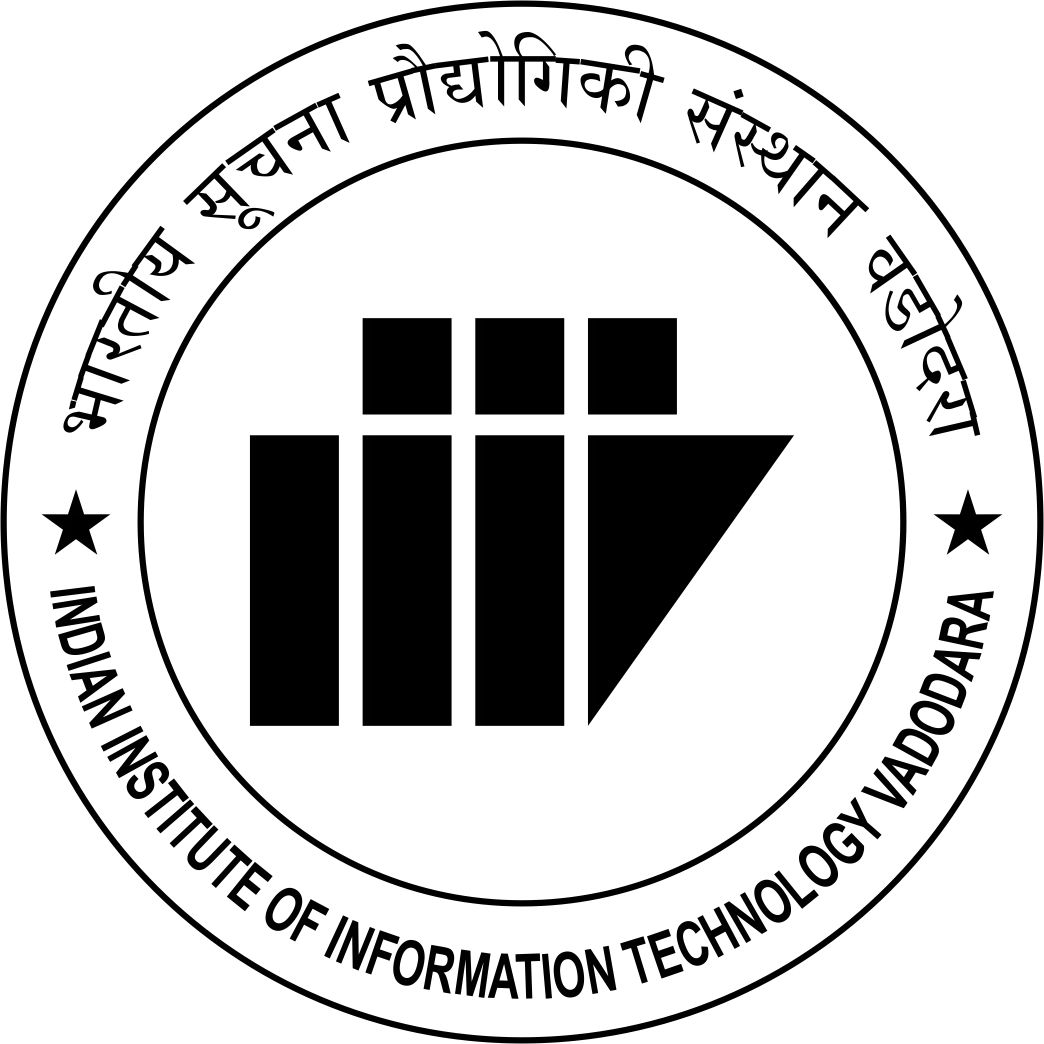}\\~\\~\\
\end{figure}
{\Large{\centering Indian Institute of Information Technology Vadodara, India\\}}
\thispagestyle{empty}
\chapter*{}

\begin{center}
    {\Large\textbf{Indian Institute of Information Technology Vadodara}}\\~\\~\\
    \textbf{\textit{\underline{CERTIFICATE}}}\\~\\
\end{center}

This is to certify that \textsc{Mr. Arya Tanmay Gupta}, Institute ID. \textit{201861003}, was a student of Master of Technology in Computer Science and Engineering at the Institute during 2018-2020. He has satisfactorily completed a dissertation on ``Burning geometric graphs'' during the year 2019-2020.

\begin{table}[h]
    \centering
    \large
    \begin{tabular}{c c}
         & \\ & \\ & \\ & \\
         & Supervisor \\
         & (Dr. Swapnil A. Lokhande)\\
         & \\
         & \\
         & \\
         & \\
        PIC/Dean Academics & Head of the Department \\
        (Dr. Pratikkumar Shah) & (Dr. Jignesh S. Bhatt)\\
         & \\
         & \\
         & \\
         & \\
         & \\
         & INSTITUTE SEAL
    \end{tabular}
\end{table}
\chapter*{Acknowledgements}
\chaptermark{Acknowledgements}

I thank all the persons who share credit in the completion of the research work, presented in this dissertation, at the Indian Institute of Information technology Vadodara (IIITV).

I thank \textbf{\textit{Dr. Swapnil A. Lokhande}} (Assistant Professor, IIIT Vadodara, India) to supervise my work at IIITV and to give useful comments in order to improve my work and presentation.

I thank \textbf{\textit{Dr. Kaushik Mondal}} (Assistant Professor, Indian Institute of Technology Ropar, India), who unofficially supervised my study. I thank him for his continuous support, help, and guidance, which was given to me not only to help me in my study, but also to help me during general life scenarios in various aspects, also from the perspective of my research career. His sincere efforts reduced the labour that I would have to input otherwise. I feel honoured to work under the supervision of a personality such as himself.

I must thank \textbf{\textit{Prof. S. K. Patra}} (Director, IIITV, India) and other members of administration to support my research at the Indian Institute of Information Technology Vadodara.

I thank \textbf{\textit{my students}} at IIITV and \textbf{\textit{my colleagues}} who kept me motivated all along. I thank \textbf{\textit{my parents}}, and almighty god.\\~\\

\begin{flushright}
\textsc{Arya Tanmay Gupta}
\end{flushright}

\tableofcontents
\addcontentsline{toc}{chapter}{List of Symbols}

\setlength{\nomlabelwidth}{2cm}

\nomenclature{$\setminus$}{setminus}
\nomenclature{$\cup_{\setminus s}$}{Left sequential union}
\nomenclature{$\cup_{s/}$}{Right sequential union}
\nomenclature{$G.Adj[X]$}{Adjacency at a distance $1$}
\nomenclature{$G.Adj_i[X]$}{Adjacency at a distance $i$}
\nomenclature{$G.N[X]$}{Neighbourhood at a distance $1$}
\nomenclature{$G.N_i[X]$}{Neighbourhood at a distance $i$}
\nomenclature{$dist(x, y)$}{Shortest distance between vertices}
\nomenclature{$G.V$}{Vertex set in graph $G$}
\nomenclature{$G.E$}{Edge set in graph $G$}
\nomenclature{$path(u, v)$}{Shortest path function}
\nomenclature{$\lceil\ \rceil$}{Ceiling}
\nomenclature{$\lfloor\ \rfloor$}{Floor}
\nomenclature{$\inf$}{Infimum}

\renewcommand{\nomname}{List of Symbols}
\printnomenclature
\addcontentsline{toc}{chapter}{\listfigurename}
\listoffigures
\listoftables
\addcontentsline{toc}{chapter}{\listtablename}
\newpage\pagenumbering{arabic}
\chapter*{\centering Abstract}
\addcontentsline{toc}{chapter}{Abstract}

A procedure called \textit{graph burning} was introduced to facilitate the modelling of spread of an alarm, a social contagion, or a social influence or emotion on graphs and networks.

Graph burning runs on discrete time-steps (or rounds). At each step $t$, first (a) an unburned vertex is burned (as a \textit{fire source}) from ``outside'', and then (b) the fire spreads to vertices adjacent to the vertices which are burned till step $t-1$. This process stops after all the vertices of $G$ have been burned. The aim is to burn all the vertices in a given graph in minimum time-steps. The least number of time-steps required to burn a graph is called its \textit{burning number}. The less the burning number is, the faster a graph can be burned.

Burning a general graph optimally is an NP-Complete problem. It has been proved that optimal burning of path forests, spider graphs, and trees with maximum degree three is NP-Complete. We study the \textit{graph burning problem} on several sub-classes of \textit{geometric graphs}.

We show that burning interval graphs (\Cref{section:burn-interval-graphs}, \Cref{theorem:BIGNPCIG}), permutation graphs (\Cref{section:burn-permutation-graphs}, \Cref{theorem:BPGNPC}) and disk graphs (\Cref{section:burn-disk-graphs}, \Cref{theorem:BDGNPC}) optimally is NP-Complete. In addition, we opine that optimal burning of general graphs (\Cref{section:no-better-than-3-approx}, \Cref{conjecture:no-better-than-3-approx}) cannot be approximated better than 3-approximation factor.
\chapter{Introduction}\label{chapter:introduction}

In this chapter, we discuss some fundamentals related to our subject problem. We discuss what are graphs, what are decision problems and languages in computing theory. We discuss about some interesting facts about the word ``algorithm'', and further when we have discussed enough background details, we describe \textit{algorithms} formally as per the current perspective. We discuss how we differ between \textit{easy} and \textit{hard} problems, along with a brief description of what NP, NP-Complete and NP-Hard problems are. We also briefly discuss the \textit{reducibility} of certain problems into one-another, and the \textit{approximability} of hard problems.

\section{Origin, etymology, history}

\textsc{Lionardo Pisano} \cite{Sigler2002}, more popularly known as \textsc{Fibonacci}, introduced the traditional Indian mathematical methods to Europe in the $13^{th}$ century. Until then, abacus was used to perform all calculations. Pisano introduced a mathematics which was more efficient: computations could be performed on numbers without bounds on their digit-length. A person who could perform computations without the use of abacus was called \textit{Ma\~estro-de-abaci}. And the Europeans started to call this new form of mathematics, which could be performed on ``paper'' without abacus, \textit{algorithms}\index{algorithms: etymology}.

Since then, numerous efforts have been made to translate human intelligence and computing ability into artificial machinery. \textsc{Blaise Pascal} \cite{Dasgupta2014} built a machine in the $17^{th}$ century which could perform addition and subtraction. \textsc{Gottfried Wilhelm Leibniz} built a machine, during the same time, which could perform multiplication and division as well. \textsc{Charles Babbage} built the famous \textit{Difference Engine}\index{difference engine} which could do similar computations ``automatically'', that is, once the input numbers are supplied to it, it was able to do the computation without any human intervention. This machine was able to prepare tables: it was able to compute polynomials of degree $2$ for consecutive integers; this was called the \textit{method of differences}. Babbage built the first prototype of this machine in 1822.

\textsc{Luigi Frederico Menabrea} explained with reference to the Difference Engine that it was limited only to one type of computations, it could not be applied to solve numerous other problems in which mathematicians might be interested. This led Charles Babbage to design the \textit{Analytical Engine}\index{analytical engine}, which could solve the full range of algebraic problems. The generality of the Analytical Engine is discussed in Menabrea’s Italian article \textit{Sketch of the Analytical Engine} (1842). It was translated into English by Augustus Ada \cite{Menabrea1843}.

\textsc{Augustus Ada}, countess of Lovelace, proposed that the Babbage's design could be used to compute function of any number of functions. On Babbage's request, she wrote some additional notes to her ``memoir'', most famous one of them is the \textit{Note G}\index{note G - Augustus Ada}, in which, firstly, she anticipated an issue: whether computers can exhibit ``intelligence'', or, ``original thought'', and secondly, in this note she wrote a sequence of operations (an \textit{algorithm}) to compute Bernoulli numbers on the Analytical Engine.

After some decades, \textsc{Alan Mathison Turing} worked on construction of formal languages for any function (or a decision problem, as he presents in \cite{Turing1937}). He initiated the design of what we call the Turing Machine \index{turing machine: invention} which works on these formal languages to compute for any decision problem. We discuss decision problems and formal languages in this chapter; we do not touch the Turing Machine, the reader is advised to refer \cite{Turing1937,Garey1979} to study the Turing Machine in detail. We start with a brief discussion on graphs, on which the following chapters are majorly based.

\section{Graphs}

A \textit{graph}\index{graphs} is a representation of entities and their relations: generally, a graph tells which entities are related (unweighted graph); sometimes the relations may have some associated cost or weightage (weighted graphs). Formally, a graph is a mathematical object which represents entities as vertices, and edges as relations between those vertices: if two entities are related, then there will be an edge between the corresponding vertices in the graph. Here, we only discuss relations which are symmetric (if $a$ is related to $b$, then $b$ is related to $a$). So the edges are bidirectional; we do not show any directions for brevity.

Any cost or weightage related to a relation between a pair of entities is presented as weights on the edges. For example, if two computers are connected in a network, we can denote the frequency of communication between them in that network as a weight on the edge between their corresponding vertices in the graph representation of that network. Throughout the following chapters, we assume that all the edges have same weight; we consider the weight on all edges to be $1$, which we do not show explicitly for brevity; the value of the weight is to represent merely that a given pair of vertices are connected; there are only two weights associated with all the pairs of vertices which, $0$ or $1$, if otherwise the weight associated with some pair of vertices $0$, then we assume that they are not connected. If the vertices $a$ and $b$ are connected, then the weight associated to the pair $(a,b)$ is $1$, and if the vertices $a$ and $b$ are connected, then the weight associated to the pair $(a,b)$ is $0$. A rough example of an unweighted graph is presented in \Cref{figure:unweighted-graph}.

\begin{figure}
    \begin{minipage}{1\textwidth}
        \centering
        \begin{tikzpicture}[scale=.41]
		    \node[circle, draw=black, fill=black, inner sep=1pt, minimum size=3pt, label=left:{$p$}] (AA) at (-8,0) {};
		    \node[circle, draw=black, fill=black, inner sep=1pt, minimum size=3pt, label=right:{$q$}] (AB) at (-5,0) {};
	    	\node[circle, draw=black, fill=black, inner sep=1pt, minimum size=3pt, label=left:{$r$}] (AC) at (-8,-3) {};
	    	\node[circle, draw=black, fill=black, inner sep=1pt, minimum size=3pt, label=left:{$s$}] (AD) at (-5,-3) {};
			\node[circle, draw=black, fill=black, inner sep=1pt, minimum size=3pt, label=right:{$t$}] (AE) at (-3,-2) {};
		    \node[circle, draw=black, fill=black, inner sep=1pt, minimum size=3pt, label=left:{$u$}] (AF) at (-5,-6) {};
	    	\node[circle, draw=black, fill=black, inner sep=1pt, minimum size=3pt, label=right:{$v$}] (AG) at (-3,-5) {};
	    	\node[circle, draw=black, fill=black, inner sep=1pt, minimum size=3pt, label=right:{$w$}] (AH) at (-3,-7.5) {};
		    
		   	\draw (AA)--(AB);
		   	\draw (AA)--(AC);
		   	\draw (AB)--(AC);
		   	\draw (AB)--(AD);
		   	\draw (AB)--(AE);
		   	\draw (AE)--(AG);
		   	\draw (AD)--(AF);
	        \draw (AD)--(AG);
	        \draw (AG)--(AH);
		\end{tikzpicture}
    \end{minipage}
    \caption{An example of unweighted graph. Entities are represented as vertices $p, q, r, . . ., w$ and there is an edge between a pair of vertices if the corresponding edges are related as per some relation function. The weight associated to the pair $(s,u)$, for example, is $1$, and the weight associated to the pair $(u,w)$ is $0$.}
    \label{figure:unweighted-graph}
\end{figure}
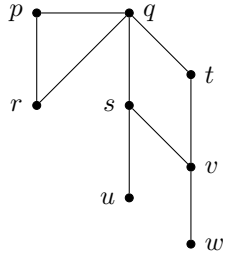

There are various problems which are related to graphs, most of them are computationally ``hard'' to solve on arbitrary graph inputs. In the following paragraphs, we discuss the description of the nature of problems in general, especially the problems which are hard. We shall keep all the descriptions close to the perspective of graphs: our main focus is on the problems which are related to graphs.

\section{Decision Problems}

Refer to the problems described under \Cref{section:problems-in-NP} in \Cref{chapter:list-of-definitions}. The problems like the graph isomorphism problem, whose solution is either ``yes'' or ``no'' are called \textbf{\textit{decision problems}}\index{decision problems}. The distinct 3-partition problem is also a decision problem. Other problems such as the largest clique problem, minimum dominating set problem, largest independent set problem, minimum vertex cover problem, graph coloring problem are optimization problems; they can also be converted into their respective equivalent decision problem versions. Also, we can add a mathematical bound $B$ as an additional parameter to a decision problem and and reformulate it. For example we can ask that given a graph $G$ with degree bound $B$, does there exist a clique of size at least $k$.

The optimization problems are at least as hard as decision problems problems. For example, if we can compute the largest clique in $G$, we can also compute if there exists a clique in $G$ whose size is at least $k$. Hardness of many decision problems is closely tied to their corresponding optimization versions \cite{Garey1979}. For example, decision version of the clique problem is no easier than the optimization version of the problem. Likewise, we can transform any problem into its corresponding decision version.

\section{Languages}

Let that $x$ is a sequence of symbols such that a particular decision problem $A$ returns ``yes'' as output. The \textbf{\textit{language}}\index{languages} $L$ of a decision problem $A$ is the set of sequences of symbols for such that for each sequence in that set, $A$ returns ``yes'' as output, given the alphabet, and an encoding scheme. For example, let the alphabet be $\Lambda = \{$`$0$', `$1$', `$($', `$)$', `$,$'$\}$. Let a graph $G$ be represented by an input string $x=\{00$, $01$, $10$, $11$, $(00,01)$, $(00,10)$, $(00,11)$, $(01,10)$, $(01,11)$, $(10,11)\}$ constructed from $\Lambda$. In $x$, $00,01,10$ and $11$ are the vertices and, for example, $(00,01)$ represents that there is an edge between vertices $00$ and $01$. Let $e$ be the encoding scheme, for example, used to encode $G$ into $x$. Similarly, we can represent any graph using the alphabet $\Lambda$ and the encoding $e$. Observe that in $e$, there is no unnecessary padding of symbols. All such encoding schemes which do not allow any unnecessary padding of symbols can represent any object (for example, $G$) with sequences whose lengths are polynomially bound to one another \cite{Garey1979}.

Let there be a problem $A$ as follows, given a graph as input, the task is to find if there is a clique of size at least $4$ in that graph. Observe that $x$ represents a graph which is a clique of size $4$, as presented in \Cref{figure:clique4}. $x$ is an element of $L_A$, the language of $A$ under the encoding scheme $e$. $L_A$ contains all the possible sequences from the alphabet $\Lambda$ (under the encoding scheme $e$) for which $A$ returns "yes". $A$ and $L_A$, for example, can be used interchangeably, and similarly, any decision problem with its corresponding language because they are computationally equivalent.

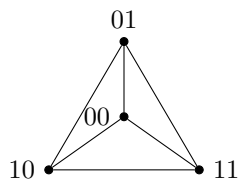
\begin{figure}
    \begin{minipage}{1\textwidth}
        \centering
        \begin{tikzpicture}
            \node [circle, draw=black, fill=black, inner sep=1pt, minimum size=3pt, label=left:{$00$}] (A) at (0,0) {};
            \node [circle, draw=black, fill=black, inner sep=1pt, minimum size=3pt, label=above:{$01$}] (B) at (0,1) {};
            \node [circle, draw=black, fill=black, inner sep=1pt, minimum size=3pt, label=left:{$10$}] (C) at (-1,-.707) {};
            \node [circle, draw=black, fill=black, inner sep=1pt, minimum size=3pt, label=right:{$11$}] (D) at (1,-.707) {};
            
            \draw (A) -- (B); \draw (A) -- (C); \draw (A) -- (D);
            \draw (B) -- (C); \draw (B) -- (D);
            \draw (C) -- (D);
        \end{tikzpicture}
    \end{minipage}
    \caption{A sample graph presented by a string $x=\{00$, $01$, $10$, $11$, $(00,01)$, $(00,10)$, $(00,11)$, $(01,10)$, $(01,11)$, $(10,11)\}$, where $00,01,10$ and $11$ are the vertices and, for example, $(00,01)$ represents that there is an edge between vertices $00$ and $01$. This graph is a clique of size 4.}
    \label{figure:clique4}
\end{figure}

\section{Algorithms}

Modern definition of the word algorithm is as follows. An \textbf{\textit{algorithm}}\index{algorithms: definition} is a step-by-step procedure used to solve a decision problem given that it halts in finite time given any input which may be from the language of that problem or not. \cite{Garey1979}

\section{Easy and Hard problems}\label{section:problems-classes}

If a problem $A$ can be reduced into another problem $B$ in polynomial time (with respect to the input length), it means that the problem $B$ is computationally at least as hard as the problem $A$ \cite{Garey1979}. There are several classes of problems depending on solvability, reducibility and computational hardness. Some of them are described in the following paragraphs. \index{problems solvability classification}

Problems in class \textbf{\textit{P}} can be solved in polynomial time if the host machine is allowed to execute only polynomial amount of instructions in one time unit. Problems in \textbf{\textit{NP}} class can be solved in polynomial time given that the host machine can execute arbitrarily any amount of instructions in one time unit. From here, it is clear that problems in class P can also be solved in polynomial time if the host machine is allowed to do arbitrarily any amount of instructions in one time unit. Hence it is conjectured that P $\subset$ NP.\index{P, NP} For example the problems with solution in $n^2$, $n^{100}$ or even $10^{9900}n^{10^{99}}$ come under P class. \textit{Exponential time algorithms} have, for example, time complexity functions like $3^n$, $n^n$, $n^{\sqrt{n}}$ or even $n^{\log n}$.

The problems in NP class can be verified in polynomial time. Here, verification means that given an instance $I$ for a problem $A$ and a structure $C$ in $I$, it is to be verified if $C$ fulfils the constraints of $A$. If $C$ passes the verification, it means that $A$ will return ``yes'' for $I$ as an input. For example, if we have a set of vertices $C$ and a graph $G$ as an instance input, we can easily verify if $C \in G.V$ is forming a clique in $G$ or not. In this way, we can assert that, given $C$ and $G$, if $C$ is a clique, then there exists a clique of size at least $|C|$ in $G$.

Problems in class P can be solved by deterministic algorithms in polynomial time. NP class of problems \textit{are} solved by a nondeterministic algorithm in polynomial time, which at each step, arbitrarily selects a structure from the instance and checks deterministically, in polynomial time, if that structure satisfies the constraints of the given problem. Here also, the conjecture that P $\subset$ NP follows. The arbitrary selection of a structure from the input instance (may also be called ``guessing'') by a nondeterministic algorithm is supposed to be computed in constant time, $O(1)$. The verification is always done detreministically; so the time complexity of any nondeterministic algorithm is always equal to the time taken to verify an arbitrary structure. However, the nondeterministic algorithm can, in practical, keep on guessing structures indefinitely and never terminate.

\textbf{\textit{NP-Complete}}\index{NP-Complete} is a class of problems to which each problem in NP can be reduced into in polynomial time. The problems which come in NP-Complete class are also reducible to each other in polynomial time. NP-Complete problems are are the hardest problems of NP class.

\textbf{\textit{NP-Hard}}\index{NP-hard} problems are the problems that are at least as hard as any problem in NP; they may or may not be in NP. So the NP-Complete problems are a subset of NP-Hard problems. In fact, considering the hardness, the NP-Complete problems come in the intersection between NP and NP-Hard problems, as shown in \Cref{figure:problems-classes}.

These classifications arise because we are still not able to determine whether the problems in $NP$ can be solved in polynomial time only, using some algorithm, that is, we still do not have a mathematical proof as to whether $P=NP$ or not. These classifications are based on the conjecture that $P\neq NP$. There are numerous other classes of problems, which hierarchically allow problems of more complex bounds of time complexity; apart from this, problems can also be classified on the basis of the ``extra'' space they require for computation. We do not discuss those classes of problems; such problems are discussed in detail in \cite{Garey1979} and other works.

\begin{figure}
    \begin{minipage}{1\textwidth}
        \centering
        \begin{tikzpicture}
            \draw (0,0) circle (1.5);
            \draw (0,-.75) circle (.5);
            \draw[thick] (0,.5) parabola (2,3);
            \draw[thick] (0,.5) parabola (-2,3);
            \draw[<-] (.5,2) parabola (1.2,1.5);
            \draw[<-] (-.5,2) parabola (-.8,1);
            
            \node at (0,0) {NP};
            \node at (0,-.75) {P};
            \node at (0, 1) {NPC};
            \node at (0, 2) {\textbf{NPH}};
        \end{tikzpicture}
    \end{minipage}
    \caption{Classification of algorithms / problems based on runtime complexity.}
    \label{figure:problems-classes}
\end{figure}
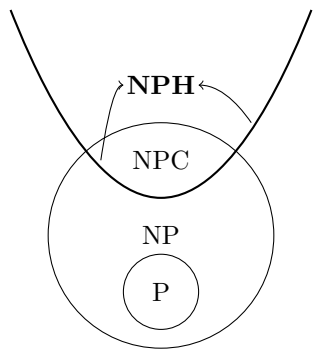

\section{Strong NP-Completeness}\label{section:strong-NPC}

Let there be a problem $B$ and $n$ be the length of an arbitrary input $x$ to $B$. A problem $B$ is NP-Complete (or NP-Hard) in the \textit{strong sense}\index{strong NP-Completeness} if it remains NP-Complete (or NP-Hard) even when its parameters are bounded by a polynomial $p$ of $n$.

To prove that a problem $B$ is NP-Complete (or NP-Hard) in the strong sense, we need to show \cite{Garey1979} that for some polynomial $p$, $B_p$ ($B$ constrained by $p$ of $n$) is NP-Complete (or NP-Hard).

\subsection{Brute force}

A \textit{brute force}\index{brute force algorithm} algorithm is an algorithm which tries all possibilities and then compares the output of each possibility to produce one possibility as an optimal result. For example, considering the rod cutting problem (see the definition in \Cref{section:problems-in-NP}), an algorithm which uses brute force to compute the optimal cuts on the rod to produce maximum profit, has time complexity exponential in the length of rod. We can rather use a dynamic programming approach, on the other hand, to solve any arbitrary rod cutting instance optimally in time quadratic in the length of the rod \cite{Cormen}.

Still, there are numerous problems which do not have a solution algorithm (yet) which runs in polynomial time. Some of the popular examples of such problems are finding the largest clique, coloring with minimum colors, finding maximum independent set, finding minimum vertex cover in an arbitrary graph. Such problems are yet NP-Hard because we have to try and search on every possibility. One of the reasons that these problems have no solution algorithm which gets executed in polynomial time because no overlapping subproblems have been defined (so far) for general graphs so that we could use a common dynamic programming approach and reduce the time complexity.

A \textit{pseudo-polynomial algorithm} is defined for number problems (we discuss examples of number problems shortly). A pseudo- polynomial algorithm runs in time polynomial in the value of the the input, rather than the input length. We generally use dynamic programming approach to design a pseudo-polynomial time algorithm. \Cref{obsertation:number-problem-polynomially-solvable} is stated in \cite{Garey1979}.

\begin{observation}\label{obsertation:number-problem-polynomially-solvable}
    If a problem $B$ is NP-Complete and $B$ is not a number problem, then $B$ cannot be solved by a pseudo-polynomial algorithm unless P $=$ NP.
\end{observation}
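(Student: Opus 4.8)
The plan is to argue by contraposition: I would assume that $B$ admits a pseudo-polynomial algorithm and deduce that $B \in \mathrm{P}$, whence $\mathrm{P} = \mathrm{NP}$ since $B$ is NP-Complete. First I would fix the standard quantities attached to an instance $I$ of $B$: its length $\mathrm{Length}[I]$ under a reasonable (non-unary) encoding, and $\mathrm{Max}[I]$, the magnitude of the largest integer occurring in $I$. With these in hand, I would recall the two definitions that do all the work. A pseudo-polynomial algorithm for $B$ is one whose running time on $I$ is bounded by $p(\mathrm{Length}[I], \mathrm{Max}[I])$ for some fixed bivariate polynomial $p$. And $B$ being \emph{not} a number problem means precisely that there is a polynomial $q$ with $\mathrm{Max}[I] \le q(\mathrm{Length}[I])$ for every instance $I$; that is, the numbers occurring in $B$ are never much larger than the instance that contains them.

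Next I would simply compose these bounds. After replacing $p$ by a larger polynomial with nonnegative coefficients if necessary, $p$ is monotone in its second argument, so substituting $\mathrm{Max}[I] \le q(\mathrm{Length}[I])$ gives a running time on $I$ of at most $p(\mathrm{Length}[I], q(\mathrm{Length}[I]))$, which is a polynomial in $\mathrm{Length}[I]$ alone. Hence the assumed pseudo-polynomial algorithm is in fact a genuine polynomial-time algorithm for $B$, so $B \in \mathrm{P}$. Since $B$ is NP-Complete, every language in NP reduces to $B$ in polynomial time, and therefore $\mathrm{NP} \subseteq \mathrm{P}$, i.e. $\mathrm{P} = \mathrm{NP}$. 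Taking the contrapositive yields the statement: if $\mathrm{P} \neq \mathrm{NP}$, no such pseudo-polynomial algorithm can exist.

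I do not anticipate a serious obstacle; the content is essentially bookkeeping with the definitions. The one step that needs care — and the only place where the hypotheses are genuinely used — is turning ``$B$ is not a number problem'' into the explicit bound $\mathrm{Max}[I] \le q(\mathrm{Length}[I])$, together with the implicit assumption that the encoding of $B$ is reasonable (so that, for instance, integers are written in binary); were the encoding unary, this distinction would collapse and the argument would say nothing. I would also note that the hypothesis cannot be dropped: number problems such as \textsc{Partition} and \textsc{Knapsack} do admit pseudo-polynomial algorithms, which is exactly the point of the ``not a number problem'' restriction. This follows the treatment in \cite{Garey1979}.
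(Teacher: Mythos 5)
Your proposal is correct. The paper does not actually prove this observation---it states it and defers to the citation of Garey and Johnson---and your argument is precisely the standard one from that source: ``not a number problem'' gives $\mathrm{Max}[I]\le q(\mathrm{Length}[I])$, substituting this into the bivariate polynomial bound of the pseudo-polynomial algorithm yields a genuinely polynomial-time algorithm, and NP-Completeness of $B$ then forces $\mathrm{P}=\mathrm{NP}$. Your remarks about the encoding being non-unary and about the hypothesis being necessary (e.g.\ \textsc{Knapsack}) are also the right caveats.
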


The problems like computation of a largest clique is an NP-Complete problem, and since it is not a number problem, a pseudo-polynomial time algorithm cannot be designed for it. On the other hand, the rod-cutting problem (see definition in \Cref{section:problems-in-NP}) is a number problem, but the length of the rod is always polynomial in the length of the input. So we have that it is polynomially solvable by the dynamic programming approach; we do not call the dynamic programming algorithm which we use to solve it a pseudo-polynomial time algorithm.

The are certain number problems which are NP-Complete (or NP-Hard) in the strong sense. These problems remain NP-Complete (or NP-Hard) even when we put a bound polynomial in the length of the input on its parameters and values in the input. For example, the distinct 3-partition problem is a number problem which is NP-Complete in the strong sense. If we try to bound-above each element in it by a polynomial in the length of the input, it still remains NP-Complete.

\subsection{Weak sense}

A problem is NP-Complete (or NP-Hard) in the \textit{weak sense}\index{weak sense NP-Completeness} if there is a solution of that problem which is polynomial in the \textit{magnitude} of the input value(s), given that its parameters are bounded above by the length of the input. Observe that if the values in the input are bounded by the polynomial in the length of the input and we obtain a solution which is polynomial in the \textit{magnitude} of the input value(s), then it also means that the solution is polynomial in the length of the input. Here, magnitude corresponds to the value of the input; for example, in the knapsack problem (see definition in \Cref{section:problems-in-NP}), if we bound the weight-capacity of the knapsack by a polynomial in the input length, we can obtain a pseudo-polynomial time algorithm \cite{Cormen} whose running time is a polynomial function of the input length. The complexity of the dynamic programming-based algorithm becomes $O(nk)$ where $n$ is the number of objects and $k$ is the weight-capacity of the knapsack. This solution is not necessarily a polynomial time solution because $k$ is not necessarily bounded polynomially by the size of the input (unlike $n$).

\section{Turing reduction}\label{section:ptr-pptr-npc}

One way of showing that a problem $B^\prime$ is computationally at least as hard as the problem $B$ is through Turing reduction. Let there be an arbitrary instance $I_B$ of $B$ (present in a set of instances which define a language $L$ of $B$, under an encoding scheme $e$). If for every instance $I_B$ in $B$, $I_B$ can be reduced in polynomial time to an instance $I_{B^\prime}$ of another problem $B^\prime$, we say that $B$ is \textit{Turing reducible} to $B^\prime$. It also implies that $B^\prime$ is at least as hard as $B$. Here, the languages and the underlying encoding schemes in which $I_B$ and $I_{B^\prime}$ are represented may as well be independent to each other (given that they do not accept any unnecessary padding).

Let that all the instances $I_B$ of a problem $B$ can be reduced to an instance $I_{B^\prime}$ of another problem $B^\prime$ by a one-to-one \textit{polynomial time reduction function}\index{polynomial time reduction function} $f$ ($\implies I_{B^\prime}=f(I_B)$). The characteristics of $f$ \cite{Garey1979} are as follows.
\begin{enumerate}
    \item $f$ can be computed deterministically in time polynomial in $|I_B|$.
    \item $B$ returns ``yes'' for $I_B$ as an input if and only if $B^\prime$ returns ``yes'' for $I_{B^\prime}$ as in input.
\end{enumerate}

\begin{lemma}\label{lemma:ptr-npc}
Let that $B$ is an NP-Complete problem and $B^\prime$ is in NP. Now if $B$ is \textit{Turing reducible} to $B^\prime$, we have that $B^\prime$ is also an NP-Complete problem.\cite{Garey1979}
\end{lemma}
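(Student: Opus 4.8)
The plan is to verify the two defining properties of NP-completeness for $B^\prime$: membership in NP, and NP-hardness. Membership in NP is given by hypothesis, so the entire work lies in establishing that every problem in NP reduces to $B^\prime$ in polynomial time. First I would recall the definition of NP-completeness of $B$: by assumption, for every language $A \in \text{NP}$ there is a polynomial-time reduction function $g$ witnessing $A \le_p B$, i.e. $g$ is computable deterministically in time polynomial in $|I_A|$, and $A$ accepts $I_A$ iff $B$ accepts $g(I_A)$. Separately, since $B$ is Turing reducible to $B^\prime$ via a polynomial-time reduction function $f$, we have that $f$ is computable deterministically in time polynomial in $|I_B|$, and $B$ accepts $I_B$ iff $B^\prime$ accepts $f(I_B)$.

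The core of the argument is composition: I would define $h = f \circ g$ and argue that $h$ is a valid polynomial-time reduction from $A$ to $B^\prime$. For correctness, chain the two biconditionals: $A$ accepts $I_A$ iff $B$ accepts $g(I_A)$ iff $B^\prime$ accepts $f(g(I_A)) = h(I_A)$. For the time bound, I would observe that $g$ runs in time $p(|I_A|)$ for some polynomial $p$, hence in particular $|g(I_A)| \le p(|I_A|)$ (the output of a polynomial-time machine has polynomially bounded length). Then $f$ on input $g(I_A)$ runs in time $q(|g(I_A)|) \le q(p(|I_A|))$ for some polynomial $q$, and since a composition of polynomials is a polynomial, $h$ is computable in time polynomial in $|I_A|$. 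Therefore $A \le_p B^\prime$. Since $A$ was an arbitrary member of NP, every problem in NP reduces to $B^\prime$ in polynomial time, so $B^\prime$ is NP-hard; combined with $B^\prime \in \text{NP}$, this gives that $B^\prime$ is NP-complete.

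The main obstacle — more a point requiring care than a genuine difficulty — is the bound $|g(I_A)| \le p(|I_A|)$ on the size of the intermediate instance. Without it, the running time of $f$ could in principle blow up, since $f$'s time is polynomial in \emph{its own} input length, not in $|I_A|$. The excerpt already supplies what is needed here: encoding schemes admit no unnecessary padding, so a deterministic polynomial-time computation cannot write down a superpolynomially long output. I would state this explicitly, citing the earlier discussion, and then the polynomial-composition step closes the argument. A secondary subtlety is that $g$ and $f$ may use independent encoding schemes for their respective languages; but this is harmless, precisely because the no-padding property makes the representations of a given object polynomially equivalent across such schemes, so the size estimate survives the change of encoding.
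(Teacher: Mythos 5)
Your proposal is correct: it is the standard composition-of-reductions argument, including the one point that genuinely requires care, namely that $|g(I_A)|$ is polynomially bounded in $|I_A|$ so that the running time of $f$ on the intermediate instance stays polynomial in the original input length. The paper itself supplies no proof for this lemma --- it is stated with a citation to Garey and Johnson --- so your write-up simply fills in the argument that the cited reference gives, and it does so correctly.
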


If $B$, for example, is a number problem, then to prove that some problem is at least as hard as $B$, we can reduce $B$ using a pseudo polynomial time reduction function. Let that each instance $I_B$ of a problem $B$ can be reduced to an instance $I_{B^\prime}$ of another problem $B^\prime$ by a one-to-one \textit{pseudo-polynomial time reduction function}\index{pseudo polynomial time reduction function} $f_p$ ($\implies I_{B^\prime}=f_p(I_B)$). The characteristics of $f_p$ are as follows \cite{Garey1979}. Mark that for a pseudo-polynomial time reduction, $B$ has has to be a number problem.
\begin{enumerate}
    \item $B$ returns ``yes'' for an instance $I_B$ if and only if $B^\prime$ returns ``yes'' for an instance $f_p(I_B)$.
    \item $f_p$ can be computed in time polynomial in two variables $n=|I_B|$ and $m=\max(I_B)$.
    \item There exists a single-variable polynomial $q_1$ such that, for every instance $I_B$ for which $B$ returns ``yes'',
    $$q_1(|f_p(I_B)|)\geq |I_B|=n$$
    \item There exists a two-variable polynomial $q_2$ such that
    $$\max(f_p(I_B))\leq q_2(\max(I_B),|I_B|)=q_2(n,m)$$
\end{enumerate}

\begin{lemma}\label{lemma:pptr-npc}
Let that $B$ is an NP-Complete problem in the strong sense and $B^\prime$ is in NP. Now if $B$ is pseudo-polynomial time reducible to $B^\prime$, we have that $B^\prime$ is also an NP-Complete problem in the strong sense.\cite{Garey1979}
\end{lemma}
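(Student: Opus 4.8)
The plan is to reduce the strong-sense claim to a single application of \Cref{lemma:ptr-npc}. By the definition of strong NP-Completeness in \Cref{section:strong-NPC}, since $B$ is NP-Complete in the strong sense there is a polynomial $p$ such that $B_p$ --- the restriction of $B$ to those instances $I_B$ with $\max(I_B)\leq p(|I_B|)$ --- is NP-Complete. It therefore suffices to exhibit a polynomial $p'$ for which $B'_{p'}$ is NP-Complete, and I would obtain this by restricting the given pseudo-polynomial time reduction function $f_p$ to the smaller domain $B_p$.

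First I would check that $f_p$, once its domain is cut down to $B_p$, behaves like an \emph{ordinary} polynomial time reduction function in the sense of \Cref{section:ptr-pptr-npc}. For $I_B\in B_p$ we have $m=\max(I_B)\leq p(n)$ where $n=|I_B|$; by property 2 of $f_p$ it is computable in time polynomial in $n$ and $m$, hence in time polynomial in $n$ and $p(n)$, i.e.\ polynomial in $n$ alone. In particular $|f_p(I_B)|$ is bounded by a polynomial in $n$. Property 1 of $f_p$ already supplies the ``yes''-instance equivalence, so on the domain $B_p$ the map $f_p$ satisfies both characteristics required of a polynomial time reduction function.

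Next I would bound the numbers occurring in the image. By property 4, $\max(f_p(I_B))\leq q_2(m,n)\leq q_2(p(n),n)$, which is a polynomial in $n$. By property 3, $n=|I_B|\leq q_1(|f_p(I_B)|)$, so $n$ is in turn bounded by a polynomial in the output length $|f_p(I_B)|$. Composing these, $\max(f_p(I_B))$ is bounded by a polynomial in $|f_p(I_B)|$; call this polynomial $p'$. Hence $f_p$ sends every instance of $B_p$ to an instance of $B'_{p'}$, the restriction of $B'$ to instances whose largest number is at most $p'$ of the instance length. Moreover $B'_{p'}$ is in NP, since $B'$ is in NP and any restriction of an NP problem is again in NP.

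Putting it together: $B_p$ is NP-Complete, $f_p$ is a polynomial time (indeed many-one, hence Turing) reduction from $B_p$ to $B'_{p'}$, and $B'_{p'}\in$ NP; so \Cref{lemma:ptr-npc} gives that $B'_{p'}$ is NP-Complete, which is exactly what it means for $B'$ to be NP-Complete in the strong sense. I expect the main obstacle to be the bookkeeping in the middle steps: one must use property 2 to get an honest polynomial time reduction on $B_p$, property 4 to bound the output numbers by a polynomial in $n$, and property 3 to re-express that bound in terms of the output \emph{length} rather than $n$, and then verify that the polynomials $q_1$, $q_2$ and $p$ compose in the correct order so that the image truly lands inside a number-bounded restriction $B'_{p'}$ of $B'$.
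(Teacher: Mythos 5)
The paper states this lemma without proof, simply citing Garey and Johnson, so there is no in-paper argument to compare against; judged on its own, your proof is correct and is in fact the standard argument from that reference (their Lemma 4.1). You correctly identify the three moving parts: property 2 plus the bound $m\leq p(n)$ turns $f_p$ into an honest polynomial-time reduction on the restricted domain $B_p$; property 4 bounds $\max(f_p(I_B))$ by a polynomial in $n$; and property 3 converts that into a bound in terms of $|f_p(I_B)|$, so the image lands in a number-restricted subproblem $B'_{p'}$ to which \Cref{lemma:ptr-npc} applies. The one small wrinkle worth noting is that the paper's property 3 is quantified only over instances for which $B$ returns ``yes'', whereas your argument (and the conclusion that \emph{every} image instance lies in $B'_{p'}$) needs the inequality $n\leq q_1(|f_p(I_B)|)$ for all instances, no-instances included; this is how Garey and Johnson actually state the condition, so the gap lies in the paper's transcription of the definition rather than in your reasoning, but a careful write-up should flag that you are using the unrestricted form.
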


\section{Approximation, approximability and inapproximability}\label{section:inapproximability-NPH}

When a problem is NP-Complete, we theorize that (assuming the conjecture that P $\neq$ NP) we cannot solve the problem optimally in polynomial time. This arises the requirement of \textit{approximation algorithms}\index{approximation algorithms}: algorithms which can take us close enough to the optimal solution of a given problem; we generally call it ``acceptable'' solution. It computes a solution with a cost which is close enough to the optimal solution We generally use that solution in practical applications. Let that an optimization problem $P$ is NP-Complete, an algorithm $O_P$ which is optimally able to solve it (assume, in exponential time), and an algorithm $A_P$ which is an approximation algorithm to solve $P$. Let $R_A$ be the approximation ratio guaranteed by $A_P$ to solve $P$.

Let that $x$ be an arbitrary input to $P$. If $P$ is a minimization problem, we have that the approximation ratio $$R_A=\frac{A_P(x)}{O_P(x)}.$$
If otherwise $P$ is a maximization problem, we have that the approximation ratio $$R_A=\frac{O_P(x)}{A_P(x)}.$$
These (mathematical or intuition-based) guarantees are computed to hold for any arbitrary input.

The nature of \textit{approximability} sometimes changes with the cost of the optimal solution. For some problems, we have that if the cost of the optimal solution is more than a given arbitrary positive integer $N$, we can guarantee a different (generally, a better) approximation ratio, denoted as $R_A^\infty$.
\begin{center}
    $R_A^\infty=\inf\{r\geq 1:R_A(x)\leq r\ \forall\ x$ such that $O_P(x)\geq N\}$
\end{center}

Following the conjecture that P $\neq$ NP, we have that if a problem is NP-Complete, we cannot go on constructing approximation algorithms close to a ratio $1$ to the optimum. If P $\neq$ NP, then there must be a limit to the \textit{approximability} of an NP-Hard problem. The approximability directly depends on the problem itself. We have \Cref{theorem:approximability-general} \cite{Garey1979} stating a property regarding the design of approximation algorithms in general.

\begin{theorem}\label{theorem:approximability-general}
    If the solution for an NP-Hard problem $P$ has cost $k\in\mathbb{N}$, then no approximation algorithm $A_P$ can guarantee that the approximation ratio $R_A<1+(1/k)$, and $P$ cannot be solved by a polynomial time approximation scheme, given that P $\neq$ NP.
\end{theorem}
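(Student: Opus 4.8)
The plan is to argue by contradiction, exploiting the integrality of the objective together with a rounding (or ``gap'') argument; the two clauses are proved in sequence, the second reducing to the first. Throughout I would treat $P$ as a minimization problem, write $\mathrm{OPT}(x)$ for the optimal cost of an input $x$, and read ``approximation algorithm'' as ``polynomial-time approximation algorithm'' (otherwise the exact exponential-time algorithm $O_P$ trivially attains ratio $1$). The maximization case is symmetric, and I would only flag the one inequality that changes.

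\emph{Step 1 (no guaranteed ratio below $1+1/k$).} Suppose, for contradiction, that some polynomial-time algorithm $A_P$ guarantees $R_A<1+1/k$ on every input whose optimum equals $k$. Fix such an input $x$, so $\mathrm{OPT}(x)=k$. By the definition of the approximation ratio for a minimization problem,
$$A_P(x)\le R_A\cdot\mathrm{OPT}(x)<\Bigl(1+\tfrac1k\Bigr)k=k+1.$$
Since $A_P(x)$ is the cost of a feasible solution it is an integer, and since $k=\mathrm{OPT}(x)$ is the minimum it also satisfies $A_P(x)\ge k$; hence $A_P(x)=k=\mathrm{OPT}(x)$. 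Thus $A_P$ returns an optimal solution on every such input in polynomial time. Because $P$ is NP-hard, some NP-complete problem reduces to $P$ (cf.\ \Cref{section:problems-classes}), so this would place that NP-complete problem in P, contradicting $\mathrm{P}\neq\mathrm{NP}$. For a maximization problem the only change is that $R_A=\mathrm{OPT}(x)/A_P(x)<1+1/k$ forces $A_P(x)>k/(1+1/k)=k^2/(k+1)=k-k/(k+1)>k-1$, whence again $A_P(x)=k$, and the rest is identical.

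\emph{Step 2 (no PTAS).} Suppose $P$ admitted a polynomial-time approximation scheme. By definition, for every fixed $\varepsilon>0$ it yields a polynomial-time algorithm with guaranteed ratio at most $1+\varepsilon$. Choosing $\varepsilon=1/(k+1)$ (any value with $1+\varepsilon<1+1/k$ works) produces a polynomial-time algorithm whose guaranteed ratio is strictly below $1+1/k$, which Step~1 has just shown to be impossible unless $\mathrm{P}=\mathrm{NP}$. Hence no such scheme exists.

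\emph{Main obstacle.} The delicate point is not the rounding itself but justifying that the algorithm obtained in Step~2 is genuinely polynomial: a PTAS is only required to run in time polynomial in $|x|$ for each \emph{fixed} $\varepsilon$, whereas here $\varepsilon$ is tied to $k=\mathrm{OPT}(x)$. I would handle this by reading the theorem with $k$ as a fixed target cost, so that $\varepsilon=1/(k+1)$ is a constant and the PTAS runs in polynomial time; I would also remark that the argument extends to any $P$ whose optimal values are bounded by a polynomial in $|x|$. A secondary point to state carefully is that ``guarantee'' refers to a worst-case deterministic bound, so a single bad input already yields the contradiction, and that feasibility is what forces $A_P(x)$ to be an integer in $[k,\infty)$ (resp.\ in $\{1,\dots,k\}$ for maximization) — which is exactly what makes the strict inequality collapse to an equality.
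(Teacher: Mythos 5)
The paper states this theorem without proof: it is quoted from Garey and Johnson \cite{Garey1979}, so there is no in-paper argument to compare against. Your reconstruction is the standard one and is essentially sound. The integrality collapse in Step~1 ($k\le A_P(x)<k+1$ forces $A_P(x)=k$) is exactly the classical gap argument, and deriving the PTAS clause by fixing $\varepsilon=1/(k+1)$ is the right reduction, including your necessary remark that $\varepsilon$ must be a constant (i.e.\ $k$ a fixed target) for the resulting algorithm to be polynomial. The one step you should tighten is the final inference of Step~1: showing that $A_P$ computes the optimum on instances with $\mathrm{OPT}(x)=k$ (or $\le k$) only yields $\mathrm{P}=\mathrm{NP}$ if the threshold decision problem ``is $\mathrm{OPT}(x)\le k$?'' is itself NP-hard. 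Your appeal to ``some NP-complete problem reduces to $P$'' does not quite close this, because a generic reduction may produce instances whose optima are unrelated to $k$, and then optimality on the cost-$k$ slice decides nothing. This is largely a defect of the theorem's loose hypothesis (``the solution \ldots has cost $k$'') rather than of your argument, and your closing paragraph already gestures at the right reading; you should simply make explicit that the NP-hardness being invoked is the hardness of the cost-$k$ threshold question, as in the canonical instances of this theorem ($3$-colorability for chromatic number, two bins for bin packing).
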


\section{Main objectives}

Graph burning has been recently introduced and has been identified as an NP-Complete problem. Our aim is to study graph burning on interval graphs, permutation graphs, and disk graphs and determine if graph burning can be solved on these graph classes in polynomial time. We have found that that burning of these graph classes is NP-Complete.

\section{Organization of the chapters}\label{section:organization-of-chapters}

\Cref{chapter:list-of-definitions} includes definitions, along with some basic theory, on some graph classes and problems in NP. It also includes elaborated definitions of some symbols (in \Cref{section:definitions-for-symbols}) along with definitions of some complexity notations (in \Cref{section:complexity-notation-functions}) that are used commonly in algorithms' texts. \Cref{chapter:literature} contains the results already present in the literature, which are related to this theory that we present in the following chapters. 

In \Cref{chapter:graph-burning}, we introduce \textit{graph burning}: we describe what the problem is, along with descriptive examples for better understanding. We also discuss some problems and games which were discovered earlier than \textit{graph burning}, but are closely related to it. We also look at some other works which have described some interesting applications related to \textit{graph burning}.

In \Cref{chapter:graph-burning-examples}, we discuss some more general and mathematically sound examples, and show optimal \textit{burning} procedures on several graph classes. We also describe an algorithm which can be used to \textit{burn} general graphs.

\Cref{chapter:other-games-and-problems} describes some other games and problems. We discuss the \textit{distinct 3-partition problem}; we utilize it in later chapters in deriving some useful proofs towards NP-Completeness of \textit{burning} several graph classes. We discuss the \textit{firefighter problem} which we later see (towards the conclusion, \Cref{chapter:conclusion}) that it can be utilized in controlling the spread of \textit{fire} throughout a graph along with some useful examples that may lead to good research developments.

In \Cref{chapter:why-hard}, we describe why optimal \textit{burning} of general graphs is computationally hard. We show that \textit{burning} several classes of graphs is NP-Complete. This is the chapter where we include some of our original findings that \textit{burning} certain subclasses of \textit{geometric graphs} is NP-Complete. On the other hand, in \Cref{chapter:where-easy}, we describe a few graph classes on which optimal \textit{burning} can be done in polynomial time.

In \Cref{chapter:approximation}, we describe a 3-approximation algorithm which can be used to derive a \textit{burning sequence} for an arbitrary graph in polynomial time. We also discuss how much we can get close to the \textit{burning number} in polynomial time while computing a \textit{burning sequence}.

We conclude in \Cref{chapter:conclusion} with some obvious, but interesting observations, along with the description of some prospective research opportunities related to the subject which we find useful and interesting.

\chapter{Preliminaries}\label{chapter:list-of-definitions}


\section{Complexity notation functions}\label{section:complexity-notation-functions}

The following functions \cite{Cormen} are used to denote the complexity \index{complexity notations} of algorithms in terms of the input size $n$. The exact runtime complexity of an algorithm is returned by $g(n)$, a function of $n$.\\

{\boldmath$\Theta$}

$\Theta(g(n)) = \{f(n) : \exists\ c_1>0, c_2>0$ and $n_0>0$ such that \[0 \leq c_1\ g(n) \leq f(n)\leq c_2\ g(n)\ \forall\ n\geq n_0\}
\]

{\boldmath$O$}

$O(g(n)) = \{f(n) : \exists\ c>0$ and $n_0>0$ such that \[0 \leq f(n)\leq c\ g(n)\ \forall\ n\geq n_0\}
\]

{\boldmath$\Omega$}

$\Omega(g(n)) = \{f(n) : \exists\ c>0$ and $n_0>0$ such that \[0 \leq c\ g(n) \leq f(n)\ \forall\ n>n_0\}
\]

{\boldmath$o$}

$o(g(n)) = \{f(n) : \forall\ c>0\ \exists\ n_0>0$ such that \[0 \leq f(n) < c\ g(n)\ \forall\ n\geq n_0\}
\]

{\boldmath$\omega$}

$\omega(g(n)) = \{f(n) : \forall\ c>0\ \exists\ n_0>0$ such that \[0 \leq c\ g(n) < f(n)\ \forall\ n\geq n_0\}
\]

\section{Definitions for symbols}\label{section:definitions-for-symbols}

Referring from the list of symbols.\\

\textbf{Left sequential union}\index{left sequential union}: If $P = (a,b)$, then after executing the statement $P = P \cup_{\setminus s} (c)$, $P$ becomes $(c, a, b)$. This operation can add a single element to a sequence, or merge two sequences.\\

\textbf{Right sequential union}\index{right sequential union}: If $P = (b,c)$, then after executing the statement $P = P \cup_{s/} (a)$, $P$ becomes $(b, c, a)$. This operation can add a single element to a sequence, or merge two sequences.\\

\textbf{Infimum}: The infimum of a subset $X$ of a set $X^\prime$ is the largest element of $X^\prime$ which is less than or equal to all the elements in $X$.\\

\textbf{Ceiling}: Let $x$ be a real number, then $i = \lceil x\rceil$ is the smallest integer such that $i \geq x$.\\

\textbf{Floor}: Let $x$ be a real number, then $i = \lfloor x\rfloor$ is the biggest integer such that $i \leq x$.\\

\textbf{Setminus}\index{setminus}: It removes the elements from the set preceding the operation symbol which are common to the set succeeding it. If $A = \{a, b, c\}$ and $B = \{b, c\}$ are two sets, then $A \setminus B = \{c\}$. For the sake of another example if $A = \{a, b, c\}$ and $B = \{c, d, e\}$ are two sets, then $A \setminus B = \{a, b\}$.\\

\textbf{Shortest distance between vertices}\index{shortest distance}: This statement returns the number of edges in a shortest path between two vertices $x$ and $y$.\\

\textbf{Adjacency at a distance {\boldmath$1$}}: This statement returns the set of vertices that are adjacent to $X$, excluding $X$. $X$ can be a single vertex, a set of vertices, or a subgraph of $G$.\\

\textbf{Adjacency at a distance {\boldmath$i$}}\index{adjacency}: This statement returns the set of vertices that are atmost at a distance $i$ from $X$, excluding $X$. $X$ can be a single vertex, a set of vertices, or a subgraph of $G$.\\

\textbf{Edge set in graph {\boldmath$G$}}\index{edge set in a graph}. This keyword acts as a variable which denotes the edge set in graph $G$.\\

\textbf{Neighbourhood at a distance {\boldmath$1$}}: This statement returns the set of vertices that are adjacent to $X$, including $X$. $X$ can be a single vertex, a set of vertices, or a subgraph of $G$.\\

\textbf{Neighbourhood at a distance {\boldmath$i$}}\index{neighbourhood}: This statement returns the set of vertices that are atmost at a distance $i$ from $X$, including $X$. $X$ can be a single vertex, a set of vertices, or a subgraph of $G$.\\

\textbf{Vertex set in graph {\boldmath$G$}}\index{vertex set in a graph}: This keyword acts as a variable which denotes the vertex set in graph $G$.\\

\textbf{Shortest path function}\index{shortest path}: A function that returns the shortest path $P$ from $u$ to $v$; the sequence of vertices in $P$ from $u$ to $v$, including $u$ and $v$.

\section{Problems in NP}\label{section:problems-in-NP}

The following problems are mentioned in the following chapters.\\

\textbf{\textit{Distinct 3-partition problem}}: In a \textit{distinct 3-partition problem} \index{distinct 3-partition problem}, given input is a set of positive integers, $X = \{a_1, a_2, . . ., a_{3n}\}$, and a positive integer $B$ such that $\sum_{i=1}^{3n}a_i = nB, \frac{B}{4}>a_i>\frac{B}{2}$; the task is to find if $X$ can be partitioned into $n$ sets, each containing $3$ integers, such that each set sums to $B$.\\

\textbf{\textit{Determination of a Hamiltonian cycle}}: A\textbf{\textit{Hamiltonian cycle}} \index{Hamiltonian cycle} of a graph $G$ is a path $P=(v_1,v_2,\dots,v_n,v_1)$ such that $n=|G.V|$ and for every pair of adjacent vertices $v_i$ and $v,j$ in $P$, $(v_i,v_j)\in G.E$. One approach to determine whether a hamiltonian cycle exists in a graph $G$ can be done as follows: we can check for every possible sequence of vertices $G.V$ that it satisfies the constraint or not. If there is at least one such sequence, the return value is $true$, otherwise $false$. This approach takes $O(n^n)$ time.\\

\textbf{\textit{Graph coloring problem}}\index{vertex coloring}: Given a graph $G$ and a set of infinite colors $C$, the task is to find the minimum number of colors in $C$ which can be assigned to each vertex in $G.V$, such that (1) each vertex is colored with only one color, and (2) $\forall\ a,b \in G.V$, if $(a,b) \in G.E$, then $color(a) \neq color(b)$. One solution is to color vertices sequentially, for each sequence of vertices in $G.V$. The sequence of vertices which utilizes the least amount of colors is the final solution. This approach takes $O(n^n)$ time.\\

\textbf{\textit{Graph isomorphism problem}}\index{isomorphism in graphs}: Given two graphs $G_1$ and $G_2$ such that $|G_1.V| = |G_2.V|$, the task is to determine if $\exists$ a sequence of vertices $C_1$ of $G_1.V$ and a sequence of vertices $C_2$ of $G_2.V$ such that $\forall\ 1\leq i,j\leq |G_1.V|: i\neq j$, $(C_2[i],C_2[j]) \in G_2.E$ if and only if $(C_1[i],C_1[j]) \in G_1.E$. One possible solution is to compare one sequence of $G_1.V$ with all the possible sequences of $G_2.V$. If the constraints get satisfies get satisfies at least once, then the value $true$ is returned, otherwise $false$. This approach takes $O(n^n)$ time.\\

\textbf{\textit{Knapsack problem}}: The input is a set of non-divisible objects have some associated weight and value, and a knapsack of a weight-capacity $k$. The objective is to fill the objects (repetition of one type to object is possible indefinitely) in the knapsack such that the total value is maximum.\\

\textbf{\textit{Largest Clique problem}}\index{largest clique problem}: Given a graph $G$, the task is to find the largest set of vertices $C\subseteq G.V$ such that $\forall\ a,b \in C$, if $a\neq b$ then $(a,b) \in G.E$. A possible solution is to check for each possible subset of $G.V$ that satisfy the constraint. The largest of all such sets is the solution. This approach takes $O(2^n)$ time.\\

\textbf{\textit{Maximum independent set problem}}\index{independent set maximum size}: Given a graph $G$, the task is to find the set of vertices $C\subseteq G.V$, of greatest possible size such that $\forall\ a,b \in G.V$, for all $(a,b) \in G.E$, if $a \in C$, then $b \not\in C$. A possible solution is to check for each possible subset of $G.V$ that satisfy the constraint. The largest of all such sets is the solution. This approach takes $O(2^n)$ time.\\

\textbf{\textit{Minimum vertex cover problem}}\index{minimum vertex cover problem}: Given a graph $G$, the task is to find the set of vertices $C\subseteq G.V$, of least possible size such that $\forall\ (a,b) \in G.E$, either $a \in C$ or $b\in C$ or both $a,b \in C$. A possible solution is to check for each possible subset of $G.V$ that satisfy the constraint. The smallest of all such sets is the solution. This approach takes $O(2^n)$ time.\\

\textbf{\textit{Minimum dominating set problem}}\index{minimum dominating set}: Given a graph $G$, the task is to compute a subset $D$ of $G.V$ of minimum size such that each vertex in $G.V\setminus D$ is connected to at least one vertex in $D$ by an edge.\\

\textbf{\textit{Rod cutting problem}}: Given is the length of a rod $L$ and a list of profit $p_i$ corresponding to all the possible lengths $i$ of the rod $C = \{p_i, i\}_{i=1}^L: i\in\mathbb{N}$. The task is to find how the rod should be cut in order to maximize the profit. One solution is to assume that the rod of length $L$ units can be cut at $L-1$ positions. Then compute the cost of each cut-decisions' sequence. The sequence which produces the maximum of the costs is the output. This procedure takes $O(2^{L-1})$ time.

\section{Graph Classes}

The following are a few graph classes with their definitions.

\subsection{Spider graphs}

\index{spider graphs}In a \textit{spider graph} (as per the usage in this text), only one vertex $c$, the head vertex, is of degree $d \geq 3$; the degree of all other vertices is less than $3$, it is either $2$ or $1$. An example spider graph is presented in \Cref{figure:example-spider-graph}. {\boldmath$SP(s,r)$} \index{$SP(s,r)$} is a spider graph with degree $s$ of head $c$, and length of each arm $r$.

\begin{figure}
    \centering
    \begin{tikzpicture}
        \node [circle, fill=black, inner sep=0pt, minimum size=3pt, label=left:{$v_1$}] (A) at (0,0) {};
        \node [circle, fill=black, inner sep=0pt, minimum size=3pt, label=above:{$v_2$}] (B) at (0,1) {};
        \node [circle, fill=black, inner sep=0pt, minimum size=3pt, label=below:{$v_3$}] (C) at (.67,.33) {};
        \node [circle, fill=black, inner sep=0pt, minimum size=3pt, label=below:{$v_4$}] (D) at (.33,-.67) {};
        \node [circle, fill=black, inner sep=0pt, minimum size=3pt, label=below:{$v_5$}] (E) at (-.33,-.67) {};
        \node [circle, fill=black, inner sep=0pt, minimum size=3pt, label=left:{$v_6$}] (F) at (-.67,.33) {};
        \node [circle, fill=black, inner sep=0pt, minimum size=3pt, label=right:{$v_7$}] (G) at (1.67,.33) {};
        
        \draw (A) -- (B); \draw (A) -- (C); \draw (A) -- (D); \draw (A) -- (E); \draw (A) -- (F);
        \draw (C) -- (G);
    \end{tikzpicture}
    \caption{An example spider graph.}
    \label{figure:example-spider-graph}
\end{figure}
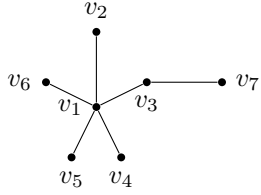

\subsection{\texorpdfstring{\boldmath$P_k$}{p-k}-free graphs}

{\boldmath$P_k$} \index{$P_k$} is a path of $k$ vertices ($k-1$ edges). A graph $G$ is $P_k-free$ \index{$P_k$-free graphs} if any induced subgraph of $G$ does not contain $P_k$.

\subsection{Cographs}\label{subsection:cographs}

\textit{Cographs} \index{cographs} can be recursively defined as follows. A single vertex is a cograph. A disjoint union of two cographs is a cograph. A complete join of two cographs is a cograph. Graph complement of a cograph is a cograph, which is why, cographs are also called \textbf{\textit{complement-reducible graphs}}\index{complement-reducible graphs}. Cographs are $P_4$-free graphs. A few examples of (recursively) constructed cographs are presented in \Cref{figure:example-cographs}.

\begin{figure}
    \centering
    \subfigure[]{
        \begin{tikzpicture}
            \node [circle, fill=black, inner sep=0pt, minimum size=3pt, label=below:{$v$}] at (0,0) {};
        \end{tikzpicture}
    }
    \subfigure[]{
        \begin{tikzpicture}
            \node [circle, fill=black, inner sep=0pt, minimum size=3pt, label=below:{$v_1$}] at (0,0) {};
            \node [circle, fill=black, inner sep=0pt, minimum size=3pt, label=below:{$v_2$}] at (1,0) {};
        \end{tikzpicture}
    }
    \subfigure[]{
        \begin{tikzpicture}
            \node [circle, fill=black, inner sep=0pt, minimum size=3pt, label=above:{$v_1$}] (A) at (0,0) {};
            \node [circle, fill=black, inner sep=0pt, minimum size=3pt, label=above:{$v_2$}] (B) at (1,0) {};
            
            \node [circle, fill=black, inner sep=0pt, minimum size=3pt, label=below:{$v_3$}] (C) at (1,-1) {};
            \node [circle, fill=black, inner sep=0pt, minimum size=3pt, label=below:{$v_4$}] (D) at (0,-1) {};
            
            \draw (A) -- (C); \draw (A) -- (D);
            \draw (B) -- (C); \draw (B) -- (D);
        \end{tikzpicture}
    }
    \subfigure[]{
        \begin{tikzpicture}
            \node [circle, fill=black, inner sep=0pt, minimum size=3pt, label=above:{$v_1$}] (A) at (0,0) {};
            \node [circle, fill=black, inner sep=0pt, minimum size=3pt, label=above:{$v_2$}] (B) at (1,0) {};
            
            \node [circle, fill=black, inner sep=0pt, minimum size=3pt, label=below:{$v_3$}] (C) at (1,-1) {};
            \node [circle, fill=black, inner sep=0pt, minimum size=3pt, label=below:{$v_4$}] (D) at (0,-1) {};
            
            \node [circle, fill=black, inner sep=0pt, minimum size=3pt, label=below:{$v_5$}] (E) at (2,0) {};
            \node [circle, fill=black, inner sep=0pt, minimum size=3pt, label=below:{$v_6$}] (F) at (2,-1) {};
            
            \draw (A) -- (C); \draw (A) -- (D);
            \draw (B) -- (C); \draw (B) -- (D);
        \end{tikzpicture}
    }
    \subfigure[]{
        \begin{tikzpicture}
            \node [circle, fill=black, inner sep=0pt, minimum size=3pt, label=above:{$v_1$}] (A) at (0,-.25) {};
            \node [circle, fill=black, inner sep=0pt, minimum size=3pt, label=above:{$v_2$}] (B) at (1,0) {};
            
            \node [circle, fill=black, inner sep=0pt, minimum size=3pt, label=below:{$v_3$}] (C) at (1,-1) {};
            \node [circle, fill=black, inner sep=0pt, minimum size=3pt, label=below:{$v_4$}] (D) at (0,-.75) {};
            
            \node [circle, fill=black, inner sep=0pt, minimum size=3pt, label=below:{$v_5$}] (E) at (2,0) {};
            \node [circle, fill=black, inner sep=0pt, minimum size=3pt, label=below:{$v_6$}] (F) at (2,-1) {};
            
            \draw (A) -- (B);
            \draw (A) -- (E); \draw (A) -- (F);
            \draw (B) -- (E); \draw (B) -- (F);
            
            \draw (C) -- (D);
            \draw (C) -- (E); \draw (C) -- (F);
            \draw (D) -- (E); \draw (D) -- (F);
        \end{tikzpicture}
    }
    \caption{a: a single vertex is a \textit{cograph}; b: a disjoint union of two copies of the \textit{cograph} in (a) is a \textit{cograph}; c: a complete join of two copies of the \textit{cograph} in (b) is a cograph; d: a disjoint union of the \textit{cographs} in (b) and (c) is a \textit{cograph}; e: the graph complement of the \textit{cograph} in (d) is a \textit{cograph}.}
    \label{figure:example-cographs}
\end{figure}
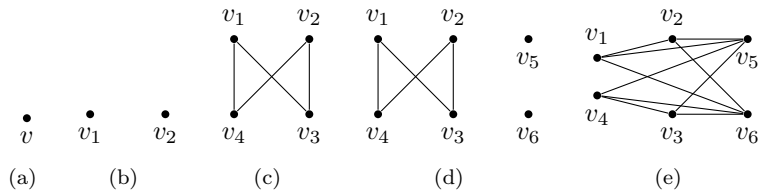

\subsection{Disk Graphs}

A graph $G$ is a \textit{disk graph} \index{disk graphs} if there is an edge between a pair of vertices iff the circles drawn on the plane with those vertices as centers overlap. These circles can generally be of arbitrary radius. If radius of all the circles overlap, the graph is called a \textbf{\textit{unit disk graph}}.

For example, let a circle $C$ of radius $R=2$ be positioned on the plane at $(0,0)$, three more disks $c_1$, $c_2$ and $c_3$, each of radius $r=1$, is placed with their centres respectively at $(2,0)$, $(0,2)$, and $(-2,0)$. Let us assume that a chain of $4$ disks $Ch_1=(c_1^1, c_1^2, c_1^3, c_1^4)$ is attached to $c_1$ such that $c_1^1$ overlaps with $c_1$ and $c_1^2$ only, $c_1^4$ overlaps with $c_1^3$ only, and $\forall\ 2 \leq j \leq 3$, $c_1^j$ overlap with only $c_1^{j-1}$ and $c_1^{j+1}$. Exactly in the similar way, there is a chain behind each of $c_2$ ($Ch_2=(c_2^1, c_2^2, c_2^3, c_2^4)$) and $c_3$ ($Ch_3=(c_3^1, c_3^2, c_3^3, c_3^4)$).

There are $q=3$ chains of disks, and $p=4$ more disks behind the first disk in each chain. Let $Ch = \{Ch_1, Ch_2, Ch_3\}$ and $Cir = \{c_1, c_2, c_3\}$ We denote this network of disks by $DK(R$, $r$, $q$, $p$, $C$, $Cir$, $Ch) = DK(2$, $1$, $3$, $4$, $C$, $Cir$, $Ch)$.

Let the vertex corresponding to $C$ be called head $h$, vertices corresponding to $c_i$ be called $v_i$, and the vertices corresponding to $c_i^j$ be called $v_i^j$, $\forall\ 1 \leq i \leq 3$, and $\forall\ 1 \leq j \leq 4$. The graph formed by this setting will be a spider graph $SP(3,5)$, as shown in \Cref{figure:example-disk-graph-1}.

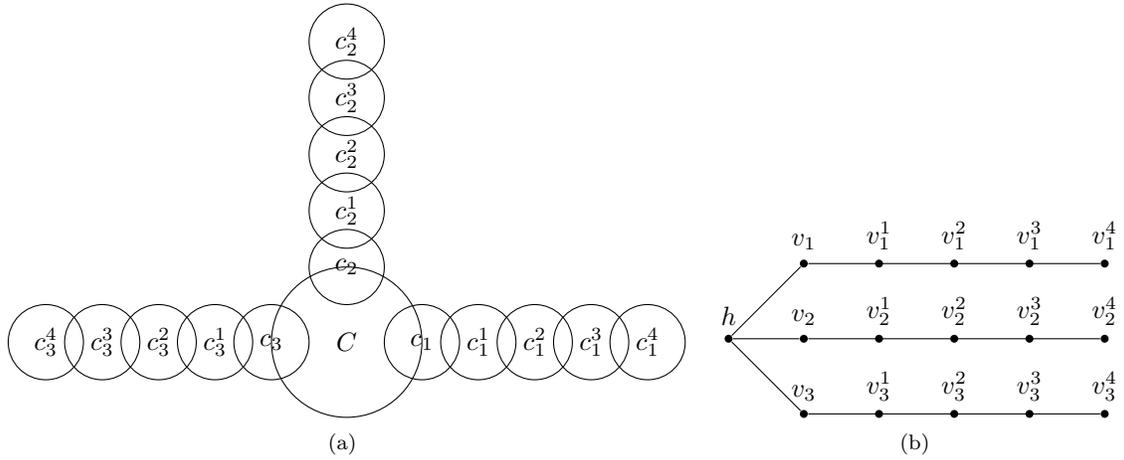
\begin{figure}
    \centering
    \subfigure[]{
        \begin{tikzpicture}[scale=0.5]
            \draw (0,0) circle (2); \node at (0,0) {$C$};
                
            \draw (2,0) circle (1); \node at (2,0) {$c_1$};
            \draw (3.5,0) circle (1); \node at (3.5,0) {$c_1^1$};
            \draw (5,0) circle (1); \node at (5,0) {$c_1^2$};
            \draw (6.5,0) circle (1); \node at (6.5,0) {$c_1^3$};
            \draw (8,0) circle (1); \node at (8,0) {$c_1^4$};
            
            \draw (0,2) circle (1); \node at (0,2) {$c_2$};
            \draw (0,3.5) circle (1); \node at (0,3.5) {$c_2^1$};
            \draw (0,5) circle (1); \node at (0,5) {$c_2^2$};
            \draw (0,6.5) circle (1); \node at (0,6.5) {$c_2^3$};
            \draw (0,8) circle (1); \node at (0,8) {$c_2^4$};
            
            \draw (-2,0) circle (1); \node at (-2,0) {$c_3$};
            \draw (-3.5,0) circle (1); \node at (-3.5,0) {$c_3^1$};
            \draw (-5,0) circle (1); \node at (-5,0) {$c_3^2$};
            \draw (-6.5,0) circle (1); \node at (-6.5,0) {$c_3^3$};
            \draw (-8,0) circle (1); \node at (-8,0) {$c_3^4$};
        \end{tikzpicture}
    }
    \subfigure[]{
        \begin{tikzpicture}
            \node [circle, fill=black, inner sep=0pt, minimum size=3pt, label=above:{$h$}] (C) at (0,0) {};
            
            \node [circle, fill=black, inner sep=0pt, minimum size=3pt, label=above:{$v_1$}] (A) at (1,1) {};
            \node [circle, fill=black, inner sep=0pt, minimum size=3pt, label=above:{$v_2$}] (B) at (1,0) {};
            \node [circle, fill=black, inner sep=0pt, minimum size=3pt, label=above:{$v_3$}] (D) at (1,-1) {};
            
            \draw (A) -- (C); \draw (B) -- (C); \draw (D) -- (C);
                
            \setcounter{c}{1}
            \loop
                \node [circle, fill=black, inner sep=0pt, minimum size=3pt, label=above:{$v_1^{\thec}$}] (AA) at (\value{c}+1,1) {};
                \node [circle, fill=black, inner sep=0pt, minimum size=3pt, label=above:{$v_2^{\thec}$}] (AB) at (\value{c}+1,0) {};
                \node [circle, fill=black, inner sep=0pt, minimum size=3pt, label=above:{$v_3^{\thec}$}] (AD) at (\value{c}+1,-1) {};
                
                \stepcounter{c}
                \ifnum \value{c} < 5
                    \repeat
                    
            \draw (AA) -- (A); \draw (AB) -- (B); \draw (AD) -- (D);
            
        \end{tikzpicture}
    }
    \caption{(a) arrangement of disks and (b) the corresponding disk graph, geometrically not according to the arrangement of the disks, but connections according to the overlap of respective disks.}
    \label{figure:example-disk-graph-1}
\end{figure}

Another example of an arrangement of disks is shown in \Cref{figure:example-disk-graph-2}.

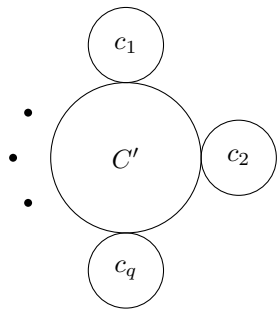
\begin{figure}
    \centering
    \begin{tikzpicture}
        \draw (0,0) circle (1); \node at (0,0) {$C^\prime$};
        
        \draw (0,1.5) circle (.5); \node at (0,1.5) {$c_1$};
        \draw (1.5,0) circle (.5); \node at (1.5,0) {$c_2$};
        \node [circle, fill=black, inner sep=0pt, minimum size=3pt] (D) at (-1.5+.2,.6) {};
        \node [circle, fill=black, inner sep=0pt, minimum size=3pt] (D) at (-1.5,0) {};
        \node [circle, fill=black, inner sep=0pt, minimum size=3pt] (D) at (-1.5+.2,-.6) {};
        \draw (0,-1.5) circle (.5); \node at (0,-1.5) {$c_q$};
    \end{tikzpicture}
    \caption{Central disk {$C^\prime$} and a set of $Cir = \{c_1, c_2, ..., c_q\}$ disks with their circumference touching the circumference of {$C^\prime$}, and not overlaping with each other, or with $C^\prime$}
    \label{figure:example-disk-graph-2}
\end{figure}

\subsection{Interval Graphs}\label{subsection:interval-graphs}

An \textit{interval graph} \index{interval graphs} is formed from a set of intervals on the real line where each interval is represented as a vertex and there is an edge between two vertices if an only if their corresponding intervals overlap on the real line.

\subsubsection{Interval graphs from a set of intervals}

The input is the list of intervals $L$, each interval $i$ has a starting time $s_i$ and an ending time $e_i$. Each interval in $L$ corresponds to a vertex in $G$.

To convert a set of intervals to an interval graph, for each interval $a$ $\in$ $L$, a vertex $v_a$ is added to $G$.$V$. Wherever there is an overlap between any two distinct intervals $a$ and $b$, $a$, $b$ $\in$ $L$, that is, if $s_b$ $\geq$ $s_a$ and $s_b < e_a$, we add an edge ($v_a$, $v_b$) in $G.E$. After following this procedure, $G$ represents the interval graph corresponding to $L$. This is demonstrated in \Cref{figure:example-interval-graph}.

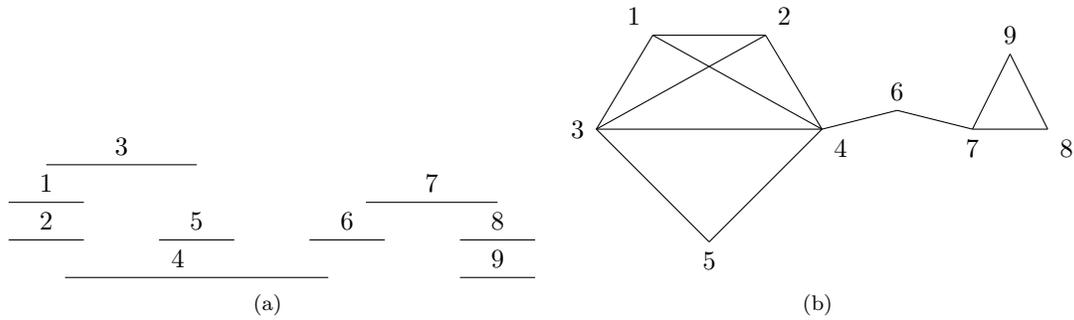
\begin{figure}
    \centering
    \subfigure[]{
		\begin{tikzpicture}
			\draw (-4, 0) -- (-3, 0);
			\draw (-4, 0.5) -- (-3, 0.5);
			\draw (-3.5, 1) -- (-1.5, 1);
			\draw (-3.25, -0.5) -- (0.25, -0.5);
			\draw (-2, 0) -- (-1, 0);
			
			\draw (0, 0) -- (1, 0);
			\draw (0.75, 0.5) -- (2.5, 0.5);
			\draw (2, 0) -- (3, 0);
			\draw (2, -0.5) -- (3, -0.5);
			
			\node at (-3.5, 0.75) {$1$};
			\node at (-3.5, 0.25) {$2$};
			\node at (-2.5, 1.25) {$3$};
			\node at (-1.75, -0.25) {$4$};
			\node at (-1.5, 0.25) {$5$};
			
			\node at (0.5, 0.25) {$6$};
			\node at (1.625, 0.75) {$7$};
			\node at (2.5, 0.25) {$8$};
			\node at (2.5, -0.25) {$9$};
		\end{tikzpicture}
	}
	\subfigure[]{
		\begin{tikzpicture}
			\draw (-4, 0) -- (-1, 0);
			
			\draw (-3.25, 1.25) -- (-1.75, 1.25);
			\draw (-4, 0) -- (-3.25, 1.25);
			\draw (-1.75, 1.25) -- (-1, 0);
			\draw (-4, 0) -- (-1.75, 1.25);
			\draw (-3.25, 1.25) -- (-1, 0);
			
			\draw (-4, 0) -- (-2.5, -1.5);
			\draw (-2.5, -1.5) -- (-1, 0);
			
			\draw (-1, 0) -- (0, 0.25);
			
			\draw (0, 0.25) -- (1, 0);
			
			\draw (1, 0) -- (2, 0);
			
			\draw (1, 0) -- (1.5, 1);
			\draw (1.5, 1) -- (2, 0);
			
			\node at (-3.5, 1.5) {$1$};
			\node at (-1.5, 1.5) {$2$};
			\node at (-4.25, 0) {$3$};
			\node at (-0.75, -0.25) {$4$};
			\node at (-2.5, -1.75) {$5$};
			
			\node at (0, 0.5) {$6$};
			\node at (1, -0.25) {$7$};
			\node at (2.25, -0.25) {$8$};
			\node at (1.5, 1.25) {$9$};
		\end{tikzpicture}
	}
	\caption{An example set of intervals (a) and the corresponding interval graph (b).}
	\label{figure:example-interval-graph}
\end{figure}

As demonstrated in \Cref{figure:example-interval-graph-frames}, we can also imagine this procedure as follows. We imagine that a vertical line that traverses the intervals from left to right (we could otherwise do right to left). The vertical line will intersect the horizontal intervals while traversal, it may intersect zero, one, or more intervals at a particular instant. We draw a clique between the corresponding vertices for each distinct set of intervals that the vertical line intersects. Each distinct instance that can possibly be presented by the position of the vertical line is called a \textit{frame}.

\begin{figure}
	\centering
	\begin{tikzpicture}
		\draw (-4, 0) -- (-3, 0);
		\draw (-4, 0.5) -- (-3, 0.5);
		\draw (-3.5, 1) -- (-1.5, 1);
		\draw (-3.25, -0.5) -- (0.25, -0.5);
		\draw (-2, 0) -- (-1, 0);
		
		\draw (0, 0) -- (1, 0);
		\draw (0.75, 0.5) -- (2.5, 0.5);
		\draw (2, 0) -- (3, 0);
		\draw (2, -0.5) -- (3, -0.5);
		
		\node at (-3.5, 0.75) {$1$};
		\node at (-3.5, 0.25) {$2$};
		\node at (-2.5, 1.25) {$3$};
		\node at (-1.75, -0.25) {$4$};
		\node at (-1.5, 0.25) {$5$};
		
		\node at (0.5, 0.25) {$6$};
		\node at (1.625, 0.75) {$7$};
		\node at (2.5, 0.25) {$8$};
		\node at (2.5, -0.25) {$9$};
		
		\draw[red] (-3.75, 1.5) -- (-3.75, -1);
		\draw[red] (-3.375, 1.5) -- (-3.375, -1);
		\draw[red] (-3.125, 1.5) -- (-3.125, -1);
		\draw[red] (-1.75, 1.5) -- (-1.75, -1);
		\draw[red] (0.125, 1.5) -- (0.125, -1);
		\draw[red] (0.875, 1.5) -- (0.875, -1);
		\draw[red] (2.25, 1.5) -- (2.25, -1);
	\end{tikzpicture}
	\caption{Positions of the vertical line while traversing an example set on intervals on the real line. This figure presents each frame where a new untraversed interval is encountered and a clique in the graph (in \Cref{figure:example-interval-graph}-(b)) is added.}
	\label{figure:example-interval-graph-frames}
\end{figure}
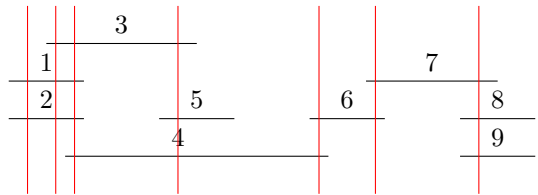

\subsection{Permutation graphs}

A \textit{permutation graph} \index{permutation graphs} $G$ is constructed from an original sequence of numbers $O = (1, 2, 3, . . ., k)$ and its permutation $P = (p_1, p_2, p_3, . . ., p_k)$ such that there is an edge between vertices $v_i$ and $v_j$, corresponding to the numbers $i$ and $j$ in $O$, if $i < j$ in $O$, but $j$ occurs before $i$ in $P$. For example, let $O = (1, 2, 3, 4, 5, 6, 7, 8)$ and $P = (3, 1, 5, 2, 7, 4, 8, 6)$ be the subject permutation of $O$. Then the permutation graph $G$ formed from this pair $(O, P)$ is shown as \Cref{figure:example-permutation-graph}.

\begin{figure}
	\centering
	\begin{tikzpicture}
	    \draw (0,0) -- (7,0);
	    
	    \node [circle, fill=black, inner sep=0pt, minimum size=3pt, label=below:{$v_1$}] at (0,0) {};
		\node [circle, fill=black, inner sep=0pt, minimum size=3pt, label=below:{$v_3$}] at (1,0) {};
	    \node [circle, fill=black, inner sep=0pt, minimum size=3pt, label=below:{$v_2$}] at (2,0) {};
		\node [circle, fill=black, inner sep=0pt, minimum size=3pt, label=below:{$v_5$}] at (3,0) {};
	    \node [circle, fill=black, inner sep=0pt, minimum size=3pt, label=below:{$v_4$}] at (4,0) {};
	    \node [circle, fill=black, inner sep=0pt, minimum size=3pt, label=below:{$v_7$}] at (5,0) {};
	    \node [circle, fill=black, inner sep=0pt, minimum size=3pt, label=below:{$v_6$}] at (6,0) {};
		\node [circle, fill=black, inner sep=0pt, minimum size=3pt, label=below:{$v_8$}] at (7,0) {};
	\end{tikzpicture}
	\caption{Representation of permutation graph corresponding to $(O, P)$, where $O = (1, 2, 3, 4, 5, 6, 7, 8)$ and $P = (3, 1, 5, 2, 7, 4, 8, 6)$.}
	\label{figure:example-permutation-graph}
\end{figure}
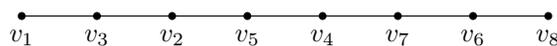

\subsection{Split graphs}\label{subsection:split-graphs}

\index{split graphs}A graph $G$ is a \textit{split graph} if its vertices can be partitioned into a clique and an independent set. A split graph is $P_5$ free. An example split graph is presented in \Cref{figure:example-split-graph}

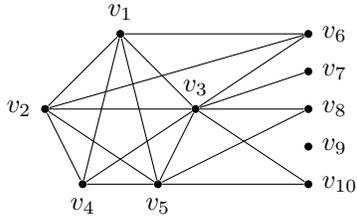
\begin{figure}
    \centering
    \begin{tikzpicture}
        \node [circle, fill=black, inner sep=0pt, minimum size=3pt, label=above:{$v_1$}] (A) at (0,0) {};
        \node [circle, fill=black, inner sep=0pt, minimum size=3pt, label=left:{$v_2$}] (B) at (-1,-1) {};
        \node [circle, fill=black, inner sep=0pt, minimum size=3pt, label=above:{$v_3$}] (C) at (1,-1) {};
        \node [circle, fill=black, inner sep=0pt, minimum size=3pt, label=below:{$v_4$}] (D) at (-.5,-2) {};
        \node [circle, fill=black, inner sep=0pt, minimum size=3pt, label=below:{$v_5$}] (E) at (.5,-2) {};
        
        \draw (A) -- (B); \draw (A) -- (C); \draw (A) -- (D); \draw (A) -- (E);
        \draw (B) -- (C); \draw (B) -- (D); \draw (B) -- (E);
        \draw (C) -- (D); \draw (C) -- (E);
        \draw (D) -- (E);
        
        \node [circle, fill=black, inner sep=0pt, minimum size=3pt, label=right:{$v_6$}] (F) at (2.5,0) {};
        \node [circle, fill=black, inner sep=0pt, minimum size=3pt, label=right:{$v_7$}] (G) at (2.5,-.5) {};
        \node [circle, fill=black, inner sep=0pt, minimum size=3pt, label=right:{$v_8$}] (H) at (2.5,-1) {};
        \node [circle, fill=black, inner sep=0pt, minimum size=3pt, label=right:{$v_9$}] (I) at (2.5,-1.5) {};
        \node [circle, fill=black, inner sep=0pt, minimum size=3pt, label=right:{$v_{10}$}] (J) at (2.5,-2) {};
        
        \draw (A) -- (F); \draw (B) -- (F); \draw (C) -- (F);
        \draw (C) -- (G);
        \draw (C) -- (H); \draw (E) -- (H);
        \draw (C) -- (J); \draw (E) -- (J);
    \end{tikzpicture}
    \caption{An example split graph. $C=\{v_1,v_2,\dots,v_5\}$ is the clique and $I=\{v_6,v_7,\dots,v_{10}\}$ is the independent set excerpt of this split graph.}
    \label{figure:example-split-graph}
\end{figure}

\chapter{Literature Survey}\label{chapter:literature}

\section{Graph burning}\label{section:burning-literature}

The burning number was introduced in \cite{Bonato2016}. This work showed that the burning number of a path or cycle of length $n$ is $\lceil\sqrt{n}\rceil$. They have presented some other properties and results also related to the graph burning problem. Bessy et al. \cite{Bessy2017} showed that general graph burning is NP-Complete. They showed that optimal burning spider graphs, trees, and path-forests is NP-Complete. A 3-approximation algorithm for burning general graphs was described in \cite{Bessy2017}. Bonato et al. \cite{Bonato2019} proposed a $2$-approximation algorithm for burning trees. A 2-approximation algorithm for graphs bounded by a diameter of constant length was described in \cite{Kamali2019,Kamali2020}. A 1.5 approximation algorithm for burning path forests was described in \cite{Bonato2019a}. This article also showed that burning number of spider graphs of order $n$ is atmost $\sqrt{n}$. Simon et al. \cite{Simon2019} presented systems that utilize burning in the spread of an alarm through a network.
Bessy et al. \cite{Bessy2018} proved that the burning number of a connected graph of order $n$ is at most $\sqrt{\frac{12n}{7}}+3 \approx 1.309\sqrt{n}+3$. They also showed that the burning number of trees with $n_2$ vertices of degree $2$, and $n (\geq3)$ vertices of degree at least $3$ is at most $\sqrt{(n+n_2)+\frac{1}{4}}+\frac{1}{2}$. \cite{Simon2019a} have provided heuristics to minimize the time steps in burning a graph. They have studied that which vertices should be selected to be burnt from ``outside'' and in which time-steps.

\section{Cographs}

A cotree \cite{Corneil1981}, constructed from a cograph, assists in the computation of properties of a cograph in linear time, such as \textit{maximum independent set}, \textit{graph coloring number}, and determining if there exists a Hamiltonian cycle. A cograph can be recognized in linear time, and cotrees can be constructed using modular decomposition \cite{Corneil1985}, partition refinement \cite{Habib2005}, LexBFS \cite{Bretscher2008}, or split decomposition \cite{Gioan2012}.  An induced subgraph of a cograph is a cograph itself \cite{Damaschke1990}.

\section{Disk graphs}

Computation of various properties on disk graph is NP-Hard. Clark et al. (1990) \cite{Clark1990} showed that finding the chromatic number of disk graph is NP-Complete. They also showed that 3-coloring problem is NP-Complete on unit disk graph, even if their underlying disk representation is given. Although, in contrast, they also gave a polynomial time algorithm to find maximal cliques in unit disk graph when the geometrical representation of the underlying disks is given.

\section{Interval graphs}

There are several linear time algorithms available to solve different problems on interval graph. Olariu \cite{Olariu1991} discovered linear time algorithm for coloring.  Marathe et al. \cite{Marathe1992} gave a linear time algorithm to compute minimum vertex cover. Similarly, a linear time algorithm to compute interval graph isomosphism was given in \cite{Lueker1979}. Ibarra \cite{Ibarra2017} proposed a linear time algorithm to compute the clique separator graph of a given interval graph. Fomin et al. \cite{Fomin2016} described an algorithm which solves the firefighter problem on interval graphs in $O(n^7)$ time. Interval graph ordering was introduced in \cite{Ramalingam1988}. As per interval graph ordering, the vertices are ordered according to the increasing order of the ending time of their corresponding intervals. Ravi et al. \cite{Ravi1992} proposed an algorithm to solve all pairs shortest path in $(O(n^2))$ time, where $n$ is the number of vertices. Authors in \cite{Kamali2019,Kamali2020}, along with \cite{Kare2019} discussed bounds on the burning number of interval graph. Although they have not provided any algorithm to find an optimal burning sequence.

\subsection{Similarity between a path and an interval graph}\label{subsection:IG-path-similar}

\begin{proposition}\label{proposition:LongestPathLIG}
Let $l,r\in L$ be two intervals such that $P_L$ (shortest path between $l$ and $r$) is of maximum length as compared to shortest path between any pair of intervals in $L$. Then, for each interval $i\in L$, if $i\not\in P_L$, then $i$ shall overlap with at least one interval of $P_L$.
\end{proposition}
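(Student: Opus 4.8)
The plan is to argue by contradiction: suppose some interval $i \in L$ with $i \notin P_L$ overlaps no interval of $P_L$. I would then build a path that is strictly longer than $P_L$, contradicting the maximality of $P_L$. First I would set up notation: let $P_L$ be a shortest path between $l$ and $r$ in the interval graph, realized as a sequence of intervals $l = j_0, j_1, \dots, j_k = r$, and recall from \Cref{subsection:interval-graphs} that consecutive intervals in this sequence overlap on the real line. Since $G$ is assumed connected (the shortest path between every pair of intervals exists, so we are within a connected component; if $L$ is not connected one works inside the component containing $l$ and $r$), there is a path from $i$ to $P_L$; let $i = u_0, u_1, \dots, u_m$ be a shortest such path, with $u_m$ the first vertex lying on $P_L$ and $m \geq 1$ by the assumption that $i$ overlaps nothing on $P_L$. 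Note $u_0, \dots, u_{m-1}$ are all off $P_L$.

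The key step is to exploit a structural feature of interval graphs: $u_m$ sits on $P_L$ as some $j_t$, and I claim $t$ cannot be $0$ or $k$ without contradiction, and moreover that the "detour" $j_{t}, u_{m-1}, \dots, u_0 = i$ can be spliced onto either the $l$-end or the $r$-end of $P_L$ to produce a longer $l$--$r$ path — unless doing so creates a shortcut. The heart of the matter is ruling out shortcuts: I must show that the detour vertices $u_0, \dots, u_{m-1}$ are not adjacent to $j_0, \dots, j_k$ except at $j_t$, at least not in a way that lets one re-route $P_L$ through the detour and still land a path no longer than $P_L$. Here is where the interval structure does the work — on the real line, if $u_{m-1}$ overlapped some $j_s$ with $s$ far from $t$, then intervals strictly between them on the line would force additional overlaps, and one can locate a point where the detour actually meets $P_L$ at distance less than $m$ from $i$, contradicting minimality of $m$; pushing this through carefully yields that appending the detour to the appropriate endpoint of $P_L$ gives a genuine $l$--$r$ path of length $k + m > k$, contradicting that $P_L$ is a longest shortest-path in $L$.

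The main obstacle I anticipate is precisely this shortcut-elimination: in a general graph the detour could attach to the middle of $P_L$ and one would only get a longer path between $l$ (or $r$) and $i$, not between $l$ and $r$. So the argument genuinely needs the geometry of intervals — the fact that overlap on the line is an interval (Helly-type) property — to show that any interval reachable from $i$ in few steps but not touching $P_L$ cannot exist, because the leftmost/rightmost extent of $P_L$'s union already "covers" the relevant portion of the line. An alternative, possibly cleaner route is direct rather than by contradiction: take $i \notin P_L$, let $Q$ be a shortest path from $i$ to the set $V(P_L)$, ending at $j_t$; then $\max(\text{dist}(l, i), \text{dist}(r, i)) \geq \text{dist}(l, j_t) + \text{dist}(j_t, i) - (\text{something})$, and compare against $\text{dist}(l,r) = k$; if $i$ overlapped no interval of $P_L$ then $\text{dist}(j_t, i) \geq 2$ and one of the two distances from $i$ to an endpoint exceeds $k$, contradicting maximality of $P_L$ among all shortest paths. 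I would likely develop both and keep whichever handles the endpoint cases ($t = 0$ or $t = k$) with least fuss.
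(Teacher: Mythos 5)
Your contradiction target in the ``alternative route'' --- show that one of $d(i,l)$, $d(i,r)$ exceeds $k=d(l,r)$, contradicting the maximality of $P_L$ --- is exactly the paper's strategy, but as written neither of your routes closes, and there are two genuine gaps. First, your primary route aims at a non-contradiction: splicing a detour onto $P_L$ to obtain ``a genuine $l$--$r$ path of length $k+m>k$'' contradicts nothing, because $P_L$ is only a \emph{shortest} $l$--$r$ path that happens to be longest among shortest paths; longer non-shortest $l$--$r$ walks always exist. What must be produced is a \emph{pair of vertices} whose \emph{shortest} path exceeds $k$. Second, the one claim that carries all the weight is never established: you assert that $d(i,j_t)\ge 2$ forces one of the endpoint distances past $k$, but if the attachment vertex $j_t$ is interior to $P_L$, the triangle-type bounds only give $\max(d(i,l),d(i,r))$ of roughly $k/2+d(i,j_t)$, which need not exceed $k$. (A path $v_0\cdots v_k$ with a pendant path of length $2$ hanging off $v_{\lfloor k/2\rfloor}$ realizes exactly this configuration for $k\ge 4$; it is not an interval graph, which is precisely the point --- your stated sufficient condition is not sufficient without more interval geometry than you have invoked.)

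The missing step, which is the opening line of the paper's proof, is that the union of the intervals along the connected subgraph $P_L$ is a single contiguous segment $[s,e]$ of the real line, where $s=\min_{j\in P_L}s_j$ and $e=\max_{j\in P_L}e_j$. Hence an interval $i$ overlapping nothing of $P_L$ cannot sit in a ``gap'': it lies wholly to the left of $s$ or wholly to the right of $e$. This one-sidedness is what reduces the attachment analysis to an \emph{endpoint} of $P_L$ rather than an interior $j_t$: if $i$ is entirely to the left, any $i$--$r$ path must sweep past the interval of $l$, yielding the paper's inequality $d(i,r)\ge d(l,r)+d(i,l)-1\ge k+1$ (using $d(i,l)\ge 2$), which is the desired contradiction. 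You do mention the contiguity of $P_L$'s span, but only to rule out shortcuts in the splicing argument; without using it to force $i$ to one side of the whole path, the case of a mid-path attachment is never eliminated and the argument does not go through.
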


\begin{proof}
Let $s = \min s_j:j\in P_L$ and $e = \max e_j:j\in P_L$. Let that $i$ does not overlap with any interval in $L$, then either (1) $e_i < s$, or (2) $s_i > e$.

First let $e_i < s$ for contradiction. Let $P_L^\prime$ be the shortest path between $i$ and $l$, and $P_L^{\prime\prime}$ be the shortest path between $i$ and $r$. Here, we obtain a contradiction because $P_L^{\prime\prime} \geq P_L + P_L^\prime-1$ and $P^\prime \geq 2$ $|P_L^{\prime\prime}|$, so $P_L$ is not of maximum length. Otherwise, $L$ may be disconnected.

Similarly, we can obtain a contradiction for an interval $i$ such that $s_i > e_r$.
\end{proof}

\begin{corollary}\label{corollary:NHClosurePIG}
Let $P$ be the shortest path between a pair of vertices in $G$, which is of maximum length as compared to the shortest path between any other pair of vertices in $G.V$, that is, $P$ is the diameter of $G$, then all vertices in $G\setminus P$ are connected by a single edge with at least one of the vertices in $P$.
\end{corollary}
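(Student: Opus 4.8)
The plan is to obtain \Cref{corollary:NHClosurePIG} as an immediate translation of \Cref{proposition:LongestPathLIG} into the language of the interval graph, so the work is almost entirely bookkeeping about the correspondence between intervals and vertices. First I would fix a set of intervals $L$ realising $G$, and recall from \Cref{subsection:interval-graphs} that two vertices $v_a,v_b$ are adjacent in $G$ exactly when the intervals $a,b$ overlap on the real line; in particular, ``connected by a single edge'' and ``their intervals overlap'' are interchangeable statements. I would then record the distance-preservation fact: a walk $v_{a_0},v_{a_1},\dots,v_{a_k}$ in $G$ corresponds precisely to a chain of intervals $a_0,a_1,\dots,a_k$ in which consecutive intervals overlap, so $dist(v_a,v_b)$ in $G$ equals the least length of such an overlapping chain from $a$ to $b$, and hence the diameter of $G$ is realised by a pair of intervals $l,r$ whose shortest overlapping chain $P_L$ has maximum length over all pairs of intervals in $L$.

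With this identification in place, the diameter path $P$ of $G$ is exactly the vertex sequence corresponding to $P_L$, and $l,r$ satisfy the hypothesis of \Cref{proposition:LongestPathLIG}. Applying that proposition, every interval $i\in L$ with $i\notin P_L$ overlaps at least one interval of $P_L$; reading this back through the adjacency correspondence, every vertex of $G\setminus P$ is joined by a single edge to at least one vertex of $P$, which is precisely the assertion of the corollary.

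The only step that needs genuine care is the implicit connectivity hypothesis: both the notion of ``diameter'' and the argument in \Cref{proposition:LongestPathLIG} presuppose that $G$ (equivalently, $L$) is connected, since otherwise no finite diameter exists. I would therefore state ``connected'' explicitly, or restrict attention to a single connected component, and beyond that I do not anticipate a substantive obstacle — the corollary really is a restatement of the proposition with the interval-overlap relation replaced by graph adjacency, so the main task is simply to make the interval-to-vertex dictionary precise.
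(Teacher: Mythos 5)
Your derivation matches the paper's: \Cref{corollary:NHClosurePIG} is presented there as the direct graph-theoretic restatement of \Cref{proposition:LongestPathLIG}, with interval overlap translated into adjacency exactly as you describe. Your remark about the implicit connectivity hypothesis is a fair point of care (the proposition's own proof alludes to it), but it does not change the argument.
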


\Cref{corollary:NHClosurePIG} has been proved earlier in \cite{Kare2019} and \cite{Kamali2019,Kamali2020}. It can also be observed that the interval graphs do not contain a cycle of size more than $3$ \cite{Gilmore1964,Golumbic2004}; we show this in \Cref{proposition:IG-no-cycle}.

\begin{proposition}\label{proposition:IG-no-cycle}
For any induced subgraph $G^\prime$ of an interval graph $G$ with $|G^\prime.V| \neq 3,\ G^\prime$ will not be a cycle.
\end{proposition}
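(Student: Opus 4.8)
The plan is to show that if an induced subgraph $G'$ of an interval graph $G$ is a cycle $C_n$, then $n$ must equal $3$; equivalently, no induced $C_n$ with $n \geq 4$ can appear in an interval graph. First I would set up the interval representation: fix intervals $I_1,\dots,I_n$ on the real line corresponding to the $n$ vertices $v_1,\dots,v_n$ of the supposed induced cycle, with the cyclic adjacency $v_i \sim v_{i+1}$ (indices mod $n$) and no other edges. The key structural fact I would extract is that for an induced cycle with $n \geq 4$, each interval $I_i$ must overlap $I_{i-1}$ and $I_{i+1}$ but must be disjoint from all the other intervals.

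Next I would argue by a "leftmost endpoint" / extremal argument. Consider the interval $I_k$ whose left endpoint $s_k$ is minimum among all $n$ intervals (break ties by taking the one with, say, larger right endpoint). Its two cycle-neighbors $I_{k-1}$ and $I_{k+1}$ each overlap $I_k$; since $s_k$ is minimal, both $I_{k-1}$ and $I_{k+1}$ have left endpoints $\geq s_k$ and each contains a point in $I_k$. I would then show that $I_{k-1}$ and $I_{k+1}$ are forced to overlap each other: informally, both must "reach into" $I_k$ from the right of $s_k$, so they share the portion of $I_k$ near $s_k$ — more carefully, if $I_{k-1} = [s_{k-1}, e_{k-1}]$ and $I_{k+1} = [s_{k+1}, e_{k+1}]$ with $s_{k-1}, s_{k+1} \in [s_k, e_k]$, then the one with the smaller left endpoint contains the left endpoint of the other (since that left endpoint still lies to the left of $e_k \le$ the other's right endpoint — here I would use the Helly property of intervals on the line restricted to the three pairwise-overlapping-looking intervals, or just a direct case check). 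This yields an edge $v_{k-1} v_{k+1}$, which is a chord of the cycle whenever $n \geq 4$, contradicting the assumption that $G'$ is an induced cycle.

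I would also dispatch the small remaining case $n \leq 2$ trivially (a $C_1$ or $C_2$ is not a simple cycle in the usual sense, or is excluded by $|G'.V| \neq 3$ meaning we only care about $n \neq 3$, and $n=1,2$ give no cycle), so the only surviving possibility is $n = 3$, i.e. a triangle, which is indeed realizable. Combined with Proposition \ref{proposition:LongestPathLIG} and Corollary \ref{corollary:NHClosurePIG}, this gives the claimed "path-like" behavior of interval graphs. The main obstacle I anticipate is making the overlap-forcing step fully rigorous: one must be careful that $s_{k-1}$ and $s_{k+1}$ really lie inside $I_k$ (they lie in $[s_k,e_k]$ because they are $\geq s_k$ by minimality and $< e_k$ since $I_{k\pm 1}$ overlaps $I_k$ and starts no earlier than $I_k$), and then argue the chord cleanly — this is where a clean invocation of the Helly property of intervals, or an explicit two-case argument on which of $s_{k-1}, s_{k+1}$ is smaller, is needed to avoid hand-waving.
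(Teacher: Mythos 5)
Your overall strategy --- assume an induced cycle $C_n$ with $n\ge 4$ sits inside the interval graph and derive a forbidden chord from an extremal interval --- is the right and standard one (it is essentially the proof that interval graphs are chordal), and it is in the same spirit as, though more disciplined than, the paper's own argument, which reasons from a figure that $i_1$ or $i_k$ must overlap all the intermediate intervals. However, your key step fails as written. Choosing $I_k$ to be the interval with the \emph{minimal left endpoint} does not force its two cycle-neighbours to overlap each other: take $I_k=[0,10]$, $I_{k-1}=[1,2]$, $I_{k+1}=[5,6]$. Both neighbours have left endpoints $\ge s_k=0$, both meet $I_k$, and yet they are disjoint, so no chord $v_{k-1}v_{k+1}$ appears. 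The inequality you lean on --- that $e_k$ is at most the right endpoint of the neighbour with the smaller left endpoint --- is not available from your hypotheses and is false in this example. The Helly property cannot rescue the step either, because Helly takes pairwise intersection as a \emph{hypothesis}, and the intersection of $I_{k-1}$ with $I_{k+1}$ is exactly what you are trying to establish. (The global cycle structure does still force a contradiction in such configurations, but through the vertices opposite $v_k$, not through the chord your argument names, so the proof as proposed does not go through.)

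The repair is a one-line change of extremal choice: let $I_k$ be the interval with the \emph{minimal right endpoint} $e_k$ (or, symmetrically, the maximal left endpoint). Then each neighbour $I_{k\pm1}$ must satisfy $s_{k\pm1}\le e_k$, since otherwise it would lie entirely to the right of $I_k$ and could not meet it, and $e_{k\pm1}\ge e_k$ by minimality of $e_k$. Hence both neighbours contain the single point $e_k$, so they intersect, $v_{k-1}v_{k+1}$ is an edge, and the cycle is not induced for $n\ge 4$; the cases $n\le 2$ are trivial as you note. With that substitution your argument is complete and in fact tighter than the paper's.
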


\begin{proof}
Let the interval graph be a cycle of $k \geq 4$ vertices $v_1$, $v_2$, $v_3$, $...$, $v_k$ corresponding to the intervals $i_1, i_2, i_3, ..., i_k$ respectively. The interval $i_j$ overlaps with interval $i_{j+1},\ \forall 1\leq j\leq k-1$. Now to validate this cycle, interval $i_1$ must overlap with $i_k$. This is only possible if either or both of the following occur.

(a) $i_1$ overlaps with all intervals $i_2, i_3, i_4, ..., i_{k-1}$ also.

(b) $i_k$ overlaps with all intervals $i_2, i_3, i_4, ..., i_{k-1}$ also.\\
Here we obtain the contradiction that there will not exist cycle of length greater than $3$ in any induced subgraph of $G$. This is demonstrated in \Cref{figure:IG-no-cycle}.
\end{proof}

\begin{figure}
	\centering
	\subfigure[]{
		\begin{tikzpicture}
			\draw (0, 0) -- (1, 0); \draw[dashed] (1, 0) -- (3.75, 0);
			\draw (0.75, -1) -- (1.75, -1);
			\draw (2.5, -3) -- (3.5, -3);
			\draw (3.25, -4) -- (4.25, -4);  \draw[dashed] (3.25, -4) -- (0.5, -4);
			
			\node at (0.5, -0.5) {$i_1$};
			\node at (1.3, -1.5) {$i_2$};
			
			\node [circle, fill=black, inner sep=0pt, minimum size=3pt] at (2,-1.75) {};
			\node [circle, fill=black, inner sep=0pt, minimum size=3pt] at (2.125,-2) {};
			\node [circle, fill=black, inner sep=0pt, minimum size=3pt] at (2.25,-2.25) {};
			
			\node at (3, -3.5) {$i_{k-1}$};
			\node at (3.8, -4.5) {$i_k$};
			
			\node at (2, -5) {\textit{Set of intervals}};
		\end{tikzpicture}
	}
	\subfigure[]{
		\begin{tikzpicture}
		    \node [circle, fill=black, inner sep=0pt, minimum size=3pt, label=below:{$v_1$}] at (0,0) {};
		    \node [circle, fill=black, inner sep=0pt, minimum size=3pt, label=below:{$v_2$}] at (1,0) {};
		    
		    \node [circle, fill=black, inner sep=0pt, minimum size=2pt] at (1.75,0) {};
		    \node [circle, fill=black, inner sep=0pt, minimum size=2pt] at (2,0) {};
		    \node [circle, fill=black, inner sep=0pt, minimum size=2pt] at (2.25,0) {};
		    
		    \node [circle, fill=black, inner sep=0pt, minimum size=3pt, label=above:{$v_{k-1}$}] at (3,0) {};
		    \node [circle, fill=black, inner sep=0pt, minimum size=3pt, label=above:{$v_k$}] at (4,0) {};
		    
			\draw (0,0) -- (1.5,0);
			\draw (4,0) -- (2.5,0);
			
			\draw (2,1) parabola (0,0); \draw (2,1) parabola (4,0);
			\draw[dashed] (1.5,.75) parabola (0,0); \draw[dashed] (1.5,.75) parabola (3,0);
			\draw[dashed] (.5,.25) parabola (0,0); \draw[dashed] (.5,.25) parabola (1,0);
			
			\draw[dashed] (2.5,-.75) parabola (4,0); \draw[dashed] (2.5,-.75) parabola (1,0);
			\draw[dashed] (3.5,-.25) parabola (4,0); \draw[dashed] (3.5,-.25) parabola (3,0);
			
			\node at (2,-1) {$Corresponding\ interval\ graph$};
		\end{tikzpicture}
	}
	\caption{Demonstration of what happens if we try to create an interval graph cycle and the corresponding set of intervals. The dotted lines are the prospective edges which shall form, apart from the original cycle of $k$ vertices. Either or both of the terminal vertices $v_1$ or $v_k$, on the original path $v_1, ..., v_k$ shall be connected to all the vertices from $v_2,v_3,\dots,v_{k-1}$.}
	\label{figure:IG-no-cycle}
\end{figure}
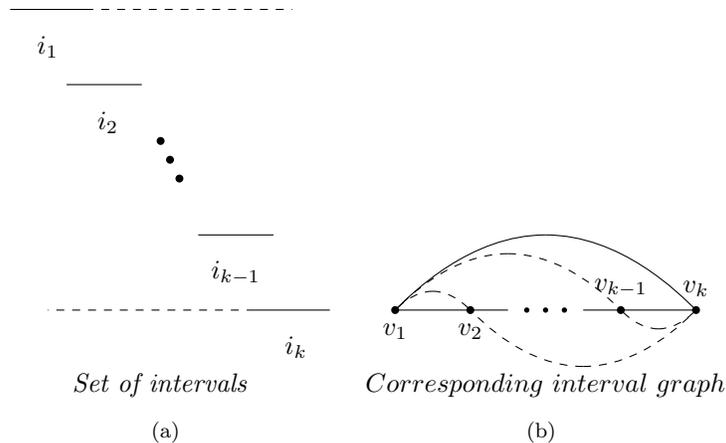

\begin{corollary}
We can conclude two simple facts, which are as follows.

(1) If a vertex $v$ of $G$ is not in $P$, then it is adjacent to some vertex in $P$, it is part of a clique involving that vertex of $P$.

(2) Each clique involves exactly

\quad (a) one vertex from $P$ or

\quad (b) two vertices from $P$ which are adjacent according to the sequence of $P$.
\end{corollary}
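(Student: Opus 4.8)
The plan is to read fact~(2) as a statement about \emph{maximal} cliques of $G$: an arbitrary clique need not meet the diameter path $P=(u_1,\dots,u_m)$ of \Cref{corollary:NHClosurePIG} at all (a single vertex outside $P$ is such a clique), so ``clique'' here must mean a maximal clique, i.e.\ one of the cliques associated with the frames of the interval model. Fact~(1) will then be little more than a restatement of \Cref{corollary:NHClosurePIG}: if $v\notin P$, that corollary supplies $u\in P$ with $(v,u)\in G.E$, and the edge $(v,u)$ is a clique involving the vertex $u$ of $P$ (indeed $v$ lies in some maximal clique containing this edge). So the real content is fact~(2), which I will prove in two halves: a maximal clique $K$ meets $P$ in \emph{at most} two vertices, which must be consecutive along $P$; and $K$ meets $P$ in \emph{at least} one vertex.

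For the upper bound I would observe that if a maximal clique $K$ contained vertices $u_i,u_j$ of $P$ with $|i-j|\ge 2$, then $(u_i,u_j)\in G.E$ (both lie in the clique $K$), so $P$ would have a chord and $(u_1,\dots,u_i,u_j,\dots,u_m)$ would be a strictly shorter walk from $u_1$ to $u_m$, contradicting that $P$ is a shortest path. Hence any two vertices of $P$ lying in $K$ are consecutive on $P$; in particular $K$ cannot contain three vertices of $P$, since among any three indices some pair differs by at least $2$. This gives $|K\cap P|\le 2$, with case~(b) holding whenever $|K\cap P|=2$.

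For the lower bound I would work in the interval model $L$. By the argument behind \Cref{proposition:LongestPathLIG}, consecutive intervals of $P$ overlap, so $\bigcup_{j\in P} I_j$ is a single interval $[s,e]$ with $s=\min_{j\in P}s_j$ and $e=\max_{j\in P}e_j$; moreover that proposition rules out any interval of $G$ lying entirely left of $s$ or entirely right of $e$, so $e_w\ge s$ and $s_w\le e$ for every vertex $w$. Let $K$ be a maximal clique; its intervals pairwise overlap, so they share a common point $x$. If $s\le x\le e$, then $x$ lies in some path interval $I_j$ (the intervals $I_j$, $j\in P$, cover $[s,e]$), so $u_j$ is adjacent to all of $K$ and maximality forces $u_j\in K$. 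If $x<s$, then for each $w\in K$ we have $s_w\le x<s\le e_w$, so the point $s$ lies in every $I_w$; choosing $u_k\in P$ with $s_k=s$ we also get $s\in I_k$, whence $K\cup\{u_k\}$ is a clique and maximality forces $u_k\in K$. The case $x>e$ is symmetric. In every case $K$ contains a vertex of $P$, and together with the upper bound this yields fact~(2).

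Finally, fact~(1) also follows a second way once (2) is known: any $v\notin P$ lies in some maximal clique, which by~(2) contains a vertex of $P$, necessarily adjacent to $v$. The step I expect to be the main obstacle is the lower bound: one has to exclude a maximal clique whose common stabbing point escapes the span $[s,e]$ of the diameter path, and this is exactly where \Cref{proposition:LongestPathLIG} is needed, to guarantee that every interval of $G$ reaches into $[s,e]$. The chord-free property of shortest paths and the observation that pairwise-overlapping intervals share a common point are otherwise routine.
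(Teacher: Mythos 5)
Your proof is correct, and it is considerably more than the paper offers: the paper states this corollary with no proof at all, leaving it as an implicit consequence of \Cref{proposition:LongestPathLIG}, \Cref{corollary:NHClosurePIG} and \Cref{proposition:IG-no-cycle}. Your reading of ``clique'' as \emph{maximal} clique is the right (and necessary) repair --- as literally stated, fact~(2) is false, since any single vertex off $P$ is a clique meeting $P$ in zero vertices --- and the paper never makes this precise. Your upper bound ($|K\cap P|\le 2$, with the two vertices consecutive) via the no-chord property of a shortest path is essentially the same idea underlying the paper's \Cref{proposition:IG-no-cycle}, just applied directly and more cleanly. The lower bound ($|K\cap P|\ge 1$) is where you supply genuinely new content: the Helly property of intervals to get a common stabbing point $x$, the observation that the path intervals cover the span $[s,e]$, and the use of \Cref{proposition:LongestPathLIG} to force every interval to reach into $[s,e]$ so that the cases $x<s$ and $x>e$ can be absorbed. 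None of this appears in the paper, and it is exactly the part that cannot be waved away, so your identification of it as the main obstacle is apt. One small caveat: the paper's overlap convention ($s_b\ge s_a$ and $s_b<e_a$) is half-open, so the boundary cases in your Helly/covering argument deserve a sentence, but this is a bookkeeping issue rather than a gap.
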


\section{Permutation graphs}

Polynomial time algorithms exist for various properties in permutation graphs. If $r$ is the size of longest decreasing subsequence in a permutation $P$, then the chromatic number and the size of largest clique in the corresponding permutation graph $G$, both are equal to $r$ \cite{Golumbic2004}.  Atallah et al. (1998) \cite{Atallah1988} proposed an algorithm that finds the minimum dominating set in an arbitrary permutation graph with $n$ vertices in $O(n\ \log^2 n)$ time.

\section{Split graphs}

Clearly, it is easy to compute the maximum clique on split graphs, and complementarily, its maximum independent set \cite{Golumbic2004,Hammer1981}; along with coloring. \cite{Kare2019} showed that split graphs can be burned in polynomial time. Determining if a hamiltonian cycle exists remains NP-Complete for split graphs \cite{Mueller1996}, along with the minimum dominating set problem \cite{Bertossi1984}.
\chapter{Graph burning: Problem statement}\label{chapter:graph-burning}

\section{Introduction}\label{section:burning-intro}

Works like \cite{Banerjee2012,Domingos2001,Kempe2003,Kempe2005,Mossel2007,Richardson2002} have studied the spread of social influence in order to analyze a social network. \cite{Kramer2014} have highlighted that the underlying network plays an essential role in the spread of an emotional contagion; they have nullified the necessity of in-person interaction and non-verbal cues.

With the aim to being able to model such problems, graph burning was introduced in \cite{Bonato2016}. Graph burning runs on discrete time-steps. During a graph burning process, each vertex is either burned or unburned. We choose one unburned vertex in each step as a ``fire source''. If a node is burned, then it remains in that state until the end of the game. Once a node is burned in time-step $t$, it spreads fire to its neighbouring vertices in step $t+1$ and each of its unburned neighbours are also burned. The aim of the \textit{graph burning} problem is to burn all the vertices in a given graph $G$ in least amount of time-steps. Formally, we describe graph burning as follows.

\noindent\textbf{Arbitrary graph burning}\index{graph burning}: At each step $t$, first (a) an unburned vertex is burned (as a \textit{fire source}) from ``outside'', and then (b) the fire spreads to vertices adjacent to the vertices which are burned till step $t-1$. This process stops after all the vertices of $G$ have been burned.

Burning process on an example graph has been demonstrated in \Cref{figure:arbitrary-burn-example}. Observe that there are $4$ fire sources that burn this graph in this particular procedure. We define burning sequence of an arbitrary graph $G$ in \Cref{definition:burning-sequence} as follows.

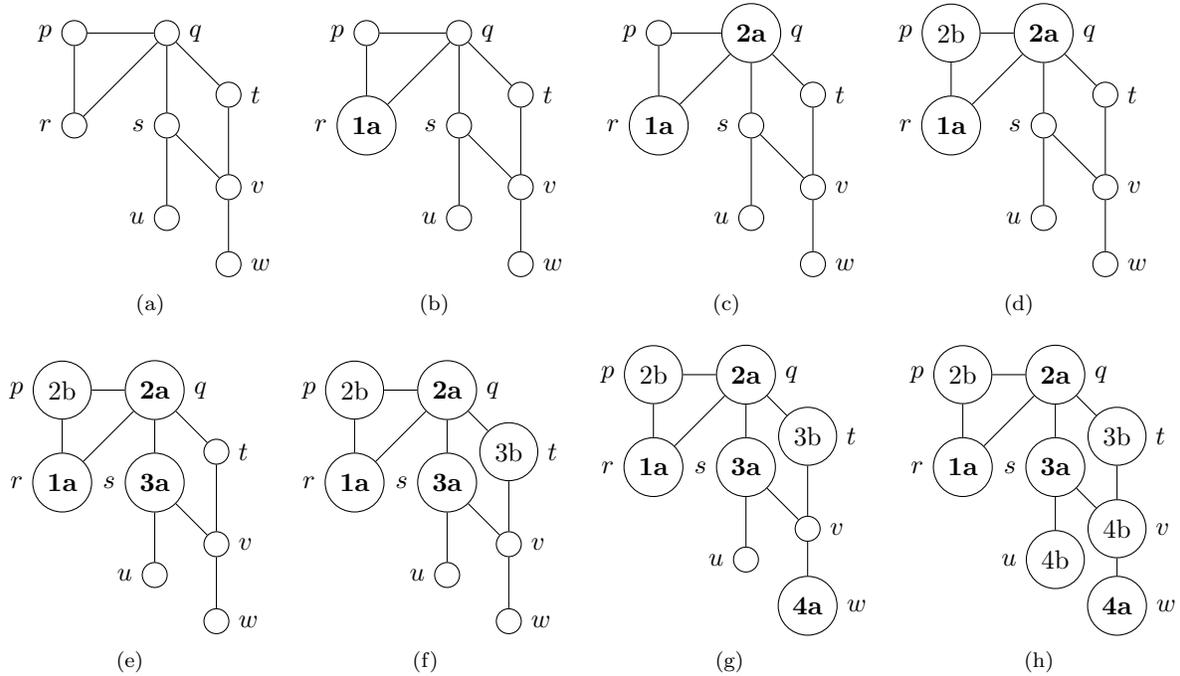
\begin{figure}
	\centering
		\subfigure[]{
		    \begin{tikzpicture}[scale=.41]
			    \node[draw,shape=circle, label=left:{$p$}] (AA) at (-8,0) {};
		    	\node[draw,shape=circle, label=right:{$q$}] (AB) at (-5,0) {};
		    	\node[draw,shape=circle, label=left:{$r$}] (AC) at (-8,-3) {};
		    	\node[draw,shape=circle, label=left:{$s$}] (AD) at (-5,-3) {};
		    	\node[draw,shape=circle, label=right:{$t$}] (AE) at (-3,-2) {};
		    	\node[draw,shape=circle, label=left:{$u$}] (AF) at (-5,-6) {};
		    	\node[draw,shape=circle, label=right:{$v$}] (AG) at (-3,-5) {};
		    	\node[draw,shape=circle, label=right:{$w$}] (AH) at (-3,-7.5) {};
		    	
		    	\draw (AA)--(AB);
		    	\draw (AA)--(AC);
		    	\draw (AB)--(AC);
		    	\draw (AB)--(AD);
		    	\draw (AB)--(AE);
		    	\draw (AE)--(AG);
		    	\draw (AD)--(AF);
	    	    \draw (AD)--(AG);
	    	    \draw (AG)--(AH);
		    \end{tikzpicture}
		}
		\subfigure[]{
		    \begin{tikzpicture}[scale=.41]
			    \node[draw,shape=circle, label=left:{$p$}] (AA) at (-8,0) {};
		    	\node[draw,shape=circle, label=right:{$q$}] (AB) at (-5,0) {};
		    	\node[draw,shape=circle, label=left:{$r$}] (AC) at (-8,-3) {\textbf{1a}};
		    	\node[draw,shape=circle, label=left:{$s$}] (AD) at (-5,-3) {};
		    	\node[draw,shape=circle, label=right:{$t$}] (AE) at (-3,-2) {};
		    	\node[draw,shape=circle, label=left:{$u$}] (AF) at (-5,-6) {};
		    	\node[draw,shape=circle, label=right:{$v$}] (AG) at (-3,-5) {};
		    	\node[draw,shape=circle, label=right:{$w$}] (AH) at (-3,-7.5) {};
		    	
		    	\draw (AA)--(AB);
		    	\draw (AA)--(AC);
		    	\draw (AB)--(AC);
		    	\draw (AB)--(AD);
		    	\draw (AB)--(AE);
		    	\draw (AE)--(AG);
		    	\draw (AD)--(AF);
	    	    \draw (AD)--(AG);
	    	    \draw (AG)--(AH);
		    \end{tikzpicture}
		}
		\subfigure[]{
		    \begin{tikzpicture}[scale=.41]
			    \node[draw,shape=circle, label=left:{$p$}] (AA) at (-8,0) {};
		    	\node[draw,shape=circle, label=right:{$q$}] (AB) at (-5,0) {\textbf{2a}};
		    	\node[draw,shape=circle, label=left:{$r$}] (AC) at (-8,-3) {\textbf{1a}};
		    	\node[draw,shape=circle, label=left:{$s$}] (AD) at (-5,-3) {};
		    	\node[draw,shape=circle, label=right:{$t$}] (AE) at (-3,-2) {};
		    	\node[draw,shape=circle, label=left:{$u$}] (AF) at (-5,-6) {};
		    	\node[draw,shape=circle, label=right:{$v$}] (AG) at (-3,-5) {};
		    	\node[draw,shape=circle, label=right:{$w$}] (AH) at (-3,-7.5) {};
		    	
		    	\draw (AA)--(AB);
		    	\draw (AA)--(AC);
		    	\draw (AB)--(AC);
		    	\draw (AB)--(AD);
		    	\draw (AB)--(AE);
		    	\draw (AE)--(AG);
		    	\draw (AD)--(AF);
	    	    \draw (AD)--(AG);
	    	    \draw (AG)--(AH);
		    \end{tikzpicture}
		}
		\subfigure[]{
		    \begin{tikzpicture}[scale=.41]
			    \node[draw,shape=circle, label=left:{$p$}] (AA) at (-8,0) {2b};
		    	\node[draw,shape=circle, label=right:{$q$}] (AB) at (-5,0) {\textbf{2a}};
		    	\node[draw,shape=circle, label=left:{$r$}] (AC) at (-8,-3) {\textbf{1a}};
		    	\node[draw,shape=circle, label=left:{$s$}] (AD) at (-5,-3) {};
		    	\node[draw,shape=circle, label=right:{$t$}] (AE) at (-3,-2) {};
		    	\node[draw,shape=circle, label=left:{$u$}] (AF) at (-5,-6) {};
		    	\node[draw,shape=circle, label=right:{$v$}] (AG) at (-3,-5) {};
		    	\node[draw,shape=circle, label=right:{$w$}] (AH) at (-3,-7.5) {};
		    	
		    	\draw (AA)--(AB);
		    	\draw (AA)--(AC);
		    	\draw (AB)--(AC);
		    	\draw (AB)--(AD);
		    	\draw (AB)--(AE);
		    	\draw (AE)--(AG);
		    	\draw (AD)--(AF);
	    	    \draw (AD)--(AG);
	    	    \draw (AG)--(AH);
		    \end{tikzpicture}
		}
		\subfigure[]{
		    \begin{tikzpicture}[scale=.41]
			    \node[draw,shape=circle, label=left:{$p$}] (AA) at (-8,0) {2b};
		    	\node[draw,shape=circle, label=right:{$q$}] (AB) at (-5,0) {\textbf{2a}};
		    	\node[draw,shape=circle, label=left:{$r$}] (AC) at (-8,-3) {\textbf{1a}};
		    	\node[draw,shape=circle, label=left:{$s$}] (AD) at (-5,-3) {\textbf{3a}};
		    	\node[draw,shape=circle, label=right:{$t$}] (AE) at (-3,-2) {};
		    	\node[draw,shape=circle, label=left:{$u$}] (AF) at (-5,-6) {};
		    	\node[draw,shape=circle, label=right:{$v$}] (AG) at (-3,-5) {};
		    	\node[draw,shape=circle, label=right:{$w$}] (AH) at (-3,-7.5) {};
		    	
		    	\draw (AA)--(AB);
		    	\draw (AA)--(AC);
		    	\draw (AB)--(AC);
		    	\draw (AB)--(AD);
		    	\draw (AB)--(AE);
		    	\draw (AE)--(AG);
		    	\draw (AD)--(AF);
	    	    \draw (AD)--(AG);
	    	    \draw (AG)--(AH);
		    \end{tikzpicture}
		}
		\subfigure[]{
		    \begin{tikzpicture}[scale=.41]
			    \node[draw,shape=circle, label=left:{$p$}] (AA) at (-8,0) {2b};
		    	\node[draw,shape=circle, label=right:{$q$}] (AB) at (-5,0) {\textbf{2a}};
		    	\node[draw,shape=circle, label=left:{$r$}] (AC) at (-8,-3) {\textbf{1a}};
		    	\node[draw,shape=circle, label=left:{$s$}] (AD) at (-5,-3) {\textbf{3a}};
		    	\node[draw,shape=circle, label=right:{$t$}] (AE) at (-3,-2) {3b};
		    	\node[draw,shape=circle, label=left:{$u$}] (AF) at (-5,-6) {};
		    	\node[draw,shape=circle, label=right:{$v$}] (AG) at (-3,-5) {};
		    	\node[draw,shape=circle, label=right:{$w$}] (AH) at (-3,-7.5) {};
		    	
		    	\draw (AA)--(AB);
		    	\draw (AA)--(AC);
		    	\draw (AB)--(AC);
		    	\draw (AB)--(AD);
		    	\draw (AB)--(AE);
		    	\draw (AE)--(AG);
		    	\draw (AD)--(AF);
	    	    \draw (AD)--(AG);
	    	    \draw (AG)--(AH);
		    \end{tikzpicture}
		}
		\subfigure[]{
		    \begin{tikzpicture}[scale=.41]
			    \node[draw,shape=circle, label=left:{$p$}] (AA) at (-8,0) {2b};
		    	\node[draw,shape=circle, label=right:{$q$}] (AB) at (-5,0) {\textbf{2a}};
		    	\node[draw,shape=circle, label=left:{$r$}] (AC) at (-8,-3) {\textbf{1a}};
		    	\node[draw,shape=circle, label=left:{$s$}] (AD) at (-5,-3) {\textbf{3a}};
		    	\node[draw,shape=circle, label=right:{$t$}] (AE) at (-3,-2) {3b};
		    	\node[draw,shape=circle, label=left:{$u$}] (AF) at (-5,-6) {};
		    	\node[draw,shape=circle, label=right:{$v$}] (AG) at (-3,-5) {};
		    	\node[draw,shape=circle, label=right:{$w$}] (AH) at (-3,-7.5) {\textbf{4a}};
		    	
		    	\draw (AA)--(AB);
		    	\draw (AA)--(AC);
		    	\draw (AB)--(AC);
		    	\draw (AB)--(AD);
		    	\draw (AB)--(AE);
		    	\draw (AE)--(AG);
		    	\draw (AD)--(AF);
	    	    \draw (AD)--(AG);
	    	    \draw (AG)--(AH);
		    \end{tikzpicture}
		}
		\subfigure[]{
		    \begin{tikzpicture}[scale=.41]
			    \node[draw,shape=circle, label=left:{$p$}] (AA) at (-8,0) {2b};
		    	\node[draw,shape=circle, label=right:{$q$}] (AB) at (-5,0) {\textbf{2a}};
		    	\node[draw,shape=circle, label=left:{$r$}] (AC) at (-8,-3) {\textbf{1a}};
		    	\node[draw,shape=circle, label=left:{$s$}] (AD) at (-5,-3) {\textbf{3a}};
		    	\node[draw,shape=circle, label=right:{$t$}] (AE) at (-3,-2) {3b};
		    	\node[draw,shape=circle, label=left:{$u$}] (AF) at (-5,-6) {4b};
		    	\node[draw,shape=circle, label=right:{$v$}] (AG) at (-3,-5) {4b};
		    	\node[draw,shape=circle, label=right:{$w$}] (AH) at (-3,-7.5) {\textbf{4a}};
		    	
		    	\draw (AA)--(AB);
		    	\draw (AA)--(AC);
		    	\draw (AB)--(AC);
		    	\draw (AB)--(AD);
		    	\draw (AB)--(AE);
		    	\draw (AE)--(AG);
		    	\draw (AD)--(AF);
	    	    \draw (AD)--(AG);
	    	    \draw (AG)--(AH);
		    \end{tikzpicture}
		    }
	    \caption{Burning a graph. See that part $b$ does not run in step 1 because there is no vertex that was burnt before step 1, so the fire did not spread. Once we label the vertex in some step, we do not change the label in any future steps.}
	    \label{figure:arbitrary-burn-example}
\end{figure}

\begin{definition}\label{definition:burning-sequence}
\textbf{Burning sequence.}\index{burning sequence} The \textit{burning sequence} of
a graph $G$ is the sequence of vertices that were chosen as \textit{fire sources} in part (a) of each time-step to burn a given graph, such that this sequence of vertices is able to burn all the vertices of $G$.
\end{definition}

In the graph demonstrated in \Cref{figure:arbitrary-burn-example}, the burning sequence is $S = (r, q, s, w)$. This means that $r$ was chosen as a fire source in step $1a$, $q$ was chosen as a fire source in step $2a$, and so on.

\section{The underlying problem}\label{section:burn-problem}

Graph burning aims to burn all the vertices in a graph as quickly as possible and has been inspired by other contact processes like firefighting \cite{Hartnell1995}, graph cleaning \cite{Alon2009}, and graph bootstrap percolation \cite{Balogh2012}. The underlying decision problem is described as follows.

\textbf{The decision problem}: Given input is an arbitrary graph $G$ and a constant $k$. The problem is to determine if $G$ can be burnt using a burning sequence of length $k$ or less (or equivalently, in $k$ or less time-steps).\index{graph burning: decision and optimization versions}

Equivalently, we have the optimization version of the graph burning problem.

\textbf{The optimization problem}: Given input is an arbitrary graph $G$. The problem is to compute the minimum number of fire sources (or equivalently, time-steps), that can (collectively in the form of a burning sequence) burn $G$ completely.

With reference to the theory that we established in \Cref{chapter:introduction}, it can be observed that if an optimization algorithm returns a positive integer $k$ as output, the decision algorithm will return $true$ if the input $(G,k)$ is passed to it. Note that the main task in the burning problem is to find a burning sequence of minimum length such that it is able to to burn all the vertices of a graph. Here, we introduce a (new) property of a graph $G$ in \Cref{definition:burning-number}, the \textit{burning number} of a graph $G$ \cite{Bonato2016}, which we denote as $b(G)$.

\begin{definition}\label{definition:burning-number}
\textbf{Burning number.}\index{burning number} The least amount of time steps (or equivalently, fire sources) which are required to burn a graph $G$ is called the \textit{burning number} of $G,\ b(G)$.
\end{definition}

Clearly, the burning number of a graph $G$ tells that how fast $G$ ``can'' be burnt. Observe from \Cref{figure:optimal-burn-example} that the same graph that we burnt in four time-steps in \Cref{figure:arbitrary-burn-example}, can also be burnt in only three time-steps. From \Cref{figure:optimal-burn-example}, we have that the burning sequence $S^{\prime} = (q, v, u)$ of size $3$ is also able to burn this graph completely.

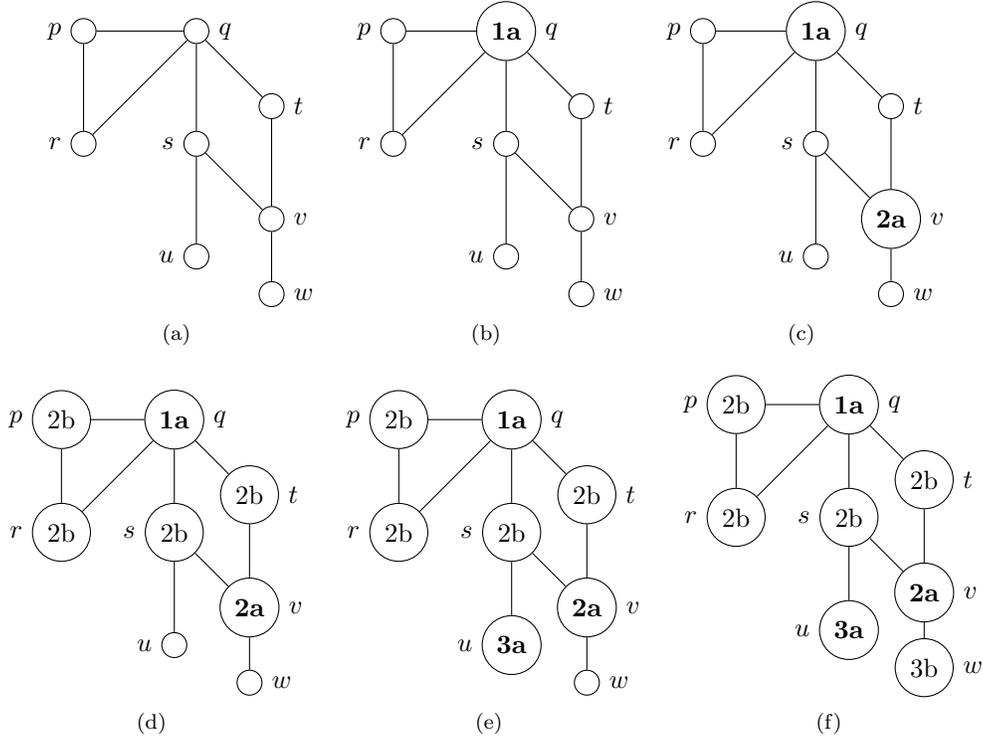
\begin{figure}
	\begin{minipage}{1\textwidth}
		\centering
		    \subfigure[]{
		    \begin{tikzpicture}[scale=0.5]
		    	\node[draw,shape=circle, label=left:{$p$}] (BA) at (8-8,0) {};
		    	\node[draw,shape=circle, label=right:{$q$}] (BB) at (8-5,0) {};
		    	\node[draw,shape=circle, label=left:{$r$}] (BC) at (8-8,-3) {};
		    	\node[draw,shape=circle, label=left:{$s$}] (BD) at (8-5,-3) {};
		    	\node[draw,shape=circle, label=right:{$t$}] (BE) at (8-3,-2) {};
		    	\node[draw,shape=circle, label=left:{$u$}] (BF) at (8-5,-6) {};
		    	\node[draw,shape=circle, label=right:{$v$}] (BG) at (8-3,-5) {};
		    	\node[draw,shape=circle, label=right:{$w$}] (BH) at (8-3,-7) {};
		    	
		    	\draw (BA)--(BB);
		    	\draw (BA)--(BC);
		    	\draw (BB)--(BC);
		    	\draw (BB)--(BD);
		    	\draw (BB)--(BE);
		    	\draw (BE)--(BG);
		    	\draw (BD)--(BF);
	    	    \draw (BD)--(BG);
	    	    \draw (BG)--(BH);
		    \end{tikzpicture}
		    }
		    \subfigure[]{
		    \begin{tikzpicture}[scale=0.5]
		    	\node[draw,shape=circle, label=left:{$p$}] (BA) at (8-8,0) {};
		    	\node[draw,shape=circle, label=right:{$q$}] (BB) at (8-5,0) {\textbf{1a}};
		    	\node[draw,shape=circle, label=left:{$r$}] (BC) at (8-8,-3) {};
		    	\node[draw,shape=circle, label=left:{$s$}] (BD) at (8-5,-3) {};
		    	\node[draw,shape=circle, label=right:{$t$}] (BE) at (8-3,-2) {};
		    	\node[draw,shape=circle, label=left:{$u$}] (BF) at (8-5,-6) {};
		    	\node[draw,shape=circle, label=right:{$v$}] (BG) at (8-3,-5) {};
		    	\node[draw,shape=circle, label=right:{$w$}] (BH) at (8-3,-7) {};
		    	
		    	\draw (BA)--(BB);
		    	\draw (BA)--(BC);
		    	\draw (BB)--(BC);
		    	\draw (BB)--(BD);
		    	\draw (BB)--(BE);
		    	\draw (BE)--(BG);
		    	\draw (BD)--(BF);
	    	    \draw (BD)--(BG);
	    	    \draw (BG)--(BH);
		    \end{tikzpicture}
		    }
		    \subfigure[]{
		    \begin{tikzpicture}[scale=0.5]
		    	\node[draw,shape=circle, label=left:{$p$}] (BA) at (8-8,0) {};
		    	\node[draw,shape=circle, label=right:{$q$}] (BB) at (8-5,0) {\textbf{1a}};
		    	\node[draw,shape=circle, label=left:{$r$}] (BC) at (8-8,-3) {};
		    	\node[draw,shape=circle, label=left:{$s$}] (BD) at (8-5,-3) {};
		    	\node[draw,shape=circle, label=right:{$t$}] (BE) at (8-3,-2) {};
		    	\node[draw,shape=circle, label=left:{$u$}] (BF) at (8-5,-6) {};
		    	\node[draw,shape=circle, label=right:{$v$}] (BG) at (8-3,-5) {\textbf{2a}};
		    	\node[draw,shape=circle, label=right:{$w$}] (BH) at (8-3,-7) {};
		    	
		    	\draw (BA)--(BB);
		    	\draw (BA)--(BC);
		    	\draw (BB)--(BC);
		    	\draw (BB)--(BD);
		    	\draw (BB)--(BE);
		    	\draw (BE)--(BG);
		    	\draw (BD)--(BF);
	    	    \draw (BD)--(BG);
	    	    \draw (BG)--(BH);
		    \end{tikzpicture}
		    }
		    \subfigure[]{
		    \begin{tikzpicture}[scale=0.5]
		    	\node[draw,shape=circle, label=left:{$p$}] (BA) at (8-8,0) {2b};
		    	\node[draw,shape=circle, label=right:{$q$}] (BB) at (8-5,0) {\textbf{1a}};
		    	\node[draw,shape=circle, label=left:{$r$}] (BC) at (8-8,-3) {2b};
		    	\node[draw,shape=circle, label=left:{$s$}] (BD) at (8-5,-3) {2b};
		    	\node[draw,shape=circle, label=right:{$t$}] (BE) at (8-3,-2) {2b};
		    	\node[draw,shape=circle, label=left:{$u$}] (BF) at (8-5,-6) {};
		    	\node[draw,shape=circle, label=right:{$v$}] (BG) at (8-3,-5) {\textbf{2a}};
		    	\node[draw,shape=circle, label=right:{$w$}] (BH) at (8-3,-7) {};
		    	
		    	\draw (BA)--(BB);
		    	\draw (BA)--(BC);
		    	\draw (BB)--(BC);
		    	\draw (BB)--(BD);
		    	\draw (BB)--(BE);
		    	\draw (BE)--(BG);
		    	\draw (BD)--(BF);
	    	    \draw (BD)--(BG);
	    	    \draw (BG)--(BH);
		    \end{tikzpicture}
		    }
		    \subfigure[]{
		    \begin{tikzpicture}[scale=0.5]
		    	\node[draw,shape=circle, label=left:{$p$}] (BA) at (8-8,0) {2b};
		    	\node[draw,shape=circle, label=right:{$q$}] (BB) at (8-5,0) {\textbf{1a}};
		    	\node[draw,shape=circle, label=left:{$r$}] (BC) at (8-8,-3) {2b};
		    	\node[draw,shape=circle, label=left:{$s$}] (BD) at (8-5,-3) {2b};
		    	\node[draw,shape=circle, label=right:{$t$}] (BE) at (8-3,-2) {2b};
		    	\node[draw,shape=circle, label=left:{$u$}] (BF) at (8-5,-6) {\textbf{3a}};
		    	\node[draw,shape=circle, label=right:{$v$}] (BG) at (8-3,-5) {\textbf{2a}};
		    	\node[draw,shape=circle, label=right:{$w$}] (BH) at (8-3,-7) {};
		    	
		    	\draw (BA)--(BB);
		    	\draw (BA)--(BC);
		    	\draw (BB)--(BC);
		    	\draw (BB)--(BD);
		    	\draw (BB)--(BE);
		    	\draw (BE)--(BG);
		    	\draw (BD)--(BF);
	    	    \draw (BD)--(BG);
	    	    \draw (BG)--(BH);
		    \end{tikzpicture}
		    }
		    \subfigure[]{
		    \begin{tikzpicture}[scale=0.5]
		    	\node[draw,shape=circle, label=left:{$p$}] (BA) at (8-8,0) {2b};
		    	\node[draw,shape=circle, label=right:{$q$}] (BB) at (8-5,0) {\textbf{1a}};
		    	\node[draw,shape=circle, label=left:{$r$}] (BC) at (8-8,-3) {2b};
		    	\node[draw,shape=circle, label=left:{$s$}] (BD) at (8-5,-3) {2b};
		    	\node[draw,shape=circle, label=right:{$t$}] (BE) at (8-3,-2) {2b};
		    	\node[draw,shape=circle, label=left:{$u$}] (BF) at (8-5,-6) {\textbf{3a}};
		    	\node[draw,shape=circle, label=right:{$v$}] (BG) at (8-3,-5) {\textbf{2a}};
		    	\node[draw,shape=circle, label=right:{$w$}] (BH) at (8-3,-7) {3b};
		    	
		    	\draw (BA)--(BB);
		    	\draw (BA)--(BC);
		    	\draw (BB)--(BC);
		    	\draw (BB)--(BD);
		    	\draw (BB)--(BE);
		    	\draw (BE)--(BG);
		    	\draw (BD)--(BF);
	    	    \draw (BD)--(BG);
	    	    \draw (BG)--(BH);
		    \end{tikzpicture}
		}
	    \end{minipage}
	    \caption{(Better) Burning procedure (optimal) on an example graph. The subject graph is same as that on which an arbitrary burning procedure is demonstrated in \Cref{figure:arbitrary-burn-example}.}
	    \label{figure:optimal-burn-example}
    \end{figure}

Observe that the example graph, on which two different burning procedures are demonstrated, in \Cref{figure:arbitrary-burn-example} and \Cref{figure:optimal-burn-example} respectively, requires at least $3$ time steps to be burnt; it cannot be burned in less than $3$ time steps. So $S^{\prime}$ is an optimal burning sequence which can burn this graph. Burning number of this graph is $3$.

Let that a burning procedure burns a graph in $k$ time steps. While burning a graph, it is noticeable that if at a step $i$ a vertex $v \in G$ is chosen as a fire source, then at each subsequent step, the set of vertices to which $v$ spreads fire to keeps on increasing till the $k^{th}$ step: at a time step $i+t$, $v$ is able to burn all the vertices in $G.N_t[v]$. Here, we define the \textit{burning cluster} of a fire source in \Cref{definition:burning-cluster} as follows.

\begin{definition}\label{definition:burning-cluster}
\textbf{Burning cluster.}\index{burning cluster} Let that a burning sequence $S=(x_1,x_2,\dots,x_k)$ of size $k$ is able to burn a graph $G$. The burning cluster of a fire source $x_i$ (the fire source chosen at the time-step $i$) is the set of vertices to which $x_i$ is able to spread fire till the end of the burning process (till the $k^{th}$ time-step). This set contains all the vertices in $G.N_{k-i}[x_i]$.
\end{definition}

If $S = (x_1, x_2, x_3, . . ., x_k)$ is the burning sequence which is capable of burning $G$, \Cref{equation:burn-verify} \cite{Bessy2017} must follow.

\begin{equation}\label{equation:burn-verify}
    G.N_{k-1}[x_1] \cup G.N_{k-2}[x_2] \cup . . . \cup G.N_0[x_k] = G.V.
\end{equation}

Some results that have been discovered with respect to graph burning are present in \Cref{section:burning-literature}.

\section{Verification of a burning sequence}\label{section:burning-verify}

As discussed in \Cref{section:problems-classes}, a problem which can be verified deterministically in polynomial time is in NP class. We can verify the validity of a burning sequence in polynomial time. Every burning sequence which satisfies \Cref{equation:burn-verify} is is able to burn $G$ completely, but apart from this, here we also verify that no fire source in a burning sequence should be placed on the vertex which has already been burnt. \Cref{algorithm:burn-verify}\index{burning sequence - verifying correctness} verifies if a given burning sequence $S$ is a valid burning sequence for a graph $G$.

\begin{algorithm}\label{algorithm:burn-verify}
Given an input graph $G$ and a burning sequence $S = (x_1, x_2, x_3, . . ., x_k)$ of length $k$, perform the following steps.
\end{algorithm}

\textbf{\textit{Stage 1.}} If $S$ does not satisfy \Cref{equation:burn-verify}, then return $false$.

\textbf{\textit{Stage 2.}} $\forall\ 1\leq i\leq k-1$, perform the following steps.

\textbf{\textit{Stage 2.1.}} $\forall\ i+1\leq j\leq k$, if $x_j \in G.N_{j-i-1}[x_i]$, then return $false$.

\textbf{\textit{Stage 3.}} Return $true$.\\

If \Cref{algorithm:burn-verify} returns $true$ for the input $(G,S)$, it means that $S$ is a valid burning sequence and is able to burn $G$ completely. \Cref{algorithm:burn-verify} can be implemented in $O(n^2)$ time. This also implies that the graph burning problem is in NP. We state the formally in \Cref{lemma:burning-in-NP}. In fact, optimal burning of general graphs is NP-Hard. We discuss this in the following chapters in detail.

\begin{lemma}\label{lemma:burning-in-NP}
    The (optimal) graph burning problem is in NP.
\end{lemma}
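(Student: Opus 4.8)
The plan is to establish \Cref{lemma:burning-in-NP} by exhibiting a polynomial-time verifier, which is precisely what \Cref{algorithm:burn-verify} provides. Recall that a problem is in NP if, given an instance together with a polynomially-sized certificate, one can check in deterministic polynomial time whether the certificate witnesses a ``yes'' answer. For the (decision version of the) graph burning problem, the instance is a pair $(G, k)$ and the natural certificate is a candidate burning sequence $S = (x_1, x_2, \dots, x_k)$, which has size $k \le |G.V| = n$ and hence is polynomially bounded in the input length. So the entire argument reduces to showing that \Cref{algorithm:burn-verify} is correct and runs in polynomial time.

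First I would argue correctness: $S$ is a valid burning sequence for $G$ if and only if \Cref{algorithm:burn-verify} returns $true$ on input $(G, S)$. The ``if'' direction uses \Cref{equation:burn-verify} — a sequence satisfying that equation has burning clusters $G.N_{k-i}[x_i]$ whose union is all of $G.V$, so every vertex is burned by the end of step $k$; Stage 2 additionally rules out the degenerate situation where some fire source $x_j$ is dropped onto a vertex already reached by an earlier cluster (namely $x_j \in G.N_{j-i-1}[x_i]$ means $x_i$'s fire reaches $x_j$ strictly before step $j$), which the burning process forbids. The ``only if'' direction is the contrapositive: if $S$ fails \Cref{equation:burn-verify} then some vertex is never burned, and if it fails Stage 2.1 then $S$ is not a legal play of the burning game. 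I would cross-reference \Cref{equation:burn-verify} from \cite{Bessy2017} and the burning-cluster definition (\Cref{definition:burning-cluster}) here rather than re-deriving them.

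Next I would handle the running time. Computing the neighbourhood sets $G.N_i[x]$ for all relevant $i$ and $x$ amounts to running breadth-first search from each of the (at most $n$) fire sources, which is $O(n(n+m)) = O(n^3)$ in the worst case, or one can precompute an all-pairs-shortest-path distance matrix in $O(nm)$ or $O(n^3)$ time and then each membership test $x_j \in G.N_{j-i-1}[x_i]$ and each union in \Cref{equation:burn-verify} is $O(1)$ or $O(n)$ respectively; Stage 1 touches $O(n)$ clusters of size $O(n)$, and Stage 2 runs over $O(n^2)$ pairs $(i,j)$ with $O(1)$ work each. The excerpt already asserts the $O(n^2)$ bound once distances are available, so I would simply note that all quantities are polynomial in $n$, hence in the input length. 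Combining the two halves: the verifier is correct and polynomial-time, so the decision version of graph burning lies in NP, and therefore (by the standard equivalence of an optimization problem with its decision version noted earlier) the optimal graph burning problem is in NP.

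I do not expect a genuine obstacle here — the statement is essentially a packaging of \Cref{algorithm:burn-verify} into the definition of NP. The only mild subtlety worth stating carefully is \emph{why} Stage 2.1 is needed at all: one must observe that a sequence can satisfy \Cref{equation:burn-verify} yet not correspond to any legal burning process because part (a) of some time-step would have to place a fire source on an already-burned vertex; including this check is what makes the verifier faithful to \Cref{definition:burning-sequence}. Everything else is routine.
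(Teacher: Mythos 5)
Your proposal is correct and follows essentially the same route as the paper: both package \Cref{algorithm:burn-verify} as a polynomial-time verifier for a candidate burning sequence (checking \Cref{equation:burn-verify} plus the no-fire-source-on-a-burned-vertex condition) and conclude NP membership from its $O(n^2)$ running time. Your added remarks on certificate size and on why Stage 2.1 is necessary are sound elaborations of what the paper leaves implicit.
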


\section{Related problems and games}\label{section:related-games}

As discussed in \Cref{section:burn-problem}, the procedures of certain other problems such as firefighter problem, graph cleaning and graph bootstrap percolation are similar to the procedure of the graph burning problem. In the following few paragraphs in this section, we shall discuss them in brief.

\subsection{Firefighter problem}

The aim of the \textit{firefighter}\index{firefighter problem} problem \cite{Fomin2016} is to save as many vertices of a given graph $G$ as possible from a fire that starts from a single vertex. At step $1$, an arbitrary vertex is burned. At each step $t$, $t \geq 2$, first (a) a firefighter can be placed on an unburned vertex and this firefighter protects that node from fire till the last time-step, and then (b) the fire spreads to the unprotected vertices adjacent to the vertices which are burned till step $t-1$. This process continues till fire cannot spread to any more vertices.

The input is the subject graph $G$ and one fire source $s$. The task is to save the maximum possible number of vertices from fire.

\subsection{Firefighter reserve problem}

\textit{Firefighter reserve deployment}\index{firefighter reserve deployment problem} \cite{Fomin2016} problem proceeds as follows. The fire gets initiated from a single fire source. Initially, there is one firefighter in the firefighter reserve. In each step $t, t \geq 2$, (a) some or no firefighters (subject to availability in the firefighter reserve) are placed, each on an unburned vertex, (b) the fire spreads from the burned vertices to their adjacent vertices which are not protected (by a firefighter), (c) one firefighter increases in the firefighter reserve. This process continues until fire can spread no further.

\subsection{Graph cleaning}

In the \textit{graph cleaning}\index{cleaning: graph} problem, at the beginning, all the vertices and the edges are considered ``dirty''. There are a fixed number of available cleaning brushes. At each step, one vertex $v$ and all the edges incident to $v$ which are dirty may be cleaned if the number of brushes on $v$ are equal to the dirty edges incident on $v$. No brush cleans any edge which is already clean. If a brush cleans an edge, the edge is considered to be traversed. A vertex is cleaned if all the edges incident to it are cleaned, and the cleaning process is done on a vertex $v$ only if we can clean each edge incident on it. A graph $G$ is considered cleaned when each vertex of $G$ has been cleaned. Graph cleaning was introduced in \cite{Mckeil2007,Messinger2008}.

The input is an arbitrary graph $G$, and a constant $k$ number of brushes. The task is to determine if $G$ can be cleaned by $k$ brushes. Equivalently, the optimization problem can be to compute the minimum number of brushes that are required to clean an arbitrary graph.

\subsection{Graph bootstrap percolation}

A phenomenon in graphs called \textit{weak saturation}\index{bootstrap percolation: graph} was introduced by Bollobas in 1968. Given a graph $H$, another graph $G$ of $n$ vertices is called \textit{weakly $H$-saturated} if no subgraph of $G$ can make $H$, but $\exists$ a non-empty set $A$ consisting of edges missing in $G$ such that $\forall\ e \in A$, $H$ is a subgraph of $G+e$ \cite{Faudree2013,Balogh2012,Bollobas2017}. \cite{Balogh2012} observed that weak saturation is strongly related to bootstrap percolation, which was introduced in \cite{Chalupa1979}. \nocite{Faudree2013}

In \textit{bootstrap percolation}, the inputs are an arbitrary graph $G$ and an infection threshold $r>2$. We choose a set of initially ``infected'' vertices $A \subseteq G.V$; we declare the remaining vertices $G.V \setminus A$ ``healthy''.

Then, in consecutive time steps, we infect all healthy vertices which have at least $r$ infected neighbours.  We say that $A$ percolates if, starting from $A$ we are able to finally infect every vertex in $G.V$. More precisely, we set $A_0=A$ and for $t= 1,2,3,...$, we compute $A_t$ according to \Cref{equation:bootstrap-percolation} \cite{Faudree2013} as follows.

\begin{equation}\label{equation:bootstrap-percolation}
    A_t = A_{t-1} \cup \{v\in V: | G.N[v]\cap A_{t-1}| \geq r\}
\end{equation}

Hence $A$ percolates if after computing on equation \thech.2 indefinitely, we infect all the vertices, that is, $\mathop{\cup}_{t=0}^\infty A_t = G.V$.

\section{Overview of possible applications}

Burning a graph can be used to model the spread of a meme, social gossip, emotion, or a social contagion. It can also be used to model spread of viral infections, or otherwise the exposure to infections and proliferation of virus in the body.
Graph burning is relatively a newly introduced procedure. Currently, not much works have come in public domain which utilize the graph burning in modelling of practically applicable systems.

{\~S}imon et al. \cite{Simon2019} have provided heuristics for usage in for spreading an alarm or other critical information in minimum amount of time steps. This may include spread of information via satellite, or throughout a terrestrial network, for example.
They have assumed that a satellite can spread information only sequentially: to one target node (person, device, etc) at a time. On the other hand, each node in their system is able to spread the information parallelly through the available technologies.

Simon et al. \cite{Simon2019a} have simulated a network which tries to spread alarms to all the nodes in the least amount of time steps. The nodes are connected to each other, and at the start of each time step, a new node is alarmed from ``outside''. Also during each time step, the alarmed nodes alarm their neighbour nodes, same as the graph burning procedure. This process stops when all the nodes are alarmed.

\section{Computational limitations}

Optimal graph burning is hard \cite{Bessy2017} for general graphs, like several other problems such as coloring a graph, finding the largest clique in a graph, or finding maximum sized independent set of a graph. It has been shown that graph burning is NP-Complete for several graph classes, even when computing other properties, which are NP-Hard for general graph classes, is ``easy'': graph classes on which other NP-Hard problems can be solved in polynomial time. It has been proved in \cite{Bessy2017} that burning a spider graph, path forests, and trees with maximum degree 3 is NP-Complete.

On the other hand, we have a 3-approximation algorithm for burning general graphs, a 2-approximation algorithm for burning trees, a 2-approximation algorithm for burning the graphs bounded by a diameter of constant length, and a 1.5-approximation algorithm for burning path forests, as discussed in \Cref{section:burning-literature}.

In the following chapters, we discuss these characteristics of the graph burning problem in detail from the perspective of certain graph classes. For the overview of chapters, see \Cref{section:organization-of-chapters}.

\section{Summary of this chapter}

Graph burning represents the multiplication and spread of an object or phenomenon throughout a network under some strict constraints. One of the important aspect is, in each time-step, an object/phenomenon infects one of the uninfected nodes. Despite of its high computational complexity, it can be further utilized to model some useful processes in a computer.

\chapter{The burning process in graphs}\label{chapter:graph-burning-examples}

\section{Burning a simple path or cycle}\label{section:burn-path}

A path is simple to burn. A path and a cycle (of equal length) are equivalent from the burning perspective. In each step, we select one new unburnt ``victim'' vertex (a fire source) and burn vertices adjacent to other clusters. This follows from the definition of graph burning that we discussed in \Cref{section:burning-intro}. Each cluster burns its neighbour vertices and includes them in itself. The interesting fact here is, a burning cluster can spread fire to atmost two more vertices. Each burning cluster starts with on single vertex (a fire source) and in successive time steps, it can get enlarged in size by atmost two more vertices. A newly burned fire source also acts as a distinct cluster.

\subsection{Burning a path (or a cycle) of infinite length}

\index{burning simple path or cycle} Let us consider that a path is so long (of infinite length) that we can put fire sources in such a way that any two fire sources are at infinite distance from one another, as shown in \Cref{figure:burn-path-inf}. We select a fire source, and the fire in all other clusters spreads to their respective neighbouring vertices, i.e., each burning cluster has $2$ more vertices; this is shown in \Cref{table:burn-path-inf}.

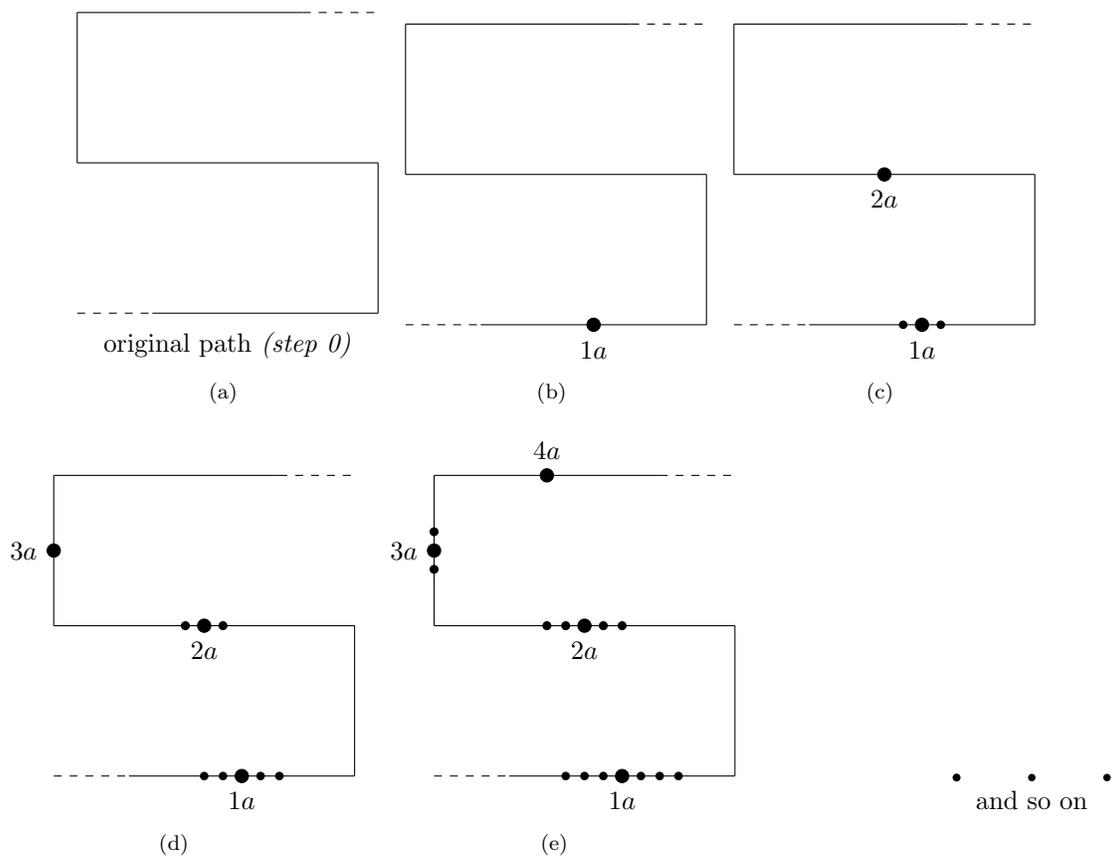
\begin{figure}
    \begin{minipage}{1\textwidth}
        \centering
        \subfigure[]{
            \begin{tikzpicture}
                \draw[dashed] (0,0) -- (1,0);
                \draw(1,0) -- (4,0);
                \draw(4,0) -- (4,2);
                \draw(4,2) -- (0,2);
                \draw(0,2) -- (0,4);
                \draw(0,4) -- (3,4);
                \draw[dashed] (3,4) -- (4,4);
                
                \node [label=below:{original path \textit{(step 0)}}] (BA) at (2,0) {};
            \end{tikzpicture}
        }
        \subfigure[]{
            \begin{tikzpicture}
                \draw[dashed] (0,0) -- (1,0);
                \draw(1,0) -- (4,0);
                \draw(4,0) -- (4,2);
                \draw(4,2) -- (0,2);
                \draw(0,2) -- (0,4);
                \draw(0,4) -- (3,4);
                \draw[dashed] (3,4) -- (4,4);
                
                \node [circle, draw=black, fill=black, inner sep=1pt, minimum size=5pt, label=below:{$1a$}] (BA) at (2.5,0) {};
            \end{tikzpicture}
        }
        \subfigure[]{
            \begin{tikzpicture}
                \draw[dashed] (0,0) -- (1,0);
                \draw(1,0) -- (4,0);
                \draw(4,0) -- (4,2);
                \draw(4,2) -- (0,2);
                \draw(0,2) -- (0,4);
                \draw(0,4) -- (3,4);
                \draw[dashed] (3,4) -- (4,4);
                
                \node [circle, draw=black, fill=black, inner sep=1pt, minimum size=5pt, label=below:{$1a$}] (BA) at (2.5,0) {};
                \node [circle, draw=black, fill=black, inner sep=1pt, minimum size=2pt, label=below:{}] (BA) at (2.25,0) {};
                \node [circle, draw=black, fill=black, inner sep=1pt, minimum size=2pt, label=below:{}] (BA) at (2.75,0) {};
                
                \node [circle, draw=black, fill=black, inner sep=1pt, minimum size=5pt, label=below:{$2a$}] (BA) at (2,2) {};
            \end{tikzpicture}
        }
        \subfigure[]{
            \begin{tikzpicture}
                \draw[dashed] (0,0) -- (1,0);
                \draw(1,0) -- (4,0);
                \draw(4,0) -- (4,2);
                \draw(4,2) -- (0,2);
                \draw(0,2) -- (0,4);
                \draw(0,4) -- (3,4);
                \draw[dashed] (3,4) -- (4,4);
                
                \node [circle, draw=black, fill=black, inner sep=1pt, minimum size=5pt, label=below:{$1a$}] (BA) at (2.5,0) {};
                \node [circle, draw=black, fill=black, inner sep=1pt, minimum size=2pt, label=below:{}] (BA) at (2,0) {};
                \node [circle, draw=black, fill=black, inner sep=1pt, minimum size=2pt, label=below:{}] (BA) at (2.25,0) {};
                \node [circle, draw=black, fill=black, inner sep=1pt, minimum size=2pt, label=below:{}] (BA) at (2.75,0) {};
                \node [circle, draw=black, fill=black, inner sep=1pt, minimum size=2pt, label=below:{}] (BA) at (3,0) {};
                
                \node [circle, draw=black, fill=black, inner sep=1pt, minimum size=5pt, label=below:{$2a$}] (BA) at (2,2) {};
                \node [circle, draw=black, fill=black, inner sep=1pt, minimum size=3pt, label=below:{}] (BA) at (1.75,2) {};
                \node [circle, draw=black, fill=black, inner sep=1pt, minimum size=3pt, label=below:{}] (BA) at (2.25,2) {};
                
                \node [circle, draw=black, fill=black, inner sep=1pt, minimum size=5pt, label=left:{$3a$}] (BA) at (0,3) {};
            \end{tikzpicture}
        }
        \subfigure[]{
            \begin{tikzpicture}
                \draw[dashed] (0,0) -- (1,0);
                \draw(1,0) -- (4,0);
                \draw(4,0) -- (4,2);
                \draw(4,2) -- (0,2);
                \draw(0,2) -- (0,4);
                \draw(0,4) -- (3,4);
                \draw[dashed] (3,4) -- (4,4);
                
                \node [circle, draw=black, fill=black, inner sep=1pt, minimum size=5pt, label=below:{$1a$}] (BA) at (2.5,0) {};
                \node [circle, draw=black, fill=black, inner sep=1pt, minimum size=2pt, label=below:{}] (BA) at (1.75,0) {};
                \node [circle, draw=black, fill=black, inner sep=1pt, minimum size=2pt, label=below:{}] (BA) at (2,0) {};
                \node [circle, draw=black, fill=black, inner sep=1pt, minimum size=2pt, label=below:{}] (BA) at (2.25,0) {};
                \node [circle, draw=black, fill=black, inner sep=1pt, minimum size=2pt, label=below:{}] (BA) at (2.75,0) {};
                \node [circle, draw=black, fill=black, inner sep=1pt, minimum size=2pt, label=below:{}] (BA) at (3,0) {};
                \node [circle, draw=black, fill=black, inner sep=1pt, minimum size=2pt, label=below:{}] (BA) at (3.25,0) {};
                
                \node [circle, draw=black, fill=black, inner sep=1pt, minimum size=5pt, label=below:{$2a$}] (BA) at (2,2) {};
                \node [circle, draw=black, fill=black, inner sep=1pt, minimum size=3pt, label=below:{}] (BA) at (1.5,2) {};
                \node [circle, draw=black, fill=black, inner sep=1pt, minimum size=3pt, label=below:{}] (BA) at (1.75,2) {};
                \node [circle, draw=black, fill=black, inner sep=1pt, minimum size=3pt, label=below:{}] (BA) at (2.25,2) {};
                \node [circle, draw=black, fill=black, inner sep=1pt, minimum size=3pt, label=below:{}] (BA) at (2.5,2) {};
                
                \node [circle, draw=black, fill=black, inner sep=1pt, minimum size=5pt, label=left:{$3a$}] (BA) at (0,3) {};
                \node [circle, draw=black, fill=black, inner sep=1pt, minimum size=3pt, label=left:{}] (BA) at (0,2.75) {};
                \node [circle, draw=black, fill=black, inner sep=1pt, minimum size=3pt, label=left:{}] (BA) at (0,3.25) {};
                
                \node [circle, draw=black, fill=black, inner sep=1pt, minimum size=5pt, label=above:{$4a$}] (BA) at (1.5,4) {};
            \end{tikzpicture}
        }
        \subfigure{
            \begin{tikzpicture}
                \node [circle, inner sep=1pt, minimum size=5pt] (BA) at (-.5,0) {};
                \node [circle, inner sep=1pt, minimum size=4pt] (BA) at (1,0) {};
                \node [circle, fill=black, inner sep=1pt, minimum size=3pt] (BA) at (2,0) {};
                \node [circle, fill=black, inner sep=1pt, minimum size=2pt, label=below:{and so on}] (BA) at (3,0) {};
                \node [circle, fill=black, inner sep=1pt, minimum size=2pt] (BA) at (4,0) {};
            \end{tikzpicture}
        }
    \end{minipage}
    \caption{Burning process on a path of $\infty$ length; the fire sources are placed at $\infty$ distance from each other.}
    \label{figure:burn-path-inf}
\end{figure}

\begin{table}
    \centering
    \begin{tabular}{c|c|c|c|c|l}
          & $C_1$ & $C_2$ & $C_3$ & $C_4$ & $C_5$ $\ \ \ \bullet \bullet \bullet$\\
        \textit{Step 1} & 1 &   &   &   &  \\
        \textit{Step 2} & 2 & 1 &   &   &  \\
        \textit{Step 3} & 2 & 2 & 1 &   &  \\
        \textit{Step 4} & 2 & 2 & 2 & 1 &  \\
        \textit{Step 5} & 2 & 2 & 2 & 2 & 1\\
          &   &   &   &   & $\bullet$\\
          &   &   &   &   & \ \ $\bullet$\\
          &   &   &   &   & \ \ \ \ $\bullet$\\
         Total $25$ & 9 & 7 & 5 & 3 & 1 $\ \ \ \bullet \bullet \bullet$\\
    \end{tabular}
    \caption{Step-wise number of burned vertices in all the clusters $\{C_i\}$ in a path of infinite length, as shown in figure 1. Total burning vertices shown at bottom left.}
    \label{table:burn-path-inf}
\end{table}

For a burning cluster, initially it is born with a single vertex, and $2$ more vertices keep on adding to it in all the subsequent steps.

\subsection{Governing dynamics}

\Cref{equation:burn-path-inf-1} and \Cref{equation:burn-path-inf-2} show the phenomenon of initiating one fire source at a step $t$ along with the addition of two more vertices to the clusters created in the previous steps.

\begin{equation}\label{equation:burn-path-inf-1}
f_0 = 0
\end{equation}
\begin{equation}\label{equation:burn-path-inf-2}
f_t = f_{t-1} + 1 + 2(t - 1) = 2t - 1 + f_{t-1}
\end{equation}

For a path of finite length, in each step, we must burn an unburnt vertex such that the burning clusters do not coincide uptil the maximum number of possible time steps. $f_t$ in \Cref{equation:burn-path-inf-2} presents the maximum of vertices that can be burnt in a given path until after the completion of step $t$.

\Cref{equation:burn-path-inf-1} and \Cref{equation:burn-path-inf-2} imply that $f_t$ is the sum of first $t$ odd numbers, which is equal to $t^2$; $f_t = 1 + 3 + 5 + . . . + (2t-1) = t^2$. So we have that a path of length $n$ should take at least $\big\lceil\sqrt{n}\big\rceil$ time steps to burn completely. While burning a path of finite length, after step $t$ we shall have burnt atmost $t^2$ vertices. \cite{Bonato2016} have shown that it takes $\big\lceil\sqrt{n}\big\rceil$ steps to burn a path of $n$ vertices. They also gave an algorithm to compute an optimal burning sequence for a path of finite length, which we describe as \Cref{algorithm:burn-path-finite}.

\begin{algorithm}\label{algorithm:burn-path-finite}
Given the input $(G$, $P)$ where $G$ is the graph denoting a path, $P$ is a sequence of vertices in the path represented by $G$ from (any) one end to the other, and $P[i]$ stands for $i^{th}$ vertex in $P$, perform the following steps.
\end{algorithm}

\textbf{\textit{Stage 1.}} $n=|P|.\ S=\phi.\ k=\big\lceil\sqrt{n}\big\rceil$.

\textbf{\textbf{Stage 2.}} $\forall\ 0\leq i\leq k-2$, $v = n-i^2-i$; $S=S\cup_{\setminus s}\{P[v]\}$.

\textbf{\textit{Stage 3.}} If $n>(k-1)^2+k$, then $v = n-(k-1)^2-(k-1)$;\\
else, then $v=1$.\\
$S = S \cup_{\setminus s} \{P[v]\}$.

\textbf{\textit{Stage 4.}} Return $S$.\\

Time complexity of \Cref{algorithm:burn-path-finite} is $O(\sqrt{n})$, where $n$ is the number of vertices in a given path. 


The optimal burning process of a path of $9$ vertices is demonstrated in \Cref{figure:burn-path-9}; the burning number of this path is $3$.

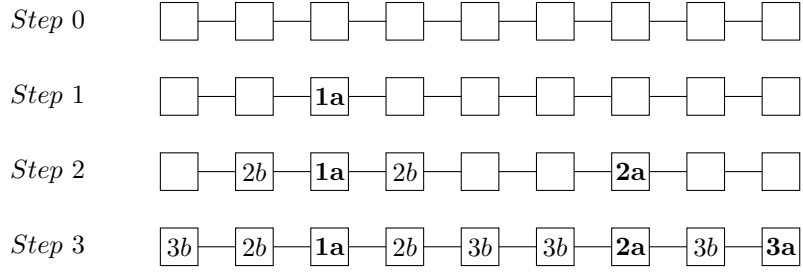
\begin{figure}
	\centering
	\begin{tikzpicture}
	    \draw (-4,0) rectangle (-3.5, 0.5);
	    \draw (-3,0) rectangle (-2.5, 0.5);\draw (-3.5,0.25) -- (-3, 0.25);
	    \draw (-2,0) rectangle (-1.5, 0.5);\draw (-2.5,0.25) -- (-2, 0.25);
	    \draw (-1,0) rectangle (-0.5, 0.5);\draw (-1.5,0.25) -- (-1, 0.25);
	    \draw (0,0) rectangle (0.5, 0.5);\draw (-0.5,0.25) -- (0, 0.25);
    	\draw (1,0) rectangle (1.5, 0.5);\draw (0.5,0.25) -- (1, 0.25);
	    \draw (2,0) rectangle (2.5, 0.5);\draw (1.5,0.25) -- (2, 0.25);
		\draw (3,0) rectangle (3.5, 0.5);\draw (2.5,0.25) -- (3, 0.25);
		\draw (4,0) rectangle (4.5, 0.5);\draw (3.5,0.25) -- (4, 0.25);
		
		\node at (-5.5,0.25) {$Step\ 3$};
    	\node at (-3.75,0.25) {$3b$};
    	\node at (-2.75,0.25) {$2b$};
		\node at (-1.75,0.25) {$\textbf{1a}$};
		\node at (-0.75,0.25) {$2b$};
		\node at (0.25,0.25) {$3b$};
    	\node at (1.25,0.25) {$3b$};
		\node at (2.25,0.25) {$\textbf{2a}$};
		\node at (3.25,0.25) {$3b$};
		\node at (4.25,0.25) {$\textbf{3a}$};
    	
		\draw (-4,1) rectangle (-3.5, 1.5);
		\draw (-3,1) rectangle (-2.5, 1.5);\draw (-3.5,1.25) -- (-3, 1.25);
    	\draw (-2,1) rectangle (-1.5, 1.5);\draw (-2.5,1.25) -- (-2, 1.25);
    	\draw (-1,1) rectangle (-0.5, 1.5);\draw (-1.5,1.25) -- (-1, 1.25);
    	\draw (0,1) rectangle (0.5, 1.5);\draw (-0.5,1.25) -- (0, 1.25);
    	\draw (1,1) rectangle (1.5, 1.5);\draw (0.5,1.25) -- (1, 1.25);
    	\draw (2,1) rectangle (2.5, 1.5);\draw (1.5,1.25) -- (2, 1.25);
		\draw (3,1) rectangle (3.5, 1.5);\draw (2.5,1.25) -- (3, 1.25);
		\draw (4,1) rectangle (4.5, 1.5);\draw (3.5,1.25) -- (4, 1.25);
		
		\node at (-5.5,1.25) {$Step\ 2$};
		\node at (-2.75,1.25) {$2b$};
		\node at (-1.75,1.25) {$\textbf{1a}$};
    	\node at (-0.75,1.25) {$2b$};
   		\node at (2.25,1.25) {$\textbf{2a}$};
		
		\draw (-4,2) rectangle (-3.5, 2.5);
		\draw (-3,2) rectangle (-2.5, 2.5);\draw (-3.5,2.25) -- (-3, 2.25);
		\draw (-2,2) rectangle (-1.5, 2.5);\draw (-2.5,2.25) -- (-2, 2.25);
		\draw (-1,2) rectangle (-0.5, 2.5);\draw (-1.5,2.25) -- (-1, 2.25);
		\draw (0,2) rectangle (0.5, 2.5);\draw (-0.5,2.25) -- (0, 2.25);
		\draw (1,2) rectangle (1.5, 2.5);\draw (0.5,2.25) -- (1, 2.25);
		\draw (2,2) rectangle (2.5, 2.5);\draw (1.5,2.25) -- (2, 2.25);
		\draw (3,2) rectangle (3.5, 2.5);\draw (2.5,2.25) -- (3, 2.25);
		\draw (4,2) rectangle (4.5, 2.5);\draw (3.5,2.25) -- (4, 2.25);
		
		\node at (-5.5,2.25) {$Step\ 1$};
		\node at (-1.75,2.25) {$\textbf{1a}$};
		
		\draw (-4,3) rectangle (-3.5, 3.5);
		\draw (-3,3) rectangle (-2.5, 3.5);\draw (-3.5,3.25) -- (-3, 3.25);
		\draw (-2,3) rectangle (-1.5, 3.5);\draw (-2.5,3.25) -- (-2, 3.25);
		\draw (-1,3) rectangle (-0.5, 3.5);\draw (-1.5,3.25) -- (-1, 3.25);
		\draw (0,3) rectangle (0.5, 3.5);\draw (-0.5,3.25) -- (0, 3.25);
    	\draw (1,3) rectangle (1.5, 3.5);\draw (0.5,3.25) -- (1, 3.25);
    	\draw (2,3) rectangle (2.5, 3.5);\draw (1.5,3.25) -- (2, 3.25);
    	\draw (3,3) rectangle (3.5, 3.5);\draw (2.5,3.25) -- (3, 3.25);
    	\draw (4,3) rectangle (4.5, 3.5);\draw (3.5,3.25) -- (4, 3.25);
    	    
	    \node at (-5.5,3.25) {$Step\ 0$};
    \end{tikzpicture}
	\caption{Burning of a path of $9$ vertices (optimal procedure). The vertices are marked by the step number in which they are burned. For each step, the vertex which is selected (arbitrarily) as a victim to be burned (the fire source) is marked by \textbf{\textit{a}}, and those which are burned by the fire spread from already burned vertices are marked by \textbf{\textit{b}}.}
	\label{figure:burn-path-9}
\end{figure}

\section{Optimal burning of some general graphs}\label{section:burn-general-graphs-optimally}

We start with defining isometric subgraph. An isometric subgraph of a graph is defined as follows.

\textbf{Isometric subgraph}\index{isometric subgraph}: A subgraph $H$ of a graph $G$ is called an \textit{isometric subgraph} if for every pair of nodes $u$ and $v$ in $H$, we have that the shortest distance between them in $H$ is equal to the shortest distance between them in $G$. If $G$ is a tree, and then we have that $H$ is an \textit{isometric tree}\index{isometric subtree} of $G$ is the same conditions satisfy.

It can be easily observed that any connected subgraph of a tree is its isometric subtree. Bonato et al. \cite{Bonato2016} have showed that if the burning number of a graph $G$ is $b(G)=k^\prime$, then the burning number of $H$, an isometric subtree of $G$ is $b(H)\leq b(G)=k^\prime$. We show this in \Cref{theorem:burn-isomorphic-subtree}.

\begin{theorem}\label{theorem:burn-isomorphic-subtree}
    Let that the burning number of a graph $G$ is $b(G)=k^\prime$, then the burning number of $H$, an isometric subtree of $G$ is $b(H)\leq b(G)=k^\prime$.
\end{theorem}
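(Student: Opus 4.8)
The plan is to project an optimal burning sequence of $G$ onto $H$. Since $b(G)=k^\prime$, by \Cref{equation:burn-verify} there is a burning sequence $S=(x_1,x_2,\dots,x_{k^\prime})$ of $G$ with $G.N_{k^\prime-1}[x_1]\cup\dots\cup G.N_0[x_{k^\prime}]=G.V$; equivalently, every $v\in G.V$ lies within distance $k^\prime-i$ of $x_i$ for at least one index $i$. I want to replace each fire source $x_i$ by a vertex $y_i$ of $H$ so that the same covering property holds inside $H$, since by \Cref{equation:burn-verify} this will give $b(H)\le k^\prime$.

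The main device is a retraction $\rho\colon G.V\to H.V$ onto the subtree $H$ that is distance non-increasing: $\rho$ fixes every vertex of $H$ and $dist_H(\rho(u),\rho(v))\le dist_G(u,v)$ for all $u,v\in G.V$. When $G$ is a tree, $\rho(x)$ is simply the unique vertex of $H$ closest to $x$ (the gate of $x$ into the connected subtree $H$): uniqueness and the contraction property follow from additivity of distances along the unique paths of $G$, while $dist_H=dist_G$ on $H.V$ because $H$ is isometric. Granting such a $\rho$, set $y_i=\rho(x_i)$. For $v\in H.V\subseteq G.V$ choose $i$ with $dist_G(x_i,v)\le k^\prime-i$; then, since $\rho(v)=v$,
\[
dist_H(y_i,v)=dist_H\bigl(\rho(x_i),\rho(v)\bigr)\le dist_G(x_i,v)\le k^\prime-i,
\]
so $v\in H.N_{k^\prime-i}[y_i]$. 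Hence $\bigcup_{i=1}^{k^\prime}H.N_{k^\prime-i}[y_i]=H.V$, which is exactly the condition of \Cref{equation:burn-verify} with $k=k^\prime$, and therefore $b(H)\le k^\prime=b(G)$.

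Two points need tidying. The sequence $(y_1,\dots,y_{k^\prime})$ may contain repetitions, or a $y_j$ that already lies in a burning cluster at step $j$; this is harmless, since such a source is redundant and may be reassigned to any still-unburnt vertex without shrinking the burnt set — and in any case the existence of $k^\prime$ vertices satisfying the covering identity of \Cref{equation:burn-verify} already certifies $b(H)\le k^\prime$. The substantive point is the existence of the distance non-increasing retraction $\rho$ with $\rho|_{H.V}=\mathrm{id}$; for $G$ a tree this is immediate, and (as noted just before the statement) any connected subgraph of a tree is an isometric subtree, so that case is fully covered. For a general host graph $G$ with $H$ an isometric subtree one must instead invoke that a tree is a distance-retract of every graph into which it embeds isometrically, and I expect verifying this retraction property — equivalently, exhibiting for each $x_i$ a single vertex of $H$ whose distance to every vertex of $H$ is bounded by $dist_G(x_i,\cdot)$ — to be the only real obstacle to a complete proof.
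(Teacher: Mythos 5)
Your approach is sound and is essentially the standard one (it is how Bonato, Janssen and Roshanbin establish this monotonicity in the paper that introduced burning), but it is genuinely different from the proof given here. The paper argues by contradiction: it assumes $b(G)<b(H)$, notes that any sequence of length less than $k^\prime$ leaves some vertex of $H$ unburned, and then asserts that ``since $H$ is an isometric subtree of $G$ \dots at least one vertex in $G$ also remains unburned.'' That assertion is exactly the transfer step that needs justification: a burning sequence for $G$ may place its sources outside $H$, so the fact that no $k^\prime-1$ sources placed \emph{inside} $H$ can burn $H$ says nothing, by itself, about sources placed elsewhere in $G$. Your retraction argument supplies precisely the missing content: pushing each source $x_i$ to $y_i=\rho(x_i)$ via a non-expansive retraction fixing $H$, and using $\rho(v)=v$ for $v\in H.V$, gives $dist_H(y_i,v)\le dist_G(x_i,v)\le k^\prime-i$, so the covering condition of \Cref{equation:burn-verify} holds in $H$ and $b(H)\le k^\prime$, after the routine cleanup of repeated or redundant sources that you correctly flag as harmless.

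The one point you leave open --- the existence of the non-expansive retraction when $G$ is not itself a tree --- is a genuine but known fact: trees are absolute retracts, i.e.\ whenever a tree $H$ embeds isometrically in $G$ there is a distance non-increasing map $G.V\to H.V$ fixing $H$ pointwise. Concretely, for a fixed $x\in G.V$ consider, for each $v\in H.V$, the ball $B_v=\{y\in H.V: dist_H(y,v)\le dist_G(x,v)\}$; each $B_v$ induces a subtree of $H$, and any two of them intersect because $dist_H(u,v)=dist_G(u,v)\le dist_G(x,u)+dist_G(x,v)$ lets one pick a point on the $u$--$v$ path in $H$ within the required distance of both. By the Helly property of subtrees of a tree, $\bigcap_{v}B_v\neq\emptyset$, and any $y$ in this intersection serves as $\rho(x)$. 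With that observation added (or cited), your argument is complete and is tighter than the proof in the paper, whose contradiction argument silently assumes the very transfer property you construct.
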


\begin{proof}
    Let for contradiction that $b(G)<b(H)$. Let the burning number of $H$ be $k^\prime$. From our assumption, we have that $b(G)\leq k^\prime-1$. This implies that there exists a burning sequence $S^\prime=(y_1,y_2,...,y_{k^\prime-1})$ which is able to burn $G$ completely.
    
    $H$ cannot be burned by any burning sequence of length less than $k^\prime$. It means that for any burning sequence of length less than $k^\prime$, at least one vertex in $H$ remains unburned. Since $H$ is an isometric subtree of $G$, we have that for any burning sequence of length less than $k^\prime$, at least one vertex in $G$ also remains unburned. So we need at least one more fire source in $S^\prime$ to burn $G$ which is a contradiction to our assumption.
\end{proof}

Similarly, we can obtain a proof that for any graph $G$, the burning number of an isometric subgraph of $G$ is not more then its own burning number $b(G)$. \Cref{theorem:burn-connected-graph-radius-r} and \Cref{theorem:burn-k-component-radius-r} we proved in \cite{Bonato2016}.

\begin{theorem}\label{theorem:burn-connected-graph-radius-r}
    If a connected graph $G$ is of radius $r$, then $b(G)\leq r+1$.
\end{theorem}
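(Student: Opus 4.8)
The plan is to use the definition of radius directly. If $G$ has radius $r$, then there is a \emph{center} vertex $c\in G.V$ whose eccentricity equals $r$; that is, $dist(c,u)\le r$ for every $u\in G.V$, equivalently $G.N_r[c]=G.V$. The theorem then reduces to the single observation that a lone fire source placed early enough floods such a graph within $r+1$ rounds.

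Concretely, I would construct a burning sequence $S=(x_1,x_2,\dots)$ with $x_1=c$, and at each subsequent step $t\ge 2$ let $x_t$ be any vertex that is still unburned, continuing until every vertex is burned. The claim is that this process halts after at most $r+1$ steps. After $t$ steps the burning cluster of $x_1=c$ is exactly $G.N_{t-1}[c]$, which is the $x_1$-term appearing in \Cref{equation:burn-verify}; taking $t=r+1$ gives cluster $G.N_{r}[c]=G.V$, so the graph is entirely burned by the end of step $r+1$ no matter how $x_2,\dots,x_{r+1}$ are chosen. Hence there is a valid burning sequence of length at most $r+1$, giving $b(G)\le r+1$.

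The only point requiring care is the off-by-one in the burning dynamics — a fire source introduced at step $1$ has spread only to distance $k-1$ (not $k$) by the end of a length-$k$ process, exactly as recorded in \Cref{equation:burn-verify} — together with the side condition that every fire source must be placed on an unburned vertex. The latter is not a genuine obstacle: if the greedy choice of some $x_t$ ever fails because no unburned vertex remains, then the graph has already been burned in fewer than $r+1$ steps, which only strengthens the bound; otherwise the cluster argument above shows step $r+1$ completes the burning. So I expect the proof to be very short, the only real content being a careful statement of the timing convention and an appeal to the center vertex guaranteed by the radius.
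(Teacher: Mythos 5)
Your proof is correct and follows the same route as the paper: place the first fire source on a center vertex $v$ with $G.N_r[v]=G.V$, so that after $r$ further steps the cluster of $x_1$ covers the whole graph regardless of where the remaining sources are placed. Your extra care with the timing convention and the requirement that later fire sources land on unburned vertices only makes explicit what the paper leaves implicit.
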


\begin{proof}
    Let $v$ be the vertex from which the distance of each vertex in $G.V$ is atmost $r$. Let that we have put a fire source on $v$ in the first time step. $\because G.N_{r}[v]=G.V$, $G$ will eventually get burnt within $r$ more time-steps irrespective of the position of the fire sources that are put in all other time-steps.
\end{proof}

\begin{theorem}\label{theorem:burn-k-component-radius-r}
    Let that $G$ is a graph containing $k$ connected components, each having a radius atmost $r$, then $b(G)\leq r+k$.
\end{theorem}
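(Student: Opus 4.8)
The plan is to generalize the argument of \Cref{theorem:burn-connected-graph-radius-r} by handling each connected component in turn, spending one fire source per component and then $r$ further time-steps to let the fire engulf that component. Since the $k$ components are pairwise disconnected, a fire source placed in one component never spreads to another, so we must think of the burning sequence as being partitioned across the components.

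First I would let $G_1, G_2, \dots, G_k$ denote the connected components of $G$, and for each $G_j$ fix a vertex $c_j$ witnessing its radius, i.e.\ a vertex with $G_j.N_r[c_j] = G_j.V$ (such $c_j$ exists because each component has radius at most $r$, and if a component has radius strictly less than $r$ the same vertex still works a fortiori). The construction of the burning sequence is then: at time-step $j$, for $1 \le j \le k$, choose $c_j$ as the fire source. This is a valid choice provided $c_j$ is not already burnt; I would note that at step $j$ only vertices of components $G_1, \dots, G_{j-1}$ can be burnt (fire cannot cross components), so $c_j \in G_j$ is indeed unburnt, and the sequence $S = (c_1, c_2, \dots, c_k)$ together with any completion of length $r$ more steps is legitimate. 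So set $k' = r + k$ and let $S = (c_1, \dots, c_k, z_{k+1}, \dots, z_{r+k})$ with the $z$'s chosen arbitrarily among unburnt vertices (and if everything is already burnt before step $r+k$, we are done even sooner).

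Next I would verify that this sequence burns all of $G$ within $r+k$ steps. Fix a component $G_j$ and a vertex $u \in G_j.V$. Since $c_j$ was placed at step $j$ and $\mathrm{dist}_{G_j}(c_j, u) \le r$, and since distances within $G_j$ equal distances within $G$ (the component is a union of connected pieces, so no shortcut exists), the fire from $c_j$ reaches $u$ by step $j + r \le k + r = k'$. Hence every vertex of every component is burnt by step $k'$, which by \Cref{equation:burn-verify} (or directly by the definition of burning) shows $b(G) \le r + k$.

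I do not expect a serious obstacle here; the only point requiring a little care is the bookkeeping that $c_j$ is still unburnt at step $j$, which hinges on the elementary observation that fire never propagates between distinct connected components. One should also remark that the ordering of the components is immaterial and that components of smaller radius only help, so the bound $r+k$ is uniform over all of them.
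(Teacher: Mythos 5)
Your proof is correct and follows essentially the same route as the paper's: place a fire source on a radius-witnessing center of the $j$-th component at step $j$, and observe that every component is fully burnt within $r$ further steps, giving $b(G)\leq r+k$. The extra bookkeeping you supply (that $c_j$ is still unburnt at step $j$ because fire cannot cross components, and that intra-component distances agree with distances in $G$) only makes explicit what the paper leaves implicit.
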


\begin{proof}
    Let there be $k$ components $C_1,C_2,\dots,C_k$ in a graph $G$, each of maximum radius $r$. Let $v_i$ be a vertex in $C_i$, $\forall\ 1\leq i\leq k$, such that it is atmost at a distance $r$ from all the other vertices in $C_i$.
    
    Let that in each time-step we put a fire source such that in step $i$, we put a fire source on $v_i$. Within next $r$ time-steps $G$ will be burnt.
\end{proof}

Next we discuss the burning procedure on some examples of spider graphs\index{burning spider graphs}. $SP(3,4)$ can be burnt optimally in 4 steps. As demonstrated in \Cref{figure:burn-SP34-optimal}, an optimal burning sequence of the presented graph $SP(3,\ 4)$ is $S^\prime_{SP(3,4)}=(b,\ h,\ l,\ m)$, which is of size $4$.

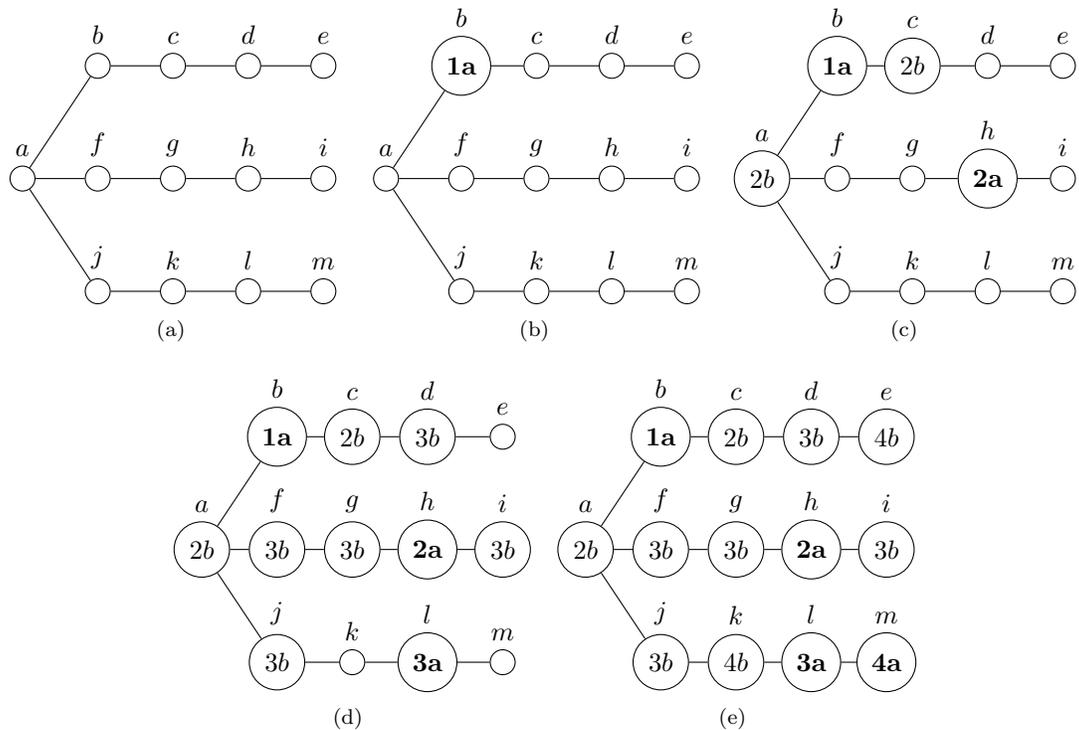
\begin{figure}
    \centering
    \subfigure[]{
        \begin{tikzpicture}
            \node[draw, shape=circle, label=above:{$a$}] (AA) at (0,0) {};
            
            \node[draw, shape=circle, label=above:{$b$}] (AB) at (1, 1.5) {};
            \node[draw, shape=circle, label=above:{$c$}] (AC) at (2, 1.5) {};
            \node[draw, shape=circle, label=above:{$d$}] (AD) at (3, 1.5) {};
            \node[draw, shape=circle, label=above:{$e$}] (AE) at (4, 1.5) {};
                
            \draw (AA) -- (AB);
            \draw (AB) -- (AC);
            \draw (AC) -- (AD);
            \draw (AD) -- (AE);
            
            \node[draw, shape=circle, label=above:{$f$}] (AB) at (1, 0) {};
            \node[draw, shape=circle, label=above:{$g$}] (AC) at (2, 0) {};
            \node[draw, shape=circle, label=above:{$h$}] (AD) at (3, 0) {};
            \node[draw, shape=circle, label=above:{$i$}] (AE) at (4, 0) {};
                
            \draw (AA) -- (AB);
            \draw (AB) -- (AC);
            \draw (AC) -- (AD);
            \draw (AD) -- (AE);
            
            \node[draw, shape=circle, label=above:{$j$}] (AB) at (1, -1.5) {};
            \node[draw, shape=circle, label=above:{$k$}] (AC) at (2, -1.5) {};
            \node[draw, shape=circle, label=above:{$l$}] (AD) at (3, -1.5) {};
            \node[draw, shape=circle, label=above:{$m$}] (AE) at (4, -1.5) {};
                
            \draw (AA) -- (AB);
            \draw (AB) -- (AC);
            \draw (AC) -- (AD);
            \draw (AD) -- (AE);
        \end{tikzpicture}
    }
    \subfigure[]{
        \begin{tikzpicture}
            \node[draw, shape=circle, label=above:{$a$}] (AA) at (0,0) {};
            
            \node[draw, shape=circle, label=above:{$b$}] (AB) at (1, 1.5) {\textbf{1a}};
            \node[draw, shape=circle, label=above:{$c$}] (AC) at (2, 1.5) {};
            \node[draw, shape=circle, label=above:{$d$}] (AD) at (3, 1.5) {};
            \node[draw, shape=circle, label=above:{$e$}] (AE) at (4, 1.5) {};
                
            \draw (AA) -- (AB);
            \draw (AB) -- (AC);
            \draw (AC) -- (AD);
            \draw (AD) -- (AE);
            
            \node[draw, shape=circle, label=above:{$f$}] (AB) at (1, 0) {};
            \node[draw, shape=circle, label=above:{$g$}] (AC) at (2, 0) {};
            \node[draw, shape=circle, label=above:{$h$}] (AD) at (3, 0) {};
            \node[draw, shape=circle, label=above:{$i$}] (AE) at (4, 0) {};
                
            \draw (AA) -- (AB);
            \draw (AB) -- (AC);
            \draw (AC) -- (AD);
            \draw (AD) -- (AE);
            
            \node[draw, shape=circle, label=above:{$j$}] (AB) at (1, -1.5) {};
            \node[draw, shape=circle, label=above:{$k$}] (AC) at (2, -1.5) {};
            \node[draw, shape=circle, label=above:{$l$}] (AD) at (3, -1.5) {};
            \node[draw, shape=circle, label=above:{$m$}] (AE) at (4, -1.5) {};
                
            \draw (AA) -- (AB);
            \draw (AB) -- (AC);
            \draw (AC) -- (AD);
            \draw (AD) -- (AE);
        \end{tikzpicture}
    }
    \subfigure[]{
        \begin{tikzpicture}
            \node[draw, shape=circle, label=above:{$a$}] (AA) at (0,0) {$2b$};
            
            \node[draw, shape=circle, label=above:{$b$}] (AB) at (1, 1.5) {\textbf{1a}};
            \node[draw, shape=circle, label=above:{$c$}] (AC) at (2, 1.5) {$2b$};
            \node[draw, shape=circle, label=above:{$d$}] (AD) at (3, 1.5) {};
            \node[draw, shape=circle, label=above:{$e$}] (AE) at (4, 1.5) {};
                
            \draw (AA) -- (AB);
            \draw (AB) -- (AC);
            \draw (AC) -- (AD);
            \draw (AD) -- (AE);
            
            \node[draw, shape=circle, label=above:{$f$}] (AB) at (1, 0) {};
            \node[draw, shape=circle, label=above:{$g$}] (AC) at (2, 0) {};
            \node[draw, shape=circle, label=above:{$h$}] (AD) at (3, 0) {\textbf{2a}};
            \node[draw, shape=circle, label=above:{$i$}] (AE) at (4, 0) {};
                
            \draw (AA) -- (AB);
            \draw (AB) -- (AC);
            \draw (AC) -- (AD);
            \draw (AD) -- (AE);
            
            \node[draw, shape=circle, label=above:{$j$}] (AB) at (1, -1.5) {};
            \node[draw, shape=circle, label=above:{$k$}] (AC) at (2, -1.5) {};
            \node[draw, shape=circle, label=above:{$l$}] (AD) at (3, -1.5) {};
            \node[draw, shape=circle, label=above:{$m$}] (AE) at (4, -1.5) {};
                
            \draw (AA) -- (AB);
            \draw (AB) -- (AC);
            \draw (AC) -- (AD);
            \draw (AD) -- (AE);
        \end{tikzpicture}
    }
    \subfigure[]{
        \begin{tikzpicture}
            \node[draw, shape=circle, label=above:{$a$}] (AA) at (0,0) {$2b$};
            
            \node[draw, shape=circle, label=above:{$b$}] (AB) at (1, 1.5) {\textbf{1a}};
            \node[draw, shape=circle, label=above:{$c$}] (AC) at (2, 1.5) {$2b$};
            \node[draw, shape=circle, label=above:{$d$}] (AD) at (3, 1.5) {$3b$};
            \node[draw, shape=circle, label=above:{$e$}] (AE) at (4, 1.5) {};
                
            \draw (AA) -- (AB);
            \draw (AB) -- (AC);
            \draw (AC) -- (AD);
            \draw (AD) -- (AE);
            
            \node[draw, shape=circle, label=above:{$f$}] (AB) at (1, 0) {$3b$};
            \node[draw, shape=circle, label=above:{$g$}] (AC) at (2, 0) {$3b$};
            \node[draw, shape=circle, label=above:{$h$}] (AD) at (3, 0) {\textbf{2a}};
            \node[draw, shape=circle, label=above:{$i$}] (AE) at (4, 0) {$3b$};
                
            \draw (AA) -- (AB);
            \draw (AB) -- (AC);
            \draw (AC) -- (AD);
            \draw (AD) -- (AE);
            
            \node[draw, shape=circle, label=above:{$j$}] (AB) at (1, -1.5) {$3b$};
            \node[draw, shape=circle, label=above:{$k$}] (AC) at (2, -1.5) {};
            \node[draw, shape=circle, label=above:{$l$}] (AD) at (3, -1.5) {\textbf{3a}};
            \node[draw, shape=circle, label=above:{$m$}] (AE) at (4, -1.5) {};
                
            \draw (AA) -- (AB);
            \draw (AB) -- (AC);
            \draw (AC) -- (AD);
            \draw (AD) -- (AE);
        \end{tikzpicture}
    }
    \subfigure[]{
        \begin{tikzpicture}
            \node[draw, shape=circle, label=above:{$a$}] (AA) at (0,0) {$2b$};
            
            \node[draw, shape=circle, label=above:{$b$}] (AB) at (1, 1.5) {\textbf{1a}};
            \node[draw, shape=circle, label=above:{$c$}] (AC) at (2, 1.5) {$2b$};
            \node[draw, shape=circle, label=above:{$d$}] (AD) at (3, 1.5) {$3b$};
            \node[draw, shape=circle, label=above:{$e$}] (AE) at (4, 1.5) {$4b$};
                
            \draw (AA) -- (AB);
            \draw (AB) -- (AC);
            \draw (AC) -- (AD);
            \draw (AD) -- (AE);
            
            \node[draw, shape=circle, label=above:{$f$}] (AB) at (1, 0) {$3b$};
            \node[draw, shape=circle, label=above:{$g$}] (AC) at (2, 0) {$3b$};
            \node[draw, shape=circle, label=above:{$h$}] (AD) at (3, 0) {\textbf{2a}};
            \node[draw, shape=circle, label=above:{$i$}] (AE) at (4, 0) {$3b$};
                
            \draw (AA) -- (AB);
            \draw (AB) -- (AC);
            \draw (AC) -- (AD);
            \draw (AD) -- (AE);
            
            \node[draw, shape=circle, label=above:{$j$}] (AB) at (1, -1.5) {$3b$};
            \node[draw, shape=circle, label=above:{$k$}] (AC) at (2, -1.5) {$4b$};
            \node[draw, shape=circle, label=above:{$l$}] (AD) at (3, -1.5) {\textbf{3a}};
            \node[draw, shape=circle, label=above:{$m$}] (AE) at (4, -1.5) {\textbf{4a}};
                
            \draw (AA) -- (AB);
            \draw (AB) -- (AC);
            \draw (AC) -- (AD);
            \draw (AD) -- (AE);
        \end{tikzpicture}
    }
    \caption{Optimal burning of $SP(3,\ 4)$.}
    \label{figure:burn-SP34-optimal}
\end{figure}

If we increase one more arm in $SP(3,\ 4)$, it becomes $SP(4,\ 4)$, and now we need at least $5$ time steps to burn it; a burnt $SP(4,\ 4)$ is shown in figure \Cref{figure:burn-SP44-optimal}. An optimal burning sequence of the presented graph $SP(4,4)$ is $S^\prime_{SP(4,4)}=(b,\ h,\ l,\ m,\ q)$.

\begin{figure}
    \centering
    \begin{tikzpicture}
        \node[draw, shape=circle, label=above:{$a$}] (AA) at (0,0) {$2b$};
        
        \node[draw, shape=circle, label=above:{$b$}] (AB) at (1, .75) {\textbf{1a}};
        \node[draw, shape=circle, label=above:{$c$}] (AC) at (2, .75) {$2b$};
        \node[draw, shape=circle, label=above:{$d$}] (AD) at (3, .75) {$3b$};
        \node[draw, shape=circle, label=above:{$e$}] (AE) at (4, .75) {$4b$};
            
        \draw (AA) -- (AB);
        \draw (AB) -- (AC);
        \draw (AC) -- (AD);
        \draw (AD) -- (AE);
        
        \node[draw, shape=circle, label=above:{$f$}] (AB) at (1, -.75) {$3b$};
        \node[draw, shape=circle, label=above:{$g$}] (AC) at (2, -.75) {$3b$};
        \node[draw, shape=circle, label=above:{$h$}] (AD) at (3, -.75) {\textbf{2a}};
        \node[draw, shape=circle, label=above:{$i$}] (AE) at (4, -.75) {$3b$};
            
        \draw (AA) -- (AB);
        \draw (AB) -- (AC);
        \draw (AC) -- (AD);
        \draw (AD) -- (AE);
        
        \node[draw, shape=circle, label=above:{$j$}] (AB) at (-1, .75) {$3b$};
        \node[draw, shape=circle, label=above:{$k$}] (AC) at (-2, .75) {$4b$};
        \node[draw, shape=circle, label=above:{$l$}] (AD) at (-3, .75) {\textbf{3a}};
        \node[draw, shape=circle, label=above:{$m$}] (AE) at (-4, .75) {\textbf{4a}};
        
        \draw (AA) -- (AB);
        \draw (AB) -- (AC);
        \draw (AC) -- (AD);
        \draw (AD) -- (AE);
            
        \node[draw, shape=circle, label=above:{$n$}] (AB) at (-1, -.75) {$3b$};
        \node[draw, shape=circle, label=above:{$o$}] (AC) at (-2, -.75) {$4b$};
        \node[draw, shape=circle, label=above:{$p$}] (AD) at (-3, -.75) {$5b$};
        \node[draw, shape=circle, label=above:{$q$}] (AE) at (-4, -.75) {\textbf{5a}};
        
        \draw (AA) -- (AB);
        \draw (AB) -- (AC);
        \draw (AC) -- (AD);
        \draw (AD) -- (AE);
    \end{tikzpicture}
    \caption{Optimal burning of $SP(4,\ 4)$}
    \label{figure:burn-SP44-optimal}
\end{figure}
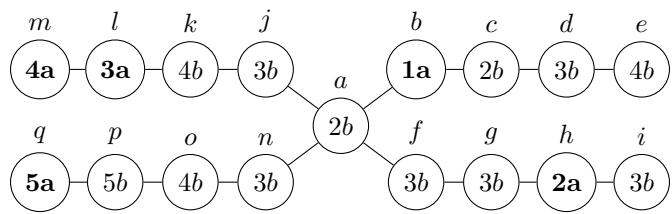

Bessy et al \cite{Bessy2017} have proved that for a graph $SP(s,\ r)$ if $s \geq r$, then $b(SP(s,\ r))$ $=$ $r+1$, and if $s \geq r+2$, then it is necessary for any optimal burning sequence to have first fire source on the spider head. We show this as \Cref{lemma:b(SP)}. \Cref{figure:burn-1-component} is an example of this phenomenon; it shows burning of $SP(d,\ k^\prime-1)$, where $d \geq k^\prime+1$. In its optimal burning sequence $S^\prime_{SP(d,k^\prime-1)} = (y_1, y_2, y_3, . . ., y_{k^\prime})$, the first fire source $y_1$ is placed on the spider head.

\begin{lemma}\label{lemma:b(SP)}
    For a graph $SP(s,\ r)$ if $s \geq r$, then $b(SP(s,\ r))$ $=$ $r+1$, and if $s \geq r+2$, then it is necessary for any optimal burning sequence to have first fire source on the spider head.
\end{lemma}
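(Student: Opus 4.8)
The plan is to prove the two assertions separately, both resting on a counting argument over the $s$ leaves (arm-tips) of $SP(s,r)$. For the equality $b(SP(s,r))=r+1$, the upper bound is immediate from \Cref{theorem:burn-connected-graph-radius-r}: the head $c$ lies within distance $r$ of every vertex, so the radius is $r$ and $b(SP(s,r))\le r+1$. For the lower bound I would assume, for contradiction, a burning sequence $S=(x_1,\dots,x_r)$ of length $r$ and isolate the key observation: a fire source $x_i$, placed at step $i$ and hence having "remaining time" $r-i$, reaches at most one arm-tip. Indeed, $c$ placed at any step $i\ge 1$ has remaining time $r-i\le r-1<r=\mathrm{dist}(c,\text{tip})$, so it reaches no tip; and a non-head source lying on arm $j$ at distance $d\ge 1$ from $c$ reaches the tip of arm $j$ only if $d\ge i$, and reaches the tip of any other arm never, that tip being at distance $d+r>r-i$.

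Since there are $s\ge r$ tips but only $r$ fire sources, either $s>r$, an immediate contradiction, or $s=r$ and every $x_i$ must reach exactly one tip, these tips lying on pairwise distinct arms with $d_i\ge i$. To close the $s=r$ case I would push the observation one step further: $x_r$ has remaining time $0$ and $d_r\ge r$, so $x_r$ is precisely the tip of its arm and burns only itself; but then the vertex $w$ adjacent to $x_r$ on that same arm --- which exists whenever $r\ge 2$, the case $r=1$ being trivial since a one-step sequence burns only one vertex --- cannot be covered: the only same-arm source $x_r$ is at distance $1>0$ from $w$, and every other source $x_i$ lies on a different arm at distance $d_i+(r-1)\ge r>r-i$. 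This contradiction yields $b(SP(s,r))\ge r+1$.

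For the second assertion, suppose $s\ge r+2$; then in particular $s\ge r$, so by the first part $b(SP(s,r))=r+1$ and an optimal burning sequence has the form $S=(x_1,\dots,x_{r+1})$. Assume $x_1\neq c$ and run the same tip-counting bound with horizon $r+1$: a source placed at step $i$ has remaining time $r+1-i$; the head at any step has remaining time at most $r-1<r$ and reaches no tip, while a non-head source reaches only its own arm's tip, any other tip being at distance $d+r>r+1-i$. Hence each of the $r+1$ fire sources reaches at most one tip, so at most $r+1$ tips get burned; but there are $s\ge r+2$ of them, a contradiction. Therefore $x_1=c$, and one checks that such a sequence does exist (placing $c$ first burns everything within $r$ further steps).

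I expect the only genuine subtlety to be the borderline case $s=r$ in the lower bound: leaf-counting alone merely forces equality there, and one really does need the additional remark that the penultimate vertex on $x_r$'s arm is uncoverable. Everything else is routine bookkeeping with distances in the spider, and the $s\ge r+2$ threshold in the second part is exactly what makes $r+1$ fall short of the number of tips.
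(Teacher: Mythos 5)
Your proof is correct and follows essentially the same route as the paper's: the radius bound gives $b(SP(s,r))\le r+1$, and a count of arm-tips (each fire source reaching at most one, since tips are pairwise at distance $2r$) gives both the lower bound and, for $s\ge r+2$, the necessity of placing the first source on the head. The one place you go beyond the paper is exactly the subtlety you flag: for the borderline case $s=r$ the paper's argument passes from ``at most $r$ leaves can be burned'' directly to ``at least one leaf is left unburned,'' which does not follow when there are exactly $r$ leaves, whereas your additional step --- that $x_r$ would then have to be a tip with remaining time $0$, leaving its arm-neighbour uncoverable --- is precisely what is needed to close that case, so your write-up is in fact tighter than the paper's (and it also handles all $s\ge r$ directly rather than passing through $SP(r,r)$ and the isometric-subtree monotonicity).
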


\begin{proof}
    We start with discussing a spider graph $SP(r,r)$ whose number of arms $r$ is equal to the length of each arm. The radius of $SP(r,r)$ is $r$. So by \Cref{theorem:burn-connected-graph-radius-r} we have that $b(SP(r,r))\leq r+1$.
    
    Now we show that $b(SP(r,r))\geq r+1$.
    Let for contradiction that $b(SP(r,r))\leq r$. This implies that there is a w-burning sequence $S^\prime=(x_1,x_2,...,x_{k^\prime})$ such that $k^\prime\leq r$ which is able to burn $SP(s,r)$. No fire source in $S^\prime$ shall be able to burn more than one leaf node in $G$ because the distance between any two leaf nodes is $2r$. We will be able to burn atmost $k^\prime \leq r$ leaf nodes in $k^\prime$ rounds. This implies that at least $1$ leaf node will be left unburned. This implies that $S^\prime$ must contain at least one more fire source which is a contradiction.
    
    Thus, $b(SP(r,r))=r+1$.
    
    Given that $s \geq r+2$, by \Cref{theorem:burn-isomorphic-subtree} $b(SP(s,\ r))\geq r+1$. Also since the radius of $SP(s,r)$ is of radius $r$, we have that $SP(s,r)\leq r+1$. So we have that $SP(s,r)=r+1$.
    
    Next we show that if $s \geq r+2$, then it is necessary for any optimal burning sequence to have first fire source on the spider head. Let that an optimal burning sequence $S^\prime=(y_1,y_2,...,y_{r+1})$ of length $r+1$ is able to burn $SP(s,r)$. Let for contradiction that the first fire source is not placed at the spider head. So we have that at least $r+1$ leaf nodes cannot be burned by $y_1$. Also, no other fire source will be able to burn two leaf nodes simultaneously because the distance between any two leaf nodes is $2r$. So we have that the fire sources $y_2,...,y_{r+1}$ will be able to burn at most $r$ leaf nodes. We have that at least one leaf node will remain unburned which is a contradiction. So the first fire source $y_1\in S^\prime$ must be placed at the spider head. This concludes the proof.
\end{proof}

\begin{figure}
	\centering
	\begin{tikzpicture}
	    \node [circle, fill=black, inner sep=0pt, minimum size=3pt, label=left:{$y_1$}] (AA) at (0,0) {};
	    
	    \node [circle, fill=black, inner sep=0pt, minimum size=3pt] at (0,.33) {};
	    \node [circle, fill=black, inner sep=0pt, minimum size=3pt] at (.236,.236) {};
	    \node [circle, fill=black, inner sep=0pt, minimum size=3pt] at (.33,0) {};
	    \node [circle, fill=black, inner sep=0pt, minimum size=3pt] at (.236,-.236) {};
	    \node [circle, fill=black, inner sep=0pt, minimum size=3pt] at (0,-.33) {};
	    \node [circle, fill=black, inner sep=0pt, minimum size=3pt] at (-.236,-.236) {};
	    
	    \node [circle, fill=black, inner sep=0pt, minimum size=3pt] at (0,.66) {};
	    \node [circle, fill=black, inner sep=0pt, minimum size=3pt] at (.472,.472) {};
	    \node [circle, fill=black, inner sep=0pt, minimum size=3pt] at (.66,0) {};
	    \node [circle, fill=black, inner sep=0pt, minimum size=3pt] at (.472,-.472) {};
	    \node [circle, fill=black, inner sep=0pt, minimum size=3pt] at (0,-.66) {};
	    \node [circle, fill=black, inner sep=0pt, minimum size=3pt] at (-.472,-.472) {};
	    
	    \node [circle, fill=black, inner sep=0pt, minimum size=3pt] (AB) at (0,1) {};
	    \node [circle, fill=black, inner sep=0pt, minimum size=3pt] (AC) at (.707,.707) {};
	    \node [circle, fill=black, inner sep=0pt, minimum size=3pt] (AD) at (1,0) {};
	    \node [circle, fill=black, inner sep=0pt, minimum size=3pt] (AE) at (.707,-.707) {};
	    \node [circle, fill=black, inner sep=0pt, minimum size=3pt] (AF) at (0,-1) {};
	    \node [circle, fill=black, inner sep=0pt, minimum size=3pt] (AG) at (-.707,-.707) {};
	    
	    \node [circle, fill=black, inner sep=0pt, minimum size=2pt] at (0,1.33) {};
	    \node [circle, fill=black, inner sep=0pt, minimum size=2pt] at (.707+.236,.707+.236) {};
	    \node [circle, fill=black, inner sep=0pt, minimum size=2pt] at (1.33,0) {};
	    \node [circle, fill=black, inner sep=0pt, minimum size=2pt] at (.707+.236,-.707-.236) {};
	    \node [circle, fill=black, inner sep=0pt, minimum size=2pt] at (0,-1.33) {};
	    \node [circle, fill=black, inner sep=0pt, minimum size=2pt] at (-.707-.236,-.707-.236) {};
	    
	    \node [circle, fill=black, inner sep=0pt, minimum size=2pt] at (0,1.66) {};
	    \node [circle, fill=black, inner sep=0pt, minimum size=2pt] at (.707+.472,.707+.472) {};
	    \node [circle, fill=black, inner sep=0pt, minimum size=2pt] at (1.66,0) {};
	    \node [circle, fill=black, inner sep=0pt, minimum size=2pt] at (.707+.472,-.707-.472) {};
	    \node [circle, fill=black, inner sep=0pt, minimum size=2pt] at (0,-1.66) {};
	    \node [circle, fill=black, inner sep=0pt, minimum size=2pt] at (-.707-.472,-.707-.472) {};
	    
	    \node [circle, fill=black, inner sep=0pt, minimum size=2pt, label=above:{$y_{k^{\prime}}$}] at (0,2) {};
	    \node [circle, fill=black, inner sep=0pt, minimum size=2pt] at (1.411,1.414) {};
	    \node [circle, fill=black, inner sep=0pt, minimum size=2pt] at (2,0) {};
	    \node [circle, fill=black, inner sep=0pt, minimum size=2pt, label=right:{$y_2$}] at (1.414,-1.414) {};
	    \node [circle, fill=black, inner sep=0pt, minimum size=2pt, label=below:{$y_3$}] at (0,-2) {};
	    \node [circle, fill=black, inner sep=0pt, minimum size=2pt, label=left:{$y_4$}] at (-1.414,-1.414) {};
	    
	    \node [circle, fill=black, inner sep=0pt, minimum size=3pt] at (-1,0) {};
	    \node [circle, fill=black, inner sep=0pt, minimum size=3pt] at (-1,-0.2) {};
	    \node [circle, fill=black, inner sep=0pt, minimum size=3pt] at (-1,0.2) {};
	    
		\draw (AA) -- (AB);
		\draw (AA) -- (AC);
		\draw (AA) -- (AD);
		\draw (AA) -- (AE);
		\draw (AA) -- (AF);
		\draw (AA) -- (AG);
		
	\end{tikzpicture}
	\caption{Optimal burning of a graph $SP(d,\ k^\prime-1)$, where $d \geq k^\prime+1$. $S^\prime_{SP(d,k^\prime-1)} = (y_1, y_2, . . ., y_{k^{\prime}})$ is the optimal burning sequence.}
	\label{figure:burn-1-component}
\end{figure}
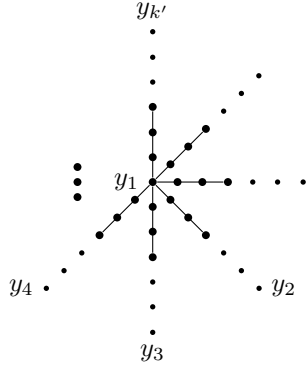

\Cref{figure:burn-1-component} shows burning of a graph containing only one connected component, whereas \Cref{figure:burn-many-components} represents the burning of a graph $G_{SP}$ which is a disjoint union of $k^{\prime}$ components\index{burning (example) graph containing more than one components}. Each component $comp_i: 1\leq i\leq k^\prime$ is a spider graph where $comp_i = SP(d_i, i-1),\ d_i \geq i+1$. This implies the graph $G_{SP} = SP(d_1,\ 0)$ $\cup$ $SP(d_2,\ 1)$ $\cup$ $SP(d_3,\ 2)$ $\cup$ $. . .$ $\cup$ $SP(d_{k^\prime},\ k^\prime-1)$.

The burning number of both the graphs $SP(d,\ k^\prime-1)$ and $G_{SP}$ is $k^{\prime}$. It can be easily observed \cite{Bonato2016} that if an arbitrary graph $G^\prime$ contains $k$ connected components, then the burning number of this graph, $b(G^\prime) \geq k$.

\begin{figure}
	\begin{minipage}{1\textwidth}
		\centering
		\begin{tikzpicture}
		    \node [circle, fill=black, inner sep=0pt, minimum size=3pt, label=left:{$y_{k^{\prime}}$}] at (-2,0) {};

		    \node [circle, fill=black, inner sep=0pt, minimum size=3pt, label=left:{$y_{k^{\prime}-1}$}] (AA) at (0,0) {};
		    \node [circle, fill=black, inner sep=0pt, minimum size=3pt] (AB) at (0,1) {};
		    \node [circle, fill=black, inner sep=0pt, minimum size=3pt] (AC) at (0.707,0.707) {};
		    \node [circle, fill=black, inner sep=0pt, minimum size=3pt] (AD) at (1,0) {};
		    \node [circle, fill=black, inner sep=0pt, minimum size=3pt] (AE) at (0.707,-0.707) {};
		    \node [circle, fill=black, inner sep=0pt, minimum size=3pt] (AF) at (0,-1) {};
		    \node [circle, fill=black, inner sep=0pt, minimum size=3pt] (AG) at (-0.707,-0.707) {};
		    
		    \node [circle, fill=black, inner sep=0pt, minimum size=2pt] at (-1.1,0) {};
		    \node [circle, fill=black, inner sep=0pt, minimum size=2pt] at (-1.1,-0.2) {};
		    \node [circle, fill=black, inner sep=0pt, minimum size=2pt] at (-1.1,0.2) {};
		    
			\draw (AA) -- (AB);
			\draw (AA) -- (AC);
			\draw (AA) -- (AD);
			\draw (AA) -- (AE);
			\draw (AA) -- (AF);
			\draw (AA) -- (AG);

		    \node [circle, fill=black, inner sep=0pt, minimum size=3pt, label=left:{$y_{k^{\prime}-2}$}] (BA) at (3,0) {};
		    \node [circle, fill=black, inner sep=0pt, minimum size=3pt] (BB) at (3,0.5) {};
		    \node [circle, fill=black, inner sep=0pt, minimum size=3pt] (BC) at (3,1) {};
		    \node [circle, fill=black, inner sep=0pt, minimum size=3pt] (BD) at (3.3535,0.3535) {};
		    \node [circle, fill=black, inner sep=0pt, minimum size=3pt] (BE) at (3.707,0.707) {};
		    \node [circle, fill=black, inner sep=0pt, minimum size=3pt] (BF) at (3.5,0) {};
		    \node [circle, fill=black, inner sep=0pt, minimum size=3pt] (BG) at (3+1,0) {};
		    \node [circle, fill=black, inner sep=0pt, minimum size=3pt] (BH) at (3.3535,-0.3535) {};
		    \node [circle, fill=black, inner sep=0pt, minimum size=3pt] (BI) at (3.707,-0.707) {};
		    \node [circle, fill=black, inner sep=0pt, minimum size=3pt] (BJ) at (3,-0.5) {};
		    \node [circle, fill=black, inner sep=0pt, minimum size=3pt] (BK) at (3,-1) {};
		    \node [circle, fill=black, inner sep=0pt, minimum size=3pt] (BL) at (3-0.3535,-0.3535) {};
		    \node [circle, fill=black, inner sep=0pt, minimum size=3pt] (BM) at (3-0.707,-0.707) {};
		    
		    \node [circle, fill=black, inner sep=0pt, minimum size=2pt] at (3-1.1,0) {};
		    \node [circle, fill=black, inner sep=0pt, minimum size=2pt] at (3-1.1,-0.2) {};
		    \node [circle, fill=black, inner sep=0pt, minimum size=2pt] at (3-1.1,0.2) {};
		    
			\draw (BA) -- (BC);
			\draw (BA) -- (BE);
			\draw (BA) -- (BG);
			\draw (BA) -- (BI);
			\draw (BA) -- (BK);
			\draw (BA) -- (BM);

			\node [circle, fill=black, inner sep=0pt, minimum size=2pt] at (5,-.5) {};
		    \node [circle, fill=black, inner sep=0pt, minimum size=2pt] at (5.2,-.5) {};
		    \node [circle, fill=black, inner sep=0pt, minimum size=2pt] at (5-0.2,-.5) {};

		    \node [circle, fill=black, inner sep=0pt, minimum size=3pt, label=left:{$y_1$}] (CA) at (7,0) {};
		    
		    \node [circle, fill=black, inner sep=0pt, minimum size=2pt] at (7-1,0) {};
		    \node [circle, fill=black, inner sep=0pt, minimum size=2pt] at (7-1,-0.2) {};
		    \node [circle, fill=black, inner sep=0pt, minimum size=2pt] at (7-1,0.2) {};
		    
			\draw (CA) -- (7,1);
			\draw (CA) -- (7.707,0.707);
			\draw (CA) -- (7+1,0);
			\draw (CA) -- (7.707,-0.707);
			\draw (CA) -- (7,-1);
			\draw (CA) -- (7-0.707,-0.707);

			\node at (-2,-2) {$SP(d_1,\ 0)$};
			\node at (0,-2) {$SP(d_2,\ 1)$};
			\node at (3,-2) {$SP(d_3,\ 2)$};
			\node at (5,-2) {$. . .$};
			\node at (7,-2) {$SP(d_{k^\prime},\ k^\prime-1)$};
		\end{tikzpicture}
	\end{minipage}
	
	\caption{Optimal burning of a graph $G_{SP}$ with $k^\prime$ connected components, where $G_{SP} = SP(d_1,\ 0)$ $\cup$ $SP(d_2,\ 1)$ $\cup$ $SP(d_3,\ 2)$ $\cup$ $. . .$ $\cup$ $SP(d_{k^\prime},\ k^\prime-1)$ and $d_i \geq i+1$.}
	\label{figure:burn-many-components}
\end{figure}
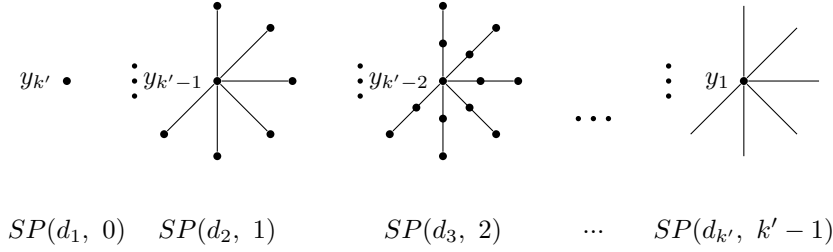

\section{Algorithm for burning general graphs optimally}

\index{algorithm: optimal burning of general graphs}An algorithm to compute a burning sequence for an arbitrary graph is given in \cite{Bessy2017}. \Cref{algorithm:burn-general-optimal} is similar to that algorithm; it computes an optimal burning sequence for an arbitrary input graph $G$ in exponential time. It is a brute force algorithm which checks for every possible permutation of vertices in $G$, sequentially, if it is a valid burning sequence.\\

\begin{algorithm}\label{algorithm:burn-general-optimal}
Given the input graph $G$ perform the following steps.
\end{algorithm}

\textbf{\textit{Stage 1.}} $n = |G.V|$. Let that the vertices in $G.V$ be labelled as $v_1,v_2,\dots,v_n$. $i=1$. if $i\leq n$, perform the following steps.

\textbf{\textit{Stage 1.1.}} For every permutation $P$ of size $i$ in the sequence $(1,2,\dots,n)$ (where $P$ itself is a sequence of numbers), perform the following steps.

\textbf{\textit{Stage 1.1.2.}} Let $S = (v_{P[1]},v_{P[2]},\dots,v_{P[|P|]})$. If $S$ is able to burn $G$, that is, if $(G,S)$ is passed as input to \Cref{algorithm:burn-verify}, it returns $true$, then return $S$.

\textbf{\textit{Stage 1.2.}} $i=i+1$.

\section{Summary of this chapter}

We visit burning properties of some graph classes such as paths, cycles, spider graphs with all arms of equal lengths. We also discuss some of the relations that can be determined between two distinct graphs with respect to graph burning, for example the burning number of a graph is always greater than or equal to any of its isometric subgraph. We utilize these properties in the following chapters to lay down the graph burning theory in more detail.

\chapter{Other games and problems}\label{chapter:other-games-and-problems}

In this chapter, we are going to describe some problems which are related to the graph burning problem, and some problems which we utilize in the following chapters to prove various properties of graph burning.

\section{Distinct 3-partition problem}\label{section:d3pp}

\index{distinct 3-partition problem}For definition of the distinct 3-partition problem, see \Cref{section:problems-in-NP}. \cite{Hulett2008} have showed that the distinct $3$-partition problem is NP-Complete in the strong sense. So as per our discussion in \Cref{section:strong-NPC}, we have that if each element of the input set is bounded above by the polynomial in the length of the input, then the distict 3-partition problem still remains NP-Complete.

We have from \Cref{lemma:ptr-npc} and \cref{lemma:pptr-npc} in \Cref{section:ptr-pptr-npc} that if a problem $B$ (which is NP-Complete in the strong sense) is reducible (in polynomial or pseudo-polynomial time) to another problem $B^\prime$ which is in NP, we have that $B^\prime$ is also an NP-Complete problem in the strong sense. We reduce several problems from the distinct 3-partition problem in the following chapters to show the NP-Completeness of various graph burning subproblems (in the strong sense).

\section{Firefighter problem}

\index{firefighter problem}The firefighting game, in combination with graph burning, can be used to model, for example, various ``real-life'' social scenarios where (communicable) diseases are spread. We also discuss (briefly) majorly in the conclusion how the firefighting game can be used to model the hindrance to the spread of a disease (the spread of which, we can model through graph burning). In fact, firefighter was introduced (in \cite{Hartnell1995}) to capture several important applications which included immunization of a population against a virus. In this section, we briefly discuss the firefighter problem. For the definition of the firefighter problem, see \Cref{section:related-games}.

In the firefighter problem, we input an arbitrary graph $G$, and a vertex $s\in G.V$ which is the initial fire source. After the computation of a firefighting process, a sequence of vertices $S$ is generated which is used, (sequentially) one vertex in each round, to put the firefighter on, and thus, protect it the spread of fire. Once a vertex is protected, it cannot be burnt (and so it can also not spread the fire). \Cref{lemma:verify-firefighter} was proved in \cite{Fomin2016}.

\begin{lemma}\label{lemma:verify-firefighter}
We can verify the solution for a firefighter problem on a graph $G$ of $n$ vertices and $m$ edges with a fire source $s$ in $O(n+m)$ time. \cite{Fomin2016}
\end{lemma}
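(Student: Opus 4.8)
The plan is to show that a candidate certificate for the firefighter problem can be checked by a single simulation of the firefighting process, and that this simulation runs in linear time because the fire spread behaves like a breadth-first search from $s$. Recall that a certificate consists of a sequence $S=(u_1,u_2,\dots,u_\ell)$ of vertices (the vertex protected in round $t$ being $u_{t-1}$, for $t\geq 2$) together with, for the decision version, the target number $k$ of vertices claimed to be saved. Verifying the certificate amounts to checking (i) legality: each $u_{t-1}$ is unburned at the moment the firefighter is placed on it (and no vertex is protected twice), and (ii) the value: the number of vertices that end up unburned is at least $k$.

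First I would set up three arrays indexed by $G.V$, namely $\mathit{burned}[\cdot]$, $\mathit{protected}[\cdot]$, and a distance/level label, all initialized in $O(n)$ time, together with the adjacency lists of $G$ (size $O(n+m)$). Initialize $\mathit{burned}[s]=\mathit{true}$ and push $s$ onto a queue with level $0$. Then process rounds $t=2,3,\dots$: in round $t$ read $u_{t-1}$ from $S$; if $\mathit{burned}[u_{t-1}]$ or $\mathit{protected}[u_{t-1}]$ is already $\mathit{true}$, reject; otherwise set $\mathit{protected}[u_{t-1}]=\mathit{true}$. Next, spread the fire one step: pop every vertex of level $t-2$ from the queue and, for each such burned vertex $v$, scan its neighbours; for every neighbour $w$ with $\mathit{burned}[w]=\mathit{false}$ and $\mathit{protected}[w]=\mathit{false}$, set $\mathit{burned}[w]=\mathit{true}$ and push $w$ with level $t-1$. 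Stop when either the sequence is exhausted or no new vertex is burned in a round (one must fix the convention for leftover protections — treating them as vacuously legal provided the corresponding vertices are unburned). Finally compute the number of unburned vertices and compare it with $k$.

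Correctness follows because this procedure is exactly the round-by-round definition of the game given in \Cref{section:related-games}, with the protected set growing monotonically and the fire front advancing one hop per round; a straightforward induction on $t$ shows that after round $t$ the set $\{v:\mathit{burned}[v]\}$ equals the true set of burned vertices. For the running time, note that every vertex is burned at most once, hence pushed onto the queue at most once and popped at most once; when it is popped, its incident edges are scanned once, so the total fire-spread work over all rounds is $\sum_{v}\deg(v)+O(n)=O(n+m)$, telescoping despite there being up to $\Theta(n)$ rounds. The placement checks cost $O(1)$ per round and there are at most $n$ rounds, contributing $O(n)$; initialization is $O(n+m)$. Altogether the verification is $O(n+m)$.

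The main obstacle I anticipate is not any deep argument but making the accounting airtight: one has to argue carefully that the per-round fire-spread cost telescopes into $O(n+m)$ total — i.e. that no vertex or edge is examined more than a constant number of times across the (potentially linearly many) rounds — and one has to pin down the exact legality conditions for the certificate (distinctness of protected vertices, behaviour once the fire stops, and whether ``solution'' means a strategy, a strategy plus a value, or a strategy plus the full list of saved vertices) so that ``verify the solution'' is unambiguous. Everything else is routine bookkeeping on a breadth-first search.
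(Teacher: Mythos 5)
Your proposal is correct and follows essentially the same route as the paper: both verify the certificate by simulating the firefighting process round by round, rejecting if a firefighter is placed on an already burned vertex and spreading fire to unprotected neighbours of the current burned set. The only real difference is that you supply the frontier-queue implementation and the amortized $\sum_v \deg(v)$ accounting that actually justifies the $O(n+m)$ bound, which the paper's proof states but does not argue.
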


\begin{proof}
The input is $G$, $s$, and a sequence $S$ of vertices where the firefighter is placed in the (same) sequence of time steps. The vertices on which firefighter is placed till step $i$ is $S_i = S[1:i-1]$, an algorithm is described as \Cref{lemma:verify-firefighter}.

\begin{algorithm}\label{algorithm:verify-firefighter}
Given the input $(G,s,S)$, perform the following steps.
\end{algorithm}

\textbf{\textit{Stage 1.}} $B=\{s\}$ (stores the set of vertices which have been burnt). $k=|S|$.

\textbf{\textit{Stage 2.}} $\forall\ 2\leq i\leq k$, perform the following steps.

\textbf{\textit{Stage 2.1.}} If $S[i]\in T$, then return $false$.

\textbf{\textit{Stage 2.2.}} $B=B\cup (G\setminus S[1:i-1]).Adj[B]$.

\textbf{\textit{Stage 3.}} If $B\cup (G\setminus S).Adj[B]=B$ (that is, if fire can spread no more), then return $true$, else return $false$.
\end{proof}

\begin{corollary}
    The firefighter problem is in NP.
\end{corollary}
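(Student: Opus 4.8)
The statement to prove is the corollary that the firefighter problem is in NP. This is an immediate consequence of \Cref{lemma:verify-firefighter}, which I would invoke directly.

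The plan is as follows. First I would recall the standard characterization used throughout this chapter (and established in \Cref{section:problems-classes} and applied in \Cref{lemma:burning-in-NP}): a decision problem lies in NP precisely when, given an instance together with a candidate certificate, one can verify in deterministic polynomial time whether the certificate witnesses a ``yes'' answer. So I need to exhibit such a certificate for the firefighter decision problem and a polynomial-time verifier for it.

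Second, I would make the certificate explicit. The decision version of the firefighter problem asks, given a graph $G$, a fire source $s \in G.V$, and an integer target $\ell$, whether there is a placement strategy (a sequence $S$ of protected vertices, one per time step) that saves at least $\ell$ vertices. The natural certificate is the sequence $S$ itself; note $|S| \leq n = |G.V|$, so $S$ has size polynomial in the input. Given $(G, s, \ell, S)$, I would run \Cref{algorithm:verify-firefighter} on $(G, s, S)$: by \Cref{lemma:verify-firefighter} this checks in $O(n+m)$ time that $S$ is a legal firefighting sequence (no firefighter placed on an already-burnt vertex, and the process is internally consistent), and I would additionally count the vertices that are never burnt, comparing that count against $\ell$. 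Both the simulation and the counting run in $O(n+m)$ time, which is polynomial in the size of the input $(G, s, \ell)$.

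Third, I would note correctness: such an $S$ exists that saves at least $\ell$ vertices if and only if the verifier accepts some certificate, which is exactly the condition for membership in NP. Since every ingredient — the certificate bound and the $O(n+m)$ verification of \Cref{lemma:verify-firefighter} — is already in hand, there is essentially no obstacle here; the only point requiring a word of care is confirming that augmenting \Cref{algorithm:verify-firefighter} with a tally of surviving vertices still runs in linear time, which is clear since it is a single pass over $G.V$ after the simulation. Hence the firefighter problem is in NP.
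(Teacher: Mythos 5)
Your proposal is correct and follows the same route the paper intends: the corollary is stated immediately after \Cref{lemma:verify-firefighter} precisely because polynomial-time verifiability of a candidate firefighting sequence (via \Cref{algorithm:verify-firefighter}) is the whole argument. Your added care in spelling out the decision version with the target $\ell$ and the linear-time tally of saved vertices is a harmless and reasonable elaboration of the same idea.
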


\begin{lemma}
    The firefighter and the firefighter reserve deployment problem are the same. \cite{Fomin2016}
\end{lemma}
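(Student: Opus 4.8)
The plan is to prove the equivalence by setting up a correspondence between feasible strategies of the two games that preserves the set of vertices ultimately saved; equality of the two problems (and of their optimal values) then drops out. The easy half is to notice that the firefighter problem is just the firefighter reserve deployment problem with the extra house rule ``place at most one firefighter per round''. So first I would observe that any firefighter strategy, reinterpreted as a reserve strategy that simply refrains from stockpiling (leaving any unused firefighters idle in the reserve), saves exactly the same vertices; hence the reserve optimum is at least the firefighter optimum.

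The substance is the converse direction. First I would record the budget constraint baked into the reserve rules: the reserve contains one firefighter at the start of round $2$ and gains one each round, so through round $s$ at most $s-1$ firefighters can have been deployed in all. Therefore, if an optimal reserve strategy protects $v_1,\dots,v_k$ at rounds $t_1\le t_2\le\cdots\le t_k$, then for every $s$ the set $\{v_j : t_j\le s\}$ is a prefix $\{v_1,\dots,v_l\}$ with $l\le s-1$; in particular $t_i\ge i+1$ for all $i$. Next I would define the candidate firefighter strategy: protect $v_i$ at round $i+1$, one per round, keeping the original order. The key claim, proved by induction on the round number $s$, is the containment invariant: the set of vertices burned through round $s$ under this firefighter strategy is a subset of the set burned through round $s$ under the given reserve strategy. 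In the inductive step I would use (i) the prefix property, which says the firefighter strategy has by round $s$ protected $\{v_1,\dots,v_{s-1}\}$, a superset of what the reserve strategy has protected so far, so the fire has at most as many live vertices to spread through; and (ii) the legality of the fresh protection of $v_{s-1}$ at round $s$, which holds because $t_{s-1}\ge s$ forces $v_{s-1}$ to be unburned at round $s$ in the reserve game and hence, by the invariant up through round $s-1$, unburned in the firefighter game too. The invariant implies every $v_i$ is saved, so the firefighter strategy saves at least $k$ vertices, giving the firefighter optimum $\ge$ the reserve optimum. Combining the two halves yields equality of the optima, and in fact that the two games have the same family of saveable vertex sets, which is the intended meaning of ``the same''.

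I expect the main obstacle to be the circularity lurking in the inductive step: to even speak of the re-timed firefighter strategy I need its placements to be legal (the target vertex not yet burned), but legality is itself a consequence of the containment invariant. The remedy is to fold legality into the statement of the induction and to add the protections in round order, leaning on the reserve budget inequality ($t_i\ge i+1$ together with the prefix structure) to certify that ``moving a protection earlier'' never targets an already-burned vertex. A minor but real bookkeeping point is to pin down the order of the within-round sub-steps — protect, then spread, then (only in the reserve version) replenish — so that phrases like ``unburned at round $s$'' mean the same thing in both models.
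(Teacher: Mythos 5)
Your argument is correct, and it supplies real content where the paper supplies essentially none: the paper's entire proof is the single sentence ``Followed by the definitions,'' i.e.\ it treats the equivalence as immediate. Your write-up identifies exactly where the nontrivial work lies. The easy containment (a firefighter strategy is a reserve strategy that never stockpiles) really is definitional, but the converse -- that stockpiling never helps -- needs the exchange argument you give: the budget bound forces $t_i \ge i+1$ for the $i$-th protection in chronological order, so the placements can be rescheduled to one per round without delaying any of them, and the round-by-round containment invariant (protected sets nested one way, burned sets nested the other) then shows the rescheduled strategy saves a superset of vertices. Your handling of the apparent circularity -- folding the legality of each earlier placement into the induction, using that $v_{i}$ is unburned at round $t_i$ in the reserve game and hence, by the invariant, unburned at the earlier round $i+1$ in the firefighter game -- is exactly the point a careful reader would demand and the paper omits. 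The only caveat is interpretive: ``the same'' should be read as ``the two games have the same optimal value (indeed the same saveable sets) on every instance,'' which is the reading you adopt and the one consistent with how the cited source uses the equivalence.
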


\begin{proof}
    Followed by the definitions (see definitions of both in \Cref{section:related-games}).
\end{proof}

\begin{lemma}\label{lemma:firefighter-path-from-s}
Given the instance $(G, s)$ of a firefighter problem if $l$ is the number of vertices in a longest induced path in $G$, starting in $s$, then no optimal strategy can place firefighter on more than $l-1$ vertices. \cite{Fomin2016}
\end{lemma}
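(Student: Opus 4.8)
The plan is to bound the number $T$ of rounds that the firefighting process on $(G,s)$ can run. Since a strategy places at most one firefighter per round and places none in round $1$ (in which only $s$ is set on fire), it places at most $T-1$ firefighters, so it suffices to show $T\le l$. Concretely I would show that if the process executes round $T$, then $G$ contains an induced path on $T$ vertices that starts at $s$; this gives $T\le l$, hence at most $l-1$ firefighters. This argument uses nothing about optimality, so the bound in fact holds for every strategy and a fortiori for every optimal one.

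First I would fix a strategy and set up notation: let $B_i$ be the set of vertices burned after round $i$ (so $B_1=\{s\}$), let $P_i$ be the set of vertices protected after round $i$, and say a vertex $v$ \emph{first burns at round} $t$ if $v\in B_t\setminus B_{t-1}$. The elementary observation driving the argument is: if $v$ first burns at round $t\ge 2$, then $v$ is adjacent to some vertex that first burns at round exactly $t-1$. Indeed, $v$ has a burned neighbour $z\in B_{t-1}$; if $z\in B_{t-2}$ then $v$ is adjacent to a vertex of $B_{t-2}$, and $v$ is unprotected (it burns only at round $t$, so $v\notin P_{t-1}$), hence $v$ would already be burned at round $t-1$ --- a contradiction. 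Iterating this from $v$ yields a sequence $s=z_1,z_2,\dots,z_t=v$ in which $z_j$ first burns at round $j$ and $z_{j-1}$ is adjacent to $z_j$; the $z_j$ are pairwise distinct because their first-burn rounds differ, so this sequence is a path.

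The main step --- and the only real obstacle --- is to check that this path is \emph{induced}. Suppose for contradiction that $z_az_b$ is a chord with $a\le b-2$. Then $z_b$ is adjacent to $z_a\in B_a\subseteq B_{b-2}$, so $z_b$ is adjacent to a vertex of $B_{b-2}$; and $z_b$ is unprotected up to round $b$ (it first burns at round $b$, so it is never protected before then). Therefore $z_b$ would first burn at round at most $b-1$, contradicting that it first burns at round $b$. Hence no chord exists, and $z_1,\dots,z_t$ is an induced path on $t$ vertices starting at $s$; consequently $l\ge t$.

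Finally I would translate ``rounds'' into the stated bound. The process executes round $T\ge 2$ only if, after round $T-1$, the fire can still spread; so there is a vertex $w\notin B_{T-1}$ that is adjacent to $B_{T-1}$ and satisfies $w\notin P_{T-1}$. Exactly as in the observation above, $w$ must be adjacent to some vertex $z$ that first burns at round $T-1$ (otherwise $w$ would already have burned by round $T-1$), so taking the induced path $z_1,\dots,z_{T-1}=z$ from the previous step and appending $w$ gives, again by the chord argument applied to $w$ (any chord $z_aw$ with $a\le T-2$ would force $w$ into $B_{T-1}$), an induced path on $T$ vertices starting at $s$. Hence $T\le l$, and since a strategy places at most one firefighter per round and none in round $1$, it places at most $T-1\le l-1$ firefighters. (If $T=1$ the process stops immediately, placing zero firefighters, while $l\ge 1$, so the bound holds trivially in that case too.)
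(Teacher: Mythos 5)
Your proof is correct, and it is a more careful version of the idea the paper uses. The paper's own argument is a one-liner anchored on the \emph{last firefighter}: if $v_t$ is worth protecting at step $t$, the fire must be about to reach it, so there is a path of burning vertices from $s$ to $v_t$ with at least $t+1$ vertices, hence $t\le l-1$. That argument implicitly invokes optimality (the last placement must be useful) and never checks that the exhibited path is \emph{induced}, which is exactly what is needed since $l$ counts induced paths. You instead bound the number of rounds the process can run by layering vertices according to their first-burn time: a vertex first burning at round $t$ forces a path $s=z_1,\dots,z_t$ with strictly increasing first-burn times (so the vertices are automatically distinct), and your chord argument --- a chord $z_az_b$ with $a\le b-2$ would make $z_b$ burn before round $b$ --- supplies the missing justification that this path is induced. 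Your version also dispenses with optimality, so the bound holds for every strategy. The one definitional caveat is that the final step ``at most $T-1$ firefighters'' relies on the game not placing a firefighter in a round after the fire can no longer spread; that is how the paper defines the process, and it is harmless for optimal strategies in any case.
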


\begin{proof}
Let the firefighter sequence be $S=(v_1, v_2, . . ., v_t)$. Let $P$ be a path between $v_t$ and $s$ such that all vertices in $P$ burn, except $v_t$. Let that $P$ is the only path with this property. $\implies |P|\geq t+1$, or otherwise not have protected $v_t$ in $t^{th}$ step. Therefore, $t\leq l-1$.
\end{proof}

\begin{theorem}\label{theorem:firefight-P-k-free}
Firefighter problem can be solved in $O(n^{k-2}(n+e)) = O(n^k)$ time for $P_k$-free graphs. \cite{Fomin2016}
\end{theorem}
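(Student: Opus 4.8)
The plan is to exploit \Cref{lemma:firefighter-path-from-s}: in a $P_k$-free graph, any longest induced path starting at the fire source $s$ has at most $k-1$ vertices, so no optimal firefighter strategy places a firefighter on more than $k-2$ vertices. This immediately bounds the search space: an optimal solution is a sequence of at most $k-2$ vertices (chosen from $n$ vertices, in some order over consecutive time-steps), and there are only $O(n^{k-2})$ such choices. So the first step is to enumerate, over all $j \le k-2$, all ordered $j$-tuples of distinct vertices $(v_1,\dots,v_j)$ as candidate firefighter placements.

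Second, for each candidate sequence $S$, I would run the verification procedure of \Cref{algorithm:verify-firefighter} (from \Cref{lemma:verify-firefighter}) to compute how many vertices are saved when the firefighters are placed at $v_1,\dots,v_j$ in steps $2,\dots,j+1$ and the fire spreads from $s$ otherwise; this simulation costs $O(n+e)$ per candidate since it is just an interleaved BFS from $s$ that skips protected vertices. Keeping the maximum number of saved vertices over all candidates yields the optimum. Multiplying the $O(n^{k-2})$ candidates by the $O(n+e)$ per-candidate cost gives the claimed running time $O(n^{k-2}(n+e)) = O(n^k)$ (using $e = O(n^2)$).

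The one point that needs care — and is the main obstacle — is justifying that restricting attention to sequences of length at most $k-2$ really is without loss of optimality, and that the simulation faithfully models the firefighter dynamics (a firefighter placed in step $t$ is effective only if that vertex is still unburned at step $t$, i.e., at distance more than $t-1$ from $s$ along every unprotected path). \Cref{lemma:firefighter-path-from-s} handles the length bound directly for optimal strategies; for the simulation correctness I would argue that the interleaved BFS in \Cref{algorithm:verify-firefighter} correctly propagates the burned set round by round, so any placement it rejects (via Stage 2.1) could not have been realized by a legal strategy, and any placement it accepts saves exactly the vertices never added to $B$. A secondary subtlety is that we should also allow the empty sequence and sequences shorter than $k-2$ (the fire may engulf the graph before $k-2$ rounds), but since we enumerate all $j \le k-2$ this is automatically covered. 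Finally, I would note that the graph need not be connected: components not containing $s$ never burn, and the bound still applies to the component of $s$, so the analysis is unchanged.
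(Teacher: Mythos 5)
Your proposal is correct and follows essentially the same route as the paper: bound the number of protected vertices by $k-2$ via \Cref{lemma:firefighter-path-from-s}, enumerate the $O(n^{k-2})$ ordered candidate sequences, and verify each in $O(n+e)$ time using \Cref{algorithm:verify-firefighter}. If anything, you are slightly more careful than the paper's stated \Cref{algorithm:firefight-P-k-free}, since you keep the candidate maximizing the number of saved vertices rather than returning the first valid sequence encountered.
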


\begin{proof}
In a $P_k$-free graph $G$, the longest induced path is of length $k-1$. By \Cref{lemma:firefighter-path-from-s}, any optimal strategy protects at most $k-2$ vertices.
\end{proof}

Proof of \Cref{theorem:firefight-P-k-free} suggests the algorithm \Cref{algorithm:firefight-P-k-free} as follows, which is able to perform the firefighting problem in a $P_k$-free graph $G$ in $O(n^{k-2}(n+m))$ time (where $n$ is the number of vertices in $G$ and $m$ is the number of edges in it).

\begin{algorithm}\label{algorithm:firefight-P-k-free}
    Given an input graph $G$, perform the following steps.
\end{algorithm}

\textbf{\textit{Stage 1.}} Enumerate all ordered subsets $S \subseteq V(G)$ of size at most $k-2$ in $n^{k-2}$ time.

\textbf{\textit{Stage 2.}} For each $S$, perform the following steps.

\textbf{\textit{Stage 2.1.}} Verify by lemma \thech.2 that it is a valid firefighter sequence in $O(n+m)$ time.

\textbf{\textit{Stage 2.3.}} If $S$ is a valid firefighter sequence, then return $S$.\\

\Cref{corollary:firefight-split-graphs} has been proved in \cite{Fomin2016}.

\begin{corollary}\label{corollary:firefight-split-graphs}
    Firefighter problem can be solved on split graphs in polynomial time.\cite{Fomin2016}.
\end{corollary}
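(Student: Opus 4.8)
The plan is to observe that split graphs are $P_5$-free (as noted in \Cref{subsection:split-graphs}), so the general result for $P_k$-free graphs applies directly with $k=5$. First I would recall the structural fact: a split graph $G$ has vertex set partitioned into a clique $K$ and an independent set $I$. Any induced path in $G$ can use at most two vertices of $I$ (two independent vertices cannot be consecutive on a path, and a third would force a $P_5$ pattern through the clique that collapses because $K$ is complete), and at most two vertices of $K$ that are ``useful'' in extending the path, since within $K$ every pair is adjacent. A short case check on the possible interleavings of clique-vertices and independent-set-vertices along an induced path shows the longest induced path has at most $4$ vertices; hence $G$ is $P_5$-free.

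Next I would invoke \Cref{theorem:firefight-P-k-free} with $k=5$: the firefighter problem on a $P_5$-free graph can be solved in $O(n^{k-2}(n+e)) = O(n^3(n+e)) = O(n^4)$ time (writing $e$ for the number of edges, which is $O(n^2)$). Concretely, by \Cref{lemma:firefighter-path-from-s}, since the longest induced path starting at the fire source $s$ has at most $4$ vertices, no optimal firefighting strategy places a firefighter on more than $k-2 = 3$ vertices. Therefore \Cref{algorithm:firefight-P-k-free} enumerates all ordered subsets of $V(G)$ of size at most $3$ — there are $O(n^3)$ of these — and for each candidate sequence verifies in $O(n+e)$ time, using \Cref{algorithm:verify-firefighter} (whose correctness is \Cref{lemma:verify-firefighter}), whether it is a valid optimal firefighter sequence, returning the best one found. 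The total running time is polynomial in $n$, which establishes the claim.

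The main obstacle — and it is a mild one — is making the $P_5$-free argument airtight: one must carefully enumerate the ways clique- and independent-set-vertices can alternate along an induced path and confirm that five distinct vertices can never form an induced $P_5$ in a split graph. Everything after that is a direct citation of the already-established $P_k$-free machinery, so no further difficulty arises; the verification algorithm and the enumeration bound are both quoted from earlier results in this chapter.
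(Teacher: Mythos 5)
Your proposal is correct and follows the same route as the paper: the paper's proof simply notes that split graphs are $P_5$-free and invokes the $P_k$-free theorem with $k=5$. Your additional elaboration of why split graphs are $P_5$-free and of the $O(n^3(n+e))$ enumeration bound is consistent with, and slightly more detailed than, the paper's one-line argument.
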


\begin{proof}
    Since split graphs are $P_5$ free graphs, this corollary follows from \Cref{theorem:firefight-P-k-free}.
\end{proof}

\section{Firefighting general graphs}

\Cref{algorithm:firefight-general-graphs} computes an optimal firefighting sequence for an arbitrary input graph $G$ in exponential time. It checks for every possible permutation of vertices in $G$, sequentially, if it is a valid firefighting sequence. It is a brute force algorithm.\\

\begin{algorithm}\label{algorithm:firefight-general-graphs}
Given the input $(G,\ s)$, where $G$ is an arbitrary graph, and $s\in G.V$ is the initial fire source, perform the following steps.
\end{algorithm}

\textbf{\textit{Stage 1.}} $n = |G.V|$. Let that the vertices in $G.V$ be labelled as $v_1,v_2,\dots,v_n$. $i=1$. $S_{max}$ is the firefighting sequence which saves maximum vertices and $m$ is the number of vertices saved by $S_{max}$, initially $m=0$, $S_{max}=\phi$. If $i\leq n$, perform the following steps.

\textbf{\textit{Stage 1.1.}} For every permutation $P$ of size $i$ in the sequence $(1,2,\dots,n)$ (where $P$ itself is a sequence of numbers), perform the following steps.

\textbf{\textit{Stage 1.1.2.}} Let $S = (v_{P[1]},v_{P[2]},\dots,v_{P[|P|]})$. If $S$ is is a valid firefighting sequence by \Cref{algorithm:verify-firefighter} (\Cref{lemma:verify-firefighter}), then if $m>$ the total number of vertices protected and saved by $S$, then $S_{max}=S$ and $m=$ the number of vertices protected and saved by $S$.

\textbf{\textit{Stage 1.2.}} $i=i+1$.

\textbf{\textit{Stage 2.}} Return $S_{max}$.\\

Using \Cref{algorithm:firefight-general-graphs}, we compute a valid firefighting sequence which is able to protect and save the maximum number of vertices in $G$. In-spite of computing this throughout all the permutations of all sizes in $v_1,v_2,\dots,v_n$, we could also return the first valid firefighting sequence that we encounter, but that would not suffice the task of saving the maximum vertices.

\section{Summary of this chapter}

We discuss the distinct 3-partition problem, which we are going to reduce to the burning several graph classes to show that the problem is NP-Complete for these graph classes. We also discuss the firefighter problem which we can utilize to use with graph burning to model protection of a graph against graph burning. We discuss some of the models that can be built using this concept towards the conclusion, \Cref{chapter:conclusion}.

\chapter{Why is burning hard?}\label{chapter:why-hard}

In this chapter, we describe some of our novel results in \Cref{section:burn-interval-graphs}, \Cref{section:burn-permutation-graphs}, and \Cref{section:burn-disk-graphs}. Respectively, we show that burning interval graphs, permutation graphs and disk graphs is NP-Complete. In fact, NP-Completeness of burning disk graphs follows from the NP-Completeness of interval graphs, but our construction in the proof for disk graphs is such that another class of graphs follows to be NP-Complete from it.

As corollaries of our results in these sections, it follows that burning several other graph classes remains NP-Complete. These corollaries are obtained for those graph classes for which NP-Completeness results have been shown earlier. We provide citations to the resources where the respective results were first discussed.

\section{Burning interval graphs}\label{section:burn-interval-graphs}

\subsection{Similarity in burning paths and interval graphs}\label{subsection:similar-burn-path-IG}

Refer to the definition of the burning cluster of a fire source, described as \Cref{definition:burning-cluster} in \Cref{section:burn-problem}. While demonstrating the burning of a path or cycle in \Cref{section:burn-path}, we obtain an observation described as \Cref{observation:path} as follows.

\begin{observation}\label{observation:path}
The burning clusters of each of the $n$ fire sources of any optimal burning sequence of a path of order $n^2$ are pairwise disjoint.
\end{observation}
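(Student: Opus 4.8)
The plan is to count vertices. Suppose an optimal burning sequence $S = (x_1, x_2, \dots, x_n)$ burns a path $P$ of order $n^2$; note that $b(P) = \lceil \sqrt{n^2} \rceil = n$, so the length of an optimal sequence is exactly $n$. For each fire source $x_i$ chosen at time-step $i$, its burning cluster is $P.N_{n-i}[x_i]$, i.e.\ the set of vertices within distance $n-i$ of $x_i$. In a path, a ball of radius $\rho$ around a vertex contains at most $2\rho + 1$ vertices, so the burning cluster of $x_i$ has size at most $2(n-i)+1$. Summing over $i = 1, \dots, n$ gives
\[
\sum_{i=1}^{n} \bigl(2(n-i)+1\bigr) = \sum_{j=0}^{n-1}(2j+1) = n^2,
\]
which matches exactly with $|P.V| = n^2$, invoking the identity $1 + 3 + \cdots + (2n-1) = n^2$ already established via \Cref{equation:burn-path-inf-2}.

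First I would note that by \Cref{equation:burn-verify}, the union of the $n$ burning clusters must equal $P.V$, hence has size $n^2$. Since the union of the clusters covers $n^2$ vertices and the sum of the individual cluster sizes is at most $n^2$, the inclusion–exclusion principle forces every cluster to attain its maximum possible size $2(n-i)+1$ and forces all overlaps to be empty. This is the crux: if any two clusters shared even a single vertex, or if any cluster were smaller than $2(n-i)+1$ (which happens, for instance, when $x_i$ is too close to an endpoint of the path for the ball to be full), then the total coverage would be strictly less than $n^2$, contradicting that $S$ burns $P$. Hence the $n$ burning clusters are pairwise disjoint, which is the claimed statement.

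The main obstacle — really the only subtle point — is making rigorous the claim that a ball of radius $\rho$ in a path has \emph{at most} $2\rho+1$ vertices and that equality requires the center to be at distance at least $\rho$ from both endpoints; this is elementary but should be stated cleanly, since it is precisely where "being a path" (rather than a cycle or general graph) is used, and it is what makes the counting tight. A minor care point is the boundary case $i = n$, where the cluster is $P.N_0[x_n] = \{x_n\}$, a single vertex, consistent with $2(n-n)+1 = 1$. I would present the argument as: (1) establish $b(P) = n$ and cluster-size bound; (2) sum to get $\le n^2$; (3) combine with the covering requirement from \Cref{equation:burn-verify} to conclude both that the bound is tight for each cluster and that the clusters are disjoint.
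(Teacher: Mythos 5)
Your proposal is correct and follows essentially the same counting idea the paper relies on: the observation is presented as a consequence of the discussion in \Cref{section:burn-path}, where \Cref{equation:burn-path-inf-2} establishes that at most $f_t=1+3+\cdots+(2t-1)=t^2$ vertices can be burned in $t$ steps, so an optimal sequence for a path of order $n^2$ must have every cluster attain its maximum size $2(n-i)+1$ with no overlaps. Your write-up simply makes this tight-counting argument explicit, including the point that a ball of radius $\rho$ in a path has at most $2\rho+1$ vertices, which the paper leaves implicit.
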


From \Cref{corollary:NHClosurePIG} in \Cref{subsection:interval-graphs}, \Cref{observation:bound} follows, which we describe as follows.

\begin{observation}\label{observation:bound}
Let $P$ be a shortest path of maximum length among shortest paths between all pairs of vertices in an interval graph $G$. Then $b(P) \leq b(G) \leq b(P)+1$.
\end{observation}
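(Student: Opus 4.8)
The plan is to prove the two inequalities separately, using the isometry machinery behind \Cref{theorem:burn-isomorphic-subtree} for the lower bound $b(P)\le b(G)$, and the neighbourhood-closure property \Cref{corollary:NHClosurePIG} for the upper bound $b(G)\le b(P)+1$. For the lower bound I would first note that a shortest path is automatically an isometric subgraph of $G$: if two vertices $a$ and $b$ of $P$ were joined in $G$ by a path shorter than the portion of $P$ between $a$ and $b$, splicing it into $P$ would produce a shorter realisation of the defining shortest path, a contradiction. Since $P$ is a path it is, in particular, an isometric subtree of $G$, so \Cref{theorem:burn-isomorphic-subtree} yields $b(P)\le b(G)$ directly.

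For the upper bound, set $k=b(P)$ and fix an optimal burning sequence $S=(x_1,\dots,x_k)$ of $P$; by the path analogue of \Cref{equation:burn-verify} we have $\bigcup_{i=1}^{k} P.N_{k-i}[x_i]=P.V$, and since $P$ is a subgraph of $G$ distances can only shrink, so $P.N_j[x_i]\subseteq G.N_j[x_i]$ for all $i$ and $j$. I would then check that the length-$(k+1)$ sequence $(x_1,\dots,x_k,y)$, for an arbitrary vertex $y$, burns $G$, for which (by the covering reformulation of \Cref{equation:burn-verify}) it suffices to show $\bigcup_{i=1}^{k} G.N_{(k+1)-i}[x_i]=G.V$. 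This union already contains $P.V$; and given $v\in G.V\setminus P.V$, \Cref{corollary:NHClosurePIG} supplies a neighbour $u\in P.V$ of $v$, and $u\in P.N_{k-i}[x_i]\subseteq G.N_{k-i}[x_i]$ for some $i$, whence $v\in G.N_{(k+1)-i}[x_i]$. Hence $b(G)\le k+1=b(P)+1$.

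The steps requiring routine care are the covering reformulation of \Cref{equation:burn-verify} (in particular the standard fact that a fire source the fire has already reached may be relocated to any unburnt vertex without lengthening the sequence, which is what legitimises ``replaying'' $S$ inside $G$, where the fire spreads at least as fast as in $P$) and the verification that $P$ is isometric. I do not anticipate a genuine obstacle; the only delicate point is exactly this faster spread of fire in $G$ than in $P$, and it is dealt with cleanly by arguing through the ball-covering condition rather than by replaying the burning process step by step.
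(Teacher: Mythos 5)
Your proof is correct and follows exactly the route the paper intends: the paper gives no explicit proof of \Cref{observation:bound}, stating only that it follows from \Cref{corollary:NHClosurePIG} (and citing prior work), and your argument supplies precisely the missing details --- the lower bound via the isometry of a shortest path and \Cref{theorem:burn-isomorphic-subtree}, and the upper bound by extending an optimal burning sequence of $P$ by one step so that the enlarged balls absorb the distance-one neighbours guaranteed by the corollary. Your attention to the ball-covering reformulation of \Cref{equation:burn-verify} is exactly the right way to handle the fact that fire spreads at least as fast in $G$ as in $P$.
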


\Cref{observation:bound} has been shown earlier in \cite{Kare2019,Kamali2019,Kamali2020}.

Finding such $P$ is easy to do in polynomial time. We can simply compute all pair shortest path and choose the maximum length path among all. So burning an interval graph in $(b(P)+1)$ is trivial. We study if $b(G)=b(P)$, whether it is always possible to determine the same and compute a burning sequence of length $b(P)$. 
We show that this is an NP-Complete problem.
We use the distinct 3-partition problem (see \Cref{section:d3pp}) and reduce it to interval graph burning to show NP-Completeness of this graph burning subproblem.

\subsection{Interval graph construction for NP-completeness}

Let $X$ be an input set to the distinct 3-partition problem; let $n=\frac{|X|}{3}$, $m = \max (X)$, $B = \frac{s(X)}{n}$, and $k=m-3n$. Let $F_m$ be the set of first $m$ natural numbers, $F_m = \{1,2,3,...,m\}$; and $F^{\prime}_m$ be the set of first $m$ odd numbers, $F^\prime_m = \{2\ f_i-1: f_i \in F_m\} = \{1, 3, 5, . . ., 2m-1\}$. Let $X^\prime = \{2\ a_i-1:a_i \in X\}$, $B^\prime = \frac{s(X^\prime)}{n}$. Observe that $s(X^\prime) = \sum_{i=1}^{3n} 2\ a_i -1 = 2nB-3n$, so $B^\prime = 2B-3$. Let $Y=F_m^\prime\setminus X^\prime$.

Let there be $n$ paths $Q_1,Q_2...,Q_n$, each of order $B^\prime$; $k$ paths $Q_1^\prime,Q_2^\prime,...,Q_k^\prime$ such that each $Q_j^\prime (1\leq j\leq k)$ is of order of $j^{th}$ largest number in $Y$, and $m+1$ paths $T_1,T_2,...,T_{m+1}$ such that each $T_j (1\leq j\leq m+1)$ is of order of $2(2m+1-j)+1$. We join these paths in the following order to form a larger path:\\
$Q_1$, $T_1$, $Q_2$, $T_2$, $...$, $Q_n$, $T_n$, $Q_1^\prime,T_{n+1}$, $Q_2^\prime$, $T_{n+2}$, $...$, $Q_k^\prime$, $T_{n+k}$, $T_{n+k+1}$, $...$, $T_{m+1}$.\\
We call this path $P_I$. Number of vertices in $P_I$ is $(2m+1)^2$. Hence $b(P_I)=(2m+1)$.

Now we add a few more vertices to $P_I$. 
We add a distinct vertex connected to each vertex from $2nd$ to $2nd$-last vertices of $T_j$ $\forall\ 1\leq j\leq m+1$.
Let this graph be called $IG(X)$. Total number of vertices in $IG(X)$ is $(2m+1)^2+((2m)^2-(m-1)^2)=4m^2+1+4m+4m^2-m^2-1+2m=7m^2+6m$. 
Observe that, in $IG(X)$, $P_I$ is a shortest path of maximum length among shortest paths between all pairs of vertices, and all other vertices not in $P_I$ are at most at a distance one from some vertex of $P_I$. Also, $IG(X)$ is a tree: it contains no cycles.
So $IG(X)$ is a valid interval graph (an example illustrated in \Cref{subsection:ExConsIG}, and \Cref{figure:IGNPCEIG}).
We show in \Cref{lemma:BNb(G)IG} that burning number of $IG(X)$ is still $2m+1$, i.e., $b(IG(X))=b(P_I)$. 
Then we show that burning an interval graph is NP-Complete because to burn $IG(X)$, we must solve the distinct 3-partition problem on $X$.

\Cref{figure:TStructureIG} illustrates a particular $T_j$ along with the added vertices and edges (vertically upwards w.r.t. $T_j$). This forms a comb structure, we call it $A_{T_j}$.

\newcounter{i}
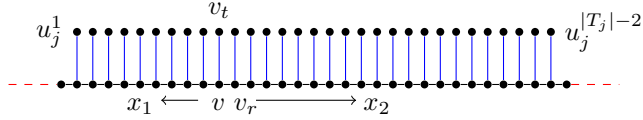
\begin{figure}
    \centering
    \begin{tikzpicture}[scale=.7]
        \setcounter{c}{0}
        \setcounter{d}{-1}
        \setcounter{i}{1}
        \loop
            \ifthenelse{\value{c}=5}{
                \node [circle, fill=black, inner sep=0pt, minimum size=3pt, label=below:{$x_1$}]      (A\thec) at (\value{c}*.3,0) {};
            }{
                \ifthenelse{\value{c}=11}{
                   \node [circle, fill=black, inner sep=0pt, minimum size=3pt, label=below:{$v\ v_r$}] (A\thec) at (\value{c}*.3,0) {};
                }{
                    \ifthenelse{\value{c}=20}{
                        \node [circle, fill=black, inner sep=0pt, minimum size=3pt, label=below:{$x_2$}] (A\thec) at (\value{c}*.3,0) {};
                    }{
                        \node [circle, fill=black, inner sep=0pt, minimum size=3pt] (A\thec) at (\value{c}*.3,0) {};
                    }
                }
            }

            \ifthenelse{\value{c}>0 \AND \value{c}<32}{
                \ifthenelse{\value{c}=1}{
                    \node [circle, fill=black, inner sep=0pt, minimum size=3pt, label=left:{$u_j^1$}] (B\thec) at (\value{c}*.3,1) {};

                    \draw[blue] (A\thec) -- (B\thec);
                }{
                    \ifthenelse{\value{c}=31}{
                        \node [circle, fill=black, inner sep=0pt, minimum size=3pt, label=right:{$u_j^{|T_j|-2}$}] (B\thec) at (\value{c}*.3,1) {};

                        \draw[blue] (A\thec) -- (B\thec);
                    }{
                        \ifthenelse{\value{c}=10}{
                            \node [circle, fill=black, inner sep=0pt, minimum size=3pt, label=above:{$v_t$}] (B\thec) at (\value{c}*.3,1) {};

                            \draw[blue] (A\thec) -- (B\thec);
                        }{
                            \node [circle, fill=black, inner sep=0pt, minimum size=3pt] (B\thec) at (\value{c}*.3,1) {};

                            \draw[blue] (A\thec) -- (B\thec);
                        }
                    }
                }
            }{}

            \ifthenelse{\value{c}>0}{
                \draw (A\thec) -- (A\thed);
            }{}

            \ifthenelse{\value{c}=0}{
                \draw[red,dashed] (-1,0) -- (A\thec);
            }{}
            \ifthenelse{\value{c}=32}{
                \draw[red,dashed] (A\thec) -- (\value{c}*.3+1,0);
            }{}

            \stepcounter{c}
            \stepcounter{d}
            \stepcounter{i}
            \ifnum \value{c}<33
            \repeat

            \draw[<-] (1.9,-.3) -- (2.6,-.3);
            \draw[->] (3.7,-.3) -- (5.6,-.3);
    \end{tikzpicture}
    \caption{Structure of a $T_j$ with 33 vertices, along with the extra vertices connected to it. The dashed line represents the fact that other subpaths may be connected to a $T_j$ on either or both ends.}
    \label{figure:TStructureIG}
\end{figure}

Let $u^1_j$ be the vertex connected to the $2nd$ vertex of each $T_j$ and $u^{|T_j|-2}_j$ be the vertex connected to its $2nd$-last vertex of $T_j$, where $|T_j|$ stands for the number of vertices in the sub path $T_j$. Let $A_{T_j}^\prime=\{u^1_j, u^2_j, ..., u^{|T_j|-2}_j\}$ be the set of all $|T_j|-2$ additional vertices corresponding to $T_j$. Let $A_{T_j}$ be the subgraph induced by the vertices in $A_{T_j}^\prime\cup T_j$. We have the following observation regarding $A_{T_j}$.

\begin{observation}\label{observation:overlap}
If $A_{T_j}$ is burnt by $o\geq 2$ fire sources put on $T_j$, then the burning clusters of at least two of these fire sources will overlap (contain common vertices).
\end{observation}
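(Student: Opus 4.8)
The plan is to analyze what a $T_j$ looks like and count how much ``burning work'' can be done in a fixed number of rounds. Recall that $A_{T_j}$ is a comb: a path $T_j$ of $|T_j|=2(2m+1-j)+1$ vertices with a pendant vertex hanging off every internal vertex (from the $2$nd to the $2$nd-last). The key structural fact I would establish first is that a single fire source placed on a vertex of $T_j$ and allowed to spread for $d$ further rounds burns exactly the vertices of $A_{T_j}$ within distance $d$ of that source; because each internal vertex of $T_j$ has a private pendant leaf at distance $1$, the burning cluster of a source at an internal vertex of $T_j$ after $d$ rounds contains roughly $2d$ vertices of $T_j$ together with their pendants — i.e.\ it ``uses up'' about $d$ leaves but still only reaches about $2d+1$ path-vertices, the same reach it would have on a bare path. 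The upshot I want is a clean counting inequality: if $o$ fire sources are placed on $T_j$ and together burn all of $A_{T_j}$ by the end of round $2m+1$, and if their burning clusters were pairwise disjoint, then the total number of vertices burned is at most the sum of the cluster sizes, and I claim this falls short of $|V(A_{T_j})|$.

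The second step is the arithmetic heart. A burning cluster born at round $i$ (so it has $k-i$ further rounds, with $k=2m+1$) covers at most $2(k-i)+1$ path-vertices of $T_j$ and hence at most $2(k-i)+1 + \big((k-i)\text{-many pendants, capped by }|T_j|-2\big)$ vertices of $A_{T_j}$; summing the optimal such sizes over $o$ clusters and comparing against $|V(A_{T_j})| = |T_j| + (|T_j|-2)$, I expect the disjoint case to be infeasible precisely because the comb has nearly twice as many vertices as the path $T_j$ alone, while the reach of a cluster into the comb grows only linearly and is throttled by the pendant structure. Concretely, on a bare path of $n^2$ vertices it takes exactly $n$ pairwise-disjoint clusters (\Cref{observation:path}); here $|T_j|$ is not a perfect square in general, and more importantly the extra $|T_j|-2$ pendant vertices cannot all be absorbed ``for free'' by clusters that are already stretched to cover the path, so any covering by $o\ge 2$ disjoint clusters would need strictly more total vertex-budget than is available in the remaining $k-o$-ish rounds. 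I would make this precise by exhibiting, for each admissible $o$, the optimal disjoint placement and showing its total coverage is $< |V(A_{T_j})|$.

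The third step is to convert ``disjoint is infeasible'' into ``they overlap.'' Suppose for contradiction that $o\ge 2$ fire sources on $T_j$ burn $A_{T_j}$ with pairwise-disjoint burning clusters. Their clusters partition (a superset of) $V(A_{T_j})$, so $\sum_{\text{clusters}} |\text{cluster}| \ge |V(A_{T_j})|$; but Step 2 gives $\sum |\text{cluster}| \le$ (best-case total) $< |V(A_{T_j})|$, a contradiction. Hence at least two of the clusters share a vertex, which is exactly the statement of \Cref{observation:overlap}. I would also note the edge case $o\ge 2$ but one source sits on a pendant rather than on $T_j$ — re-reading the statement, the hypothesis is that all $o$ sources are on $T_j$, so this does not arise; and the trivial fact that if all $o$ clusters must between them also reach every pendant, the ``wasted'' overlap is forced near the ends of $T_j$, which is consistent with the picture in \Cref{figure:TStructureIG}.

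The main obstacle I anticipate is making the counting in Step 2 genuinely tight rather than merely plausible: I need the exact maximum number of vertices of $A_{T_j}$ reachable in $k-i$ rounds from a best-placed source, including the cap that a cluster cannot collect more pendants than the internal $T_j$-vertices it covers, and I need to handle the ``which rounds do these $o$ sources occupy'' optimization — since other subpaths of $P_I$ are simultaneously consuming rounds, the adversary gets to choose the rounds $i_1<\dots<i_o$ freely, so the bound must hold for the best possible choice. Getting the off-by-one constants right (the $+1$'s, the $2$nd-vertex-to-$2$nd-last convention, the parity from $|T_j| = 2(2m+1-j)+1$ being odd) is where the real care is needed; everything else is bookkeeping around \Cref{observation:path} and the comb structure.
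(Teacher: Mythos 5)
Your overall plan (show that pairwise-disjoint clusters cannot account for all of $V(A_{T_j})$, then conclude overlap by contradiction) is workable, but the load-bearing step is exactly the one you defer: you never establish the counting inequality, and the intuitions you offer for it will not produce it. The comb does \emph{not} lose to a coarse count: a source of radius $d$ in the interior of $T_j$ burns $2d+1$ path vertices \emph{and} $2d-1$ pendants, i.e.\ $4d$ vertices of $A_{T_j}$, so its reach into the comb is essentially double its reach into the bare path --- the ``comb has twice as many vertices while reach is throttled'' picture, and the remark about $|T_j|$ not being a perfect square, do not yield a deficit. The true deficit is exact and local: a cluster whose footprint on $T_j$ is the interval $[v_a,v_b]$ covers the pendants of $v_{a+1},\dots,v_{b-1}$ only, i.e.\ it always misses the pendants of its two extreme footprint vertices. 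Summed over $o$ disjoint footprints partitioning the $N=|T_j|$ path vertices, the clusters cover at most $N-2o$ of the $N-2$ pendants (modulo the rebate at $v_1,v_N$, which have no pendants, and the degenerate case of radius-$0$ footprints), which is short precisely when $o\geq 2$. Until you prove that per-cluster loss of two boundary pendants, Step~2 is an assertion, not a bound. You also entangle the argument with the global round budget $k=2m+1$ and the choice of rounds $i_1<\dots<i_o$; none of that is needed --- the statement is purely structural and holds for arbitrary radii.

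Once you isolate the ``two missed boundary pendants per cluster'' fact, you will find you have rederived the paper's proof, which skips the counting entirely: take two \emph{adjacent} fire sources $x_1,x_2$ on $T_j$ with disjoint clusters, let $v$ be the last vertex between them burnt by $x_1$ and $v_r$ its neighbour burnt by $x_2$; since only the two endpoints of $T_j$ lack pendants and $v,v_r$ are consecutive, at least one of them carries a pendant $v_t$, and $v_t$ lies at distance one beyond the radius of the cluster that ends at its support vertex while the other cluster does not even contain that support vertex --- so $v_t$ is unburnt, contradiction (see \Cref{figure:TStructureIG}). I would recommend either adopting that local boundary argument outright, or, if you keep the global count, stating and proving the footprint lemma above and handling the $L_i\in\{1,2\}$ and end-of-path cases explicitly; as written, the proposal has a genuine gap at its central step.
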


\begin{proof}
Let that some $T^c_j$ be completely burnt by two or more fire sources and yet there is no overlap between the burning clusters of any of those fire sources.
Since all the fire sources are on $T_j$, which is a sub path of $P_I$, we call two fire sources on $T_j$ are \textit{adjacent} if there is a path in $T_j$ between those two fire sources such that the path does not contain any other fire sources.
For any two adjacent fire sources let us assume that there is no vertex which lies in the burning clusters of both the fire sources. Let $v$ be a vertex on the path joining those two adjacent fire sources $x_1$ and $x_2$, such that the vertices in the left side of $v$ including it (vertices towards $x_1$ as shown in Figure \ref{figure:TStructureIG} using the left arrow) are burnt by $x_1$ and the vertices in the right (Figure \ref{figure:TStructureIG}) by $x_2$.

Let the vertex that is just right to $v$ is $v_r$. By pigeonhole principle, we have that at least one of $v$ or $v_r$ having a neighbor $v_t$ in $T^c_j$ which is not in $T_j$. Without the loss of generality, let that $v$ is having such a neighbor. Since the burning cluster of $x_1$ extends till $v$ and not to its one hop neighbor $v_r\ (\in T_j)$, so it does not burn the other one hop neighbor $v_t\ (\not\in T_j)$ too. It is easy to see that the second fire source can not burn  $v_t$. This is contradiction to our assumption that $T^c_j$ is burnt completely without overlapping clusters.
\end{proof}

\begin{corollary}\label{corollary:Tone}
    If a single fire source is able to burn $T_j$, then $T^c_j$'s can also be burnt with it.
\end{corollary}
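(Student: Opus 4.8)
The plan is to show that Corollary \ref{corollary:Tone} is an easy consequence of Observation \ref{observation:overlap} together with the structure of $A_{T_j}$. Recall that $A_{T_j}$ is the comb obtained from the subpath $T_j$ by attaching one pendant vertex $u_j^\ell$ to each of the vertices of $T_j$ except the two endpoints. The set of pendants $A_{T_j}'$ plays the role of the ``$T_j^c$'s'' in the statement: every pendant vertex is at distance exactly one from the vertex of $T_j$ to which it is attached, and at distance at least two from every other vertex of $P_I$.

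First I would observe the following elementary fact about burning a comb from its spine. Suppose a single fire source $x$ placed on a vertex of $T_j$ (say at time step $i$ in an optimal burning sequence of length $k$) has a burning cluster $G.N_{k-i}[x]$ that covers all of $T_j$. Then, since every pendant $u_j^\ell$ is adjacent to some vertex $w_\ell \in T_j$, and $w_\ell$ lies within the cluster of $x$, the vertex $w_\ell$ is burned strictly before the final step (because $x$'s cluster already reaches one further along $T_j$, hence reaches $w_\ell$ at some step $<k$ unless $w_\ell$ is itself an endpoint of $T_j$ at the extreme of the cluster — but then $x$'s reach would extend to $w_\ell$'s pendant as well since the pendant is exactly one hop from $w_\ell$, same as the continuation along $T_j$). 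Consequently every pendant $u_j^\ell$ lies in $G.N_{k-i}[x]$ too. So the single fire source $x$ that burns $T_j$ automatically burns $A_{T_j}$, i.e. $T_j^c$'s as well. The only point needing care is the endpoint pendants, and a short case check handles those: the pendant $u_j^1$ is attached to the $2$nd vertex of $T_j$, not the endpoint, so if $x$'s cluster covers the endpoint of $T_j$ it covers the $2$nd vertex one step earlier and hence $u_j^1$ as well; symmetrically for $u_j^{|T_j|-2}$.

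Thus the key steps, in order, are: (1) record that every pendant of $A_{T_j}$ is adjacent to an internal (non-endpoint) vertex of $T_j$; (2) note that an optimal burning sequence that burns $T_j$ with a single source on $T_j$ reaches each internal vertex of $T_j$ at a time step strictly before the last; (3) conclude that one extra hop from each such internal vertex reaches the corresponding pendant within the burning horizon, so all pendants burn; (4) handle the two pendants attached next to the endpoints by the observation that the source reaches the $2$nd and $2$nd-last vertices of $T_j$ a step before it reaches the endpoints. I do not expect a genuine obstacle here — the statement is essentially a bookkeeping corollary of Observation \ref{observation:overlap} and the comb geometry; the only mildly delicate spot is making precise that ``a single fire source burns $T_j$'' forces its cluster to extend one hop beyond each internal vertex along $T_j$, which then gives the pendants for free.
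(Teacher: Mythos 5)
Your argument is correct and supplies exactly the bookkeeping the paper leaves implicit: the corollary is stated there without proof, and the intended justification is precisely your observation that every pendant of $A_{T_j}$ sits one hop off an \emph{internal} vertex of $T_j$, so a burning cluster that reaches both endpoints of $T_j$ necessarily reaches every pendant as well. One small remark: despite your opening framing, the argument does not actually rely on \Cref{observation:overlap} (which concerns two or more fire sources); it is a direct consequence of the comb geometry alone.
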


\begin{lemma}\label{lemma:notopt}
If at least one $T_j$ is burnt using more than one fire sources, then $P_I$ can not be burnt optimally, i.e., in $b(P_I)= (2m+1)$ steps.
\end{lemma}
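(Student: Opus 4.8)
The plan is to count, very carefully, the maximum number of vertices that can be burnt by an optimal burning sequence of $P_I$, and to show that using two or more fire sources on a single $T_j$ forces this count below $|P_I| = (2m+1)^2$, so that $P_I$ cannot be burnt in $2m+1$ steps. Recall from \Cref{section:burn-path} that a burning sequence of length $k$ burns a path of order $n$ optimally precisely when the $k$ burning clusters partition the path (i.e., they are pairwise disjoint and cover all vertices): this is \Cref{observation:path}. The cluster of the fire source placed at step $i$ in a $k$-step sequence has at most $2(k-i)+1$ vertices, and the sum $\sum_{i=1}^{k} \bigl(2(k-i)+1\bigr) = k^2$, which equals $|P_I|$ when $k = 2m+1$. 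Hence \emph{every} cluster must attain its maximum possible size $2(k-i)+1$ and the clusters must tile $P_I$ with no overlap and no gap.

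First I would invoke \Cref{observation:overlap}: if some $T_j$ is burnt by $o \geq 2$ fire sources placed on $T_j$ (more precisely, whose clusters are confined to $T_j$ because of the extra teeth $A_{T_j}'$ hanging off it), then the burning clusters of at least two of those fire sources overlap. But we just argued that in any optimal $(2m+1)$-step burning of $P_I$ the clusters are pairwise disjoint — a contradiction. So the only remaining case to rule out is that $T_j$ is burnt by a fire source (or sources) whose cluster extends \emph{outside} $T_j$, into an adjacent subpath, while still more than one fire source touches $T_j$; here I would argue using the teeth again. Each internal vertex of $T_j$ (from the $2$nd through the $2$nd-last) carries a pendant vertex $u_j^\ell$, and a pendant vertex at distance $1$ from $T_j$ can only be burnt by a cluster that already contains its unique neighbor on $T_j$ at least one step before the end; so burning all of $A_{T_j}$ effectively forces the clusters covering the interior of $T_j$ to behave as clusters confined to (a neighborhood of) $T_j$, and the pigeonhole/overlap argument of \Cref{observation:overlap} applies verbatim. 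Thus any configuration with $\geq 2$ fire sources interacting with a single $T_j$ produces an overlap among clusters, contradicting optimality.

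Concretely, the key steps in order are: (1) state that an optimal $(2m+1)$-burning of $P_I$ must have pairwise-disjoint clusters that exactly tile $P_I$, quoting \Cref{observation:path} and the arithmetic $\sum (2(k-i)+1) = k^2 = (2m+1)^2 = |P_I|$; (2) suppose for contradiction that some $T_j$ is burnt using more than one fire source; (3) observe that the pendant teeth of $A_{T_j}$ force every cluster meeting the interior of $T_j$ to be, in effect, a cluster lying within $T_j$ together with its teeth, so \Cref{observation:overlap} applies; (4) conclude that two clusters overlap, contradicting (1); hence no $T_j$ is burnt by more than one fire source, and $P_I$ burns optimally only if each $T_j$ is handled by a single fire source.

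The main obstacle I expect is step (3): carefully handling the boundary cases where a fire source sits \emph{near} the junction between $T_j$ and an adjacent $Q$-path (or between two consecutive $T$-paths), so that its cluster legitimately straddles the boundary. One must check that even then, if a \emph{second} fire source also contributes to covering $T_j$'s interior, the teeth argument still forces an overlap — essentially because a pendant tooth $u_j^\ell$ must be reached by a cluster one step before the final step, which eats into the "budget" that an optimal tiling cannot afford. Making this precise may require tracking, for each $T_j$, exactly which step-indices can produce clusters large enough to cover $|T_j| = 2(2m+1-j)+1$ vertices plus all its teeth, and showing those index constraints are incompatible with two fire sources. Once the single-fire-source-per-$T_j$ conclusion is in hand, the lemma follows immediately.
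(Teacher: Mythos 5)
Your proposal follows essentially the same route as the paper's proof: invoke \Cref{observation:path} (the clusters of an optimal burning of a path of order $(2m+1)^2$ must be pairwise disjoint and exactly tile it) and then \Cref{observation:overlap} to force an overlap once two fire sources serve one $T_j$, yielding the contradiction. You are in fact more careful than the paper, which silently applies \Cref{observation:overlap} without addressing the boundary case you flag in step (3) where a cluster straddles the junction between $T_j$ and an adjacent subpath; your teeth-based argument for reducing that case to the confined one is the right idea and fills a gap the paper leaves implicit.
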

\begin{proof}
Since $P_I$ is a simple path of length $(2m+1)^2$, according to \Cref{observation:path}, each fire source in a optimal burning sequence must burn disjoint set of vertices of $P_I$. Let ${x_1, x_2,...,x_{2m+1}}$ be an optimal burning sequence of $P_I$ such that some $T_j$ is burnt using more than one fire sources, then according to \Cref{observation:overlap}, at least two fire sources burn a common vertex of $P_I$ and hence ${x_1, x_2,...,x_{2m+1}}$ can not be an optimal burning sequence.
\end{proof}

Before going to the NP-completeness proof, we construct a specific example of $IG(X)$ below.

\subsection{Example construction}\label{subsection:ExConsIG}

Let $X=\{10,11,12,14,15,16\}$. Then $n=2,\ m = 16,\ B = 39,$ and $k=10$. Also $F_m=\{1,2,...,16\}$ and $F_m^\prime=\{1,3,...,31\}$. Further, $X^\prime = \{19,21,23,27,29,31\}$, $B^\prime = 75=2B-3$ and $Y=\{1,3,5,7,9,11,13,15,17,25\}$.
Observe that $Q_1$ and $Q_2$ are paths of size $75$, and each $Q_1^\prime, Q_2^\prime,...,Q_{k}^\prime$ are paths of order of $25$, $17$, $15$, $13$, $11$, $9$, $7$, $5$, $3$, $1$ respectively. $T_1,T_2,T_3,...T_{m+1}$ are of order of $65,63, 61...,33$ respectively.
We add a vertex connected to each vertex from $2nd$ to $2nd$-last vertices of $T_j (1\leq j\leq m+1)$.
Observe that this is a valid interval graph. The constructed example $IG(X)$ is shown in \Cref{figure:IGNPCEIG}.

Next we show that this interval graph can be burned optimally only if 3-partition problem can be solved for $X=\{10,11,12,14,15,16\}$.

\newcounter{r}
\begin{figure}
    \centering
    \begin{tikzpicture}[scale=.7]
        \setcounter{c}{0}
        \setcounter{r}{0}
        \loop
            \node [circle, fill=black, inner sep=0pt, minimum size=3pt] (A) at (\value{r}*3.75,-\value{c}*2) {};
            \ifnum \value{r}>0
                \draw[red] (B) -- (A);
            \fi
            \node [circle, fill=black, inner sep=0pt, minimum size=3pt] (B) at (\value{r}*3.75+2.75,-\value{c}*2) {};

            \draw (A)--(B);

            \ifthenelse{\isodd{\value{r}}}{
                \node [circle, fill=black, inner sep=0pt, minimum size=3pt] (P1) at (\value{r}*3.75+.2,-\value{c}*2) {};
                \node [circle, fill=black, inner sep=0pt, minimum size=3pt] (P2) at (\value{r}*3.75+2.75-.2,-\value{c}*2) {};

                \node [circle, fill=black, inner sep=0pt, minimum size=3pt] (U1) at (\value{r}*3.75+.2,-\value{c}*2+.75) {};
                \node [circle, fill=black, inner sep=0pt, minimum size=3pt] (U2) at (\value{r}*3.75+2.75-.2,-\value{c}*2+.75) {};

                \draw[blue] (P1) -- (U1);
                \draw[blue] (P2) -- (U2);

                \node [circle, fill=black, inner sep=0pt, minimum size=3pt] (P3) at (\value{r}*3.75+.4,-\value{c}*2) {};
                \node [circle, fill=black, inner sep=0pt, minimum size=3pt] (P4) at (\value{r}*3.75+2.75-.4,-\value{c}*2) {};

                \node [circle, fill=black, inner sep=0pt, minimum size=3pt] (U3) at (\value{r}*3.75+.4,-\value{c}*2+.75) {};
                \node [circle, fill=black, inner sep=0pt, minimum size=3pt] (U4) at (\value{r}*3.75+2.75-.4,-\value{c}*2+.75) {};

                \draw[blue] (P1) -- (U1);
                \draw[blue] (P2) -- (U2);

                \node [circle, fill=black, inner sep=0pt, minimum size=3pt] (P5) at (\value{r}*3.75+.6,-\value{c}*2) {};
                \node [circle, fill=black, inner sep=0pt, minimum size=3pt] (P6) at (\value{r}*3.75+2.75-.6,-\value{c}*2) {};

                \node [circle, fill=black, inner sep=0pt, minimum size=3pt] (U5) at (\value{r}*3.75+.6,-\value{c}*2+.75) {};
                \node [circle, fill=black, inner sep=0pt, minimum size=3pt] (U6) at (\value{r}*3.75+2.75-.6,-\value{c}*2+.75) {};

                \draw[blue] (P3) -- (U3);
                \draw[blue] (P4) -- (U4);
                \draw[blue] (P5) -- (U5);
                \draw[blue] (P6) -- (U6);

                \node [circle, fill=black, inner sep=0pt, minimum size=2pt] at (\value{r}*3.75+2.75/2,-\value{c}*2+.375) {};
                \node [circle, fill=black, inner sep=0pt, minimum size=2pt] at (\value{r}*3.75+2.75/2-.2,-\value{c}*2+.375) {};
                \node [circle, fill=black, inner sep=0pt, minimum size=2pt] at (\value{r}*3.75+2.75/2+.2,-\value{c}*2+.375) {};
            }{}

            \stepcounter{r}
            \ifnum\value{r}<4
            \repeat

        \node at (1.375,-.5) {$Q_1$}; \node at (5.125,-.5) {$T_1$};
        \node at (8.625,-.5) {$Q_2$}; \node at (13,-.5) {$T_2$};

        \stepcounter{c}
        \setcounter{r}{5}
        \loop
            \node [circle, fill=black, inner sep=0pt, minimum size=3pt] (A\ther) at (\value{r}*2.5,-\value{c}*2) {};
            \node [circle, fill=black, inner sep=0pt, minimum size=3pt] (B\ther) at (\value{r}*2.5+1.5,-\value{c}*2) {};
            \ifnum\value{r}=5
                \draw[red] (B\ther) -- (B);
            \fi
            \ifnum\value{r}<5
                \setcounter{n}{\value{r}}
                \stepcounter{n}
                \draw[red] (B\ther) -- (A\then);
            \fi

            \draw (A\ther)--(B\ther);

            \ifthenelse{\isodd{\value{r}}}{}{
                \node [circle, fill=black, inner sep=0pt, minimum size=3pt] (P1) at (\value{r}*2.5+.2,-\value{c}*2) {};
                \node [circle, fill=black, inner sep=0pt, minimum size=3pt] (P2) at (\value{r}*2.5+1.5-.2,-\value{c}*2) {};

                \node [circle, fill=black, inner sep=0pt, minimum size=3pt] (U1) at (\value{r}*2.5+.2,-\value{c}*2+.75) {};
                \node [circle, fill=black, inner sep=0pt, minimum size=3pt] (U2) at (\value{r}*2.5+1.5-.2,-\value{c}*2+.75) {};

                \draw[blue] (P1) -- (U1);
                \draw[blue] (P2) -- (U2);

                \node [circle, fill=black, inner sep=0pt, minimum size=3pt] (P3) at (\value{r}*2.5+.4,-\value{c}*2) {};
                \node [circle, fill=black, inner sep=0pt, minimum size=3pt] (P4) at (\value{r}*2.5+1.5-.4,-\value{c}*2) {};

                \node [circle, fill=black, inner sep=0pt, minimum size=3pt] (U3) at (\value{r}*2.5+.4,-\value{c}*2+.75) {};
                \node [circle, fill=black, inner sep=0pt, minimum size=3pt] (U4) at (\value{r}*2.5+1.5-.4,-\value{c}*2+.75) {};

                \draw[blue] (P3) -- (U3);
                \draw[blue] (P4) -- (U4);

                \node [circle, fill=black, inner sep=0pt, minimum size=3pt] (P3) at (\value{r}*2.5+.4,-\value{c}*2) {};
                \node [circle, fill=black, inner sep=0pt, minimum size=3pt] (P4) at (\value{r}*2.5+1.5-.4,-\value{c}*2) {};

                \node [circle, fill=black, inner sep=0pt, minimum size=3pt] (U3) at (\value{r}*2.5+.4,-\value{c}*2+.75) {};
                \node [circle, fill=black, inner sep=0pt, minimum size=3pt] (U4) at (\value{r}*2.5+1.5-.4,-\value{c}*2+.75) {};

                \draw[blue] (P3) -- (U3);
                \draw[blue] (P4) -- (U4);

                \node [circle, fill=black, inner sep=0pt, minimum size=2pt] (U4) at (\value{r}*2.5+1.5/2,-\value{c}*2+.375) {};
                \node [circle, fill=black, inner sep=0pt, minimum size=2pt] (U4) at (\value{r}*2.5+1.5/2+.15,-\value{c}*2+.375) {};
                \node [circle, fill=black, inner sep=0pt, minimum size=2pt] (U4) at (\value{r}*2.5+1.5/2-.15,-\value{c}*2+.375) {};
            }

            \addtocounter{r}{-1}
            \ifnum\value{r}>-1
            \repeat

        \node at (.75,-2.5) {$T_5$}; \node at (3.25,-2.5) {$Q_3^\prime$};
        \node at (5.75,-2.5) {$T_4$}; \node at (8.25,-2.5) {$Q_2^\prime$};
        \node at (10.75,-2.5) {$T_3$}; \node at (13.25,-2.5) {$Q_1^\prime$};

        \stepcounter{c}
        \setcounter{r}{0}
        \loop
            \node [circle, fill=black, inner sep=0pt, minimum size=3pt] (A) at (\value{r}*2.5,-\value{c}*2) {};
            \ifnum\value{r}=0
                \draw[red] (A) -- (A0);
            \fi
            \ifnum \value{r}>0
                \draw[red] (B) -- (A);
            \fi
            \node [circle, fill=black, inner sep=0pt, minimum size=3pt] (B) at (\value{r}*2.5+1.5,-\value{c}*2) {};

            \draw (A)--(B);
            \ifthenelse{\isodd{\value{r}}}{
                \node [circle, fill=black, inner sep=0pt, minimum size=3pt] (P1) at (\value{r}*2.5+.2,-\value{c}*2) {};
                \node [circle, fill=black, inner sep=0pt, minimum size=3pt] (P2) at (\value{r}*2.5+1.5-.2,-\value{c}*2) {};

                \node [circle, fill=black, inner sep=0pt, minimum size=3pt] (U1) at (\value{r}*2.5+.2,-\value{c}*2+.75) {};
                \node [circle, fill=black, inner sep=0pt, minimum size=3pt] (U2) at (\value{r}*2.5+1.5-.2,-\value{c}*2+.75) {};

                \draw[blue] (P1) -- (U1);
                \draw[blue] (P2) -- (U2);

                \node [circle, fill=black, inner sep=0pt, minimum size=3pt] (P3) at (\value{r}*2.5+.4,-\value{c}*2) {};
                \node [circle, fill=black, inner sep=0pt, minimum size=3pt] (P4) at (\value{r}*2.5+1.5-.4,-\value{c}*2) {};

                \node [circle, fill=black, inner sep=0pt, minimum size=3pt] (U3) at (\value{r}*2.5+.4,-\value{c}*2+.75) {};
                \node [circle, fill=black, inner sep=0pt, minimum size=3pt] (U4) at (\value{r}*2.5+1.5-.4,-\value{c}*2+.75) {};

                \draw[blue] (P3) -- (U3);
                \draw[blue] (P4) -- (U4);

                \node [circle, fill=black, inner sep=0pt, minimum size=2pt] (U4) at (\value{r}*2.5+1.5/2,-\value{c}*2+.375) {};
                \node [circle, fill=black, inner sep=0pt, minimum size=2pt] (U4) at (\value{r}*2.5+1.5/2+.15,-\value{c}*2+.375) {};
                \node [circle, fill=black, inner sep=0pt, minimum size=2pt] (U4) at (\value{r}*2.5+1.5/2-.15,-\value{c}*2+.375) {};
            }{}

            \stepcounter{r}
            \ifnum\value{r}<6
            \repeat

        \node at (.75,-4.5) {$Q_4^\prime$}; \node at (3.25,-4.5) {$T_6$};
        \node at (5.75,-4.5) {$Q_5^\prime$}; \node at (8.25,-4.5) {$T_7$};
        \node at (10.75,-4.5) {$Q_6^\prime$}; \node at (13.25,-4.5) {$T_8$};

        \stepcounter{c}
        \setcounter{r}{5}
        \loop
            \node [circle, fill=black, inner sep=0pt, minimum size=3pt] (A\ther) at (\value{r}*2.5,-\value{c}*2) {};
            \node [circle, fill=black, inner sep=0pt, minimum size=3pt] (B\ther) at (\value{r}*2.5+1.5,-\value{c}*2) {};
            \ifnum\value{r}=5
                \draw[red] (B\ther) -- (B);
            \fi
            \ifnum\value{r}<5
                \setcounter{n}{\value{r}}
                \stepcounter{n}
                \draw[red] (B\ther) -- (A\then);
            \fi

            \draw (A\ther)--(B\ther);

            \ifthenelse{\isodd{\value{r}}}{}{
                \node [circle, fill=black, inner sep=0pt, minimum size=3pt] (P1) at (\value{r}*2.5+.2,-\value{c}*2) {};
                \node [circle, fill=black, inner sep=0pt, minimum size=3pt] (P2) at (\value{r}*2.5+1.5-.2,-\value{c}*2) {};

                \node [circle, fill=black, inner sep=0pt, minimum size=3pt] (U1) at (\value{r}*2.5+.2,-\value{c}*2+.75) {};
                \node [circle, fill=black, inner sep=0pt, minimum size=3pt] (U2) at (\value{r}*2.5+1.5-.2,-\value{c}*2+.75) {};
                \draw[blue] (P1) -- (U1);
                \draw[blue] (P2) -- (U2);\node [circle, fill=black, inner sep=0pt, minimum size=3pt] (P3) at (\value{r}*2.5+.4,-\value{c}*2) {};
                \node [circle, fill=black, inner sep=0pt, minimum size=3pt] (P4) at (\value{r}*2.5+1.5-.4,-\value{c}*2) {};

                \node [circle, fill=black, inner sep=0pt, minimum size=3pt] (U3) at (\value{r}*2.5+.4,-\value{c}*2+.75) {};
                \node [circle, fill=black, inner sep=0pt, minimum size=3pt] (U4) at (\value{r}*2.5+1.5-.4,-\value{c}*2+.75) {};

                \draw[blue] (P3) -- (U3);
                \draw[blue] (P4) -- (U4);

                \node [circle, fill=black, inner sep=0pt, minimum size=2pt] (U4) at (\value{r}*2.5+1.5/2,-\value{c}*2+.375) {};
                \node [circle, fill=black, inner sep=0pt, minimum size=2pt] (U4) at (\value{r}*2.5+1.5/2+.15,-\value{c}*2+.375) {};
                \node [circle, fill=black, inner sep=0pt, minimum size=2pt] (U4) at (\value{r}*2.5+1.5/2-.15,-\value{c}*2+.375) {};
            }

            \addtocounter{r}{-1}
            \ifnum\value{r}>-1
            \repeat

        \node at (.75,-6.5) {$T_{11}$}; \node at (3.25,-6.5) {$Q_9^\prime$};
        \node at (5.75,-6.5) {$T_{10}$}; \node at (8.25,-6.5) {$Q_8^\prime$};
        \node at (10.75,-6.5) {$T_9$}; \node at (13.25,-6.5) {$Q_7^\prime$};

        \stepcounter{c}
        \setcounter{r}{0}
        \loop

            \ifnum\value{r}=0
                \node [circle, fill=black, inner sep=0pt, minimum size=3pt] (a) at (\value{r}*2.5,-\value{c}*2) {};
                \draw[red] (a) -- (A0);
            \fi
            \ifnum\value{r}>0
                \node [circle, fill=black, inner sep=0pt, minimum size=3pt] (A) at (\value{r}*2.5,-\value{c}*2) {};
                \ifnum \value{r}=1
                    \draw[red] (A) -- (a);
                \fi
                \ifnum \value{r}>1
                    \draw[red] (B) -- (A);
                \fi
                \node [circle, fill=black, inner sep=0pt, minimum size=3pt] (B) at (\value{r}*2.5+1.5,-\value{c}*2) {};

                \draw (A)--(B);

                \node [circle, fill=black, inner sep=0pt, minimum size=3pt] (P1) at (\value{r}*2.5+.2,-\value{c}*2) {};
                \node [circle, fill=black, inner sep=0pt, minimum size=3pt] (P2) at (\value{r}*2.5+1.5-.2,-\value{c}*2) {};

                \node [circle, fill=black, inner sep=0pt, minimum size=3pt] (U1) at (\value{r}*2.5+.2,-\value{c}*2+.75) {};
                \node [circle, fill=black, inner sep=0pt, minimum size=3pt] (U2) at (\value{r}*2.5+1.5-.2,-\value{c}*2+.75) {};\node [circle, fill=black, inner sep=0pt, minimum size=3pt] (P3) at (\value{r}*2.5+.4,-\value{c}*2) {};
                \node [circle, fill=black, inner sep=0pt, minimum size=3pt] (P4) at (\value{r}*2.5+1.5-.4,-\value{c}*2) {};

                \node [circle, fill=black, inner sep=0pt, minimum size=3pt] (U3) at (\value{r}*2.5+.4,-\value{c}*2+.75) {};
                \node [circle, fill=black, inner sep=0pt, minimum size=3pt] (U4) at (\value{r}*2.5+1.5-.4,-\value{c}*2+.75) {};

                \draw[blue] (P3) -- (U3);
                \draw[blue] (P4) -- (U4);

                \node [circle, fill=black, inner sep=0pt, minimum size=2pt] (U4) at (\value{r}*2.5+1.5/2,-\value{c}*2+.375) {};
                \node [circle, fill=black, inner sep=0pt, minimum size=2pt] (U4) at (\value{r}*2.5+1.5/2+.15,-\value{c}*2+.375) {};
                \node [circle, fill=black, inner sep=0pt, minimum size=2pt] (U4) at (\value{r}*2.5+1.5/2-.15,-\value{c}*2+.375) {};
            \fi

            \draw[blue] (P1) -- (U1);
            \draw[blue] (P2) -- (U2);

            \stepcounter{r}
            \ifnum\value{r}<6
            \repeat

        \node at (.2,-8.5) {$Q_{10}^\prime$}; \node at (3.25,-8.5) {$T_{12}$};
        \node at (5.75,-8.5) {$T_{13}$}; \node at (8.25,-8.5) {$T_{14}$};
        \node at (10.75,-8.5) {$T_{15}$}; \node at (13.25,-8.5) {$T_{16}$};

        \stepcounter{c}
        \setcounter{r}{5}
        \loop
            \node [circle, fill=black, inner sep=0pt, minimum size=3pt] (A\ther) at (\value{r}*2.5,-\value{c}*2) {};
            \node [circle, fill=black, inner sep=0pt, minimum size=3pt] (B\ther) at (\value{r}*2.5+1.5,-\value{c}*2) {};
            \ifnum\value{r}=5
                \draw[red] (B\ther) -- (B);
            \fi
            \ifnum\value{r}<5
                \setcounter{n}{\value{r}}
                \stepcounter{n}
                \draw[red] (B\ther) -- (A\then);
            \fi

            \draw (A\ther)--(B\ther);

            \node [circle, fill=black, inner sep=0pt, minimum size=3pt] (P1) at (\value{r}*2.5+.2,-\value{c}*2) {};
            \node [circle, fill=black, inner sep=0pt, minimum size=3pt] (P2) at (\value{r}*2.5+1.5-.2,-\value{c}*2) {};

            \node [circle, fill=black, inner sep=0pt, minimum size=3pt] (U1) at (\value{r}*2.5+.2,-\value{c}*2+.75) {};
            \node [circle, fill=black, inner sep=0pt, minimum size=3pt] (U2) at (\value{r}*2.5+1.5-.2,-\value{c}*2+.75) {};

            \draw[blue] (P1) -- (U1);
            \draw[blue] (P2) -- (U2);

            \node [circle, fill=black, inner sep=0pt, minimum size=3pt] (P3) at (\value{r}*2.5+.4,-\value{c}*2) {};
            \node [circle, fill=black, inner sep=0pt, minimum size=3pt] (P4) at (\value{r}*2.5+1.5-.4,-\value{c}*2) {};

            \node [circle, fill=black, inner sep=0pt, minimum size=3pt] (U3) at (\value{r}*2.5+.4,-\value{c}*2+.75) {};
            \node [circle, fill=black, inner sep=0pt, minimum size=3pt] (U4) at (\value{r}*2.5+1.5-.4,-\value{c}*2+.75) {};

            \draw[blue] (P3) -- (U3);
            \draw[blue] (P4) -- (U4);

            \node [circle, fill=black, inner sep=0pt, minimum size=2pt] (U4) at (\value{r}*2.5+1.5/2,-\value{c}*2+.375) {};
            \node [circle, fill=black, inner sep=0pt, minimum size=2pt] (U4) at (\value{r}*2.5+1.5/2+.15,-\value{c}*2+.375) {};
            \node [circle, fill=black, inner sep=0pt, minimum size=2pt] (U4) at (\value{r}*2.5+1.5/2-.15,-\value{c}*2+.375) {};

            \addtocounter{r}{-1}
            \ifnum\value{r}>4
            \repeat

        \node at (13.25,-10.5) {$T_{17}$};
    \end{tikzpicture}
    \caption{Construction of example $IG(X)$.}
    \label{figure:IGNPCEIG}
\end{figure}
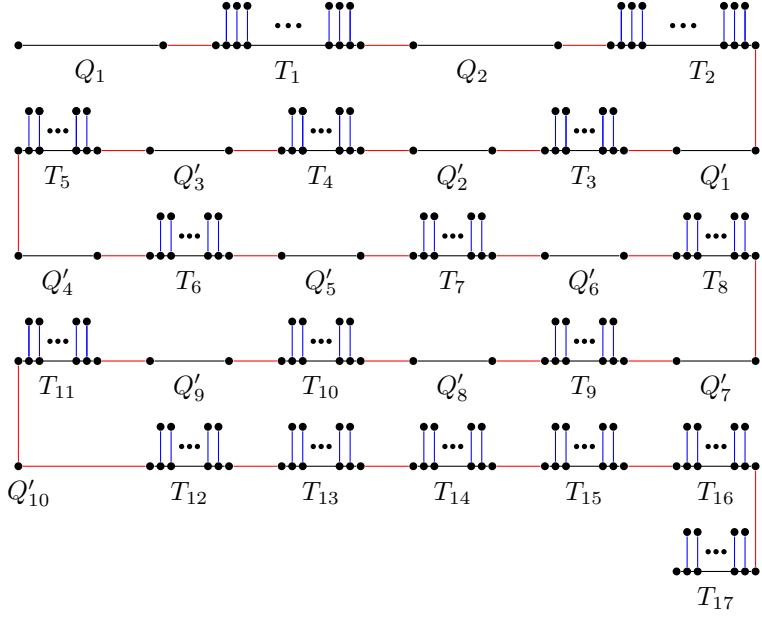

\subsection{NP-Completeness}

\begin{lemma}\label{lemma:BNb(G)IG}
Given that $Q_1,...,Q_n$ can be partitioned into $Q^{\prime\prime}_1,...,Q^{\prime\prime}_{3n}$ of orders in $X^\prime$, then burning number of $IG(X)$ is $2m+1$.
\end{lemma}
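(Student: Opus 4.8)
The plan is to establish $b(IG(X)) \geq 2m+1$ and $b(IG(X)) \leq 2m+1$ separately; only the upper bound uses the hypothesis on the $Q_i$'s.

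For the lower bound, recall that the construction of $IG(X)$ makes $P_I$ a shortest path of maximum length among shortest paths between all pairs of vertices, so Observation~\ref{observation:bound} gives $b(IG(X)) \geq b(P_I)$; alternatively, $P_I$ is an isometric subtree of $IG(X)$ (every vertex of $IG(X)$ outside $P_I$ is a leaf hanging off some $T_j$, hence creates no shortcut), so Theorem~\ref{theorem:burn-isomorphic-subtree} applies directly. Since $|P_I| = (2m+1)^2$, we have $b(P_I) = 2m+1$, and hence $b(IG(X)) \geq 2m+1$. This half needs no assumption.

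For the upper bound I would exhibit a burning sequence of length $2m+1$. By hypothesis the paths $Q_1,\dots,Q_n$ subdivide into $3n$ consecutive subpaths whose orders are exactly the $3n$ distinct (odd) elements of $X^\prime$. Place fire sources as follows: in round $j$, for $1 \leq j \leq m+1$, put a source at the centre of $T_j$. Because $|T_j| = 2(2m+1-j)+1$, its burning cluster at the end of round $2m+1$ is precisely $T_j$, and by Corollary~\ref{corollary:Tone} this single source also burns every pendant vertex attached to $T_j$. These first $m+1$ rounds therefore burn $\sum_{j=1}^{m+1}\bigl(2(2m+1-j)+1\bigr) = (m+1)(3m+1)$ vertices of $P_I$, leaving exactly $(2m+1)^2 - (m+1)(3m+1) = m^2$ vertices, namely the union of all the $Q_i$'s and $Q_j^\prime$'s. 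In round $2m+1-\ell$ a fire source has a cluster of size $2\ell+1$, so over the remaining rounds $m+2,\dots,2m+1$ the available cluster sizes are exactly the odd numbers $1,3,\dots,2m-1$, which is $F_m^\prime = X^\prime \cup Y$ (a disjoint union), each size available once. I would then cover each of the $3n$ subpaths of the $Q_i$'s by a cluster whose size equals its order (an element of $X^\prime$), and each $Q_j^\prime$ by a single cluster of size $|Q_j^\prime| \in Y$; since $X^\prime$ and $Y$ partition $F_m^\prime$ and the $Q_j^\prime$ have pairwise distinct orders, this matches cluster sizes to segments bijectively, and each segment (a consecutive run in $P_I$ of odd length) is burnt exactly by the corresponding centred cluster.

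It then remains to check the sequence is legal and to conclude. The burning clusters partition $V(P_I)$ by the size count, each cluster contains its own fire source, and a cluster of size $2\ell+1$ reaches its two endpoints only in the final round $2m+1$; hence no source is ever placed on an already-burnt vertex and no cluster spills over into a neighbouring segment, while the pendant vertices are absorbed via Corollary~\ref{corollary:Tone}. So this length-$(2m+1)$ sequence burns $IG(X)$, giving $b(IG(X)) \leq 2m+1$, and with the lower bound, $b(IG(X)) = 2m+1$. I expect the only real work to be the arithmetic bookkeeping: verifying $s(F_m^\prime) = m^2$, that the orders of $T_1,\dots,T_{m+1}$ sum to $(m+1)(3m+1)$, and that $s(X^\prime) + s(Y) = m^2$, so that the multiset of cluster sizes forced in rounds $m+2,\dots,2m+1$ is exactly $F_m^\prime = X^\prime \cup Y$; this is precisely why the $3$-partition of $X$ (equivalently, the subdivision of the $Q_i$'s) is the condition that makes the final tiling possible.
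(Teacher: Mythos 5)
Your proof is correct and follows essentially the same route as the paper's: the lower bound comes from $P_I$ being a path of order $(2m+1)^2$, and the upper bound comes from centring the $i$-th fire source on the $i$-th largest subpath among the $T_j$'s, the $Q_j^\prime$'s, and the pieces $Q_1^{\prime\prime},\dots,Q_{3n}^{\prime\prime}$, whose orders are exactly the first $2m+1$ odd numbers, with Corollary~\ref{corollary:Tone} absorbing the pendant vertices. Your version merely spells out the arithmetic bookkeeping and the legality check that the paper leaves implicit.
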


\begin{proof}
Let 
$P^\prime = \{Q^{\prime\prime}_1$, $...$, $Q^{\prime\prime}_{3n}$, $Q^\prime_1$, $...$, $Q^\prime_k$, $T_1$, $...$, $T_{m+1}\}$. Let $r_i$ be the $((2m+1)-i+1)^{th} = (2m-i+2)^{th}$ vertex on the $i^{th}$ largest sub path in $P^\prime$. Then, we can burn $P_I$ and subsequently $IG(X)$ if we put $2m+1$ fire sources such that the fire source $y_i$ is put on $r_i$ (from \Cref{corollary:Tone}, if $T_j$ is burnt by a single fire source, then $A_{T_j}$ is also burnt with the same). So, $S^\prime = (y_1,y_2,..,y_{2m+1})$ is a valid burning sequence in this case. This implies that $b(IG(X))\leq 2m+1$. Since the union of all sub paths in $P^\prime$ produces entire $P_I$ which is a path of length $(2m+1)^2$, we have $b(IG(X))\geq 2m+1$. Hence, $b(IG(X)) = 2m+1$.
\end{proof}

Let $P^\prime$ denote the set of all sub paths of $P_I$ such that $P^\prime = \{Q^{\prime\prime}_1$, $...$, $Q^{\prime\prime}_{3n}$, $Q^\prime_1$, $...$, $Q^\prime_k$, $T_1$, $...$, $T_{m+1}\}$.

\begin{lemma}\label{lemma:EFSOnPath}
Each fire source must be on $P_I$.
\end{lemma}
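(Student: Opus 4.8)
The plan is to argue by contradiction, combining the structural facts already established about $IG(X)$ with the optimality constraints forced by $b(IG(X)) = 2m+1$. First I would recall that $IG(X)$ consists of the long path $P_I$ (of order $(2m+1)^2$) together with, for each $T_j$, the pendant vertices $A_{T_j}^\prime = \{u_j^1,\dots,u_j^{|T_j|-2}\}$ hanging off the internal vertices of $T_j$. So the vertices of $IG(X)$ not on $P_I$ are exactly these pendant vertices, each at distance $1$ from a unique internal vertex of some $T_j$. Since $b(IG(X)) = 2m+1$ by \Cref{lemma:BNb(G)IG}, any optimal burning sequence $S^\prime = (x_1,\dots,x_{2m+1})$ must burn the $(2m+1)^2$ vertices of $P_I$ in $2m+1$ steps, and by \Cref{observation:path} the $2m+1$ burning clusters restricted to $P_I$ must each have size exactly $(2m+1-i+1)^2 - (2m+1-i)^2$ type counts and, crucially, be pairwise disjoint and together cover all of $P_I$.

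The key step is then: suppose for contradiction that some fire source $x_i$ is placed on a vertex $w \notin P_I$, i.e., on a pendant vertex $u_j^t \in A_{T_j}^\prime$ for some $j,t$. I would count the vertices burnable by $x_i$. A pendant vertex $u_j^t$ has only one neighbor, namely the internal vertex of $T_j$ it is attached to; so after $d$ further steps, the burning cluster of $x_i$ is contained in $G.N_d[u_j^t]$, which is essentially the ball of radius $d-1$ around that internal vertex of $T_j$ within $P_I$, plus possibly a few more pendant vertices. In other words, placing a fire source at a pendant vertex ``wastes'' one step of reach compared to placing it at the corresponding internal vertex of $T_j$: the cluster of $x_i$ covers strictly fewer vertices of $P_I$ than the cluster of a fire source placed on $P_I$ itself with the same time budget. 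Meanwhile every pendant vertex must ultimately be burned, and — as in the reasoning behind \Cref{observation:overlap} and \Cref{corollary:Tone} — the only efficient way to burn the pendant vertices attached to $T_j$ is to have a fire source on (or reaching) the spine $T_j$ and let the fire spread up. I would make this quantitative: with $n_0$ fire sources off $P_I$, the total number of $P_I$-vertices that the $2m+1$ clusters can cover is strictly less than $\sum_{i=1}^{2m+1}\big((2m+2-i)^2 - (2m+1-i)^2\big) = (2m+1)^2$, because each off-path source loses at least one unit of radius while the remaining on-path sources, being constrained to disjoint clusters, cannot compensate. Hence not all of $P_I$ is burned, contradicting that $S^\prime$ burns $IG(X)$.

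The main obstacle I anticipate is making the ``wasted step'' counting argument fully rigorous rather than merely intuitive. One has to be careful that a fire source at a pendant vertex $u_j^t$ does in fact burn fewer vertices of $P_I$ than is needed, even accounting for the pendant vertices it burns ``for free'' — since those pendant vertices are not in $P_I$, burning them does not help cover $P_I$, so they are irrelevant to the $P_I$-covering count, which makes the inequality go through, but this needs to be stated precisely. A subtler point is handling the interaction between adjacent clusters: even if one source is moved off-path, its cluster's ``footprint'' on $P_I$ shrinks, but a neighboring on-path source might try to expand into the vacated region — however \Cref{observation:path} (disjointness) and the fact that the total budget of radii is exactly $\sum_{i=0}^{2m}(2i+1)$ and is tight for a path of length $(2m+1)^2$ means there is no slack, so any loss is fatal. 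I would therefore organize the proof as: (1) restate that non-$P_I$ vertices are pendant vertices of the $A_{T_j}$; (2) show a fire source at a pendant vertex covers at most as many $P_I$-vertices as a fire source with one less step of budget placed on $P_I$; (3) invoke the tightness of the path-burning bound ($t^2$ vertices in $t$ steps, from \Cref{equation:burn-path-inf-2} and the surrounding discussion) together with \Cref{observation:path} to conclude that even a single off-$P_I$ fire source leaves some vertex of $P_I$ unburned, contradicting optimality; hence every fire source in an optimal sequence lies on $P_I$.
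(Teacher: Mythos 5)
Your proposal is correct and follows essentially the same route as the paper: the paper also argues that a fire source placed on a pendant vertex has its burning cluster's intersection with $P_I$ equal to a path of order strictly less than $2(2m+1-i)+1$, and since the sum $\sum_{i=1}^{2m+1}\bigl(2(2m+1-i)+1\bigr)=(2m+1)^2$ exactly matches $|P_I|$, any such loss leaves part of $P_I$ unburned, contradicting \Cref{lemma:BNb(G)IG}. The extra worry you raise about neighboring clusters compensating is already handled by this counting, since the union is bounded by the sum of the cluster footprints regardless of disjointness.
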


\begin{proof}
If the all the fire sources are on $P_I$, then we have that\\ $G.N_{2m+1-i}[y_i]$ $\cap$ $P_I$\\
is a path of order at most $2(2m+1-i)+1$.

Let for contradiction that we put a fire source $y_i$ on any vertex adjacent to some vertex on $P_I$ and not on $P_I$, and $IG(X)$ can still be burnt within $2m+1$ steps (\Cref{lemma:BNb(G)IG}). Then we have that the subgraph induced by $G.N_{2m+1-i}[y_i]\cap P_I$ is a path of order less that $2(2m+1-i)+1$. This implies that (from \Cref{equation:burn-verify}) $|\cup_{i=1}^{2m+1}G.N_{2m+1-i}[y_i]\cap P_I|<(2m+1)^2$ which is a contradiction. Therefore each $y_i$ must be a put on some vertex on $P_I$.
\end{proof}

Let $S^\prime=(y_1,y_2,...,y_{2m+1})$ be an optimal burning sequence. Let $r_i$ be the $(2m-i+2)^{th}$ vertex on the $i^{th}$ largest sub path in $P^\prime$.
Observe that $T_j$'s are the largest $m+1$ sub paths in $P^\prime$.

\begin{lemma}\label{lemma:AFSOnr_iIG}
We must have $y_i=r_i \forall\ 1\leq i\leq m+1$.
\end{lemma}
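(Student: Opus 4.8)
The plan is to argue by induction on $i$, using the packing bound from \Cref{observation:path} (applied to the path $P_I$ of order $(2m+1)^2$) together with the fact, established in \Cref{lemma:EFSOnPath}, that every fire source $y_i$ lies on $P_I$. Since $S^\prime=(y_1,\dots,y_{2m+1})$ is an optimal burning sequence of $IG(X)$ and $b(IG(X))=2m+1$ by \Cref{lemma:BNb(G)IG}, the burning clusters $G.N_{2m+1-i}[y_i]\cap P_I$ must be pairwise disjoint subpaths of $P_I$ whose total order is exactly $(2m+1)^2$; hence each such cluster must be a subpath of the \emph{maximum possible} order $2(2m+1-i)+1$. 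In particular the cluster of $y_1$ is a subpath of $P_I$ of order exactly $4m+1$, the cluster of $y_2$ has order $4m-1$, and so on, and these orders are all distinct and strictly decreasing.

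First I would record the structural fact that the only subpaths of $P_I$ that admit being covered by a single fire source with radius $2m+1-i$ in an order-$\big(2(2m+1-i)+1\big)$ fashion are those that fit inside a single maximal ``building block'' of $P_I$ — i.e.\ one of the $Q$'s, $Q^\prime$'s or $T_j$'s — because a subpath straddling a junction between two blocks cannot be ``centered'' to achieve the maximal order while staying disjoint from neighbouring clusters (this is essentially the content of \Cref{lemma:notopt} and \Cref{observation:overlap}: a $T_j$ burnt by two fire sources forces an overlap, so a single $T_j$ cannot be split between clusters of two distinct $y_i$'s, and the same counting forces each block to be covered by a cluster contained in it). Next I would observe that a block of order $N$ can host a radius-$\rho$ cluster of maximal order $2\rho+1$ only if $N\ge 2\rho+1$. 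Since the $m+1$ sub paths $T_1,\dots,T_{m+1}$ are precisely the $m+1$ longest blocks in $P^\prime$ (their orders being $2(2m+1-j)+1$ for $j=1,\dots,m+1$, i.e.\ $4m+1,4m-1,\dots,2m+1$), and every $Q$ or $Q^\prime$ has order at most $B^\prime=2B-3<2m+1$ (here one uses $a_i<m$, so $B<3m$ is not quite enough — the precise inequality $B^\prime<2m+1$, equivalently the orders of the $Q$'s and $Q^\prime$'s being smaller than the smallest $T_j$, needs to be checked from the numerics of the distinct $3$-partition instance), the clusters of $y_1,\dots,y_{m+1}$, which require radii $2m,2m-1,\dots,m$ and hence block-orders at least $4m+1,4m-1,\dots,2m+1$, can only be placed inside $T_1,\dots,T_{m+1}$ respectively.

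Then I would finish by induction. Suppose $y_1=r_1,\dots,y_{i-1}=r_{i-1}$ for some $i\le m+1$, so that $T_1,\dots,T_{i-1}$ are already exactly covered. The cluster of $y_i$ must be a subpath of order $2(2m+1-i)+1$, so it must sit inside a block of order at least $2(2m+1-i)+1$; the only available such blocks are $T_i,T_{i+1},\dots,T_{m+1}$, and among these only $T_i$ has order \emph{exactly} $2(2m+1-i)+1$, while $T_{i+1},\dots,T_{m+1}$ are strictly longer. If $y_i$ were placed in some $T_\ell$ with $\ell>i$, then $T_\ell$ would be covered by the radius-$(2m+1-i)$ cluster of $y_i$ but $T_\ell$ is strictly longer than $2(2m+1-i)+1$, so at least one vertex of $T_\ell$ would be left for a later fire source $y_{i'}$ with $i'>i$; but that later cluster has radius $2m+1-i'<2m+1-i$ and by the disjointness/maximality constraint every later cluster is already forced (by the same counting) to exactly fill some other block, leaving no room to ``patch'' $T_\ell$ — a contradiction with \Cref{observation:path} / \Cref{lemma:notopt}. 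Hence $y_i$ must lie in $T_i$, and to cover all of $T_i$ with radius $2m+1-i$ while matching its order exactly, $y_i$ must be the unique ``central'' vertex, namely the $(2m-i+2)^{\text{th}}$ vertex $r_i$ of $T_i$. This completes the induction and the proof.

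The main obstacle I anticipate is making rigorous the claim that a fire source cannot straddle two adjacent blocks and that no later fire source can ``repair'' a block left partially burnt — i.e.\ turning the global counting identity ``$\sum$ of cluster orders $=(2m+1)^2$ with all clusters disjoint'' into the rigid statement ``each cluster exactly fills one block, matched in decreasing order of size.'' This requires combining \Cref{observation:path}, \Cref{observation:overlap}, and \Cref{lemma:notopt} carefully, and also a careful verification that the block orders are such that the only way to partition $(2m+1)^2 = \sum_{i=1}^{2m+1}\big(2(2m+1-i)+1\big)$ into disjoint maximal subpaths of $P_I$ is the ``greedy'' one that puts the radius-$(2m+1-i)$ cluster in the block of order $2(2m+1-i)+1$; the $Q$'s and $Q^\prime$'s, having orders $B^\prime,B^\prime,$ and the elements of $Y$, collectively fill up exactly the residual room, and one must check these numerics (in particular $B' = 2B-3$ and $|Y| = k = m - 3n$, $\sum Y + nB' + \sum_{j}|T_j| = (2m+1)^2$) are consistent so that no ``slack'' remains to be exploited.
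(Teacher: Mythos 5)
There is a genuine gap, and it is concentrated in your step~(4). Your argument hinges on the claim that $T_1,\dots,T_{m+1}$ are the $m{+}1$ longest ``building blocks'' of $P_I$, and in particular that every $Q_i$ and $Q^\prime_j$ has order at most $B^\prime=2B-3<2m+1$. This is false: the distinct 3-partition constraint forces $a_i<B/2$, hence $m=\max(X)<B/2$, i.e.\ $B>2m$, which gives $B^\prime=2B-3\geq 4m-1$ and typically $B^\prime>4m+1=|T_1|$ (in the paper's own running example $B^\prime=75$ while $|T_1|=65$). So the unpartitioned $Q_i$'s are in fact the \emph{longest} blocks, and nothing in a length-only argument prevents the order-$(4m+1)$ cluster of $y_1$ from sitting entirely inside $Q_1$. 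What actually pins $y_i$ to $r_i$ is the comb gadget $A_{T_i}$, which your proof never uses: by \Cref{observation:overlap} and \Cref{lemma:notopt} each $A_{T_j}$ must be burnt by a single fire source lying on $P_I$ (\Cref{lemma:EFSOnPath}); the two extreme pendants $u_j^1$ and $u_j^{|T_j|-2}$ are at distance $|T_j|-1=2(2m+1-j)$ from each other, so only a fire source of radius at least $2m+1-j$ — i.e.\ one of $y_1,\dots,y_j$ — can reach both, and $r_j$ is the unique vertex within distance $2m+1-j$ of both. The paper's proof is a strong induction on precisely this: $y_1$ is forced to $r_1$ because no other source can handle $A_{T_1}$ alone, then $y_2$ to $r_2$, and so on. The $Q$'s, having no pendants, impose no such constraint, which is exactly why they are free to be carved up by the last $m$ fire sources according to the 3-partition.

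A secondary but real error: in your induction step you write that $T_{i+1},\dots,T_{m+1}$ are ``strictly longer'' than $2(2m+1-i)+1$. Since $|T_j|=2(2m+1-j)+1$ is strictly decreasing in $j$, they are strictly \emph{shorter}, so the contradiction you derive there (a leftover vertex of $T_\ell$ to be patched by a later, smaller cluster) does not parse as written; the correct obstruction for $\ell>i$ is that the cluster of $y_i$ would spill out of $T_\ell$ while leaving the longer comb $A_{T_i}$ with no single source capable of burning it. Your global counting observation — disjointness on $P_I$ plus $|P_I|=(2m+1)^2=\sum_{i}(2(2m+1-i)+1)$ forces every cluster to attain its maximal order — is correct and is implicitly the content of \Cref{observation:path} and \Cref{lemma:notopt}, but it cannot by itself locate the clusters; the pendant vertices are indispensable.
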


\begin{proof}
We are going to prove this lemma using the strong induction hypothesis. We have that each $u_j$ must receive fire from a $y_i$ in $P_I$ (\Cref{lemma:EFSOnPath}). 
For $i=1$, the only vertex connected to both $u^1_1$ and $u^{|T_1|-2}_1$ within a distance $2m+1-i=2m$ is $r_1$.
Now we must have $y_1=r_1$, else, if we put $y_1$ somewhere else, then no other fire source can burn $A_{T_1}$ alone. 
If we utilize more than one fire sources to burn $A_{T_j}$, then then at least one vertex of $T_1$ would be burnt by both of those two fire sources (\Cref{observation:overlap}), following that $P_I$ cannot be burnt completely (\Cref{lemma:notopt}) which is a contradiction.

So, we must have that $y_1=r_1$. Let we need to have $y_k$ on $r_k$ for all $1\leq k\le m$. We need to show for $y_{k+1}$. After burning $T_k$, we must burn $T_{k+1}$ first of all, otherwise we will again obtain overlaps in the burning clusters of the fire sources and we will not be able to burn $P_I$ completely with is a contradiction.

So, using strong induction hypothesis, we must have that $y_{k+1}=r_{k+1}$ to burn entire $A_{T_{k+1}}$ ($\forall\ 1\leq k\leq m$) since the only vertex connected to both $u^1_{k+1}$ and $u^{|T_{k+1}|-2}_{k+1}$  within distance $2m+1-(k+1)$ is $r_{k+1}$.
\end{proof}

We define $P^{\prime\prime}$ by $P^{\prime\prime} = IG(X)\setminus(A_{T_1}\cup A_{T_2}\cup ... \cup A_{T_{m+1}})$. Now we present the following lemma on burning this remaining graph $P^{\prime\prime}$.

\begin{lemma}\label{lemma:PartPFx1TO31}
There is a partition of $P^{\prime\prime}$, induced by the fire sources $y_i (m+1\leq i\leq 2m+1)$, into paths of orders in $F_m^\prime$.
\end{lemma}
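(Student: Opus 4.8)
The plan is to read the required partition directly off an optimal burning sequence. Assume $b(IG(X)) = 2m+1$ and fix a burning sequence $S^\prime = (y_1,\dots,y_{2m+1})$ of this length; I will show that the burning clusters of $y_{m+2},\dots,y_{2m+1}$ are subpaths of $P^{\prime\prime}$ whose orders are exactly the $m$ elements of $F_m^\prime$.

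The first step is to force every cluster, restricted to $P_I$, to be as large as it possibly can be. By \Cref{lemma:EFSOnPath} each $y_i$ lies on $P_I$, and since every comb vertex attached to a $T_j$ is pendant, a shortest path in $IG(X)$ between two vertices of $P_I$ never leaves $P_I$; hence $G.N_{2m+1-i}[y_i]\cap P_I$ is a subpath of $P_I$ of order at most $2(2m+1-i)+1$, and these restricted clusters are precisely the burning clusters of the sequence $(y_1,\dots,y_{2m+1})$ run on the path $P_I$ itself. Because $S^\prime$ burns all of $IG(X)$ it burns $P_I$ with $2m+1 = b(P_I)$ fire sources, so $(y_1,\dots,y_{2m+1})$ is an optimal burning sequence of $P_I$ and by \Cref{observation:path} these restricted clusters are pairwise disjoint. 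Their orders sum to at most $\sum_{i=1}^{2m+1}\big(2(2m+1-i)+1\big) = (2m+1)^2 = |P_I|$, so equality holds throughout: the restricted cluster of $y_i$ is a subpath of $P_I$ of order exactly $2(2m+1-i)+1$, and these subpaths tile $P_I$.

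Next I would identify what the first $m+1$ fire sources consume. By \Cref{lemma:AFSOnr_iIG}, $y_i = r_i$ for $1\le i\le m+1$, and $r_i$ is the centre vertex of the subpath $T_i$ (the $i$-th largest subpath of $P_I$, of order $|T_i| = 2(2m+1-i)+1$), lying at distance $2m+1-i$ from each end of $T_i$. Combined with the previous step, the restricted cluster of $y_i$ is the subpath of $P_I$ of order $|T_i|$ centred at $r_i$, i.e.\ it is exactly $T_i$. Hence $y_1,\dots,y_{m+1}$ tile $T_1\cup\dots\cup T_{m+1}$, so $y_{m+2},\dots,y_{2m+1}$ must tile $V(P_I)\setminus(T_1\cup\dots\cup T_{m+1})$, which is precisely the vertex set of $P^{\prime\prime}$ (every vertex of $IG(X)$ outside $P_I$ belongs to some comb $A_{T_j}$ and is burnt by the corresponding $y_j$ with $j\le m+1$).

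Finally, $P^{\prime\prime}$ is the disjoint union of the path components $Q_1,\dots,Q_n$ of order $B^\prime$ and $Q^\prime_1,\dots,Q^\prime_k$ of orders in $Y$, and these components are separated inside $P_I$ by the deleted $T_j$'s; since each restricted cluster of $y_{m+1+t}$ ($t=1,\dots,m$) is a connected subpath of $P_I$ that meets no $T_j$, it is contained in a single component of $P^{\prime\prime}$. Therefore the clusters of $y_{m+2},\dots,y_{2m+1}$ form a partition of $P^{\prime\prime}$ into subpaths, the cluster of $y_{m+1+t}$ having order $2(2m+1-(m+1+t))+1 = 2(m-t)+1$; as $t$ ranges over $1,\dots,m$ these orders run through $1,3,\dots,2m-1$, each exactly once, which are the elements of $F_m^\prime$. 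I expect the first step to be the main obstacle: simultaneously pinning down that the $P_I$-restricted clusters are disjoint and of maximal order (which is where \Cref{observation:path} enters, together with the fact that the pendant combs introduce no distance shortcuts between vertices of $P_I$); once that rigidity is in hand, the remainder is bookkeeping on subpath orders.
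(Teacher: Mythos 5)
Your proposal is correct and follows essentially the same route as the paper's proof: invoke \Cref{lemma:AFSOnr_iIG} to pin the first $m+1$ fire sources to the $T_i$'s, bound each remaining cluster by a subpath of order $2(2m+1-i)+1$, and use the exact count (sum of maximal cluster orders equals the number of vertices to be covered) to force every cluster to attain its maximum, yielding the partition into paths of orders $1,3,\dots,2m-1$. The only difference is presentational — you run the tiling argument over all of $P_I$ (order $(2m+1)^2$) and then peel off the $T_i$'s, whereas the paper counts directly on $P^{\prime\prime}$ (order $m^2$); both hinge on the same equality-forcing computation.
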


\begin{proof}
From \Cref{lemma:AFSOnr_iIG}, we have that $\forall\ 1\leq i\leq m+1$, all the vertices in $T_i$, along with all the vertices connected to it, shall be burnt by $y_i$. Therefore, we have to burn the vertices in $Q_1,...,Q_n,Q_1^\prime,...,Q_k^\prime$ by the fire sources $y_{m+2},x_{m+3},...,y_{2m+1}$ (the last $m$ sources of fire). Since $P^{\prime\prime}$ is a disjoint union of paths, so we have that $\forall\ m+2\leq i\leq 2m+1$, the subgraph induced by the vertices in $G.N_{2m+1-i}[y_i]$ is a path of length at most $2(2m+1-i)+1$. Moreover, we have that the path forest $P^{\prime\prime}$ is of order $\sum_{i=1}^{m}(2i-1)=m^2$. This implies that $\forall\ m+2\leq i\leq 2m+1$, the subgraph induced by the vertices in $G.N_{2m+1-i}[y_i]$ is a path of order equal to $2(2m+1-i)+1$, otherwise we cannot burn all the vertices of $P^{\prime\prime}$ which is a contradiction. Therefore there must be a partition of $P^{\prime\prime}$, induced by the burning sequence $y_{m+2},y_{m+3},...,y_{2m+1}$, into sub paths of order as per each element in $F_m^\prime$.
\end{proof}

\begin{theorem}\label{theorem:BIGNPCIG}
Burning interval graphs optimally is NP-Complete.
\end{theorem}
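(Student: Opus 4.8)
The plan is to prove the theorem in the usual two parts: membership in NP, and a polynomial-time reduction from the distinct $3$-partition problem. Membership in NP is immediate from \Cref{lemma:burning-in-NP}. For hardness we use the reduction $X \mapsto (IG(X),\, 2m+1)$ already set up, and we must show that $X$ is a yes-instance of distinct $3$-partition if and only if $IG(X)$ can be burnt in $2m+1$ time-steps. Note that the lower bound is unconditional: $P_I$ is a connected subgraph of the tree $IG(X)$, hence an isometric subtree, so $b(IG(X)) \ge b(P_I) = 2m+1$ by \Cref{theorem:burn-isomorphic-subtree}; combined with \Cref{observation:bound} we in fact have $b(IG(X)) \in \{2m+1,\, 2m+2\}$, so the hard decision is precisely whether $b(IG(X)) = 2m+1$.

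For the forward implication, suppose $X$ partitions into $n$ triples each summing to $B$. Applying $a_i \mapsto 2a_i - 1$ to each triple and using $B' = 2B - 3$, the set $X'$ partitions into $n$ triples each summing to $B'$, so the $n$ paths $Q_1,\dots,Q_n$ (each of order $B'$) can be cut into $3n$ sub paths $Q''_1,\dots,Q''_{3n}$ whose orders are exactly the elements of $X'$. \Cref{lemma:BNb(G)IG} then gives $b(IG(X)) = 2m+1$.

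For the reverse implication, fix an optimal burning sequence $S' = (y_1,\dots,y_{2m+1})$ for $IG(X)$. By \Cref{lemma:EFSOnPath} every $y_i$ lies on $P_I$; by \Cref{lemma:AFSOnr_iIG} we must have $y_i = r_i$ for $1 \le i \le m+1$, so $y_1,\dots,y_{m+1}$ burn exactly the combs $A_{T_1},\dots,A_{T_{m+1}}$ and nothing else (an attempt to burn some $A_{T_j}$ with a misplaced source or with two sources would, by \Cref{observation:overlap} and \Cref{lemma:notopt}, make an optimal burning of $P_I$ impossible). Setting $P'' = IG(X) \setminus (A_{T_1} \cup \cdots \cup A_{T_{m+1}})$, which is the disjoint union of the $Q_i$ (each of order $B'$) and the $Q'_j$ (of orders the elements of $Y = F_m' \setminus X'$), \Cref{lemma:PartPFx1TO31} supplies a partition of $P''$ into $m$ sub paths whose orders are exactly the elements of $F_m'$, each once. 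The remaining step is a combinatorial one: using that these orders are distinct odd numbers summing to $m^2 = |P''.V|$, that the largest order $2m-1 = 2\max(X)-1$ lies in $X'$ (hence cannot land in any $Q'_j$), and the defining size constraints of the distinct $3$-partition instance, one argues that each $Q'_j$ is a single part of the partition and each $Q_i$ is cut into exactly three parts; hence the parts inside $\bigcup_i Q_i$ have orders forming the multiset $F_m' \setminus Y = X'$, split into $n$ triples, the triple of $Q_i$ summing to $|Q_i| = B'$. Translating back through $2a-1 \mapsto a$ turns each such triple into a triple of $X$ summing to $B$, recovering a distinct $3$-partition of $X$.

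Finally, the reduction runs in time polynomial in $m = \max(X)$ since $IG(X)$ has $7m^2 + 6m$ vertices and its edges and interval representation are read off directly from the prescribed sub path lengths; because distinct $3$-partition is NP-complete in the strong sense (\Cref{section:d3pp}), $m$ is polynomially bounded in the input length, so this is a genuine polynomial-time reduction (one may equivalently package it as a pseudo-polynomial reduction and invoke \Cref{lemma:pptr-npc}). I expect the main obstacle to be \Cref{lemma:AFSOnr_iIG} together with the closing combinatorial step of the reverse direction: the former pins the first $m+1$ fire sources onto the comb-tooth vertices $r_i$, and only once that is granted does the residual burning problem on the path forest $P''$ become a faithful copy of the $3$-partition instance, which still requires the careful bookkeeping with the multisets $X'$, $Y$, and $F_m'$ sketched above.
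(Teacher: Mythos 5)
Your proposal follows essentially the same route as the paper: membership in NP via \Cref{lemma:burning-in-NP}, the lower bound $b(IG(X))\geq 2m+1$, and the chain \Cref{lemma:EFSOnPath}, \Cref{lemma:AFSOnr_iIG}, \Cref{lemma:PartPFx1TO31}, \Cref{lemma:BNb(G)IG} to equate optimal burning of $IG(X)$ with a distinct $3$-partition of $X$, closed off by the strong NP-completeness of that problem. The only point of divergence is the final combinatorial step, where you sketch a direct counting argument (distinct odd orders summing to $m^2$, the size constraints forcing three parts per $Q_i$) while the paper uses a fire-source exchange argument to make each $Q_j^\prime$ burned by a single source of cluster size $|Q_j^\prime|$; both fill the same gap at a comparable level of detail.
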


\begin{proof}
    Considering the partition provided in lemma \ref{lemma:PartPFx1TO31}, we claim that there is a partition of $P^{\prime\prime}$ into subpaths of order as per each element in $F_m^\prime$.
    
    On the other hand, let say we have a optimal solution of burning interval graphs. If each of the $Q_1^{\prime\prime},\dots,Q_{3n}^{\prime\prime}$ is burned by a single fire source, then it gives a solution for the distinct 3-partition problem.
    
    We apply the following process subject to each subpath $s$ in $Q_1^\prime,Q_2^\prime,\dots,Q_k^\prime$. Let that some subpath $s$ is burned using multiple fire sources such that the sum of the cluster sizes of these fire sources is exactly same as $|s|$. Now some fire source with cluster size $|s|$ must be present on some other subpath. We can interchange that fire source (whose cluster size is $|s|$) by these fire sources (which are presently burning $s$). This way we can make each subpath $s$ to be burnt by a single fire source whose cluster size is equal to $|s|$. This process takes $O(m)$ time.
    
    Hence we are left with $Q_1,Q_2,\dots,Q_n$ to burn. Therefore, we must part $Q_1,Q_2,\dots,Q_3$ into $Q^{\prime\prime}_1,Q^{\prime\prime}_2,\dots Q^{\prime\prime}_{3n}$ as per the orders in $X^\prime$. Equivalently we have to part $X$ as per the distinct 3-partition problem. Therefore, we have reduced the burning problem of $IG(X)$ to the distinct 3-partition problem in pseudo-polynomial time. Since, the distinct 3-partition problem is NP-Complete in the strong sense, burning $IG(X)$ is also NP-Complete in the strong sense. Therefore, burning interval graphs optimally is also NP-Complete in the strong sense.
\end{proof}

\section{Burning permutation graphs}\label{section:burn-permutation-graphs}

\subsection{Permutation graph construction for NP-completeness}

Let $X$ be an input set to a distinct 3-partition problem; let $n=\frac{|X|}{3}$, $m = \max (X)$, $B = \frac{s(X)}{n}$, and $k=m-3n$. Let $F_m$ be the set of first $m$ numbers, $F_m = \{1,2,3,...,m\}$, and $F^{\prime}_m$ be the set of first $m$ odd numbers, $F^\prime_m = \{2\ f_i-1: f_i \in F_m\} = \{1, 3, 5, . . ., 2m-1\}$. Let $X^\prime = \{2\ a_i-1:a_i \in X\}$, $B^\prime = \frac{s(X^\prime)}{n}$. Observe that $s(X^\prime) = \sum_{i=1}^{3n} 2\ a_i -1 = 2nB-3n$, so $B^\prime = 2B-3$. Let $Y=F^\prime_m\setminus X^\prime$. Let $O$ be the original sequence of numbers $1$ to $s(F^\prime_m)$, $O=(1,2,3,...,m^2)$.

Now, we are going to construct $n+k$ permutations $P_1, P_2,..., P_{n+k}$ in a specific manner such that these will produce path forests of $(n+k)$ disjoint simple paths. Each $P_j$ is a permutation of the numbers $x_j$ to $y_j$ belonging $O$.
Let $t_j=y_j-x_j+1$; then $P_j = \{p^1_j,p^2_j,...,p^{t_j}_j\}$.
We construct each $P_j$ based on the subsequence $(x_j,... y_j) \in O$ and each of the $p^h_j$ where $h \in [1, t_j]$. Below we first provide a formula to calculate $x_j$, $y_j$. We divide the range of $j$ in two parts, $1\leq j\leq n$ and $n+1\leq j\leq n+k$.

We define $y_0=0$. Now $\forall\ 1\leq j\leq n$, $x_j = y_{j-1}+1,$ and $y_j = j \times B^\prime$.
For the remaining values of $j$, i.e., $\forall\ 1\leq j\leq k$, $x_{n+j} = y_{n+j-1}+1$ and  $y_{n+j} = y_{n+j-1} + L^Y_j$, where $L^Y_j$ is the $j^{th}$ largest element of $Y$. See that, $y_{n+k} = y_n + s(F^\prime_m\setminus X^\prime) = nB^\prime + s(F^\prime_m\setminus X^\prime) = s(X^\prime) + s(F^\prime_m\setminus X^\prime) = s(F^\prime_m) = m^2$. Hence, total number of elements in $\bigcup^{n+k}_{j=1} P_j$ is $m^2$. Now we provide formula to find $p^h_j$ for each $j$ and all $h \in (1, t_j)$.\\

\noindent If $t_j$ is even, then $\forall\ 1\leq j \leq n+k$, we define as follows:

$\forall$ odd $i,\ 1 \leq i \leq (t_j-3), p^i_j = 2+(x_j+i-1)$.
The only odd value of $i=t_j-1$ remains and we define it as, $p^{t_j-1}_j = y_j$.

Further, $\forall$ even $i, 4 \leq i \leq t_j, p^i_j = i-2\}$ and for the remaining value of even $i=2$, we define $p^2_j = x_j$.\\

\noindent Else, if $t$ is odd, then $\forall\ 1\leq j \leq n+k$, we define as follows:

$\forall$ odd $i,\ 1 \leq i \leq t_j-2, p^i_j = 2+(x_j+i-1)$. The only odd value of $i=t_j$ remains and we define it as,  $p^{t_j}_j = y_j-1$.

Further, $\forall$ even $i, 4 \leq i \leq t_j-1, p^i_j = (x_j+i-1)-2$ and for the remaining value of even $i=2$, we define $p^2_j = x_j$.\\

We follow the above construction where we have to compute the permutation of an a subsequence of $O$ of length $5$ or above, that is if $y_j-x_j+1=t_j\geq 5$. If otherwise $t_j\leq 4$, we construct the permutation $P_j$ as follows. If $t_j=1$, then $P_j=(x_j)$. If $t_j=2$, then $P_j=(y_j,x_j)$. If $t_j=3$, then $P_j=(y_j,x_j,x_j+1)$. If $t_j=4$, then $P_j=(x_j+1,y_j,x_j,x_j+2)$.

Now, $P$ $=$ $P_1$ $\cup_{s\setminus}$ $P_2$ $\cup_{s\setminus}$ $...$ $\cup_{s\setminus}$ $P_{n+k}$ $=$ $(p^1_1$, $p^2_1$, $...$, $p^{t_1}_1$, $p^1_2$, $p^2_2$, $...$, $p^{t_2}_2$, $...$, $p^1_{n+k}$, $p^2_{n+k}$, $...$, $p^{t_{n+k}}_{n+k})$ $=$ $(p_1$, $p_2$, $p_3$, $...$, $p_{m^2})$ is the subject permutation of $O$.

We call $P(X)$ to be the permutation graph corresponding to the original sequence $O$, and its subject permutation $P$. $\forall\ 1\leq j\leq n+k$ let $Q_j$ be the subgraph in $P(X)$ induced by the permutation $P_j=(p^1_j,p^2_j,...,p^{t_j}_j)$ of the original sequence $(x_j,...,y_j)$. Observe that $P(X) = Q_1\cup Q_2\cup...\cup Q_{n+k}$ is a path forest where the paths $Q_1, Q_2,..., Q_{n+k}$ are disjoint from each other.

The burning number of $P(X)$ is $m$. It follows trivially from the arguments that we have used in the proofs of \Cref{lemma:PartPFx1TO31} and \Cref{theorem:BIGNPCIG} to argue the burning procedure that should be followed to burn the path forest $P^{\prime\prime}$ because $P^{\prime\prime}$ is similar to $P(X)$.

\subsection{Example construction}

Let $X=\{10,11,12,14,15,16\} \implies n=2,\ m = 16,\ B = 39,$ and $k=10$. $F_m=\{1,2,...,16\}$, and $F_m^\prime=\{1,3,...,31\}$. $X^\prime = \{19,21,23,27,29,31\}$, $B^\prime = 75=2B-3$. $Y=$ $\{1$, $3$, $5$, $7$, $9$, $11$, $13$, $15$, $17$, $25\}$.

We finally form paths $Q_1$ and $Q_2$ each of order of $75$. Also, we form paths $Q_3,Q_4,...,Q_{12}$ of order of $25,17,15,13,11,9,7,5,3,1$ respectively. $P(X)$ is a path forest of the paths $Q_1,...,Q_{12}$, which are disjoint from each other. Burning number of $P(X)$ in this case is $m=16$.

The above example is followed from the general construct that we used to reduce burning permutation graph from a distinct 3-partition problem. In this example, we have constructed paths from subsequences (of $O$) of odd length only. For the sake of another example, let the original sequence be $(1,2,3,...,34).$ Let the $x_1=1, y_1=9, x_2=10, y_2=17, x_3=18, y_3=26, x_4=27, y_4=34$. Now the subject permutation of this sequence becomes $(3$, $1$, $5$, $2$, $7$, $4$, $9$, $6$, $8$, $12$, $10$, $14$, $11$, $16$, $13$, $17$, $15$, $20$, $18$, $22$, $19$, $24$, $21$, $26$, $23$, $25$, $29$, $27$, $31$, $28$ ,$33$, $30$, $34$, $32)$. The resultant permutation graph is a path forest of four paths of order of $9,8,9$ and $8$ respectively.
This shows that we can induce a path forest of any shape and size (containing paths of both even and odd lengths) from a permutation of an original sequence.

\subsection{NP-Completeness}

The path forest induced by $Q_1,Q_2,\dots,Q_{n+k}$ is exactly same as the path forest $P^{\prime\prime}$ of $IG(X)$ that we constructed in \Cref{section:burn-interval-graphs}. So here also $G$ can be burnt optimally only if $Q_1, Q_2, ..., Q_{n+k}$ can be broken into paths of length in $F^\prime_m$ which can happen iff $Q_1,\dots,Q_n$ can be broken into the subpaths $W_1,W_2,\dots,W_{3n}$ of lengths in $X^\prime$ as per the distinct 3-partition problem. Therefore, by the arguments similar to those in the proofs of \Cref{lemma:PartPFx1TO31} and \Cref{theorem:BIGNPCIG}, we have \Cref{theorem:BPGNPC} as follows.

\begin{theorem}\label{theorem:BPGNPC}
    Burning of general permutation graphs optmally is NP-Complete.
\end{theorem}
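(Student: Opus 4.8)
\textbf{Proof proposal for \Cref{theorem:BPGNPC}.}

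The plan is to mirror the structure of the interval-graph argument (\Cref{theorem:BIGNPCIG}) essentially verbatim, since the permutation graph $P(X)$ was deliberately constructed so that its underlying graph is the path forest $Q_1 \cup Q_2 \cup \dots \cup Q_{n+k}$, which is exactly the residual path forest $P^{\prime\prime}$ that arose in the interval-graph construction. First I would record the two easy halves of the equivalence: membership in NP and the construction being a valid polynomial-time (indeed pseudo-polynomial-time, in the sense of \Cref{lemma:pptr-npc}) reduction. Membership of graph burning in NP is \Cref{lemma:burning-in-NP}. For the construction, I would check that given a distinct 3-partition instance $X$, the permutation $P$ of $O = (1,2,\dots,m^2)$ described in the construction subsection is computable in time polynomial in $m$ (hence pseudo-polynomial in the magnitude of $X$, polynomial in $|X|$ since the 3-partition problem is strongly NP-complete so $m$ is polynomially bounded), and that the resulting permutation graph $P(X)$ is precisely the disjoint union of the paths $Q_1,\dots,Q_{n+k}$ of the stated orders — this is the routine verification that each block permutation $P_j$ (in both the $t_j$ even and $t_j$ odd cases, plus the small cases $t_j \le 4$) induces a simple path on its index set and no edges cross between distinct blocks.

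The core of the proof is the biconditional: $b(P(X)) = m$ and an optimal burning sequence exists \emph{iff} $X$ admits a distinct 3-partition. For the forward direction, if $X$ partitions into triples summing to $B$, then $X^\prime = \{2a_i - 1\}$ partitions into triples summing to $B^\prime = 2B-3$, so each path $Q_j$ of order $B^\prime$ ($1 \le j \le n$) can be cut into three subpaths of orders in $X^\prime$; together with the paths $Q_{n+1},\dots,Q_{n+k}$ of orders equal to the elements of $Y = F_m^\prime \setminus X^\prime$, this yields a partition of $P(X)$ into exactly $m$ paths whose orders are precisely the $m$ odd numbers $1,3,5,\dots,2m-1$ of $F_m^\prime$. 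Placing, for the $i$-th largest such subpath, a fire source on its $(2m - i + 1)$-th vertex burns that subpath (of order $\le 2(m-i)+1$) within the remaining $m - i$ rounds, exactly as in \Cref{algorithm:burn-path-finite} and the argument of \Cref{lemma:PartPFx1TO31}; so $S = (y_1,\dots,y_m)$ burns $P(X)$ and $b(P(X)) \le m$. Since $P(X)$ has $m^2$ vertices and is a disjoint union of paths, each fire source $y_i$ can burn at most $2(m-i)+1$ of them, and $\sum_{i=1}^m (2(m-i)+1) = m^2$, so no shorter sequence works and $b(P(X)) = m$ with the partition forced to be tight.

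For the reverse direction, suppose $P(X)$ is burned by an optimal sequence of length $m$. Since $P(X) = Q_1 \cup \dots \cup Q_{n+k}$ is a path forest of order $m^2$, the tightness of $\sum_{i=1}^m (2(m-i)+1) = m^2$ forces every fire source $y_i$ to have a burning cluster that is a subpath of order \emph{exactly} $2(m-i)+1$ and these clusters to be pairwise disjoint and to cover everything — this is the analogue of \Cref{lemma:PartPFx1TO31}, and it shows the burning sequence induces a partition of $P(X)$ into subpaths whose orders are exactly the elements of $F_m^\prime$. Now I would run the swapping argument of \Cref{theorem:BIGNPCIG}: for each $Q_{n+j}$ (order an element of $Y$) that happens to be cut into several clusters, there is a cluster of that size sitting inside some $Q_\ell$ with $\ell \le n$, and we may exchange them, so without loss of generality each $Q_{n+j}$ is burned by a single fire source whose cluster equals it. What remains is a partition of $Q_1 \cup \dots \cup Q_n$ — each of order $B^\prime$ — into subpaths whose orders are exactly the elements of $X^\prime = \{2a_i-1\}$; since each $a_i$ satisfies $B/4 < a_i < B/2$, each block $Q_j$ must receive exactly three of these orders, and recovering $a_i = (|W| + 1)/2$ from each subpath $W$ gives a distinct 3-partition of $X$. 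Combining both directions with the NP-membership and the polynomial-reduction checks completes the proof; as in \Cref{theorem:BIGNPCIG}, because the 3-partition problem is NP-complete in the strong sense and the reduction is pseudo-polynomial, burning permutation graphs is NP-complete in the strong sense.

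The main obstacle is not conceptual but bookkeeping: I expect the fiddly part to be verifying cleanly that the explicitly-defined permutation $P_j$ really does induce a simple path (checking the inversion pattern in the $t_j$ even versus odd cases, and the exceptional indices $i = 2$ and $i = t_j - 1$ or $i = t_j$), together with confirming that no inversions occur between elements lying in different blocks $P_j$ and $P_{j'}$ — this last point is what guarantees $P(X)$ is a \emph{disjoint} union and lets the entire interval-graph machinery be imported wholesale.
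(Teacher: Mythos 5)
Your proposal is correct and takes essentially the same route as the paper, whose own ``proof'' of \Cref{theorem:BPGNPC} consists of exactly this observation --- that $P(X)$ is the residual path forest $P^{\prime\prime}$ from the interval-graph construction --- followed by a citation to the arguments of \Cref{lemma:PartPFx1TO31} and \Cref{theorem:BIGNPCIG}; your write-up is in fact more explicit than the paper's, particularly in flagging the (unproved in the paper) verification that each block permutation induces a simple path with no cross-block inversions. One small indexing slip: the fire source for the $i$-th largest subpath, which has order $2(m-i)+1$, should be placed on its middle, i.e.\ $(m-i+1)$-th, vertex rather than the $(2m-i+1)$-th.
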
\qed

\section{Burning disk graphs}\label{section:burn-disk-graphs}

\subsection{Disk graph construction for NP-completeness}

Let $X$ be an input set to a distinct 3-partition problem; let $n=\frac{|X|}{3}$, $m = \max (X)$, $B = \frac{s(X)}{n}$, and $k=m-3n$. Let $p=m-1$.

Let $C^\prime$ be a circle of radius $R^\prime$. Let there be a set $Cir$ of $q$ disks $\{c_1, c_2, ..., c_q\}$ of unit radius placed around $C^\prime$ such that their circumference touches circumference of $C^\prime$, but they do not overlap with each other, or with $C^\prime$. The maximum value that $q$ can take is limited. As an example, we can put a maximum of 6 unit radius disks around a disk of unit radius. As the radius $R^\prime$ of the central disk tends to $\infty$, the amount of unit radius disks that we can put tends to $R^\prime\times\pi$ \cite{Marathe1995}. In our construction, the value of $R^\prime$ is chosen such that we can put $q\geq 2(p+2)$ (and $q\leq 3p$ to ensure that the setting of disks that we construct can be constructed in time polynomial to the underlying distinct 3-partition instance) disks of unit radius around $C^\prime$.

We give the definition of \textit{disk-chain} (of a certain size) below.
\begin{definition}\label{definition:DCSizeKDG}{Disk-chain of size} {$k$}. A disk-chain of size $k$ is a sequence of disks $Ch_x = (c_x^1, c_x^2, c_x^3, ..., c_x^k)$ such that $c_x^1$ overlaps only with $c_x^2$, $c_x^k$ overlaps only with $c_x^{k-1}$, and $\forall\ 2\leq j\leq k-1,\ c_x^j$ overlaps only with $c_x^{j-1}$ and $c_x^{j+1}$.
\end{definition}

$1\leq i\leq q$, let that a disk chain of size $p$, $Ch_i=\{c_i^1,c_i^2,...,c_i^p\}$, is attached to each circle $c_i$ such that, apart from the overlaps that give it a chain structure, $c_i^1$ overlaps with $c_i$ and $c_i^2$ only. 
Let $Ch$ be the set of all these $q$ chains, $Ch = \{Ch_i\}_{i=1}^{q}$.

Let there be a disk $C$ of radius $R: R^\prime < R \leq R^\prime + 0.5$ is positioned with its its center exactly at the center of $C^\prime$ defined above. Observe that all the disks in $Cir$ now overlap with $C$. Now consider the corresponding disk graph. Let the vertex corresponding to $C$ be called head $h$, vertices corresponding to $c_i$ be called $v_i$, and the vertices corresponding to $c_i^j$ be called $v_i^j$, $\forall\ 1 \leq i \leq q$, and $\forall\ 1 \leq j \leq p$.

We shall call this setting of disks $DK(R,r,q,p,C,Cir,Ch)$. This setting of disks correspond to the graph $SP(q,p+1)$. Now we are going to extend $DK(R,r,q,p,C,Cir,Ch)$ by adding more disks to it; in fact, we are going to add chains at the terminus of the chains that are already present, which we can do very easily. Before that, let us define what do we mean by attaching disk chain behind another disk chain.

\begin{definition}\label{AttachDCDG}
\textbf{Attaching a disk-chain behind another}. If disk-chain $C_1 = (c_1^1, c_1^2, c_1^3, ..., c_1^{k_1})$ is attached behind another chain $C_2 = (c_2^1, c_2^2, c_2^3, ..., c_2^{k_2})$, a new chain is formed $Ch_{FIN}$ $=$ $c_2^1$, $c_2^2$, $c_2^3$, $\dots$, $c_2^{k_2}$, $c_1^1$, $c_1^2$, $c_1^3$, $\dots$, $c_1^{k_1}$. Clearly, this attachment is done in such a manner that $c_1^1$ overlaps only with $c_2^{k_2}$ and $c_1^2$.
\end{definition}

Let $F_m$ be the set of first $m$ numbers, $F = \{1,2,3,...,m\}$, and $F^{\prime}_m$ be the set of first $m$ odd numbers, $F^\prime_m = \{2\ f_i-1: f_i \in F_m\} = \{1, 3, 5, . . ., 2m-1\}$. Let $X^\prime = \{2\ a_i-1:a_i \in X\}$, $B^\prime = \frac{s(X^\prime)}{n}$. Observe that $s(X^\prime) = \sum_{i=1}^{3n} 2\ a_i -1 = 2nB-3n$, so $B^\prime = 2B-3$. Let $Y=F^\prime_m\setminus X^\prime$.

Let there be $n$ disk-chains $Q_1,Q_2...,Q_n$, each of size $B^\prime$. $\forall\ 1\leq j\leq k$, let $Q_{n+j}$ be a disk chain of size $L^Y_i$, where $L^Y_i$ is the $i^{th}$ largest element of $Y$.

We now attach disk-chains $Q_1,Q_2...,Q_{n+k}$ behind $Ch_1$, $Ch_2$, $\dots$, $Ch_{n+k}$ respectively. In the corresponding disk graph, let $P_i^\prime$ be the path induced by the disk chain $Q_i$, $\forall\ 1\leq i\leq n+k$.

Let the corresponding disk graph of this updated setting of disks be called $DK(X)$.

\subsection{Example construction}

Let $X=\{10,11,12,14,15,16\} \implies n=2,\ m = 16,\ B = 39,$ and $k=10$, $p=15$.

Let $R^\prime=10$, then we can attach $q=34$ chains each of length $p$. Take $R=10.5$.

$F_m=\{1,2,...,16\}$, and $F_m^\prime=\{1,3,...,31\}$. $X^\prime =$ $\{19$, $21$, $23$, $27$, $29$, $31\}$, $B^\prime = 75=2B-3$. $Y=\{1,3,5,7,9,11,13,15,17,25\}$.

We finally obtain paths $P^\prime_1$ and $P^\prime_2$ each of order of $75$. Also, we form paths $P^\prime_3,P^\prime_4,...,P^\prime_{12}$ of order of $25,17,15,13,11,9,7,5,3,1$ respectively..

The central spider graph formed is $SP(34,16)$, and $P^\prime_1,P^\prime_2,...,P^\prime_{12}$ are attached to $v_1^{15},v_2^{15},...,v_{12}^{15}$ respectively at vertices on one of their ends.

Construction of this example $DK(X)$ is demonstrated in \Cref{figure:EDKXDG}. Burning number of $DK(X)$ in this case is $m+1=17$.

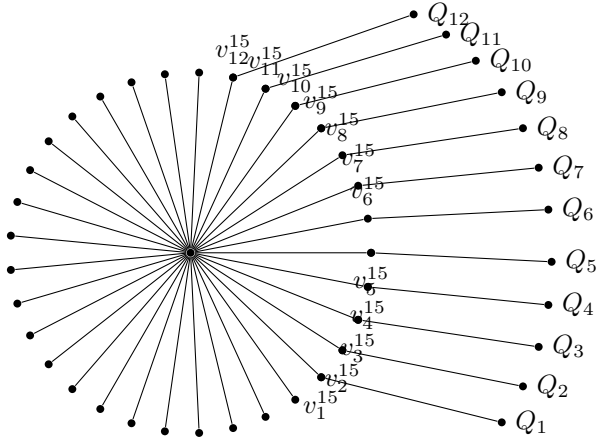
\begin{figure}
    \begin{minipage}{1\textwidth}
        \centering
        \begin{tikzpicture}[scale=1.2]

            \setcounter{n}{34}
            \setcounter{r}{2}
            \setcounter{d}{6}

            \node [circle, fill=black, inner sep=0pt, minimum size=3pt] (A) at (0,0) {};
            \setcounter{c}{1}
            \loop

                \ifnum \value{c} > 3
                    \node [circle, fill=black, inner sep=0pt, minimum size=3pt] (B) at ({\value{r}*cos(\value{c}*360/(\value{n}-1))},{\value{r}*sin(\value{c}*360/(\value{n}-1))}) {};
                \fi

                \ifnum \value{c} < 8

                    \node [circle, fill=black, inner sep=0pt, minimum size=3pt, label=above:{$v_{\thed}^{15}$}] (B) at ({\value{r}*cos(\value{c}*360/(\value{n}-1))},{\value{r}*sin(\value{c}*360/(\value{n}-1))}) {};

                    \node [circle, fill=black, inner sep=0pt, minimum size=3pt, label=right:{$Q_{\thed}$}] (C) at ({\value{r}*cos(\value{c}*360/(\value{n}-1))+2},{\value{r}*sin(\value{c}*360/(\value{n}-1))+.1*\value{c}}) {};

                    \draw (B) -- (C);
                \fi

                \ifnum \value{c} > 28

                    \node [circle, fill=black, inner sep=0pt, minimum size=3pt, label=below:{$v_{\thed}^{15}$}] (B) at ({\value{r}*cos(\value{c}*360/(\value{n}-1))},{\value{r}*sin(\value{c}*360/(\value{n}-1))}) {};

                    \node [circle, fill=black, inner sep=0pt, minimum size=3pt, label=right:{$Q_{\thed}$}] (C) at ({\value{r}*cos(\value{c}*360/(\value{n}-1))+2},{\value{r}*sin(\value{c}*360/(\value{n}-1))-.1*(34-\value{c})}) {};

                    \draw (B) -- (C);
                \fi

                \draw (A) -- (B);

                \stepcounter{c}
                \stepcounter{d}

                \ifnum \value{d}=34
                    \setcounter{d}{1}
                \fi

                \ifnum \value{c} < \value{n}
                    \repeat

        \end{tikzpicture}
    \end{minipage}
    \caption{Construction of example $DK(X)$.}
    \label{figure:EDKXDG}
\end{figure}

\subsection{NP-Completeness}

Observe that $G$ can be burnt optimally only if $P_1^\prime, P_2^\prime, ..., P_{n+k}^\prime$ can be broken into paths of length in $F^\prime_m$ iff $P_1^\prime,\dots,P_n^\prime$ can be broken into the subpaths $W_1,W_2,\dots,W_{3n}$ of lengths in $X^\prime$ as per the distinct 3-partition problem.
Let the final set of subpaths that we desire be $P_P=\{W_1,W_2,\dots,W_{3n},P_{n+1},\dots,P_{n+k}\}$. We have that $|P|=m$. Let $r_i$ ($\forall\ 2\leq i\leq m+1$) be the middle vertex on the $(i-1)^{th}$ largest subpath in $P_P$. Let $r_1$ be the head vertex $h$ of the spider graph $SP(q,p+1)$ induced by $DK(R,r,q,p,C,Cir,Ch)$. $r_1$ will be able to burn $SP(q,p+1)$ because $p=m-1$. So we have that the graph can be burned by the burning sequence $S^\prime=(r_1,r_2,\dots,r_{m+1})$ of length $m+1$. Hence we have \Cref{lemma:b(DK(X))-leq-m+1} as follows.

\begin{lemma}\label{lemma:b(DK(X))-leq-m+1}
    $b(DK(X))\leq m+1$.
\end{lemma}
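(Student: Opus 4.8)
The plan is to exhibit an explicit burning sequence of length $m+1$ for $DK(X)$ and verify, via \Cref{equation:burn-verify}, that it burns every vertex. The graph $DK(X)$ decomposes into the central spider $SP(q,p+1)$ induced by $DK(R,r,q,p,C,Cir,Ch)$ together with the disk-chains $Q_1,\dots,Q_{n+k}$ attached behind $Ch_1,\dots,Ch_{n+k}$; in the disk graph these attached chains become the paths $P_1^\prime,\dots,P_{n+k}^\prime$. First I would invoke the hypothesis of this subsection's construction (the same partition hypothesis used for interval graphs in \Cref{lemma:BNb(G)IG}): since $Q_1,\dots,Q_n$ each have order $B^\prime$ and $\frac{B^\prime}{4}<a_i^\prime<\frac{B^\prime}{2}$ for the shifted values, and assuming $X^\prime$ admits a distinct $3$-partition, we may cut $P_1^\prime,\dots,P_n^\prime$ into $3n$ subpaths whose orders are exactly the elements of $X^\prime$; together with $P_{n+1}^\prime,\dots,P_{n+k}^\prime$ (orders = the elements of $Y=F_m^\prime\setminus X^\prime$) this yields the collection $P_P=\{W_1,\dots,W_{3n},P_{n+1}^\prime,\dots,P_{n+k}^\prime\}$ of $m$ paths whose orders are precisely $F_m^\prime=\{1,3,\dots,2m-1\}$.

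Next I would order these $m$ paths by decreasing order, and for the $j$-th largest one (with order $2(m-j)+1$) let $r_{j+1}$ be its centre vertex, for $1\le j\le m$. Set $r_1=h$, the spider head. The claim is that $S^\prime=(r_1,r_2,\dots,r_{m+1})$ burns $DK(X)$ in $m+1$ steps. For the head: since $p=m-1$, every vertex of $SP(q,p+1)$ lies within distance $p+1=m$ of $h$, so $DK(X).N_{m}[r_1]\supseteq V(SP(q,p+1))$, and this is exactly the radius bound $G.N_{(m+1)-1}[r_1]$ available to a first fire source in an $(m+1)$-step burning (cf.\ \Cref{theorem:burn-connected-graph-radius-r}). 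For the remaining sources: $r_{j+1}$ is placed at step $j+1$, so it has $m-j$ further rounds, and $DK(X).N_{m-j}[r_{j+1}]$ contains the whole path of order $2(m-j)+1$ centred at $r_{j+1}$. Taking the union over $j=1,\dots,m$ covers all of $P_P$, hence all vertices of the attached chains; together with the head's cluster covering the spider, \Cref{equation:burn-verify} is satisfied. Finally, one checks no fire source is placed on an already-burnt vertex — this is immediate because the clusters of $r_2,\dots,r_{m+1}$ lie on the attached paths, which are pairwise disjoint and (except at their chain-ends) disjoint from the spider, so each $r_{j+1}$ is unburnt when chosen.

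Therefore a valid burning sequence of length $m+1$ exists, giving $b(DK(X))\le m+1$.

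I expect the only genuinely delicate point to be the bookkeeping that the centre vertex of a path of order $2(m-j)+1$ reaches both endpoints within $m-j$ rounds (true since $\lfloor (2(m-j)+1)/2\rfloor = m-j$), and the observation that the attachment points where the $P_i^\prime$ meet the spider arms are already accounted for — i.e.\ that splitting $P_i^\prime$ into subpaths and having $r_1$ cover the spider does not leave a gap at the junction. Both are routine once the distances are written out, so this lemma is essentially the "easy half" (the matching lower bound and the reduction to distinct $3$-partition are handled separately, analogously to \Cref{lemma:PartPFx1TO31} and \Cref{theorem:BIGNPCIG}).
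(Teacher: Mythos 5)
Your proposal is correct and matches the paper's argument essentially verbatim: the paper also places $r_1$ on the spider head $h$ (which covers $SP(q,p+1)$ in $m$ further rounds since $p=m-1$) and $r_i$ ($2\leq i\leq m+1$) on the middle vertex of the $(i-1)^{th}$ largest subpath in $P_P$, relying on the same partition into paths of orders $F^\prime_m$. The extra bookkeeping you flag (centre-to-endpoint distances, no source landing on a burnt vertex) is left implicit in the paper but is the same routine verification.
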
\qed

In \Cref{lemma:first-fire-source-BDG}, we show that the first fire source must be placed at $r_0$, which is stated as follows.

\begin{lemma}\label{lemma:first-fire-source-BDG}
    The first fire source must be placed at $r_0$ (the head vertex $h$).
\end{lemma}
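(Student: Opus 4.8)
The plan is to run the same leaf-counting argument that drives \Cref{lemma:b(SP)} for spider graphs, exploiting the fact that the central spider $SP(q,p+1)\subseteq DK(X)$ has $q\geq 2(p+2)=2(m+1)$ arms — far more than the $m+1$ fire sources available in an optimal burning. I would begin by fixing a burning sequence $S^\prime=(y_1,y_2,\dots,y_{m+1})$ that burns $DK(X)$ in $b(DK(X))=m+1$ steps; such a sequence exists by \Cref{lemma:b(DK(X))-leq-m+1}, and the matching lower bound $b(DK(X))\geq m+1$ comes from observing that $SP(q,p+1)$ sits inside $DK(X)$ as an isometric subtree (attaching disk-chains only at the arm ends creates no shortcuts between spider vertices, and $DK(X)$ is a tree) and invoking \Cref{theorem:burn-isomorphic-subtree} together with $b(SP(q,p+1))=p+2=m+1$ from \Cref{lemma:b(SP)}.

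Next I would single out the $q$ arm endpoints $\ell_1,\dots,\ell_q$ of the spider: for a chainless arm $i$ this is $v_i^p$, and for an arm carrying a chain $Q_i$ it is the far end of $Q_i$. Every $\ell_i$ lies at distance at least $p+1=m$ from the head $h$, and any two distinct endpoints lie at distance at least $2m$ apart, since the geodesic between them passes through $h$. A fire source placed at step $j$ of an $(m+1)$-step sequence can burn only the ball $G.N_{m+1-j}[y_j]$. For $j\geq 2$ this ball has radius at most $m-1$, so — since two endpoints are at least $2m$ apart and $2(m-1)<2m$ — it contains at most one $\ell_i$. For $j=1$ the ball has radius $m$: if $y_1=h$ it contains all $q$ endpoints, whereas if $y_1\neq h$ a triangle-inequality argument (below) shows it still contains at most one $\ell_i$. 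Putting these together, if $y_1\neq h$ then at most $1+m<2m+2\leq q$ of the endpoints are ever burned, contradicting that $S^\prime$ burns $DK(X)$; hence $y_1=h=r_0$.

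The step I expect to be the main obstacle is the case $j=1$ with $y_1\neq h$: suppose $y_1$ were within distance $m$ of two distinct endpoints $\ell_a,\ell_b$. Then $2m\leq d(\ell_a,\ell_b)\leq d(\ell_a,y_1)+d(y_1,\ell_b)\leq 2m$, forcing equality throughout, so $d(\ell_a,\ell_b)=2m$ (which already forces both arms $a,b$ to be chainless, since an endpoint on a chain-bearing arm is strictly farther than $m$ from $h$) and $d(\ell_a,y_1)=d(y_1,\ell_b)=m$. As $DK(X)$ is a tree, the geodesic from $\ell_a$ to $\ell_b$ is the unique path $v_a^p-\cdots-v_a-h-v_b-\cdots-v_b^p$, whose unique vertex at distance $m$ from $\ell_a$ is $h$; thus $y_1=h$, a contradiction. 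The remaining bookkeeping — confirming $q\geq 2(p+2)>m+1$ and the distance claims about chain-bearing arms — is routine and follows directly from the construction set up in \Cref{section:burn-disk-graphs}.
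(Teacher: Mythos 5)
Your proof is correct and follows essentially the same route as the paper: count the $q\geq 2(p+2)=2(m+1)$ arm tips, note that any two are at distance at least $2m$ so each fire source burns at most one of them unless it sits at the head, and conclude that $m+1$ sources cannot cover them all when $y_1\neq h$ (your explicit triangle-inequality treatment of the radius-$m$ ball is in fact more careful than the paper's, which simply asserts that a non-head first source misses $q-1$ leaves). The only blemish is the side remark that the radius-$m$ ball about $h$ contains all $q$ endpoints --- false for your chain-bearing arms, whose tips lie at distance $m+|Q_i|>m$ from $h$ --- but that clause is never used in the contradiction, so the argument stands.
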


\begin{proof}
    Let for contradiction that the first fire source is not placed at $h$, an we can still burn $DK(X)$ in $m+1$ time steps. From here, we have that $r_0$ will not be able to burn at least $q-1$ leaf nodes of $SP(q,p+1)$ induced by $DK(R,r,q,p,C,Cir,Ch)$. According to our construction, we only have $n+k<m=p+1$ subpaths ($Q_i$'s) attached to $SP(q,p+1)$. So we have that more than $p+1$ unburned leaf nodes are not attached to any subpaths.
    
    Any fire source $r_2,...,r_{m+1}$ will not be able to burn more than one leaf nodes of $SP(q,p+1)$ because distance between any two leaf nodes is $2(p+1)=2m$. So even if we ignore those unburned leaf nodes which are attached to some subpath ($Q_i$), we have that all the fire sources $r_2,...,r_{m+1}$ together will not be able to burn $SP(q,p+1)$. This is a contradiction to our assumption because $DK(X)$ will not be burned in $m+1$ time steps.
\end{proof}

Now we have that the path forest induced by $P_1^\prime,P_2^\prime,\dots,P_{n+k}^\prime$ which is exactly same as the path forest $P^{\prime\prime}$ of $IG(X)$ that we constructed in \Cref{section:burn-interval-graphs}. So it follows trivially by the arguments similar to those in the proofs of \Cref{lemma:PartPFx1TO31} and \Cref{theorem:BIGNPCIG} that (1) $b(DK(X))=m+1$, and (2) the optimal burning of $DK(X)$ is NP-Complete. Therefore, we have \Cref{lemma:b(DK(X))=m+1} and \Cref{theorem:BDGNPC} as follows.

\begin{lemma}\label{lemma:b(DK(X))=m+1}
    $b(DK(X))= m+1$.
\end{lemma}\qed

\begin{theorem}\label{theorem:BDGNPC}
    Burning disk graphs optimally is NP-Complete even if the underlying disk representation is given.
\end{theorem}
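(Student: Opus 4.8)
The plan is to establish NP-completeness in two halves: membership in NP, and NP-hardness via a pseudo-polynomial reduction from the distinct 3-partition problem. Membership follows immediately from \Cref{lemma:burning-in-NP}, since a burning sequence of length $m+1$ for $DK(X)$ is a polynomial-size certificate verifiable by \Cref{algorithm:burn-verify}. For hardness I would take the construction $DK(X)$ already defined from an instance $X$ of distinct 3-partition, together with its explicit disk realization $DK(R,r,q,p,C,Cir,Ch)$ extended by the chains $Q_1,\dots,Q_{n+k}$; the key claim to prove is that $DK(X)$ can be burned in $m+1$ time steps if and only if $X$ is a yes-instance of distinct 3-partition.

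For the ``if'' direction I would invoke \Cref{lemma:b(DK(X))-leq-m+1}: when $X$ admits a partition into triples summing to $B$, the corresponding partition of $Q_1,\dots,Q_n$ into subchains of sizes in $X^\prime$ yields (together with $Q_{n+1},\dots,Q_{n+k}$) a family $P_P$ of $m$ subpaths whose orders are exactly the elements of $F^\prime_m$; placing the first fire source at the head $r_1=h$ (which burns the central spider $SP(q,p+1)$ within $p+1=m$ rounds since $p=m-1$) and the $i$-th fire source at the middle vertex of the $(i-1)$-st largest subpath in $P_P$ gives a valid burning sequence of length $m+1$. For the ``only if'' direction I would chain together the structural lemmas: \Cref{lemma:first-fire-source-BDG} forces the first fire source onto $h$; the remaining $m$ fire sources must then burn the path forest induced by $P_1^\prime,\dots,P_{n+k}^\prime$, which is identical to the path forest $P^{\prime\prime}$ of \Cref{section:burn-interval-graphs}, so the counting argument of \Cref{lemma:PartPFx1TO31} (plus an \Cref{observation:overlap}-style argument showing that a source straddling two chains, or placed off the spider, cannot compensate) forces each such source to cover a path of order exactly $2(m+1-i)+1$, i.e. the forest decomposes into paths with orders in $F^\prime_m$. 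Then, exactly as in the proof of \Cref{theorem:BIGNPCIG}, an exchange argument moves fire sources so that each short chain $Q_{n+1},\dots,Q_{n+k}$ is burned by a single source, leaving $Q_1,\dots,Q_n$ to be partitioned into $3n$ subchains of orders in $X^\prime$ — which is precisely a solution to distinct 3-partition on $X$.

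Finally I would record that the reduction runs in time polynomial in $n$ and $m=\max(X)$ — the constraint $q\le 3p$ was imposed exactly so the disk arrangement has size polynomial in the underlying instance — so this is a pseudo-polynomial reduction in the sense of \Cref{lemma:pptr-npc}; since distinct 3-partition is NP-complete in the strong sense, burning $DK(X)$, and hence burning disk graphs, is NP-complete in the strong sense, and because $DK(X)$ is produced together with an explicit realization by disks, the hardness persists even when the disk representation is part of the input. The step I expect to be the main obstacle is the ``only if'' direction: fully justifying that no alternative burning sequence beats $m+1$ steps, which requires carrying out the overlap and exchange arguments adapted from the interval-graph case to the chain-plus-spider structure in complete detail rather than merely citing \Cref{theorem:BIGNPCIG}.
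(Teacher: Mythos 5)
Your proposal follows essentially the same route as the paper: the construction $DK(X)$, \Cref{lemma:b(DK(X))-leq-m+1} and \Cref{lemma:first-fire-source-BDG} for the two directions, reuse of the path-forest arguments from \Cref{lemma:PartPFx1TO31} and \Cref{theorem:BIGNPCIG}, and the strong-sense conclusion via the pseudo-polynomial reduction framework. The only difference is that you propose to spell out the overlap and exchange arguments for the chain-plus-spider structure, whereas the paper simply cites the interval-graph case as "following trivially."
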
\qed

\section{Corollary NP-Hard results}

Although the following NP-Completeness have been shown in \cite{Bessy2017}, our constructions in \Cref{section:burn-interval-graphs} imply the corollaries in \Cref{subsection:burn-trees}, \Cref{subsection:burn-chordal-graphs}, \Cref{subsection:burn-planar-graphs}, \Cref{subsection:burn-bipartite-graphs},the constructions in \Cref{section:burn-permutation-graphs} imply the corollary in \Cref{subsection:burn-path-forests}, the constructions in \Cref{section:burn-disk-graphs} imply the corollary in \Cref{subsection:burn-spider-graphs}, and the constructions in \Cref{section:burn-interval-graphs} and \Cref{section:burn-permutation-graphs} together imply the corollary in \Cref{subsection:burn-forests}. Finally, since we have already discussed in \Cref{section:burning-verify} that verification of the correctness of a burning sequence can be done in polynomial time, we have that the graph burning problem is NP-Complete which we state formally in \Cref{subsection:burn-general-graphs-NPC} as \Cref{corollary:burn-general-graphs-NPC}.

\subsection{Burning trees with maximum degree 3}\label{subsection:burn-trees}

\begin{corollary}
    The optimal burning of trees with maximum degree $3$ is NP-Complete.
\end{corollary}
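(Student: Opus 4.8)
The plan is to observe that the interval graph $IG(X)$ constructed in \Cref{section:burn-interval-graphs} is already a tree (as noted during its construction: it contains no cycles), and then to verify that it has maximum degree $3$. Recall the structure of $IG(X)$: it is the path $P_I$ obtained by concatenating the subpaths $Q_1,T_1,Q_2,T_2,\dots$, with one pendant vertex $u^h_j$ attached to each internal vertex (from the $2$nd to the $2$nd-last) of every $T_j$. A vertex of $IG(X)$ is therefore either (a) a vertex of some $Q_i$ or an endpoint of some $T_j$, which lies only on $P_I$ and has degree at most $2$; (b) an internal vertex of some $T_j$, which has its two neighbours along $P_I$ plus exactly one pendant neighbour $u^h_j$, giving degree $3$; or (c) a pendant vertex $u^h_j$, which has degree $1$. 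Hence $\Delta(IG(X)) = 3$.

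First I would state explicitly that $IG(X)$ is a tree of maximum degree $3$, citing the construction in \Cref{subsection:ExConsIG} and the degree case-analysis above. Next I would invoke \Cref{theorem:BIGNPCIG} together with its proof: the reduction from the distinct $3$-partition problem produces precisely the graph $IG(X)$, and the argument that $b(IG(X)) = 2m+1$ exactly when $X$ admits a valid $3$-partition does not use any property of $IG(X)$ beyond the combinatorics of burning that tree. Since distinct $3$-partition is NP-Complete in the strong sense (by \cite{Hulett2008}, as recalled in \Cref{section:d3pp}), and the reduction $X \mapsto (IG(X), 2m+1)$ is a pseudo-polynomial time transformation landing in the class of trees with maximum degree $3$, the decision problem ``does this tree of maximum degree $3$ have burning number $\le k$'' is NP-Hard. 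Finally, by \Cref{lemma:burning-in-NP} the burning problem is in NP, so it is NP-Complete on this graph class.

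The one point that needs a sentence of care is that Theorem~\ref{theorem:BIGNPCIG} is phrased for interval graphs, so I would emphasize that the \emph{same} constructed instances witness hardness for the subclass of degree-$3$ trees — i.e., the reduction target is simultaneously an interval graph, a tree, and degree-bounded — rather than re-deriving anything. I expect no genuine obstacle here; the only thing to get right is the bookkeeping that the pendant vertices raise the degree of the comb-teeth attachment points from $2$ to exactly $3$ and never higher, and that the $Q_i$ and $Q_i'$ subpaths (which receive no pendants) contribute only degree-$\le 2$ vertices. So the corollary follows immediately as a specialization of \Cref{theorem:BIGNPCIG}.
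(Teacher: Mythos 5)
Your proposal is correct and follows exactly the route the paper intends: the paper derives this corollary from the fact that the constructed instance $IG(X)$ of \Cref{section:burn-interval-graphs} is simultaneously an interval graph and a tree of maximum degree $3$ (a path with pendant vertices), so the same reduction from distinct $3$-partition witnesses hardness for this class. Your explicit degree case-analysis just spells out what the paper leaves implicit, and your bookkeeping (pendants raise the comb-teeth vertices to degree exactly $3$, all other vertices have degree at most $2$) is accurate.
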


\subsection{Burning chordal graphs}\label{subsection:burn-chordal-graphs}

\begin{corollary}
    The optimal burning of chordal graphs is NP-Complete.
\end{corollary}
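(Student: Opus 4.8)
The plan is to piggyback entirely on \Cref{section:burn-interval-graphs}, using only the containment ``interval graphs $\subseteq$ chordal graphs''. Recall that a graph is \emph{chordal} if every cycle of length at least four has a chord. Every interval graph is chordal; moreover, the graph $IG(X)$ built in \Cref{section:burn-interval-graphs} was shown there to be a tree (it contains no cycles), and every tree is vacuously chordal. So for every distinct 3-partition instance $X$, the instance $IG(X)$ is already a chordal graph, and no new gadget is needed.

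First I would record membership in NP: by \Cref{lemma:burning-in-NP} the graph burning problem is in NP for arbitrary graphs, and the verifier of \Cref{algorithm:burn-verify} is oblivious to the graph class, so the restriction of burning to chordal inputs is in NP. Then I would invoke the reduction already established in the proof of \Cref{theorem:BIGNPCIG}: the map $X \mapsto (IG(X),\, 2m+1)$ is a pseudo-polynomial time reduction from the distinct 3-partition problem (NP-Complete in the strong sense, see \Cref{section:d3pp}) to the decision version of graph burning, and by the previous paragraph its entire image lies inside the chordal graphs. Applying \Cref{lemma:pptr-npc} with $B$ the distinct 3-partition problem and $B'$ the burning problem on chordal graphs yields that burning chordal graphs is NP-Complete (in fact in the strong sense).

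The only point to verify carefully---and essentially the only ``obstacle'', a minor one---is that the correctness argument of \Cref{theorem:BIGNPCIG} never secretly exploits an interval representation of $IG(X)$ beyond knowing that $IG(X)$ is a legal instance. Inspecting \Cref{observation:overlap}, \Cref{lemma:notopt}, \Cref{lemma:BNb(G)IG}, \Cref{lemma:EFSOnPath}, \Cref{lemma:AFSOnr_iIG} and \Cref{lemma:PartPFx1TO31}, each is phrased purely in terms of the spine path $P_I$, the comb gadgets $A_{T_j}$, and graph distances in $IG(X)$---never in terms of intervals on the line---so the whole analysis transfers unchanged. Hence the corollary follows immediately once the inclusion ``interval $\subseteq$ chordal'' is stated, and I would present it in exactly this one-line-plus-citation form.
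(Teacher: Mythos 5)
Your proposal is correct and is essentially the paper's own argument: the paper gives no separate proof for this corollary but explicitly states that it follows from the construction in \Cref{section:burn-interval-graphs}, i.e.\ from the fact that $IG(X)$ (a caterpillar tree, hence an interval graph, hence chordal) already witnesses hardness for the larger class. Your added care in checking NP membership and that the reduction's image lies inside the chordal graphs is exactly the right bookkeeping, and nothing further is needed.
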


\subsection{Burning planar graphs}\label{subsection:burn-planar-graphs}

\begin{corollary}
    The optimal burning of planar graphs is NP-Complete.
\end{corollary}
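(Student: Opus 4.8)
The plan is to observe that the NP-Completeness proof for interval graphs in \Cref{section:burn-interval-graphs} already produces a \emph{planar} instance, so essentially nothing new needs to be proved beyond checking that the reduction stays inside the class of planar graphs and that membership in NP is available. First I would recall that the graph $IG(X)$ built in the reduction for \Cref{theorem:BIGNPCIG} is, as noted there explicitly, a tree: it is the caterpillar obtained from the long path $P_I$ by attaching one pendant vertex to each of the internal vertices of every sub path $T_j$. A tree contains no cycles and is in particular planar. Hence the same map $X \mapsto IG(X)$, which is already established to be a pseudo-polynomial time reduction from the distinct $3$-partition problem, is simultaneously a reduction \emph{into} the burning problem restricted to planar graphs.

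Next I would invoke \Cref{lemma:burning-in-NP}: since the general graph burning problem is in NP and the planar graphs form a subclass, the burning problem restricted to planar graphs is also in NP; a certificate for the decision version is a burning sequence of length $k$, which is verified in $O(n^2)$ time by \Cref{algorithm:burn-verify}. So the only remaining ingredient is NP-hardness, which I would get by transferring the correctness argument of the interval-graph construction wholesale. Concretely, \Cref{lemma:BNb(G)IG} through \Cref{theorem:BIGNPCIG} show that $b(IG(X)) = 2m+1$ if and only if $X$ admits a distinct $3$-partition, and none of those arguments use any structural property of $IG(X)$ beyond it being exactly this caterpillar; in particular planarity is never needed or violated, so the proofs carry over verbatim to the statement ``burning planar graphs''.

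Finally I would assemble the pieces: distinct $3$-partition is NP-Complete in the strong sense (\Cref{section:d3pp}), it reduces in pseudo-polynomial time to burning $IG(X)$, which is planar, and burning planar graphs is in NP; by \Cref{lemma:pptr-npc} this yields that burning planar graphs is NP-Complete (indeed in the strong sense). The only point that requires any genuine checking — and it is the closest this proof comes to an obstacle — is confirming that the construction of $IG(X)$ depicted in \Cref{figure:IGNPCEIG} really is a planar graph; but since every tree is planar and $IG(X)$ is a tree, this is immediate, and the corollary is an essentially one-line consequence of \Cref{theorem:BIGNPCIG}.
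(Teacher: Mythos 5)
Your proposal is correct and follows the same route the paper intends: the corollary is stated there as an immediate consequence of the construction in \Cref{section:burn-interval-graphs}, precisely because $IG(X)$ is a tree and hence planar, so the pseudo-polynomial reduction from distinct $3$-partition lands inside the class of planar graphs, and membership in NP comes from \Cref{lemma:burning-in-NP}. Nothing in your argument diverges from or adds a gap to the paper's reasoning.
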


\subsection{Burning bipartite graphs}\label{subsection:burn-bipartite-graphs}

\begin{corollary}
    The optimal burning of bipartite graphs is NP-Complete.
\end{corollary}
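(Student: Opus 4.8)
The plan is to observe that the interval-graph instance $IG(X)$ constructed in \Cref{section:burn-interval-graphs} already lies inside the class of bipartite graphs, so the reduction developed there does double duty. First I would recall that $IG(X)$ is a tree: its construction explicitly notes that it contains no cycles (it is obtained from the single path $P_I$ by attaching one pendant vertex to each internal vertex of every $T_j$). Every tree is bipartite, since two-colouring its vertices according to the parity of their distance from an arbitrarily chosen root is a proper $2$-colouring — there are no odd cycles to obstruct it. Hence each $IG(X)$ is a bipartite graph, and moreover one of maximum degree $3$.

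Next I would invoke the chain of results already established for the interval-graph case. \Cref{lemma:BNb(G)IG}, \Cref{lemma:EFSOnPath}, \Cref{lemma:AFSOnr_iIG} and \Cref{lemma:PartPFx1TO31}, combined with \Cref{theorem:BIGNPCIG}, show that a set $X$ admits a distinct $3$-partition if and only if $IG(X)$ can be burned in $b(P_I)=2m+1$ steps, and that the map $X \mapsto (IG(X),\, 2m+1)$ is computable in time polynomial in $m=\max(X)$ and $|X|$; that is, it is a pseudo-polynomial-time reduction from the distinct $3$-partition problem. Since the distinct $3$-partition problem is NP-Complete in the strong sense \cite{Hulett2008}, and the decision version of graph burning is in NP by \Cref{lemma:burning-in-NP}, \Cref{lemma:pptr-npc} yields that deciding optimal burning on the instances $IG(X)$ is NP-Complete in the strong sense. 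Because each such instance is bipartite, restricting the input domain to bipartite graphs preserves this hardness, and membership in NP is again \Cref{lemma:burning-in-NP} applied to bipartite inputs, which completes the argument.

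I expect essentially no genuine obstacle here: the corollary is immediate once one records that $IG(X)$ is a tree, hence bipartite. The only point needing a line of care is to confirm that the reduction target honestly stays inside the bipartite class and that nothing in the optimal-burning analysis of \Cref{section:burn-interval-graphs} used a structural feature unavailable to bipartite graphs — but all of those arguments are phrased purely in terms of paths, pendant vertices, burning clusters, and distances, so they transfer verbatim. One could alternatively cite the permutation-graph construction of \Cref{section:burn-permutation-graphs} (a path forest, which is also bipartite), but the tree $IG(X)$ is the cleanest witness, so I would present the proof through it.
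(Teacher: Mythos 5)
Your proposal is correct and matches the paper's intent exactly: the paper derives this corollary from the construction of $IG(X)$ in \Cref{section:burn-interval-graphs}, which is explicitly noted there to be a tree (hence bipartite), so the pseudo-polynomial reduction from the distinct 3-partition problem applies verbatim to the bipartite class. Your write-up simply makes explicit the two-colouring argument and the appeal to \Cref{lemma:pptr-npc} and \Cref{lemma:burning-in-NP} that the paper leaves implicit.
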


\subsection{Burning path forests}\label{subsection:burn-path-forests}

\begin{corollary}
    The optimal burning of path forests is NP-Complete.
\end{corollary}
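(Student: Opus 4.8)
The plan is to observe that the NP-completeness of optimal path-forest burning is already essentially contained in the permutation-graph construction of \Cref{section:burn-permutation-graphs}, since the graph $P(X)$ built there is by construction a disjoint union of simple paths $Q_1\cup Q_2\cup\dots\cup Q_{n+k}$, i.e., a path forest. So I would first record membership in NP: by \Cref{lemma:burning-in-NP} the general burning problem is in NP, and a path forest is a graph, so the restricted problem inherits the polynomial-time verifier \Cref{algorithm:burn-verify}.

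For NP-hardness I would give a pseudo-polynomial (hence polynomial, after invoking the strong NP-completeness of distinct 3-partition to bound $m$ polynomially in the instance size) reduction from the distinct 3-partition problem, exactly as in \Cref{theorem:BPGNPC}. Given an instance $X$ with $n=|X|/3$, $m=\max(X)$, $B'=2B-3$, $k=m-3n$, and $Y=F'_m\setminus X'$, I build the path forest $P(X)$ consisting of $n$ paths $Q_1,\dots,Q_n$ each of order $B'$ and $k$ paths $Q_{n+1},\dots,Q_{n+k}$ whose orders are the elements of $Y$; the total order is $m^2$, so $b(P(X))\geq m$. I then prove the equivalence $b(P(X))=m$ iff $X$ admits a distinct 3-partition. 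Forward direction: a 3-partition of $X$ (equivalently of $X'$, each part summing to $B'$) splits $Q_1,\dots,Q_n$ into $3n$ subpaths of orders in $X'$; together with $Q_{n+1},\dots,Q_{n+k}$ of orders in $Y$ this realizes $m$ subpaths whose orders are exactly $F'_m=\{1,3,\dots,2m-1\}$, and placing the $i$-th fire source on the central vertex of the $i$-th largest subpath burns $P(X)$ in $m$ steps since $G.N_{m-i}[y_i]$ then covers a path of order $2(m-i)+1$. Converse: because $P(X)$ has order $m^2$, \Cref{observation:path} forces the $m$ burning clusters of any length-$m$ burning sequence to be pairwise disjoint, hence to induce a partition of $P(X)$ into $m$ subpaths of orders $1,3,\dots,2m-1$; an exchange argument identical to the one in \Cref{theorem:BIGNPCIG} then lets me assume each of the $k$ $Y$-paths is burnt by a single source, leaving $Q_1,\dots,Q_n$ to be partitioned into $3n$ subpaths of orders in $X'$, which is precisely a distinct 3-partition of $X$.

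The main obstacle is the converse direction, specifically the exchange/normalization step that forces each $Y$-path to absorb the matching singleton cluster so that the residual partition of $Q_1,\dots,Q_n$ is clean; but this step is carried over verbatim from the proofs of \Cref{theorem:BIGNPCIG} and \Cref{theorem:BPGNPC} and runs in $O(m)$ time. The only genuinely new remark needed is that $P(X)$ is a path forest, which is immediate from its construction, so once NP membership is noted the corollary follows at once from \Cref{theorem:BPGNPC}.
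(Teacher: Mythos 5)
Your proposal is correct and follows essentially the same route as the paper: the paper derives this corollary by pointing out that the graph $P(X)$ constructed in \Cref{section:burn-permutation-graphs} is itself a disjoint union of paths, so \Cref{theorem:BPGNPC}'s reduction from distinct 3-partition applies verbatim to path forests. Your write-up merely spells out the details (NP membership, the forward tiling into subpaths of orders in $F'_m$, and the disjoint-cluster/exchange argument for the converse) that the paper leaves implicit by citation.
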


\subsection{Burning spider graphs}\label{subsection:burn-spider-graphs}

\begin{corollary}
    The optimal burning of spider graphs is NP-Complete.
\end{corollary}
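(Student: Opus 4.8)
The plan is to observe that the reduction already carried out in \Cref{section:burn-disk-graphs} produces, as its instance, a graph that is literally a spider graph, so the corollary requires essentially no new work beyond checking the degree conditions and re-reading the NP-completeness argument.

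First I would verify that $DK(X)$ is a spider graph in the sense of \Cref{figure:example-spider-graph}. The head vertex $h$ (corresponding to the disk $C$) is adjacent exactly to $v_1,\dots,v_q$, so $\deg(h)=q$; since the construction forces $q\geq 2(p+2)$, the head has degree at least $3$. Every other vertex lies on an arm: for $n+k+1\leq i\leq q$ the arm is the path $v_i,v_i^1,\dots,v_i^p$, and for $1\leq i\leq n+k$ this path is further prolonged by the path $P_i^\prime$ induced by the disk-chain $Q_i$ attached behind $Ch_i$; in either case each such vertex has degree at most $2$. One small point to confirm is that $C$ overlaps only the disks $c_i$ and none of the chain disks $c_i^j$ (which follows from $R\leq R^\prime+0.5$ and the outward spacing of the chains), so no vertex other than $h$ acquires degree $3$. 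Hence $DK(X)$ is a spider graph with head of degree $q$, having $q-(n+k)$ arms of length $m=p+1$ together with $n+k$ further arms of lengths $m+B^\prime$ (the first $n$ of them) and $m+L^Y_j$ (the remaining $k$).

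Next I would invoke the analysis of \Cref{section:burn-disk-graphs} verbatim: by \Cref{lemma:b(DK(X))=m+1} we have $b(DK(X))=m+1$, and by the argument of \Cref{theorem:BDGNPC} (which in turn mirrors \Cref{lemma:PartPFx1TO31} and \Cref{theorem:BIGNPCIG}) an optimal burning sequence of $DK(X)$ exists if and only if the distinct $3$-partition instance $X$ is a yes-instance. Since $X\mapsto DK(X)$ is a pseudo-polynomial time reduction and the distinct $3$-partition problem is NP-Complete in the strong sense, \Cref{lemma:pptr-npc} gives that burning $DK(X)$ is NP-Complete in the strong sense; and burning lies in NP by \Cref{lemma:burning-in-NP}. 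As $DK(X)$ is a spider graph, optimal burning of spider graphs is NP-Complete, indeed in the strong sense.

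The main obstacle here is not conceptual but bookkeeping: one must make sure the disk construction does not secretly introduce a second high-degree vertex (it does not, since the outer disk $C$ is only marginally larger than $C^\prime$ and all chains radiate outward), and that the bound $q\geq 2(p+2)$ survives so that the head genuinely has degree at least $3$ — both are built into the construction. If one prefers a self-contained reduction not routed through disk graphs, the same path forest $P^{\prime\prime}$ of \Cref{section:burn-interval-graphs} can instead be hung off the arms of a star of sufficiently high degree, yielding the spider directly; but reusing \Cref{section:burn-disk-graphs} is the cleanest route.
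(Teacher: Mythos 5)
Your proposal is correct and follows exactly the route the paper intends: the paper proves this corollary simply by observing that the constructed graph $DK(X)$ from \Cref{section:burn-disk-graphs} is itself a spider graph (one head $h$ of degree $q\geq 3$, all other vertices of degree at most $2$), so the NP-Completeness of burning $DK(X)$ immediately gives NP-Completeness for spider graphs. Your additional bookkeeping — checking that $C$ overlaps only the disks in $Cir$ and that no second high-degree vertex arises — is a sound elaboration of details the paper leaves implicit.
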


\subsection{Burning forests}\label{subsection:burn-forests}

\begin{corollary}
    The optimal burning of forests is NP-Complete.
\end{corollary}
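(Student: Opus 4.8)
The plan is to exploit the fact that a forest is by definition a disjoint union of trees, so the class of forests contains both the tree $IG(X)$ produced in \Cref{section:burn-interval-graphs} and the path forest $P(X)$ produced in \Cref{section:burn-permutation-graphs}; consequently NP-hardness of burning forests is inherited verbatim from either of those reductions, and no new gadget is needed. First I would dispose of membership in NP: given a forest $G$ and a candidate burning sequence $S$, \Cref{algorithm:burn-verify} decides in $O(n^2)$ time whether $S$ burns $G$, so by \Cref{lemma:burning-in-NP} optimal burning of forests lies in NP.

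For the hardness direction I would take an instance $X$ of the distinct $3$-partition problem and feed it through the construction of \Cref{section:burn-permutation-graphs}, but read the output $P(X)$ simply as a graph rather than as a permutation graph: it is a disjoint union of simple paths $Q_1,\dots,Q_{n+k}$, hence a forest. The argument already carried out in the proofs of \Cref{lemma:PartPFx1TO31} and \Cref{theorem:BPGNPC} shows $b(P(X))=m$ if and only if the paths $Q_1,\dots,Q_n$ of order $B^\prime$ can be cut into subpaths whose orders are exactly the elements of $X^\prime$, which occurs precisely when $X$ has a distinct $3$-partition. Thus $X \mapsto (P(X),m)$ is a one-to-one, value-bounded, pseudo-polynomial reduction from distinct $3$-partition to deciding whether a forest burns in $m$ steps. (Equivalently one could invoke $IG(X)$ from \Cref{section:burn-interval-graphs}, which the excerpt already records as acyclic and therefore a single-component forest, together with \Cref{theorem:BIGNPCIG}; I would mention this alternative but carry out only one of them.)

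Putting the two halves together gives that optimal burning of forests is NP-Complete, and since distinct $3$-partition is NP-Complete in the strong sense and the orders of all the paths $Q_j$ are bounded by $O(m)$, a polynomial in the input length, \Cref{lemma:pptr-npc} upgrades this to the strong sense. The only step requiring real attention is bookkeeping: confirming that $P(X)$ (respectively $IG(X)$) literally meets the definition of a forest and that the order-versus-burning-number correspondence established in \Cref{theorem:BPGNPC} transfers unchanged, which it does because $P(X)$ coincides with the residual path forest $P^{\prime\prime}$ of $IG(X)$. I expect no genuine obstacle here beyond noting the containment of graph classes; the substantive work was already done in \Cref{theorem:BIGNPCIG} and \Cref{theorem:BPGNPC}.
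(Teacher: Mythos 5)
Your proposal is correct and matches the paper's own treatment: the paper likewise derives this corollary by observing that the constructions of \Cref{section:burn-interval-graphs} ($IG(X)$ is a tree) and \Cref{section:burn-permutation-graphs} ($P(X)$ is a disjoint union of paths) are already forests, so NP-Completeness is inherited directly, with membership in NP supplied by \Cref{algorithm:burn-verify}. No gap; the paper simply states the implication without even spelling out the bookkeeping you include.
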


\subsection{NP-Completeness of general graph burning}\label{subsection:burn-general-graphs-NPC}

\begin{corollary}\label{corollary:burn-general-graphs-NPC}
    The optimal burning of general graphs is an NP-Complete problem.
\end{corollary}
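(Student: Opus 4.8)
The plan is to assemble this corollary from two ingredients that are already in place: membership of graph burning in NP, and NP-hardness of graph burning inherited from one of its restricted subclasses. For membership I would invoke \Cref{lemma:burning-in-NP}: given a graph $G$, a bound $k$, and a candidate burning sequence of length at most $k$, \Cref{algorithm:burn-verify} decides validity in $O(n^2)$ time, so the decision version of burning is polynomial-time verifiable and hence in NP (a nondeterministic machine guesses the sequence and runs the verifier).

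For hardness, the key observation is that the graph $IG(X)$ constructed in \Cref{section:burn-interval-graphs} is, once we forget its interval representation, an ordinary graph; consequently the map $X \mapsto IG(X)$ from the distinct 3-partition problem (\Cref{section:d3pp}) is simultaneously a reduction into the language of general graph burning. By \Cref{theorem:BIGNPCIG}, $X$ admits a distinct 3-partition if and only if $b(IG(X)) = 2m+1$, i.e. if and only if $(IG(X), 2m+1)$ is a yes-instance of the unrestricted burning decision problem. Since the distinct 3-partition problem is NP-complete in the strong sense and this reduction meets the conditions of \Cref{lemma:pptr-npc}, general graph burning inherits (strong) NP-hardness. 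One could equally start from burning trees of maximum degree three, path forests, spider graphs, permutation graphs, or disk graphs shown hard above, since each such instance is in particular a graph. Combining NP membership with NP-hardness yields NP-completeness.

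The proof needs no new construction, and the only step deserving care is checking that the hardness is transferred by a genuine reduction \emph{into} the unrestricted problem rather than merely a restricted one. This is immediate: the reduction of \Cref{theorem:BIGNPCIG} outputs a graph together with a target number of time-steps, which is exactly an instance of general graph burning, and its correctness argument used the interval structure only to certify that the output \emph{is} an interval graph — a certification we are free to discard here. Thus the main obstacle is purely expository, and the corollary follows.
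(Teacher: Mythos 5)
Your proposal is correct and matches the paper's own (implicit) argument: the paper derives this corollary by combining the polynomial-time verifiability established in \Cref{section:burning-verify} (\Cref{lemma:burning-in-NP}) with the NP-hardness inherited from the constructions of \Cref{section:burn-interval-graphs} and the other subclasses, exactly as you do. Your added remark that the reduction genuinely targets the unrestricted problem (the interval structure only certifies class membership and can be discarded) is a correct and worthwhile clarification, but not a different route.
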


\section{Summary of this chapter}

We utilize the distinct 3-partition problem to reduce to the burning of interval graphs, permutation graphs, and disk graphs. Several graph classes show their burning properties to be NP-Complete directly from our constructions, such as trees, forests, chordal graphs, planar graphs, bipartite graphs, path forests, and spider graphs.

\chapter{Easy burning subproblems}\label{chapter:where-easy}

Not much work has been done which gave algorithms for easy (polynomial time) burning of graph classes. This chapter includes the findings which show that optimal burning can be performed on certain graph classes in polynomial time.

\section{Burning path or cycle (already discussed)}

We have already discussed in \Cref{section:burn-path} that a simple path or cycle can be burned optimally in polynomial time.

\section{Burning split graphs}

\subsection{Burning connected split graphs}

If the clique $C$ and the independent set $I$ are given for an arbitrary split graph (see \Cref{subsection:split-graphs}) $G$, then we can burn a connected split graph in two or three steps \cite{Kare2019}. \Cref{algorithm:burn-connected-split-graphs} shows how to burn a connected split graph optimally in polynomial time.

\begin{algorithm}\label{algorithm:burn-connected-split-graphs}
Given the input $(G,\ C,\ I)$, where $G$ is the input connected split graph, $C$ (represents the clique) and $I$ (represents the the independent set), both are subsets of $G.V$, perform the following steps.
\end{algorithm}

\textbf{\textit{Step 1.}} Time step $t=0$.\\
$B$ stores the set of vertices which are burned, initially $B=\phi$. $S$ will store an optimal burning sequence of $G$, initially $S=\phi$.

\textbf{\textit{Stage 2.}} Time step $t=1$.\\
$S=\{c\}$ where $c$ is an arbitrary vertex in $C$, preferably adjacent to at least one vertex in $I$. $B=\{c\}$. If $B=G.V$, then return $S$.

\textbf{\textit{Stage 3.}} Time step $t=2$.\\
$S = S\cup \{i\}$, where $i$ is an arbitrary unburned vertex in $I$ (preferably, such that $i$ is not connected to the same vertex in $C$ as $S[|S|]$) or $C$. $B = B \cup \{i\} \cup G.Adj[B]$.\\
If $B=G$, then return $S$.

\textbf{\textit{Stage 4.}} Time step $t=3$.\\
$S = S\cup \{i\}$, where $i$ is an arbitrary unburned vertex in $I$. $B = B \cup \{i\} \cup G.Adj[B]$.\\
Return $S$.\\

\Cref{algorithm:burn-connected-split-graphs} runs in linear time, given that $C$ and $I$ are computed already. For a connected split graph $G$, the burning number $b(G)$ is $2$ or $3$. It is interesting to observe that even if the split graph is not a connected graph, then also we can burn the graph in polynomial time $O(n)$, given that the first fire source is placed in $C$.

\subsection{Burning general split graphs}

\begin{theorem}\label{theorem:burning-general-split-graphs}
    For an arbitrary split graph $G$, the first fire source in an optimal burning sequence should be placed on $c$, an arbitrary vertex in $C$ (representing the clique), a subgraph of $G.V$.
\end{theorem}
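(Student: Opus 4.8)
The plan is to run an exchange argument on an optimal burning sequence, exploiting the fact that in a split graph every clique vertex dominates every independent-set vertex as a spreader of fire. Fix a split partition $G.V = C \cup I$ with $C \neq \emptyset$ (any split graph with at least one vertex admits such a partition, so this is no loss; if $G$ has no vertices the statement is vacuous). The key metric observation to record first is: if $v \in I$ is non-isolated, then since $I$ is independent every neighbour of $v$ lies in $C$, and for every $c \in C$ and every $u \in G.V \setminus \{v\}$ we have $dist(c,u) \le dist(v,u)$. Indeed a shortest $v$--$u$ path leaves $v$ through some $w \in C$, so $dist(v,u) = 1 + dist(w,u)$, and as $c,w$ both lie in the clique $C$, $dist(c,u) \le dist(c,w) + dist(w,u) \le 1 + dist(w,u) = dist(v,u)$. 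Hence $G.N_t[v] \setminus \{v\} \subseteq G.N_t[c]$ for all $t \ge 0$; moreover $dist(c,v) \le 2$ (route through a neighbour of $v$ in $C$), so $v \in G.N_t[c]$ already for $t \ge 2$, and $v \in G.N_1[c]$ when $c$ is a neighbour of $v$. Thus for a suitably chosen $c \in C$ (any vertex of $C$ if $t \ge 2$, a clique-neighbour of $v$ if $t = 1$) we get $G.N_t[v] \subseteq G.N_t[c]$ outright.

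The second ingredient I would isolate is a reduction of ``validity'' to the covering condition \Cref{equation:burn-verify}: if a length-$k$ vertex sequence $(y_1,\dots,y_k)$ satisfies $\bigcup_j G.N_{k-j}[y_j] = G.V$, then running the process with $y_j$ placed at step $j$ (and an arbitrary still-unburned vertex substituted whenever $y_j$ is already burned) burns $G$ within $k$ steps, since a source burned at a step $j' < j$ has by the end the even larger cluster $G.N_{k-j'}[y_j] \supseteq G.N_{k-j}[y_j]$. When $k = b(G)$, minimality of $b(G)$ forces the process never to run out of unburned vertices before step $k$, so this produces a genuine burning sequence of length exactly $b(G)$ with the same first source $y_1$. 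Consequently it suffices to exhibit a length-$b(G)$ sequence obeying \Cref{equation:burn-verify} whose first entry lies in $C$.

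Now start from an optimal burning sequence $S = (x_1,\dots,x_k)$, $k = b(G)$. If $x_1 \in C$ we are done. If $x_1 \in I$ is isolated, replace $S$ by $(x_2,\dots,x_k,x_1)$: every $x_j$ with $j \ge 2$ gains one unit of cluster radius while $G.N_0[x_1] = \{x_1\}$ is the whole component of $x_1$, so \Cref{equation:burn-verify} is preserved, and the maximal prefix of isolated $I$-vertices has shrunk, so after finitely many such cyclic shifts the first entry is either in $C$ (done) or a non-isolated vertex of $I$. In that last case pick $c \in C$ --- any vertex of $C$ when $k \ge 3$, a clique-neighbour of $x_1$ when $k = 2$ --- and replace $x_1$ by $c$. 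By the metric observation $G.N_{k-1}[x_1] \subseteq G.N_{k-1}[c]$, so $(c,x_2,\dots,x_k)$ still satisfies \Cref{equation:burn-verify}, and by the reduction above it yields an optimal burning sequence whose first fire source is $c \in C$. (The case $k = 1$ cannot have $x_1 \in I$, since $b(G) = 1$ forces $G$ to be a single vertex, which sits in $C$.) As a consistency check, connected split graphs have radius at most $2$, so \Cref{theorem:burn-connected-graph-radius-r} gives $b(G) \le 3$, and when $b(G) = 3$ any $c \in C$ trivially works because $G.N_2[c] = G.V$.

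The main obstacle is making sure each modified sequence is an \emph{honest} burning sequence rather than merely a covering family: the replacement vertex $c$ may coincide with, or be reached too early by, a later source, particularly when $|C|$ is small or $G$ is disconnected. Channelling every modification through the ``covering family $\Rightarrow$ burning sequence of length $b(G)$'' principle removes this difficulty cleanly; what remains is care in the degenerate corners ($|C| = 1$, $k \le 2$, isolated vertices of $I$), where one checks directly that $G$ is then so constrained (near-complete or star-like) that an optimal sequence beginning in $C$ is immediate. I would also flag that the literal reading ``any vertex of $C$'' is guaranteed only once $b(G) \ge 3$; for $b(G) = 2$ the content of the theorem is that \emph{some} $c \in C$ (a suitable clique-neighbour) works.
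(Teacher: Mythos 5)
Your argument is correct in substance and takes a genuinely different --- and much stronger --- route than the paper's. The paper's proof is not a general argument: its ``general construction'' is the single family consisting of an $(n-1)$-clique plus one isolated vertex of $I$, and the actual reasoning is carried out only on the $n=4$ instance, where starting at the isolated vertex costs three steps versus two. That shows a first source in $I$ \emph{can} be suboptimal on one example; it does not show that every split graph admits an optimal sequence starting in $C$. Your exchange argument does: the domination lemma $dist(c,u)\leq dist(v,u)$ for non-isolated $v\in I$ (hence $G.N_t[v]\subseteq G.N_t[c]$ for a suitable $c$ and $t$), the reduction of validity to the covering condition \Cref{equation:burn-verify}, and the cyclic shift for isolated $I$-vertices together handle all split graphs, at the price of the degenerate corners you correctly flag (in particular the all-isolated case, where the shift alone does not terminate and one must observe directly that an edgeless graph admits any ordering of its vertices). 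Your caveat about $b(G)=2$ is moreover a genuine correction to the theorem as literally stated: for the split graph with $C=\{c_1,c_2\}$, $I=\{v_1,v_2\}$ and $v_1,v_2$ adjacent only to $c_1$, no burning sequence of length $b(G)=2$ can begin at $c_2$, so ``an arbitrary vertex in $C$'' is false and only ``some suitably chosen vertex of $C$'' survives --- which is all that \Cref{algorithm:burning-general-split-graphs} actually needs. The paper's remarks following the theorem hint at this delicacy but never reconcile it with the stated claim; your version of the statement is the one that is actually true.
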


\begin{proof}
\noindent \textbf{General construction}: Let $G$ be a split graph of $n$ vertices such that $C = \{c_1, c_2, c_3, . . ., c_{n-1}\}$ is a complete subgraph of $G$ of size $n-1$, and $I = \{i\}$ is the independent set excerpt of $G$. $i$ is not connected to any vertex in $G$. Let the optimal burning sequence be $S^\prime = (y_1, y_2, y_3, . . ., y_{k^\prime})$ be of $k^\prime$ vertices. Consider for contradiction that the fire source $y_1$ is placed on $i$, and yet we are able to burn $G$ in the mininimum possible time steps.\\

\noindent \textbf{Example construction and solution}: Let $n=4$. When $y_1$ is placed on $i$, a burning sequence required to burn $G$ is $(i, c_1, c_2)$. This is shown in figure \thech.2.

\begin{figure}
    \centering
    \begin{tikzpicture}
        \node [circle, fill=black, inner sep=0pt, minimum size=3pt, label=right:{$c_1 = y_2$}] (A) at (0,0) {};
        \node [circle, fill=black, inner sep=0pt, minimum size=3pt, label=below:{$c_2 = y_3$}] (B) at (.707,-1) {};
        \node [circle, fill=black, inner sep=0pt, minimum size=3pt, label=left:{$c_3$}] (C) at (-.707,-1) {};
        
        \node [circle, fill=black, inner sep=0pt, minimum size=3pt, label=above:{$i = y_1$}] (D) at (0,1) {};
        
        \draw (A) -- (B); \draw (A) -- (C); \draw (B) -- (C);
    \end{tikzpicture}
    \caption{If the first fire source is placed on $i$, then the burning sequence is $i,c_1,c_2$. If, otherwise, the first fire source is placed on $c_1$ (for example), then the burning sequence is $c_1,i$}
    \label{figure:burning-general-split-graphs}
\end{figure}
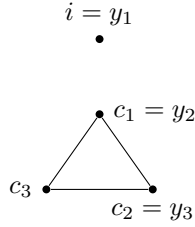

On the other hand, if $y_1$ were placed on $c_1$, an arbitrary vertex in $C$, a burning sequence required to burn $G$ is $(c_1, i)$.
\end{proof}

Given that the vertices of $I$ are connected to atmost one vertex in $C$, then in an optimal burning sequence, we can put the first fire source on a vertex in $I$ iff there are atmost $2$ vertices in $I$ which are disconnected from $C$. 
If otherwise the vertices of $I$ are connected to more than one vertices of $C$, then in an optimal burning sequence, we can put the first fire source on a vertex in $I$ iff there are atmost $3$ vertices in $I$ which are disconnected from $C$.

\Cref{theorem:burning-general-split-graphs} lays the foundation of algorithm \Cref{algorithm:burning-general-split-graphs}, which computes an optimal burning sequence for disconnected split graphs in $O(n)$.

\begin{algorithm}\label{algorithm:burning-general-split-graphs}
    Given the input $(G,\ C,\ I)$, where $G$ is a connected split graph, $C$ represents the clique, $C\subseteq G.V$, and $I$ represents the independent set in $G$, $I\subseteq G.V$, perform the following steps.
\end{algorithm}

\textbf{\textit{Step 1.}} Time step $t=0$.\\
$B$ stores the set of vertices which are burned, initially $B=\phi$. $S$ will store an optimal burning sequence of $G$, initially $S=\phi$.

\textbf{\textit{Stage 2.}} Time step $t=1$.\\
$S=\{c\}$ where $c$ is an arbitrary vertex in $C$, preferably adjacent to at least one vertex in $I$. $B=\{c\}$. If $B=G.V$, then return $S$.

\textbf{\textit{Stage 3.}} Time step $t=2$.\\
$S = S\cup \{i\}$, where $i$ is an arbitrary unburned vertex in $I$ (preferably, such that $i$ is not connected to the same vertex in $C$ as $S[|S|]$) or $C$. $B = B \cup \{i\} \cup G.Adj[B]$.\\
If $B=G$, then return $S$.

\textbf{\textit{Stage 4.}} Perform the following steps until $B=G$.

\textbf{\textit{Stage 4.1.}} Time step $t=t+1$.\\
$S = S\cup \{i\}$, where $i$ is an arbitrary unburned vertex in $I$. $B = B \cup \{i\} \cup G.Adj[B]$.\\

\textbf{\textit{Stage 5.}} Return $S$.

\section{Burning cographs}

Refer to the definition of \textit{cographs} in \Cref{subsection:cographs}. In a cograph $G$, each vertex in $G$ is atmost at a distance $2$ or $3$ from $v$, an arbitrary vertex in $G$ \cite{Kare2019}. Hence we can burn a cograph in atmost $3$ steps.

\section{Summary of this chapter}

In this chapter, we visit graph burning on several graph classes, especially those on which determining the burning number is easy. Although graph burning can be modelled on several graph classes in polynomial time, it remains NP-Hard on general graphs; it is NP-Hard even for many of those graph classes for which several problems, (which are NP-Hard to solve for general graphs), are straightforward to solve.

\chapter{Approximating the burning sequence}\label{chapter:approximation}

\section{Approximation for general graphs}

Bessy et. al. \cite{Bessy2017} have given a 3-approximation algorithm for burning of an arbitrary graph. \Cref{algorithm:generate-burning-sequence} (in combination with \Cref{procedure:choose-last-fire-source}) is able to compute a burning sequence for a graph $G$ in $O(n^3)$ time, where $n=|G.V|$.

\begin{procedure}\label{procedure:choose-last-fire-source}
Given the input $(G,k,S)$ where $G$ is the input graph, $k$ is a positive integer, and $S$ is a burning sequence of size $k-1$, perform the following steps.
\end{procedure}

\textbf{\textit{Stage 1.}} $x_k = \max\{\min\{\frac{d(u,x_j)}{k-j+1}: j\in [1:i-1]\} : u\in G.V \}$.

\textbf{\textit{Stage 2.}} $S=S \cup_{s/} (x_k)$. Return $S$.\\

\begin{algorithm}\label{algorithm:generate-burning-sequence}
Given the input graph $G$, perform the following steps.
\end{algorithm}

\textbf{\textit{Stage 1.}} $x_1 =$ some arbitrary vertex. $S=(x_1)$.

\textbf{\textit{Stage 2.}} $\forall\ k\geq 2$, perform the following steps.

\textbf{\textit{Stage 2.1.}} Invoke \Cref{procedure:choose-last-fire-source} by passing the input $(G,k,S)$ and store the return value in $S$. If $S$ satisfies \cref{equation:burn-verify}, then stop and return $S$.\\

We show in \Cref{lemma:burn-seq-geq3} and \Cref{theorem:proof-3-approx} \cite{Bessy2017} that the burning sequence generated by \Cref{algorithm:generate-burning-sequence} is a burning sequence within 3-approximation. It means that (since graph burning is a minimization problem) if \Cref{algorithm:generate-burning-sequence} generates a burning sequence of length $k$, then an optimal burning sequence (generated by \Cref{algorithm:burn-general-optimal}) of the subject graph $G$ shall contain at least $\frac{k}{3}$ fire sources.

\begin{lemma}\label{lemma:burn-seq-geq3}
Let $S = (x_1, x_2, x_3, . . ., x_k)$ be a burning sequence returned by \cref{procedure:choose-last-fire-source} for an input graph $G$. If $S$ is not able to burn $G$ completely (that is, if $S$ does not satisfy \Cref{equation:burn-verify}), then $b(G)\geq \big\lceil \frac{k}{3}\big\rceil+1$.
\end{lemma}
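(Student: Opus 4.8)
The plan is to analyze the greedy choice made by \Cref{procedure:choose-last-fire-source} and extract a lower bound on $b(G)$ from the fact that the greedy burning sequence of length $k$ failed. First I would fix notation: suppose $S=(x_1,\dots,x_k)$ is the sequence produced after $k$ calls to the procedure, and suppose $S$ does \emph{not} satisfy \Cref{equation:burn-verify}, i.e. there exists a vertex $w\in G.V$ with $w\notin G.N_{k-j}[x_j]$ for every $j\in[1:k]$; equivalently $d(w,x_j) \ge k-j+1$ for all $j$. The key observation is how the procedure selects each $x_i$: in \textbf{Stage 1} it maximizes $\min\{ d(u,x_j)/(k-j+1) : j\in[1:i-1]\}$ over $u$. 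Since the failure witness $w$ satisfies $d(w,x_j)/(k-j+1)\ge 1$ for all $j\in[1:k-1]$, the vertex chosen at every step $i$ must have done at least as well, so each $x_i$ ($2\le i\le k$) satisfies $d(x_i,x_j) \ge k-j+1$ for all $j<i$. Together with $w$ itself, this gives a set $W=\{x_1,\dots,x_k,w\}$ of $k+1$ vertices that are "spread out" in a precise quantitative sense.

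Next I would argue that such a spread-out set cannot be covered by a short burning sequence. Suppose toward a contradiction that $b(G)=k'$ with $k' \le \lceil k/3\rceil$, realized by an optimal burning sequence $T=(z_1,\dots,z_{k'})$. Every vertex of $W$ must lie in some burning cluster $G.N_{k'-t}[z_t]$. By pigeonhole, since $|W|=k+1 > 3k' \ge 3\lceil k/3\rceil \ge k+ \text{(something)}$ — here I would be careful with the exact counting — some cluster $G.N_{k'-t}[z_t]$ contains at least two elements $a,b$ of $W$, say indexed so that $d(a,b)$ is large by the spreading property. Then $d(a,b) \le d(a,z_t)+d(z_t,b) \le 2(k'-t) \le 2(k'-1)$. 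But the spreading estimates force $d(a,b)$ to be comparably large (on the order of $k-j+1$ for the larger of the two indices involved), and with $k' \le \lceil k/3\rceil$ this upper bound $2(k'-1)$ is strictly smaller than the forced lower bound, a contradiction. The bookkeeping that makes "two elements in one cluster" yield a distance contradiction with the bound $2(k'-1)$, rather than merely $2k'$, is exactly where the "$+1$" in $\lceil k/3\rceil + 1$ comes from: the failure of $S$ means even the $k$-th greedy step was not enough, so one more source is provably required.

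The main obstacle I anticipate is the counting in the pigeonhole step: getting the constants to line up so that $k+1$ spread-out vertices genuinely cannot be distributed among $k'$ clusters each of "radius" at most $k'-1$, when $k' = \lceil k/3\rceil$. The cluster radii $k'-1, k'-2, \dots, 0$ are not all equal, so a crude pigeonhole on cluster \emph{count} is too weak; instead I would need to compare, for each pair of greedy indices, the forced separation $d(x_i,x_j)\ge k-j+1$ against what a single cluster of radius $k'-t$ can accommodate, and sum or sort these constraints carefully (this mirrors the "sum of first $k$ odd numbers $= k^2$" style of estimate used for paths in \Cref{section:burn-path}). A clean way to organize it is: reindex so that the separations among $x_1,\dots,x_k,w$ dominate the diameters of the $k'$ optimal clusters term-by-term after sorting, and conclude that at most $k'$ of these $k+1$ vertices can be "absorbed," so $k+1 \le$ (capacity of $k'$ clusters) $< k+1$ when $k' \le \lceil k/3\rceil$, forcing $b(G) = k' \ge \lceil k/3\rceil + 1$. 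I would then defer the tightening to $3$-approximation (rather than just the weaker bound implicit here) to \Cref{theorem:proof-3-approx}, which presumably combines this lemma with the observation that \Cref{algorithm:generate-burning-sequence} stops as soon as \Cref{equation:burn-verify} holds.
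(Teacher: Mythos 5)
Your overall strategy is the right one and matches the paper's: extract from the failure witness and the greedy rule that the sources $x_1,\dots,x_k$ are pairwise far apart, then show that an optimal burning sequence of length $k'\leq\lfloor k/3\rfloor$ cannot cover such a spread-out family because each of its clusters has radius at most $k'-1$. However, the step you yourself flag as the ``main obstacle'' is a genuine gap, and the repair you sketch (sorting the separations of all $k+1$ vertices of $W=\{x_1,\dots,x_k,w\}$ against the cluster diameters ``term-by-term'') does not go through as stated. The problem is that the forced separations degenerate for late indices: for the pair $(x_{k-1},x_k)$ the guarantee is only $d(x_{k-1},x_k)\geq 1$, so the pigeonhole conclusion ``some optimal cluster contains two elements of $W$'' yields no contradiction at all --- two late sources can perfectly well share a cluster. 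Including the witness $w$ to get $k+1$ vertices does not help with this; it only adds one more vertex whose separation from $x_k$ may also be as small as $1$.

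The paper closes this gap not by a global capacity count but by \emph{discarding} the late sources: it keeps only the first $\lfloor k/3\rfloor+1$ of them. For $i,j\leq\lfloor k/3\rfloor+1$ the forced separation is $d(x_i,x_j)\geq k-\lfloor k/3\rfloor$, i.e.\ roughly $2k/3$, which strictly exceeds $2(k'-1)\leq 2\lfloor k/3\rfloor-2$, the diameter of any cluster of a hypothetical optimal sequence of length $k'\leq\lfloor k/3\rfloor$. (The paper phrases this via half-neighbourhoods: the sets $N_{\lceil (k-i)/2\rceil}[x_i]$ are pairwise disjoint, and for $i\leq\lfloor k/3\rfloor+1$ each such set must contain one of the $k'$ optimal fire sources, since the source burning $x_i$ lies within distance $k'-1\leq\lceil (k-i)/2\rceil$ of it.) Hence no optimal cluster contains two of these $\lfloor k/3\rfloor+1$ vertices, forcing $k'\geq\lfloor k/3\rfloor+1$. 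This ``truncate to the first third'' idea is exactly the missing ingredient in your counting; once you add it, your argument becomes the paper's. You should also double-check the separation bound you extract from the greedy rule: since $x_i$ is chosen when the target length is $i$ (not $k$), the directly inherited guarantee is of the form $d(x_i,x_j)\geq i-j+1$ rather than $k-j+1$, and one must verify that the truncated family still satisfies the $\approx 2k/3$ lower bound under the inequality that actually holds.
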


\begin{proof}
If $S$ does not satisfy \Cref{equation:burn-verify}, then $\exists$ a vertex $u$ such that

$$\min \Big\{ \frac{dist(u,x_j)}{k-j+1} : j\in [k]\Big\} \geq 1$$.

$\forall\ i,j \in [k], i<j$, we have that $dist(i,j) \geq k-j+1 > k-j$. Now $\because j>i$, we have that $k-j > \big\lceil\frac{k-j}{2}\big\rceil + \big\lceil\frac{k-i}{2}\big\rceil$, we obtain that the $k$ sets\\
$N_{\lceil \frac{k-1}{2}\rceil}[x_1], N_{\lceil \frac{k-2}{2}\rceil}[x_2], N_{\lceil \frac{k-3}{2}\rceil}[x_3], . . ., N_0[x_{k}]$\\
are pairwise disjoint.

Now let (for contradiction) that a burning sequence of length $S^\prime = (y_1, y_2, y_3, . . ., y_{k^\prime})$ of length $k^\prime$ is able to burn $G$, where $k^\prime\leq \big\lfloor\frac{k}{3}\big\rfloor$.

$\because$ the ``half'' neighbourhood of the first $\big\lfloor\frac{k}{3}\big\rfloor+1$ fire sources is more than the neighbourhood of the first fire source (the fire source with the maximum neighbourhood reachability) that is,\\
$$\bigg\lceil\frac{k-\big\lceil\frac{k}{3}\big\rceil-1}{2}\bigg\rceil \geq \big\lfloor\frac{k}{3}\big\rfloor-1 \geq k^\prime-1,$$\\
then each of the (first) $\big\lfloor\frac{k}{3}\big\rfloor+1$ sets\\
$N_{\lceil \frac{k-1}{2}\rceil}[x_1], N_{\lceil \frac{k-2}{2}\rceil}[x_2], N_{\lceil \frac{k-3}{2}\rceil}[x_3], . . ., N_{\bigg\lceil\frac{k-\big\lceil\frac{k}{3}\big\rceil-1}{2}\bigg\rceil}\bigg[x_{\big\lceil\frac{k}{3}\big\rceil+1}\bigg]$

\noindent contain at least one element from $S^\prime$. $\because$ all these sets are pairwise disjoint, we obtain the contradiction to our assumption because $S^\prime$ will not be able to burn $G$. So $S^\prime$ must contain at least one more fire source.
\end{proof}

\begin{theorem}\label{theorem:proof-3-approx}
Let $S = (x_1, x_2, x_3, . . ., x_k)$ be a burning sequence returned by \Cref{algorithm:generate-burning-sequence}, then $b(G) \geq \frac{k}{3}$.
\end{theorem}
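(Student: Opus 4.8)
The plan is to reduce the statement directly to \Cref{lemma:burn-seq-geq3}, applied to the prefix of $S$ of length $k-1$. First I would unwind the stopping rule of \Cref{algorithm:generate-burning-sequence}: the algorithm appends fire sources one at a time via \Cref{procedure:choose-last-fire-source} and halts at the first value of $k$ for which the current sequence satisfies \Cref{equation:burn-verify}. Since the greedy construction fixes each $x_j$ as soon as it is chosen, the sequence examined at iteration $k-1$ is exactly the prefix $S' = (x_1, x_2, \dots, x_{k-1})$ of the returned sequence $S$, and by the choice of the stopping point $S'$ does \emph{not} satisfy \Cref{equation:burn-verify} — otherwise the algorithm would have returned $S'$ and we would have $k-1$ in place of $k$.

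Next I would apply \Cref{lemma:burn-seq-geq3} to $S'$, whose length is $k-1$ and which fails \Cref{equation:burn-verify}; this yields $b(G) \geq \big\lceil \frac{k-1}{3} \big\rceil + 1$. The remaining step is the elementary estimate
\[
    \Big\lceil \tfrac{k-1}{3} \Big\rceil + 1 \;\geq\; \tfrac{k-1}{3} + 1 \;=\; \tfrac{k+2}{3} \;\geq\; \tfrac{k}{3},
\]
which gives $b(G) \geq \frac{k}{3}$, in fact with the slightly stronger bound $b(G) \geq \frac{k+2}{3}$. For completeness I would separately record the trivial cases $k \leq 3$, where $b(G) \geq 1 \geq \frac{k}{3}$ holds without any appeal to the lemma; this also sidesteps any degenerate behaviour of \Cref{algorithm:generate-burning-sequence} when a single vertex already burns $G$.

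There is essentially no hard part here: all the real work sits inside \Cref{lemma:burn-seq-geq3} (the pairwise disjointness of the ``half-neighbourhoods'' $N_{\lceil (k-j)/2 \rceil}[x_j]$ and the pigeonhole count against a hypothetical shorter optimal sequence). The only point that needs a careful sentence is the claim that the length-$(k-1)$ prefix of the output is precisely the sequence that was tested and rejected at the previous iteration; this follows because \Cref{procedure:choose-last-fire-source} only appends to $S$ and never revises earlier entries, so prefixes are stable along the run of \Cref{algorithm:generate-burning-sequence}.
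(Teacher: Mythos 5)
Your proposal is correct and follows essentially the same route as the paper: observe that the length-$(k-1)$ prefix produced by \Cref{procedure:choose-last-fire-source} fails \Cref{equation:burn-verify} (else the algorithm would have stopped earlier), apply \Cref{lemma:burn-seq-geq3} to get $b(G)\geq\lceil\frac{k-1}{3}\rceil+1$, and conclude $b(G)\geq\frac{k}{3}$ by elementary arithmetic. Your added remarks on prefix stability and the trivial cases $k\leq 3$ are sensible points of care that the paper leaves implicit.
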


\begin{proof}
If \Cref{algorithm:generate-burning-sequence} returns a burning sequence of length $k$, then procedure 24 must not have been able to burn $G$ with $k-1$ fire sources.\\
$\implies$ (from \Cref{lemma:burn-seq-geq3}) $b(G) \geq \big\lceil \frac{k-1}{3}\big\rceil+1 \geq \frac{k}{3}$.
\end{proof}

\section{How close can we approximate?}\label{section:no-better-than-3-approx}

Let that for a given arbitrary graph $G,\ S = (x_1, x_2, x_3, . . ., x_k)$ be a burning sequence returned by an arbitrary approximation algorithm $A$ in polynomial time. If for a pair of fire sources $x_i, x_j$ their neighbourhoods are disjoint then the following expression satisfies.
$$\big | G.N_{k-i} [x_i] \cap G.N_{k-j}[x_j] \big | = 0.$$\\
If for a pair of fire sources $x_i, x_j$ it is observed that the $f^{th}$ fraction ($0 \leq f \leq 1$) of their neighbourhoods are disjoint then the following expression satisfies.
$$\big | G.N_{\lceil f \times (k-i) \rceil} [x_i] \cap G.N_{\lceil f \times (k-j)\rceil}[x_j] \big | = 0.$$\\
So, ``half'' neighbourhoods (where $f=\frac{1}{2}$) of any two distinct fire sources $x_i$ and $x_j$ are disjoint iff
$$\Big |\ G.N_{\big\lceil\frac{k-i}{2}\big\rceil}[x_i] \cap G.N_{\big\lceil\frac{k-j}{2}\big\rceil}[x_j]\ \Big | = 0.$$

By \Cref{lemma:b(SP)} (in \Cref{section:burn-general-graphs-optimally}), we have that $b(SP(s, r)) = r+1$ if $s \geq r$; if $s \geq r+2$, then the first fire source $y_1$ of the optimal burning sequence $S^{\prime} = (y_1, y_2, y_3, . . ., y_{k^{\prime}})$ must be placed on its head node $c$.\\

If some algorithm $A$ claims that uptil it is not able to burn $G$, the $f$-fraction neighbourhood of all pairs of fire sources are disjoint, then, the following expression satisfies for all values of $k \in \mathbb{N}$ uptil which $A$ is not able to burn $G$.
$$\big | G.N_{\lceil f \times (k-i) \rceil} [x_i] \cap G.N_{\lceil f \times (k-j)\rceil}[x_j] \big | = 0,$$ $\forall\ 1 \leq i, j \leq k, i \neq j$.\\

\begin{lemma}\label{lemma:half-neighbourhood-limit-SP}
Let $S = (x_1, x_2, . . ., x_k)$ be a burning sequence returned by an approximation algorithm for $SP(d,k^{\prime}-1)$, $d \geq k^{\prime}+1$, the maximum burning neighbourhood fraction that it claim to be pairwise disjoint is the half neighbourhood for all pairs of fire sources uptil $SP(d,k^{\prime}-1)$ is not burnt, that is,  $\forall\ 1 \leq i, j \leq k-1, i \neq j$,\\
$$\Big |\ G.N_{\big\lceil\frac{k-i}{2}\big\rceil}[x_i] \cap G.N_{\big\lceil\frac{k-j}{2}\big\rceil}[x_j]\ \Big | = 0.$$
\end{lemma}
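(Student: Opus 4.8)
The plan is to analyze the structure of $SP(d, k'-1)$ with $d \geq k'+1$ and show that any approximation algorithm is forced into a situation where half-neighbourhoods (and no larger fraction) of the fire sources can remain pairwise disjoint until the graph is burnt. First I would recall the geometry of $SP(d, k'-1)$: it has a head vertex $c$ of degree $d$, and $d$ arms each of length $k'-1$, so the distance between any two leaf nodes is $2(k'-1)$, and the radius is $k'-1$. By \Cref{lemma:b(SP)}, since $d \geq k'+1 \geq (k'-1)+2$, we have $b(SP(d,k'-1)) = k'$ and the first fire source of \emph{any} optimal sequence must be placed on the head $c$.

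Next I would set up the counting argument analogous to the one in \Cref{lemma:burn-seq-geq3}. Suppose $S = (x_1, x_2, \dots, x_k)$ is the burning sequence returned by the approximation algorithm, and suppose for contradiction that the algorithm can claim a strictly larger fraction $f > \tfrac12$ of the neighbourhoods of all pairs of fire sources to be pairwise disjoint while $SP(d,k'-1)$ is not yet burnt. The key observation is a distance lower bound: since the spider head is essential and each fire source $x_i$ placed at time $i$ burns $G.N_{k-i}[x_i]$, two fire sources $x_i, x_j$ with $i < j$ whose full burning clusters are disjoint must satisfy $\mathrm{dist}(x_i, x_j) \geq (k-i) + (k-j) + 1$ only if neither is the head; but because at most one fire source can lie on or near the head and there are $d \geq k'+1 > k$ arms, most fire sources sit on distinct arms at bounded depth. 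I would then compute, for the $\lceil f(k-i)\rceil$-neighbourhoods, the analogue of the inequality $k - j > \lceil \tfrac{k-j}{2}\rceil + \lceil \tfrac{k-i}{2}\rceil$ that made the half-neighbourhoods disjoint, and show that for $f > \tfrac12$ this forces the clusters to spread deeper into the arms than the arm length $k'-1$ permits, or forces two clusters onto the same arm, contradicting disjointness — hence $f = \tfrac12$ is the largest fraction that can be maintained.

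The complementary direction — that the half-neighbourhoods genuinely \emph{are} pairwise disjoint up until the graph is burnt — follows from exactly the pairwise-disjointness computation in the proof of \Cref{lemma:burn-seq-geq3}: for $i < j \leq k$, $\mathrm{dist}(x_i,x_j) \geq k-j+1$ when $S$ fails to burn $G$, and then $k-j \geq \lceil\tfrac{k-j}{2}\rceil + \lceil\tfrac{k-i}{2}\rceil$ gives $G.N_{\lceil(k-i)/2\rceil}[x_i] \cap G.N_{\lceil(k-j)/2\rceil}[x_j] = \emptyset$. So the statement reduces to verifying (a) this disjointness holds on $SP(d,k'-1)$ specifically, and (b) optimality of $f = \tfrac12$, i.e.\ no larger fraction works.

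The main obstacle I expect is part (b): pinning down precisely why no fraction $f$ strictly between $\tfrac12$ and $1$ can be claimed disjoint by \emph{some} cleverly chosen burning sequence on this particular spider. The argument must handle the special role of the head vertex (whose removal disconnects all arms, so clusters through the head interact differently), and must quantify how the arm length $k'-1$ together with $d \geq k'+1$ arms constrains where the $k$ fire sources of the approximation run can be placed. I would likely argue by choosing the worst-case configuration where fire sources are spread one-per-arm near the leaves, compute that their $f$-fraction neighbourhoods overlap at the head region as soon as $f(k-i) + f(k-j) > 2(k'-1)$ for some pair, and then check that this threshold is crossed for every $f > \tfrac12$ but not for $f = \tfrac12$, using $k \geq k'$ (the approximation sequence is at least as long as the optimum). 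The bookkeeping with ceilings and the interaction at the head is where the care is needed; everything else is a direct adaptation of \Cref{lemma:burn-seq-geq3}.
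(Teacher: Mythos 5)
Your overall shape --- a pigeonhole argument on distances in the spider with $f=\tfrac12$ as the tipping point, and the disjointness direction inherited from the computation in \Cref{lemma:burn-seq-geq3} --- matches the paper's intent. But the concrete computation you propose for the crucial part (b) targets the wrong pair of fire sources, and as written it would not show that no fraction $f>\tfrac12$ can be maintained. You place the fire sources one per arm near the leaves and look for an overlap once $f(k-i)+f(k-j)>2(k'-1)$. For such a pair the distance is $2(k'-1)$, while the half-radii $\lceil\tfrac{k-i}{2}\rceil+\lceil\tfrac{k-j}{2}\rceil$ sum to only about $k'$ when $k$ is close to $b(SP(d,k'-1))=k'$; there is slack of roughly $k'-2$, so these leaf-to-leaf neighbourhoods remain disjoint for fractions far above $\tfrac12$. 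Your threshold is crossed only as $f$ approaches $1$, not at every $f>\tfrac12$, so the contradiction you want never materialises from these pairs.

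The pair that actually pins down $\tfrac12$ is exactly the one you defer to ``where the care is needed'': by \Cref{lemma:b(SP)} the first fire source must sit on the head $c$, and every other vertex --- in particular every other fire source --- lies within distance $k'-1$ of it. The paper's proof rests on the observation that the relevant half-radii already satisfy $\big\lceil\tfrac{k'-2}{2}\big\rceil+\big\lceil\tfrac{k'-3}{2}\big\rceil = k'-2$, i.e.\ the two half-neighbourhoods exactly tile the head-to-leaf path with no vertex to spare; enlarging both radii by any $\epsilon>0$ pushes the sum to at least $k'-1\geq dist(y_1,y_2)$ and forces the two clusters to share a vertex. That single head-versus-arm calculation is the whole of part (b); the leaf-to-leaf pairs never bind. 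So you have correctly located the delicate point but not supplied the step that does the work, and the step you do supply would fail.
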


\begin{proof}
Let $S^{\prime} = (y_1, y_2, . . ., y_{k^{\prime}})$ be an optimal burning sequence for $SP(\infty, k^{\prime}-1)$. In a spider graph $SP(d, k^{\prime}-1)$, we know by lemma 12 in \cite{Bessy2017} that the first fire source $y_1$ will always be kept on the head vertex $c$, also observe that\\
$\because k^{\prime} > \big (\big\lceil\frac{k^{\prime}-1}{2}\big\rceil + 1 \big ) + \big (\big\lceil\frac{k^{\prime}-2}{2}\big\rceil + 1 \big )$,\\
$\implies \Big |\ N_{\big\lceil\frac{k^{\prime}-1}{2}\big\rceil}[y_1] \cap N_{\big\lceil\frac{k^{\prime}-2}{2}\big\rceil}[y_2]\ \Big | \geq 1$\\
if $y_2$ is placed on any vertex in $G.V \setminus \{c\}$.

Let that we are only supplied by a burning sequence of $S^{\prime\prime} = (y_1^{\prime}, y_2^{\prime}, . . ., y_{k^{\prime}-1}^{\prime})$ length $k^{\prime}-1$ to burn $SP(d, k^{\prime}-1)$. Let us assume that $c$ is still the fixed place for the first fire source $y_1^{\prime}$. Then,\\
$\because k^{\prime} \geq \big (\big\lceil\frac{k^{\prime}-2}{2}\big\rceil + 1 \big ) + \big (\big\lceil\frac{k^{\prime}-3}{2}\big\rceil + 1 \big )$,\\
$\implies \Big |\ N_{\big\lceil\frac{k^{\prime}-2}{2}\big\rceil}[y_1^{\prime}] \cap N_{\big\lceil\frac{k^{\prime}-3}{2}\big\rceil}[y_2^{\prime}]\ \Big | \geq 0$\\
if $y_2^{\prime}$ is placed on any vertex in $G.V \setminus \{c\}$.

Also observe that if the neighbourhood fraction is increased, then,\\
$\because k^{\prime} > \big (\big\lceil\frac{k^{\prime}-1}{2}\big\rceil + \epsilon + 1 \big ) + \big (\big\lceil\frac{k^{\prime}-2}{2}\big\rceil + \epsilon + 1\big )$,\\
$\implies \Big |\ N_{\big\lceil\frac{k^{\prime}-2}{2} \big\rceil + \epsilon}[y_1^{\prime}] \cap N_{\big\lceil\frac{k^{\prime}-3}{2} \big\rceil + \epsilon}[y_2^{\prime}]\ \Big | \geq 1$\\
if $y_2^{\prime}$ is placed on any vertex in $G.V \setminus \{c\}$ given that $\epsilon$ has a positive value.

This implies that if $S = (x_1, x_2, . . ., x_k)$ is a burning sequence of length $k \geq k^{\prime}$ returned by an arbitrary $z$-approximation algorithm on $SP(d,k^{\prime})$, $z \geq 1+\varepsilon, \varepsilon > 0\ (\implies k^{\prime} \leq k \leq (1 + \varepsilon) \times k^{\prime})$, then the algorithm can make sure that for any pair of distinct fire sources computed by it, $x_i, x_j, i \neq j$, a maximum of half neighbourhood is disjoint when it will be/was restricted to produce a sequence of length $k-1$, that is, for $1\leq i, j \leq k-1, i \neq j$\\
$$\Big |\ N_{\big\lceil\frac{k-1-i}{2} \big\rceil}[x_1] \cap N_{\big\lceil\frac{k-1-j}{2} \big\rceil}[x_2]\ \Big | = 0.$$

This implies the lemma.
\end{proof}

\begin{lemma}\label{lemma:3-approx-example}
    If an approximation algorithm is able to guarantee an $l$-fraction neighbourhood disjointness on all distinct pair of fire sources, $0 < l \leq 1/2$, until it burns a given graph $G$, it can only claim $z \geq 3$ approximation factor for burning general graphs.
\end{lemma}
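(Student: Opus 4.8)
The plan is to re-run the disjoint-ball counting that underpins \Cref{lemma:burn-seq-geq3} and \Cref{theorem:proof-3-approx}, but with the half neighbourhoods replaced by the $l$-fraction neighbourhoods, and then to invoke \Cref{lemma:half-neighbourhood-limit-SP} to see that $l$ cannot be pushed beyond $1/2$, so the resulting bound cannot be pushed below $3$. As the witness family I would take the spider graphs $SP(d,k^\prime-1)$ with $d\geq k^\prime+1$: by \Cref{lemma:b(SP)} their burning number is $k^\prime$, and \Cref{lemma:half-neighbourhood-limit-SP} tells us that on this family no algorithm relying on a ``fraction-neighbourhood disjointness'' certificate can advertise a fraction larger than $1/2$ (the first fire source must sit on the head, and then the first two fire sources already overlap once the fraction exceeds $1/2$). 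This is what makes the hypothesis $0<l\leq 1/2$ the natural one rather than an arbitrary restriction.

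Next I would take a sequence $S=(x_1,\dots,x_k)$ that the algorithm returns while maintaining $\big|G.N_{\lceil l(k-i)\rceil}[x_i]\cap G.N_{\lceil l(k-j)\rceil}[x_j]\big|=0$ for all $i\neq j$, so that in particular the length-$(k-1)$ prefix does not burn $G$. Exactly as in the proof of \Cref{lemma:burn-seq-geq3}, I would let $t$ be the largest index for which $\lceil l(k-t-1)\rceil\geq t-1$, i.e.\ essentially $t=\big\lfloor \tfrac{l}{1+l}\,k\big\rfloor$, and argue by contradiction: if an optimal burning sequence $S^\prime=(y_1,\dots,y_{k^\prime})$ had $k^\prime\leq t$, then each of the $t+1$ pairwise disjoint balls $G.N_{\lceil l(k-i)\rceil}[x_i]$ ($1\leq i\leq t+1$) has radius at least $k^\prime-1$ and therefore must contain some $y_m$, forcing $k^\prime\geq t+1$, a contradiction. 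Hence $b(G)=k^\prime\geq t+1>\tfrac{l}{1+l}\,k$, which rearranges to $k<\big(1+\tfrac1l\big)b(G)$; that is, the best approximation factor the algorithm can certify from this invariant is $z=1+\tfrac1l$.

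Finally, since $l\leq 1/2$ forces $z=1+\tfrac1l\geq 3$, and since on the spider family above $l$ genuinely cannot be improved past $1/2$ (\Cref{lemma:half-neighbourhood-limit-SP}), no such algorithm can certify a factor smaller than $3$, which is the assertion. The step I expect to be the main obstacle is the contrapositive bookkeeping in the middle paragraph: I must verify, purely from the $l$-fraction disjointness invariant and the failure of the length-$(k-1)$ prefix (and not from any extra structure of the algorithm's choices, unlike \Cref{lemma:burn-seq-geq3}, which may lean on the explicit max--min rule of \Cref{procedure:choose-last-fire-source}), that the first $t+1$ of the balls are pairwise disjoint with radius at least $k^\prime-1$; keeping the ceilings, the off-by-one near the moment of complete burning, and the precise value of $t$ consistent is where the real care is needed.
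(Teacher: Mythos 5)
Your middle paragraph is a correct and usefully quantitative generalization of \Cref{lemma:burn-seq-geq3}: from pairwise disjointness of the $l$-fraction neighbourhoods one does get $b(G)>\frac{l}{1+l}\,k$, hence $k<\bigl(1+\frac{1}{l}\bigr)b(G)$, and for $l=\frac12$ this recovers the factor $3$. But that inequality is the \emph{achievability} direction --- it says the invariant suffices to certify a $(1+\frac1l)$-approximation --- whereas the lemma asserts an \emph{impossibility}: that an algorithm whose only guarantee is $l$-fraction disjointness cannot claim any factor below $3$. To pass from ``the counting argument yields $1+\frac1l$'' to ``nothing better can be claimed'' you must exhibit an instance on which some placement of fire sources consistent with the invariant genuinely forces a burning sequence of length close to $3\,b(G)$; without such a witness, a sharper analysis of the very same invariant could in principle still prove a smaller factor. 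Your only use of a witness family is to argue via \Cref{lemma:half-neighbourhood-limit-SP} that $l$ cannot exceed $\frac12$, which pins down the hypothesis of the lemma but does not produce the ratio-$3$ gap.

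The paper's proof supplies exactly this missing piece. It takes $SP(X)=SP(d_1,0)\cup SP(d_2,1)\cup\dots\cup SP(d_{k^\prime},k^\prime-1)$ with $d_i\geq i+1$, whose burning number is $k^\prime$, and places $3k^\prime-3$ fire sources entirely on leaf vertices: one leaf per small component and the remaining sources on distinct leaves of the largest component. Every pair of these sources has disjoint $l$-fraction neighbourhoods for every $0<l\leq\frac12$, yet the sequence fails to burn $SP(X)$; hence an algorithm constrained only by the invariant can be driven to a ratio of at least $(3k^\prime-2)/k^\prime$, which approaches $3$. Your proposal needs this (or an equivalent) adversarial construction; the ball-counting alone, however carefully you manage the ceilings and the index $t$, only bounds the output length from above and so cannot yield a ``can only claim'' statement.
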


\begin{proof}
\begin{figure}
	\begin{minipage}{1\textwidth}
		\centering
		\begin{tikzpicture}
		    \node [circle, fill=black, inner sep=0pt, minimum size=3pt, label=left:{$y_{k^{\prime}}$}] at (-.5,0) {};

		    \node [circle, fill=black, inner sep=0pt, minimum size=3pt, label=left:{$y_{k^{\prime}-1}$}] (AA) at (1,0) {};
		    \node [circle, fill=black, inner sep=0pt, minimum size=3pt] (AB) at (1,0.5) {};
		    \node [circle, fill=black, inner sep=0pt, minimum size=3pt] (AC) at (1.3535,0.3535) {};
		    \node [circle, fill=black, inner sep=0pt, minimum size=3pt] (AD) at (1.5,0) {};
		    \node [circle, fill=black, inner sep=0pt, minimum size=3pt] (AE) at (1.3535,-0.3535) {};
		    \node [circle, fill=black, inner sep=0pt, minimum size=3pt] (AF) at (1,-.5) {};
		    \node [circle, fill=black, inner sep=0pt, minimum size=3pt] (AG) at (1-.3535,-0.3535) {};
		    
		    \node [circle, fill=black, inner sep=0pt, minimum size=3pt] at (1-.6,0.2) {};
		    \node [circle, fill=black, inner sep=0pt, minimum size=3pt] at (1-.5,0.4) {};
		    \node [circle, fill=black, inner sep=0pt, minimum size=3pt] at (1-.4,0.6) {};
		    
			\draw (AA) -- (AB);
			\draw (AA) -- (AC);
			\draw (AA) -- (AD);
			\draw (AA) -- (AE);
			\draw (AA) -- (AF);
			\draw (AA) -- (AG);

		    \node [circle, fill=black, inner sep=0pt, minimum size=3pt, label=left:{$y_{k^{\prime}-2}$}] (BA) at (3,0) {};
		    \node [circle, fill=black, inner sep=0pt, minimum size=3pt] (BB) at (3,0.5) {};
		    \node [circle, fill=black, inner sep=0pt, minimum size=3pt] (BC) at (3,1) {};
		    \node [circle, fill=black, inner sep=0pt, minimum size=3pt] (BD) at (3.3535,0.3535) {};
		    \node [circle, fill=black, inner sep=0pt, minimum size=3pt] (BE) at (3.707,0.707) {};
		    \node [circle, fill=black, inner sep=0pt, minimum size=3pt] (BF) at (3.5,0) {};
		    \node [circle, fill=black, inner sep=0pt, minimum size=3pt] (BG) at (3+1,0) {};
		    \node [circle, fill=black, inner sep=0pt, minimum size=3pt] (BH) at (3.3535,-0.3535) {};
		    \node [circle, fill=black, inner sep=0pt, minimum size=3pt] (BI) at (3.707,-0.707) {};
		    \node [circle, fill=black, inner sep=0pt, minimum size=3pt] (BJ) at (3,-0.5) {};
		    \node [circle, fill=black, inner sep=0pt, minimum size=3pt] (BK) at (3,-1) {};
		    \node [circle, fill=black, inner sep=0pt, minimum size=3pt] (BL) at (3-0.3535,-0.3535) {};
		    \node [circle, fill=black, inner sep=0pt, minimum size=3pt] (BM) at (3-0.707,-0.707) {};
		    
		    \node [circle, fill=black, inner sep=0pt, minimum size=3pt] at (3.5-1.1,.5) {};
		    \node [circle, fill=black, inner sep=0pt, minimum size=3pt] at (3.5-1.1,.3) {};
		    \node [circle, fill=black, inner sep=0pt, minimum size=3pt] at (3.5-1.1,.7) {};
		    
			\draw (BA) -- (BC);
			\draw (BA) -- (BE);
			\draw (BA) -- (BG);
			\draw (BA) -- (BI);
			\draw (BA) -- (BK);
			\draw (BA) -- (BM);
			
			\node [circle, fill=black, inner sep=0pt, minimum size=2pt] at (5-.25,-.5) {};
		    \node [circle, fill=black, inner sep=0pt, minimum size=2pt] at (5.2-.25,-.5) {};
		    \node [circle, fill=black, inner sep=0pt, minimum size=2pt] at (5-0.2-.25,-.5) {};

		    \node [circle, fill=black, inner sep=0pt, minimum size=3pt, label=left:{$y_2$}] (CA) at (7-.75,0) {};
		    
		    \node [circle, fill=black, inner sep=0pt, minimum size=3pt] at (7-1.5,0) {};
		    \node [circle, fill=black, inner sep=0pt, minimum size=3pt] at (7-1.5,-0.2) {};
		    \node [circle, fill=black, inner sep=0pt, minimum size=3pt] at (7-1.5,0.2) {};
		    
			\draw (CA) -- (7-.75,1);
			\draw (CA) -- (7.707-.75,0.707);
			\draw (CA) -- (7+1-.75,0);
			\draw (CA) -- (7.707-.75,-0.707);
			\draw (CA) -- (7-.75,-1);
			\draw (CA) -- (7-0.707-.75,-0.707);
			
			\node [circle, fill=black, inner sep=0pt, minimum size=3pt, label=left:{$y_1$}] (DA) at (8.75,0) {};
		    
		    \node [circle, fill=black, inner sep=0pt, minimum size=3pt] at (9-1,0) {};
		    \node [circle, fill=black, inner sep=0pt, minimum size=3pt] at (9-1,-0.2) {};
		    \node [circle, fill=black, inner sep=0pt, minimum size=3pt] at (9-1,0.2) {};
		    
			\draw (DA) -- (8.75,1);
			\draw (DA) -- (8.75+.707,0.707);
			\draw (DA) -- (8.75+1,0);
			\draw (DA) -- (8.75+.707,-0.707);
			\draw (DA) -- (8.75,-1);
			\draw (DA) -- (8.75-0.707,-0.707);
			
			\node at (-.7,-.5) {$SP(d_1 ,0)$};
			\node at (1,-1.25) {$SP(d_2 ,1)$};
			\node at (3,-2) {$SP(d_3 ,2)$};
			\node at (6,-2) {$SP(d_{k^{\prime}-1} ,k^{\prime}-2)$};
			\node at (8.75,-1.75) {$SP(d_{k^{\prime}},k^{\prime}-1)$};
		\end{tikzpicture}
	\end{minipage}
	
	\caption{Burning of a graph with $k^{\prime}$ components $comp_i$. Each $comp_i$ having one central head vertex, has $d_i \geq i+1$ arms of equal length $i-i$. Burning number of this graph is $k^{\prime}$.}
\end{figure}
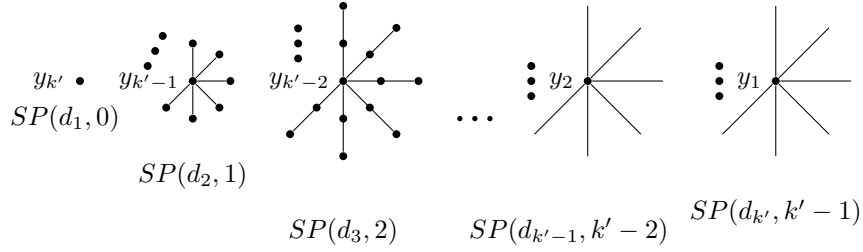

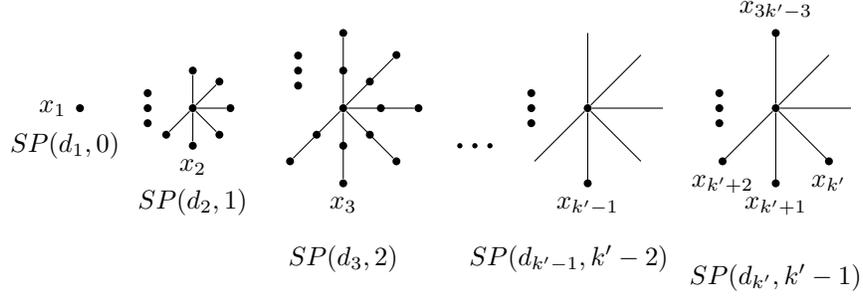
\begin{figure}
	\begin{minipage}{1\textwidth}
		\centering
		\begin{tikzpicture}
		    \node [circle, fill=black, inner sep=0pt, minimum size=3pt, label=left:{$x_1$}] at (-.5,0) {};

		    \node [circle, fill=black, inner sep=0pt, minimum size=3pt] (AA) at (1,0) {};
		    \node [circle, fill=black, inner sep=0pt, minimum size=3pt] (AB) at (1,0.5) {};
		    \node [circle, fill=black, inner sep=0pt, minimum size=3pt] (AC) at (1.3535,0.3535) {};
		    \node [circle, fill=black, inner sep=0pt, minimum size=3pt] (AD) at (1.5,0) {};
		    \node [circle, fill=black, inner sep=0pt, minimum size=3pt] (AE) at (1.3535,-0.3535) {};
		    \node [circle, fill=black, inner sep=0pt, minimum size=3pt, label=below:{$x_2$}] (AF) at (1,-.5) {};
		    \node [circle, fill=black, inner sep=0pt, minimum size=3pt] (AG) at (1-.3535,-0.3535) {};
		    
		    \node [circle, fill=black, inner sep=0pt, minimum size=3pt] at (1-.6,0) {};
		    \node [circle, fill=black, inner sep=0pt, minimum size=3pt] at (1-.6,-0.2) {};
		    \node [circle, fill=black, inner sep=0pt, minimum size=3pt] at (1-.6,0.2) {};
		    
			\draw (AA) -- (AB);
			\draw (AA) -- (AC);
			\draw (AA) -- (AD);
			\draw (AA) -- (AE);
			\draw (AA) -- (AF);
			\draw (AA) -- (AG);

		    \node [circle, fill=black, inner sep=0pt, minimum size=3pt] (BA) at (3,0) {};
		    \node [circle, fill=black, inner sep=0pt, minimum size=3pt] (BB) at (3,0.5) {};
		    \node [circle, fill=black, inner sep=0pt, minimum size=3pt] (BC) at (3,1) {};
		    \node [circle, fill=black, inner sep=0pt, minimum size=3pt] (BD) at (3.3535,0.3535) {};
		    \node [circle, fill=black, inner sep=0pt, minimum size=3pt] (BE) at (3.707,0.707) {};
		    \node [circle, fill=black, inner sep=0pt, minimum size=3pt] (BF) at (3.5,0) {};
		    \node [circle, fill=black, inner sep=0pt, minimum size=3pt] (BG) at (3+1,0) {};
		    \node [circle, fill=black, inner sep=0pt, minimum size=3pt] (BH) at (3.3535,-0.3535) {};
		    \node [circle, fill=black, inner sep=0pt, minimum size=3pt] (BI) at (3.707,-0.707) {};
		    \node [circle, fill=black, inner sep=0pt, minimum size=3pt] (BJ) at (3,-0.5) {};
		    \node [circle, fill=black, inner sep=0pt, minimum size=3pt, label=below:{$x_3$}] (BK) at (3,-1) {};
		    \node [circle, fill=black, inner sep=0pt, minimum size=3pt] (BL) at (3-0.3535,-0.3535) {};
		    \node [circle, fill=black, inner sep=0pt, minimum size=3pt] (BM) at (3-0.707,-0.707) {};
		    
		    \node [circle, fill=black, inner sep=0pt, minimum size=3pt] at (3.5-1.1,.5) {};
		    \node [circle, fill=black, inner sep=0pt, minimum size=3pt] at (3.5-1.1,.3) {};
		    \node [circle, fill=black, inner sep=0pt, minimum size=3pt] at (3.5-1.1,.7) {};
		    
			\draw (BA) -- (BC);
			\draw (BA) -- (BE);
			\draw (BA) -- (BG);
			\draw (BA) -- (BI);
			\draw (BA) -- (BK);
			\draw (BA) -- (BM);
			
			\node [circle, fill=black, inner sep=0pt, minimum size=2pt] at (5-.25,-.5) {};
		    \node [circle, fill=black, inner sep=0pt, minimum size=2pt] at (5.2-.25,-.5) {};
		    \node [circle, fill=black, inner sep=0pt, minimum size=2pt] at (5-0.2-.25,-.5) {};

		    \node [circle, fill=black, inner sep=0pt, minimum size=3pt] (CA) at (7-.75,0) {};
		    
		    \node [circle, fill=black, inner sep=0pt, minimum size=3pt] at (7-1.5,0) {};
		    \node [circle, fill=black, inner sep=0pt, minimum size=3pt] at (7-1.5,-0.2) {};
		    \node [circle, fill=black, inner sep=0pt, minimum size=3pt] at (7-1.5,0.2) {};
		    
		    \node [circle, fill=black, inner sep=0pt, minimum size=3pt, label=below:{$x_{k^{\prime}-1}$}] at (7-.75,-1) {};
		    
			\draw (CA) -- (7-.75,1);
			\draw (CA) -- (7.707-.75,0.707);
			\draw (CA) -- (7+1-.75,0);
			\draw (CA) -- (7.707-.75,-0.707);
			\draw (CA) -- (7-.75,-1);
			\draw (CA) -- (7-0.707-.75,-0.707);
			
			\node [circle, fill=black, inner sep=0pt, minimum size=3pt] (DA) at (8.75,0) {};
		    
		    \node [circle, fill=black, inner sep=0pt, minimum size=3pt] at (9-1,0) {};
		    \node [circle, fill=black, inner sep=0pt, minimum size=3pt] at (9-1,-0.2) {};
		    \node [circle, fill=black, inner sep=0pt, minimum size=3pt] at (9-1,0.2) {};
		    
		    \node [circle, fill=black, inner sep=0pt, minimum size=3pt, label=below:{$x_{k^{\prime}}$}] (BK) at (8.75+.707,-.707) {};
		    \node [circle, fill=black, inner sep=0pt, minimum size=3pt, label=below:{$x_{k^{\prime}+1}$}] (BK) at (8.75,-1) {};
		    \node [circle, fill=black, inner sep=0pt, minimum size=3pt, label=below:{$x_{k^{\prime}+2}$}] (BK) at (8.75-0.707,-0.707) {};
		    \node [circle, fill=black, inner sep=0pt, minimum size=3pt, label=above:{$x_{3 k^{\prime}-3}$}] (BK) at (8.75,1) {};
		    
			\draw (DA) -- (8.75,1);
			\draw (DA) -- (8.75+.707,0.707);
			\draw (DA) -- (8.75+1,0);
			\draw (DA) -- (8.75+.707,-0.707);
			\draw (DA) -- (8.75,-1);
			\draw (DA) -- (8.75-0.707,-0.707);
			
			\node at (-.7,-.5) {$SP(d_1 ,0)$};
			\node at (1,-1.25) {$SP(d_2 ,1)$};
			\node at (3,-2) {$SP(d_3 ,2)$};
			\node at (6,-2) {$SP(d_{k^{\prime}-1} ,k^{\prime}-2)$};
			\node at (8.75,-2.25) {$SP(d_{k^{\prime}} ,k^{\prime}-1)$};
		\end{tikzpicture}
	\end{minipage}
	
	\caption{Burning of a graph with $k^{\prime}$ components $comp_i$. Each $comp_i$ having one central head vertex, has $d_i \geq i+1$ arms of equal length $i-i$. A burning sequence of length $3 k^{\prime}-3$ is shown to not to be able to burn $SP(X)$ completely.}
\end{figure}

As shown in figure 2, $SP(X)$ is a disjoint union of spider graphs $SP(d_1, 0)$, $SP(d_2, 1)$, $SP(d_3, 2)$, $. . .$, $SP(d_{k^{\prime}-1}, k^{\prime}-2)$, and $SP(d_{k^{\prime}}, k^{\prime}-1)$; $d_i \geq i+1$.

The construction of a burning sequence $S$ (not optimal) for $SP(X)$ for a fixed $k^{\prime}$ is as follows. $\forall\ i, 1 \leq i \leq k^{\prime}-2$, a fire source $x_i$ is placed on one of the leaf nodes of $SP(d_i, i-1)$. $\forall\ k^{\prime} \leq i \leq 3k^{\prime}-3$ a fire source $x_i$ is placed on a distinct leaf node of $SP(d_{k^{\prime}}, k^{\prime}-1)$ where no other fire source is already placed. Figure 1 demonstrates the construction of this burning sequence $S$ for $SP(X)$, which may be returned by an arbitrary approximation algorithm $A$.

For $1 \leq k \leq 3 k^{\prime}-3$, if a burning sequence $S=(x_1, x_2, x_3, . . ., x_k)$ is used to burn the graph and the fire sources $\{x_i\}$ are placed according to the preceding procedure, then we can observe that $S$ is not able to burn $SP(X)$, and for all the fire sources, their $0 < l \leq 1/2$ factor neighbourhood is disjoint with each other. Also, it is trivially observable that if the value of $l$ decreases, then the number of required fire source will increase accordingly.
\end{proof}

Let $G$ be an arbitrary graph, and $S = (x_1, x_2, . . ., x_k)$ be a burning sequence of length $k$ returned by an arbitrary $z$-approximation algorithm $A$, $z \geq 1+\varepsilon, \varepsilon > 0$. Let $b(G) = k^{\prime}$.

From the proofs of \Cref{lemma:half-neighbourhood-limit-SP} and \Cref{lemma:3-approx-example}, we have that if $A$ will be/was restricted to produce a burning sequence of length $l \leq k-1$ for an arbitrary $G$ if $k \geq k^{\prime}$, it can be claimed that $\forall\ x_i, x_j, 1 \leq i, j \leq k-1, i \neq j,$
$$\Big |\ N_{\big\lceil\frac{l-i}{2} \big\rceil + \epsilon}[x_1] \cap N_{\big\lceil\frac{l-j}{2} \big\rceil + \epsilon}[x_2]\ \Big | = 0$$\\
only if $\epsilon \leq 0$. It means that for an approximation algorithm for general graph burning, the upper bound of $\epsilon$ is 0.

This is the maximum that $A$ can claim for a general graph. Let that $A$ is able to guarantee this. It means that if $G$ is not burnt completely and $\epsilon \leq 0$, then $A$ guarantees that\\
$N_{\lceil\frac{k-1}{2}\rceil}[x_1], N_{\lceil\frac{k-2}{2}\rceil}[x_2], N_{\lceil\frac{k-3}{2}\rceil}[x_3], . . ., N_0[x_k]$ are pairwise disjoint, and\\
$\because \Bigg\lceil \frac{k-\big\lceil\frac{k}{3}\big\rceil-1}{2} \Bigg\rceil \geq \big\lfloor \frac{k}{3} \big\rfloor-1$,\\
$\implies b(G) \leq \big\lfloor\frac{k}{3}\big\rfloor+1$ (by \Cref{lemma:burn-seq-geq3}).\\
$\implies z \geq 3 \implies \varepsilon = 2$ (by \Cref{theorem:proof-3-approx}).

These properties allow us to suggest that graph burning may be hard to approximate better than the $3$-approximation ratio, which we formally state as \Cref{conjecture:no-better-than-3-approx} as follows.

\begin{conjecture}\label{conjecture:no-better-than-3-approx}
    A maximum of 3-approximation is possible in polynomial time to compute burning number of general graphs.\qed
\end{conjecture}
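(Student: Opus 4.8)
The goal is to upgrade the heuristic evidence of \Cref{lemma:half-neighbourhood-limit-SP} and \Cref{lemma:3-approx-example} into an unconditional statement: assuming $\textrm{P}\neq\textrm{NP}$, no polynomial-time algorithm burns every graph within a factor $3-\varepsilon$ of $b(G)$. The plan is to build a \emph{gap-preserving} reduction from the distinct 3-partition problem (strongly NP-complete, as used throughout \Cref{chapter:why-hard}) to graph burning, using as the basic gadget the disjoint union of spiders $SP(X)$ introduced in \Cref{section:no-better-than-3-approx}, and then to amplify whatever constant multiplicative gap the gadget yields up to $3-\varepsilon$ by taking many scaled copies.

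First I would nail down the \textsc{yes} case. Exactly as in the construction preceding \Cref{theorem:BIGNPCIG}, if $X$ admits a distinct 3-partition then the long chains attached to the spider heads of $SP(X)$ can be cut into sub-paths whose orders lie in $F_m^\prime$ (cf.\ \Cref{lemma:PartPFx1TO31}), and placing one fire source on each spider head in decreasing order of arm length, followed by one source at the midpoint of each remaining sub-path, burns everything in $k^\prime$ steps; since $SP(X)$ has $k^\prime$ components, $b(SP(X))=k^\prime$. Combined with \Cref{lemma:b(SP)} --- which pins the first fire source of an optimal sequence to the head of any sufficiently bushy spider --- this gives a canonical optimal schedule to argue against in the \textsc{no} case.

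The crux is the \textsc{no} case: I would need to show that when $X$ has \emph{no} distinct 3-partition, $b(SP(X)) \geq (3-o(1))\,k^\prime$, not merely $b(SP(X))\geq k^\prime+1$. The intended mechanism is the one sketched informally in the chapter: by \Cref{lemma:b(SP)} the first source is forced onto the head of the bushiest component, after which the components compete for the few central positions; when the partition fails, the surplus vertices in the chains cannot be swept by a head source and must be burned by dedicated leaf sources, each of which only clears a path of $2r+1$ vertices, so the source count inflates toward a factor $3$ (the worst-case ratio already exhibited by the adversarial sequence of length $3k^\prime-3$ in \Cref{section:no-better-than-3-approx}). Once any fixed constant gap $c>1$ is established, a routine amplification --- replacing $SP(X)$ by $t$ vertex-disjoint copies with geometrically growing parameters, so that burning numbers add while the gap is preserved --- would push $c$ up to $3-\varepsilon$ for every $\varepsilon>0$, and then \Cref{theorem:proof-3-approx} together with \Cref{lemma:burn-seq-geq3} identify $3$ as the threshold.

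I expect the \textsc{no}-case lower bound to be the real obstacle, and it is the reason the statement is stated here only as a conjecture. The arguments of \Cref{lemma:half-neighbourhood-limit-SP} and \Cref{lemma:3-approx-example} rule out only algorithms that certify near-optimality through pairwise ``half-neighbourhood'' disjointness; a genuine inapproximability proof must instead exhibit that \emph{every} burning sequence for a \textsc{no} instance is a factor $\approx 3$ longer than $k^\prime$, i.e.\ that the gap is intrinsic to $SP(X)$ rather than an artifact of one placement strategy. As currently built, $SP(X)$ only forces the blow-up against specific adversarial placements, not as a provable minimum; closing this would likely require either a more rigid gadget (so that deviating from a valid partition provably wastes a constant fraction of the steps on leaf-by-leaf burning) or a reduction from a PCP-style gap problem. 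Absent such a combinatorial gap, the factor-$3$ barrier for polynomial-time burning remains unresolved, which is precisely why we record it as \Cref{conjecture:no-better-than-3-approx} rather than a theorem.
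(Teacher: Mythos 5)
You are right to stop short of claiming a proof: the statement is recorded in the paper only as a conjecture, and the paper's own ``argument'' for it is not a reduction at all but the conditional observation of \Cref{lemma:half-neighbourhood-limit-SP} and \Cref{lemma:3-approx-example} --- namely that any algorithm which certifies progress through pairwise $l$-fraction neighbourhood disjointness ($l\leq 1/2$) cannot beat factor $3$, together with the spider examples showing that $l=1/2$ is the best such guarantee available. Your route --- a gap-preserving reduction from distinct 3-partition in the spirit of \Cref{theorem:BIGNPCIG} --- is genuinely different and, if it could be completed, would yield a much stronger, unconditional (modulo P $\neq$ NP) inapproximability result rather than a barrier for one family of algorithms. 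You also correctly identify the crux: the \textsc{no}-case lower bound $b(\cdot)\geq(3-o(1))k^\prime$ is exactly what no existing gadget provides, since the constructions $IG(X)$, $P(X)$, $DK(X)$ all separate only $k^\prime$ from $k^\prime+1$, which gives an inapproximability factor of $1+1/k^\prime\to 1$ and nothing more.

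Two concrete cautions about the sketch itself. First, you conflate the gadgets: the $SP(X)$ of \Cref{section:no-better-than-3-approx} is a bare disjoint union of spiders with no chains attached, and its burning number is $k^\prime$ irrespective of whether $X$ has a distinct 3-partition; the adversarial sequence of length $3k^\prime-3$ there is a \emph{bad placement}, not a lower bound on all placements, so it cannot seed a \textsc{no}-case gap. The chains live in $IG(X)$ and $DK(X)$, and there a failed partition costs only one extra source. Second, the proposed amplification by taking $t$ vertex-disjoint copies does not preserve multiplicative gaps for burning: a graph with $t$ components needs at least $t$ fire sources outright (as noted after \Cref{figure:burn-many-components}), so disjoint union inflates the \textsc{yes}-case optimum additively in $t$ and tends to \emph{shrink} the ratio rather than preserve it. Any amplification would have to be done inside a single connected gadget, which is precisely where the rigidity you ask for is missing. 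So the proposal is a reasonable research programme, but neither it nor the paper's heuristic constitutes a proof, and the statement properly remains \Cref{conjecture:no-better-than-3-approx}.
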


If it is a necessary attribute for an approximation algorithm for burning general graphs to claim a neighbourhood factor disjointness between each pair of fire sources in the burning sequence $S = (x_1, x_2, . . ., x_k)$ returned by it, then a maximum of 3-approximation is possible in polynomial time to compute burning number of general graphs.

\section{Approximating the burning of connected interval graphs}

We discussed in \Cref{subsection:IG-path-similar} that if $P$ is the diameter of an interval graph $G$, then all the other vertices in $G$, that is, the vertices in $G\setminus P$, are connected to at least one vertex in $P$ by a single edge.
We discussed in \Cref{section:burn-path} that we can burn a path optimally in polynomial time using \Cref{algorithm:burn-path-finite}. Following from this we discussed in \Cref{subsection:similar-burn-path-IG} that the burning number of an interval graph $G$ $b(P)\leq b(G)\leq b(P)+1$. We showed in \Cref{section:burn-interval-graphs} that determining whether $b(G)=b(P)$ is NP-Complete. We describe \Cref{algorithm:approximate-burn-IG} as follows, which is able to burn an interval graph $G$ within $b(G)+1$. We have used the \textsc{Algebraic-Floyd-Warshall} \cite{Kepner2011} to compute $P$, the diameter of $G$, which is a shortest path of maximum length in $G$.

\begin{algorithm}\label{algorithm:approximate-burn-IG}
    Given an interval graph $G$, perform the following steps.
\end{algorithm}

\textbf{\textit{Stage 1.}} $F=$ a set of shortest paths between all pairs of vertices from the \textsc{Algebraic-Floyd-Warshall} algorithm. $P=$ a path of maximum length in $F$.

\textbf{\textit{Stage 2.}} Invoke \Cref{algorithm:burn-path-finite} by passing $(G, P)$ as input, store the return value in $S=(x_1,x_2,\dots,x_k)$.

\textbf{\textit{Stage 3.}} If $S$ is not able to burn $G$ completely, that is, if $S$ does not satisfy \Cref{equation:burn-verify}, then $v=$ an arbitrary vertex in $G.V\setminus (G.N_{k-1}[x_1]\cup G.N_{k-2}[x_2]\cup\dots\cup G.N_0[x_k])$. $S=S\cup_{s/}v$.

\textbf{\textit{Stage 4.}} Return $S$.\\

Time complexity of \Cref{algorithm:approximate-burn-IG} is $O(n)$, where $n$ is the number of vertices in $G$. $S$ is the burning sequence returned by \Cref{algorithm:approximate-burn-IG}.

In \Cref{section:inapproximability-NPH}, we discussed in \Cref{theorem:approximability-general} that no approximation algorithm $A$ can guarantee an approximation ratio $R_A$ of less than $k+(1/k)$ if $k$ is the cost of the optimal solution of a problem $P$ on some input $x$. If the burning number of a connected interval graph $G$ be $k=b(G)$, \Cref{algorithm:approximate-burn-IG} is able to approximate burning of interval graphs within $k+1$. Thus we get that the approximation ratio for burning interval graphs is $R_A=k+(1/k)$.

\section{Summary of this chapter}

We describe a 3-approximation algorithm for burning general graphs, as described in \cite{Bessy2017}. Based on some properties of graph burning, we provide a conjecture that graph burning cannot be approximated to a factor less than $3$. We also discuss the approximation on interval graphs, deriving from the properties presented in \cite{Kare2019}.

\chapter{Conclusion and future work}\label{chapter:conclusion}

\section{Observations}

\subsection{On graph burning}

Discovery of the graph burning and the burning number property was motivated by the works that were trying to model the spread of an object. This object multiplies and spreads to nodes through connections.

This object is, so far, related to spread of electronic information in a network, for example, spread of a meme, a gossip, a social contagion, influence or emotion, message, or alarms through a social network. The burning number describes the minimum number of (discrete) time-steps that are able to ``infect'' the whole network with an object.

\subsection{Burning versus other problems}

Optimal graph burning is proved to be computationally hard on various graph classes. It is clear with reference to interval graphs, permutation graphs, disk graphs, spider graphs, trees, path forests and several other graph classes that optimal burning of such graph classes is NP-Complete where other problems, which are NP-Complete for general graphs, can be solved in polynomial time. While proving NP-Completeness, we have extensively utilized the distinct 3-partition problem, by reducing it to the burning of several graph classes.

Although, burning is easy on some other graph classes such as split graphs and cographs.

\subsection{Approximating the burning sequence}

The 3-approximation algorithm for deriving a burning sequence for an arbitrary graph is based on the pairwise disjointness of the half neighbourhoods of the fire sources.

Approximating the burning sequence for general graphs better than the factor of $3$ may be computationally hard because pairwise disjointness on more than half neighbourhood of the fire sources is not possible on general graphs.

\section{What next?}

\subsection{Discrete mathematics and theoretical algorithm design}

So far a few graphs have been shown to be NP-Complete from the burning perspective, a few have been shown to be solvable in polynomial time.
Still several of the graph classes are left to research on, which are useful in various practical scenarios, and determine whether deriving an optimal burning sequence is solvable polynomially.

We showed that we can approximate burning of interval graphs, which we have shown to be NP-Complete to solve optimally, within an approximation ratio of $1+(1/k)$, but not less than this factor.
This observation is close enough to the limit imposed by the theorem that if P $\neq$ NP, then $R_A\geq 1+(1/k)$ in all cases. Although, the conjecture that P $\neq$ NP remains unaffected, and we have not proved or disproved it.

\subsection{Practical implementations}

Graph burning has been proposed to model several real-time systems which are complex otherwise.

The graph burning can also be used in several practical scenarios other than spreading a message, alarm, or contagion. For example, the spread of an infection, virus, etc can also be modelled using this newly discovered graph procedure. We can model the spread of a ``real-life'' contagion using graph burning, such as person-to-person spread of a communicable infection.

The multiplication of the virus inside a host and its behavior of infecting the network of target cells may also be modelled very precisely same as how the fire spreads throughout a graph. And then the firefighter problem can be used to simulate the defence of the host body. Similarly, firefighter can also be used in the simulation of the ``real-life'' social communication of a disease, where it can be used to model \textit{how do} or \textit{how good can} we plan for the defence mechanism, given the constraints such as availability of vaccines, etc by using the minimum resources to save the maximum population.

It might not be very simple and also, may not be exactly same as what we have modelled so far theoretically on static graphs; weights may be involved, time instances may not be discrete and at equal intervals as we model theoretically. Probabilities with respect to the spread or rescue may also be involved, along with deaths (with probabilities involved). In such cases, elimination of cells from the body and people from a real-life social network, which can be modelled by elimination of vertices in the graph being modelled. So, this model that we propose as one of the probable future works can also be extended to probabilistic or temporal graphs which may give us better, more precise results based on the ``real-life'' circumstances.


\addcontentsline{toc}{chapter}{Bibliography}
\bibliography{ref.bib}
\bibliographystyle{plain}
\end{document}